\theoremstyle{plain}
\newtheorem{theorem}{Theorem}[section]
\newtheorem{lemma}[theorem]{Lemma}
\newtheorem{corollary}[theorem]{Corollary}
\newtheorem{proposition}[theorem]{Proposition}
\newtheorem{fact}[theorem]{Fact}
\theoremstyle{definition}
\newtheorem{definition}[theorem]{Definition}
\theoremstyle{remark}
\newtheorem{remark}[theorem]{Remark}
\newtheorem*{remark*}{Remark}
\newcounter{HALG@line}
\renewcommand{\theHALG@line}{\thealgorithm.\arabic{ALG@line}}
\newcommand{\norm}[1]{\left|{#1}\right|}
\newcommand{\Norm}[1]{\left\|{#1}\right\|}
\newcommand{\Set}[1]{\left\{{#1}\right\}}
\newcommand{\ddim}{\mathrm{ddim}}
\newcommand{\oridim}{d}
\newcommand{\tardim}{m}
\newcommand{\EE}[1]{\ee\left[{#1}\right]}
\newcommand{\PR}[1]{\Pr\left[{#1}\right]}
\newcommand{\eveA}{\mathscr{A}}
\newcommand{\eveB}{\mathscr{B}}
\newcommand{\eveC}{\mathscr{C}}
\newcommand{\eveE}{\mathscr{E}}
\newcommand{\eveD}{\mathscr{D}}
\newcommand{\eveG}{\mathscr{G}}
\newcommand{\eveH}{\mathscr{H}}
\newcommand{\eveK}{\mathscr{I}}
\newcommand{\indicator}{\mathbbm{1}}
\newcommand{\Indicator}[1]{\indicator\left(#1\right)}
\newcommand{\metrspa}{\mathcal{X}}
\newcommand{\partition}{\mathbf{\Lambda}}
\newcommand{\algPartition}{\textnormal{\textsc{Partition}}}
\newcommand{\algxchg}{\textnormal{\textsc{Modify Decomposition to Eliminate Badly-cut Pairs}}}
\newcommand{\algDecomposeOri}{\textnormal{\textsc{Randomized Hierarchical Decomposition}}}
\newcommand{\ptas}{\textnormal{\textsc{Ptas}}}
\newcommand{\algapx}{\textnormal{\textsc{Approx}}}
\newcommand{\algmedian}{\mathcal{M}}
\newcommand{\cH}{\mathcal{H}}
\newcommand{\cT}{\mathcal{T}}
\newcommand{\U}{U}
\newcommand{\cC}{\mathcal{C}}
\newcommand{\decom}{\cH}
\newcommand{\xchg}{\cT}
\newcommand{\Fa}{F_0}
\newcommand{\Fopt}{F^*}
\newcommand{\Fpiopt}{F_\pi^*}
\newcommand{\cFnew}{F^\prime}
\newcommand{\reprpi}{g_\pi}
\newcommand{\hpi}{h_\pi}
\newcommand{\cFa}{F_0}
\newcommand{\cFopt}{F^*}
\newcommand{\cFpiopt}{F_\pi^*}
\newcommand{\cFapx}{F}
\newcommand{\cFret}{F}
\newcommand{\hole}{\mathrm{Holes}}
\newcommand{\badpi}{\mathrm{Bad}_\pi}
\newcommand{\tO}{\tilde{O}}
\newcommand{\lev}{\mathop{\operatorname{level}}}
\newcommand{\Net}{N}
\newcommand{\Ball}{B}
\newcommand{\Xori}{X}
\newcommand{\sol}{S}
\newcommand{\csol}{S}
\newcommand{\mov}{\psi_\pi}
\newcommand{\cand}{D}
\newcommand{\ignore}[1]{}
\DeclareMathOperator{\E}{\mathbb{E}}
\DeclareMathOperator*{\ee}{\mathbb{E}}
\newcommand{\RR}{{\mathbb{R}}}
\newcommand{\NN}{{\mathbb{N}}}
\DeclareMathOperator{\poly}{poly}
\DeclareMathOperator{\polylog}{polylog}
\DeclareMathOperator{\cost}{cost}
\DeclareMathOperator{\OPT}{OPT}
\DeclareMathOperator{\dist}{dist}
\DeclareMathOperator{\Diam}{diam}
\DeclareMathOperator{\opt}{ufl}
\DeclareMathOperator{\med}{med}
\DeclareMathOperator{\rang}{\overline{diam}}
\DeclareMathOperator*{\argmin}{argmin}
 \newcommand*{\wopen}{\mathfrak{f}}
\let\epsilon\varepsilon
\let\varnothing\emptyset
\def\compactify{\itemsep=0pt \topsep=0pt \partopsep=0pt \parsep=0pt}
\title{Near-Optimal Dimension Reduction for Facility Location}
\author{Lingxiao Huang%
\thanks{
    Email: \texttt{huanglingxiao1990@126.com}
} \\
Nanjing University
\and Shaofeng H.-C. Jiang%
\thanks{
    Email: \texttt{shaofeng.jiang@pku.edu.cn}
} \\
Peking University
\and Robert Krauthgamer%
\thanks{
    Email: \texttt{robert.krauthgamer@weizmann.ac.il}
} \\
Weizmann Institute of Science
\and Di Yue%
\thanks{
    Email: \texttt{di\_yue@stu.pku.edu.cn}
} \\
Peking University
}
\begin{document}
    \begin{titlepage}
        \maketitle
\begin{abstract}
Oblivious dimension reduction, \`{a} la  the Johnson-Lindenstrauss (JL) Lemma,
is a fundamental approach for processing high-dimensional data.
We study this approach for Uniform Facility Location (UFL)
on a Euclidean input $X\subset\RR^d$,
where facilities can lie in the ambient space (not restricted to $X$).
Our main result is that target dimension $m=\tilde{O}(\epsilon^{-2}\ddim)$
suffices to $(1+\epsilon)$-approximate the optimal value of UFL
on inputs whose doubling dimension is bounded by $\ddim$.
It significantly improves over previous results, that could only achieve
$O(1)$-approximation [Narayanan, Silwal, Indyk, and Zamir, ICML 2021]
or dimension $m=O(\epsilon^{-2}\log n)$ for $n=|X|$,
which follows from [Makarychev, Makarychev, and Razenshteyn, STOC 2019].

Our oblivious dimension reduction
has immediate implications to streaming and offline algorithms,
by employing known algorithms for low dimension.
In dynamic geometric streams, 
it implies a $(1+\epsilon)$-approximation algorithm
that uses $O(\epsilon^{-1}\log n)^{\tilde{O}(\ddim/\epsilon^{2})}$ bits of space,
which is the first streaming algorithm for UFL to utilize the doubling dimension.
In the offline setting, it implies a $(1+\epsilon)$-approximation algorithm,
which we further refine to run in time
$( (1/\epsilon)^{\tO(\ddim)} d +  2^{(1/\epsilon)^{\tO(\ddim)}}) \cdot \tilde{O}(n) $. 
Prior work has a similar running time but requires some restriction
on the facilities [Cohen-Addad, Feldmann and Saulpic, JACM 2021]. 

Our main technical contribution is a fast procedure to decompose
an input $X$ into several $k$-median instances for small $k$.
This decomposition is inspired by, but has several significant differences from
[Czumaj, Lammersen, Monemizadeh and Sohler, SODA 2013],
and is key to both our dimension reduction and our PTAS. 
\end{abstract}

         \thispagestyle{empty}
\end{titlepage}

    \newpage
\section{Introduction}
\label{sec:intro}

A fundamental approach for dealing with high-dimensional data
is \emph{oblivious dimension reduction},
in which the dataset $X\subset\RR^d$ is mapped to low dimension 
using a map chosen independently of the data.
A cornerstone of this approach is the Johnson-Lindenstrauss (JL) Lemma~\cite{JL84},
which states that for all $n\ge1$ and $0<\epsilon<1$
there is a randomly chosen linear transformation 
$\pi:\RR^d \to \RR^m$ for $m=O(\epsilon^{-2}\log n)$,
such that for every dataset $X\subset\RR^d$, $|X|=n$, 
with high probability all the pairwise distances in $X$ are preserved within $(1 \pm \epsilon)$-factor, i.e.,
\begin{equation}
  \label{eqn:pairwise_dist}
  \forall x,y \in X, \qquad \|\pi(x) - \pi(y)\|_2 \in (1 \pm \epsilon) \|x - y\|_2.
\end{equation}
This bound on the \emph{target dimension} $m=m(\epsilon,n)$
is known to be asymptotically tight~\cite{LarsenN17}.
In algorithmic applications, one typically applies on the input $X\subset\RR^d$
a map $\pi$ that is chosen independently of $X$, 
and then executes on $\pi(X)\subset\RR^m$ some known algorithm for low dimension.
This approach has generally proved to be extremely useful.

However, in several fundamental algorithmic applications,
target dimension of the form $m=O(\log n)$ is too high to be effective.
We can illustrate this by examples from $3$ different computational settings: 
In offline approximation algorithms,
the traveling salesman problem (TSP) in dimension $m=O(\log n)$
does not admit a PTAS
(i.e., for a sufficiently small but fixed $\epsilon_0>0$, 
no polynomial-time algorithm can achieve $(1+\epsilon_0)$-approximation),
assuming $\mathrm{P} \neq \mathrm{NP}$ \cite{Trevisan00}.
In streaming algorithms, 
approximating the value of Euclidean minimum spanning tree (MST) in dimension $m=O(\log n)$
within $(1+\epsilon_0)$ factor (again, for some fixed $\epsilon_0>0$)
requires $\Omega(\sqrt n)$ bits of storage~\cite{ChenCJLW23}.
In fine-grained complexity,
the diameter of a point set in dimension $m=O(\log n)$
cannot be $(1+\epsilon_0)$-approximated (again, for some fixed $\epsilon_0>0$) 
in quadratic-time, under some complexity assumption \cite{Williams18}.

To break below this barrier of target dimension $m=O(\log n)$, 
one often seeks better bounds for \emph{specific} computational problems.
A prime example is that for $k$-median and $k$-means clustering,
the dimension can be reduced to $m=\tO(\epsilon^{-2}\log k)$ \cite{MakarychevMR19}.\footnote{Throughout, $\tO(f)$ suppresses factors that are logarithmic in $f$.}
This highly nontrivial bound is significantly stronger
than earlier/other bounds~\cite{BoutsidisZD10, CohenEMMP15, BecchettiBC0S19},
and offers a substantial improvement for small $k$. 
It has become famous due to its many applications,
from faster algorithms through better approximation to coreset constructions,
and is useful also in many variants of the problem, like fair clustering. 
Other problems where dimension reduction is successful are Max-Cut, 
where target dimension $m=1/\epsilon^{O(1)}$ suffices~\cite{LammersenSS09,ChenJK23}
(and has immediate implications to streaming algorithms),
and projective clustering problems like $k$-subspace and $k$-flat approximation,
where target dimension that is polynomial in $k$ (but independent of $n$)
suffices~\cite{CohenEMMP15,KerberR15,CW22}.
However, the same method cannot get below $m=O(\log n)$
for the $k$-center problem~\cite{JiangKS24}.

This research plan, which may be called ``beyond JL'',
has another thread that seeks bounds that depend
on the \emph{intrinsic dimensionality} of the dataset $X$ (instead of $n$),
and specifically on a popular measure called the \emph{doubling dimension},
introduced in~\cite{GuptaKL03} based on earlier work by~\cite{Assouad83,Clarkson99}. 
This notion, denoted $\ddim(X)$, is defined as the minimum $t\ge0$ such that
every ball in $X$ can be covered by at most $2^t$ balls of half the radius.\footnote{Formally, the centers of these balls must be in $X$ (see \Cref{def:ddim}),
but relaxing this requirement to center points in the ambient Euclidean space
would change $\ddim(X)$ by at most a constant factor.
}
Observe that $\ddim(X)$ is at most $\log n$ and can often be much smaller,
as this notion generalizes Euclidean dimension and can capture many useful cases,
like points that lie in a linear subspace or have a sparse vector representation, 
and even non-Euclidean distances~\cite{GuptaKL03,GottliebKK14}.

This line of research aims to show that fundamental problems admit
oblivious dimension reduction to dimension $m=m(\epsilon,\ddim(X))$,
and ideally obtain tight bounds.
A prime success story is nearest-neighbor search (NNS),
for which target dimension $m=\ddim(X)/\epsilon^{O(1)}$ indeed suffices~\cite{IndykN07}. 
However, for three important problems, current results fall short of the above aim:
For $k$-center, the known bound on $m$ has also, i.e., in addition to $\ddim(X)$,
an additive term term of $O(\epsilon^{-2}\log k)$ \cite{JiangKS24},
which seems inevitable. 
For MST, the known bound on $m$
has also an additive term of $O(\log\log n)$ \cite{NarayananSIZ21},
and this is still open. 
For uniform facility location (UFL),
the known result achieves only $O(1)$-approximation~\cite{NarayananSIZ21},
and our main contribution is in fact
to significantly improve this approximation factor, from $O(1)$ to $1+\epsilon$.

\paragraph{Uniform Facility Location (UFL).}
In this problem, the input is $X\subset\RR^d$
and an \emph{opening cost} $\wopen>0$,
and the goal is to find a set of \emph{facilities} $F\subset\RR^d$,
so as to minimize the objective
$$
    \cost(X, F) := \wopen \cdot |F| + \sum_{x \in P} \dist(x, F) ,
$$
where $\dist(x, F) := \min_{y \in F} \dist(x, y)$ and $\dist(x, y) := \|x - y\|_2$.
This is actually a clustering problem very similar to $k$-median
(by viewing facilities as cluster centers),
except that the number of clusters $k=|F|$ is not prescribed in advance,
which can make the problem easier, as there is no hard constraint on $k$,
but also harder, as bounds cannot depend on $k$ as a parameter.
We emphasize that our definition allows facilities to lie in the ambient space,
which is natural for a clustering problem (similarly to $k$-median).
Some literature restricts the facilities to a given set,
usually the input points, i.e., $F\subset X$, which can make the problem easier,
e.g., the algorithm or analysis can enumerate the potential facilities.
In contrast, the known dimension reduction for $k$-median~\cite{MakarychevMR19}
is widely applicable but also technically complicated,
precisely because it allows centers to lie in the ambient space.

\begin{remark*} 
A natural approach is to tackle many computational problems at once
by refining the JL Lemma so that $m$ would depend on $\ddim(X)$ instead of on $n$.
Unfortunately, this is not possible using linear maps~\cite[Remark 4.1]{IndykN07},
which is the method of choice employed in the original JL Lemma. 
An open question in the area of metric embedding,
posed by~\cite{LP01,GuptaKL03} (see also \cite[Question 41]{Naor18}),
asks whether every $X\subset\RR^d$ embeds in Euclidean space 
with target dimension and distortion that depend only on $\ddim(X)$
(and not on $d$ or $n$). 
Notice that here, the distortion bound is more relaxed
and the mapping need not be oblivious or even easy to compute,
which would be problematic for algorithmic applications. 
So far, progress on this open question has been made only for a weaker variant
of snowflake embedding~\cite{BartalRS11,NN12,GottliebK15,Neiman16}.
\end{remark*}

\subsection{Results}

We study oblivious dimension reduction for inputs
that reside in a high-dimension Euclidean space
but have a bounded doubling dimension (called in short doubling).
Our main result, in \Cref{theorem:OPT_value_const_dim},
achieves $(1+\epsilon)$-approximation for UFL
using target dimension $m = \tilde{O}(\epsilon^{-2}\ddim(X))$.
It uses a map $\pi:\RR^d\to\RR^m$ that is standard in proofs of the JL Lemma,
and is defined by $\pi: x \mapsto \frac{1}{\sqrt{m}} Gx$
where $G\in\RR^{m \times d}$ is a random matrix
with i.i.d.~entries drawn from Gaussian distribution $N(0,1)$.
We refer to it as a \emph{random linear map},
although some literature calls it random projection
(because it is similar, though not identical,
to orthogonal projection onto a random subspace with scaling).
Throughout, we assume that the opening cost is $\wopen=1$,
which holds without loss of generality by rescaling the input $X\subset\RR^d$,
and denote the optimal value of UFL on input $X\subset\RR^d$
by $\opt(X) := \min \{\cost(X,F): F\subset\RR^d\}$. 
Let $\ddim\ge1$ be a known upper bound on the doubling dimension of $X$,
and assume it is given with the input (or in some settings, computed from it).

\begin{restatable}{theorem}{theoremOPTvalueconstdim}
\label{theorem:OPT_value_const_dim}
Let $0 < \epsilon, \delta < 1$, let $\ddim,d \geq 1$,
and consider a random linear map $\pi$ with suitable target dimension
$m = O(\epsilon^{-2} \ddim \cdot \log(\delta^{-1}\epsilon^{-1} \ddim) )$.
Then for every finite $X \subset \mathbb{R}^d$ with doubling dimension at most $\ddim$,
\begin{equation}
  \label{eqn:OPT_value_const_dim}
  \Pr[ \opt(\pi(X)) \in (1 \pm \epsilon) \opt(X) ] \geq 1 - \delta .
\end{equation}
\end{restatable}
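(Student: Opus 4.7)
My plan is to reduce to the $k$-median dimension reduction of Makarychev, Makarychev, Razenshteyn (MMR), which achieves target dimension $\tO(\epsilon^{-2}\log k)$, via the decomposition advertised in the abstract. Specifically, I would construct a partition $X = X_1 \cup \cdots \cup X_T$ with integers $k_i \le 2^{O(\ddim)}$ such that
\[
  \opt(X) \;\approx_{\epsilon}\; \sum_{i=1}^{T} \bigl( \wopen \cdot k_i + \med_{k_i}(X_i) \bigr),
\]
and the analogous identity holds for $\pi(X)$ with the same $k_i$'s. The existence of such a decomposition, with $T$ bounded by a polynomial in $\ddim$, $1/\epsilon$, and $1/\delta$, is what the abstract flags as the paper's main technical contribution; intuitively, at each dyadic scale the doubling property caps by $2^{O(\ddim)}$ the number of facilities a near-optimal UFL solution can open inside any local cell, so a hierarchical (net-based) decomposition gives the desired pieces.

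Given the decomposition, the proof becomes a two-step reduction. Apply MMR to each piece with per-piece failure probability $\delta/T$; since $\log k_i = O(\ddim)$ and $\log(T/\delta)$ enters logarithmically, the required target dimension is $m = O(\epsilon^{-2}\ddim\cdot\log(\delta^{-1}\epsilon^{-1}\ddim))$, matching the statement. By a union bound, with probability at least $1-\delta$ every piece simultaneously satisfies $\med_{k_i}(\pi(X_i))\in (1\pm\epsilon)\med_{k_i}(X_i)$. For the upper bound $\opt(\pi(X))\le(1+\epsilon)\opt(X)$, use the per-piece bounds and interpret the decomposition as a feasible solution for $\pi(X)$. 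For the lower bound $\opt(\pi(X))\ge(1-\epsilon)\opt(X)$, any UFL solution on $\pi(X)$, when restricted to a piece $\pi(X_i)$, uses some number $k_i'$ of facilities and incurs cost at least $\wopen \cdot k_i' + \med_{k_i'}(\pi(X_i))$ inside that piece; invoking MMR in the reverse direction and summing via the two-sided identity for $\pi(X)$ then gives the bound.

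The hard part will be the decomposition itself. Keeping each $k_i = 2^{O(\ddim)}$ requires a multi-scale charging of the (possibly many) facilities of a near-optimal UFL solution to local pieces using the doubling property, and bounding $T$ independently of $n$ forces coarsening at levels that are irrelevant to the cost. Moreover, the two-sided identity must hold for $\pi(X)$ as well, even though $\pi(X)$ need not be doubling: this forces the decomposition to be defined intrinsically on $X$ and transported via $\pi$, and one has to argue that the geometric separation between pieces — preserved up to a $(1\pm\epsilon)$ factor by the JL map on a polynomial-size family of critical distances — is strong enough that cross-piece client-to-facility assignments contribute only an $\epsilon$-fraction of the cost. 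Once these separation estimates are in place, the two halves of the argument glue together into \Cref{theorem:OPT_value_const_dim}.
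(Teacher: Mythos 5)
Your high-level plan---decompose $X$ into low-value parts, reduce each part to a small-$k$ $k$-median instance, and invoke the MMR dimension reduction---matches the paper's strategy, but two of your steps fail.

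First, you assume the number of pieces $T=|\partition|$ can be bounded polynomially in $\ddim$, $1/\epsilon$, $1/\delta$. This is false: each part has $\opt(C)=\Theta(\kappa)$ for $\kappa=(\ddim/\epsilon)^{\Theta(\ddim)}$, so $T=\Theta(\opt(X)/\kappa)$, which scales with $n$ because $\opt(X)$ can be as large as $n$. Your union-bound aggregation---apply MMR per piece with failure probability $\delta/T$---therefore forces $m=\Omega(\epsilon^{-2}\log T)=\Omega(\epsilon^{-2}\log n)$, which is precisely the barrier the theorem is designed to beat. The correct aggregation works in expectation (\Cref{lemma:jl_small_opt_contraction_pr} plus \Cref{lemma:size_of_partition}): the per-cluster expected contraction $\ee[\max\{0,\,(1-\epsilon)\opt(C)-\opt(\pi(C))\}]$ is at most $\poly(\kappa)\,e^{-\Omega(\epsilon^2 m)}$, which can be driven below $\epsilon\delta\kappa$ independently of $T$ and $n$; summing over $T=O(\opt(X)/\kappa)$ clusters gives total expected error $O(\epsilon\delta\opt(X))$, and Markov's inequality finishes. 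The $\opt(X)$ from counting clusters cancels against the $\opt(X)$ in the target additive error, a cancellation a union bound cannot exploit.

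Second, your lower-bound step is wrong as stated. You claim that an optimal solution for $\pi(X)$, restricted to a piece $\pi(X_i)$, uses some $k_i'$ facilities and incurs cost at least $\wopen k_i'+\med_{k_i'}(\pi(X_i))$. But the facilities serving $\pi(X_i)$ may lie far outside, in $\RR^m$ and not even in $\pi(X)$, may serve clients in other pieces as well, and may be shared so that neither the opening cost nor the connection cost is accountable to a local $k$-median cost of $\pi(X_i)$. Proving $\opt(\pi(X))\ge\sum_{C\in\partition}\opt(\pi(C))-\epsilon\opt(X)$ (\Cref{lemma:apx_pi_X_by_clusters}) is the main technical work: one transforms the optimal solution $F^*_\pi$ for $\pi(X)$ into a solution serving each $\pi(C)$ locally, using proxies in $X$ for ambient-space facilities, eliminating badly-cut pairs, patching with nets on $C$ and on the holes of $C$, and transporting separation guarantees to the target space via \emph{additive} distortion bounds (\Cref{lemma:ball_expansion}, \Cref{lemma:ball_contraction}) rather than multiplicative JL bounds on a polynomial-size family. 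Your remark about preserving ``critical distances'' points in a useful direction but does not close this gap; in particular the relevant family is not polynomial-size.

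A minor note: each part needs $k_i\le 2^{O(\ddim)}\kappa=(\ddim/\epsilon)^{O(\ddim)}$ facilities, not $2^{O(\ddim)}$; still $\log k_i=O(\ddim\log(\ddim/\epsilon))$, so the announced target dimension is unchanged.
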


There are two previous bounds on dimension-reduction for UFL. 
For $(1+\epsilon)$-approximation,
it was known that dimension $m=\tO(\epsilon^{-2} \log n)$ suffices,
however when $X$ is doubling our bound is far better. 
That previous bound follows from dimension reduction for $k$-median~\cite{MakarychevMR19},
applied with $k=n$,
but not from the JL Lemma, because facilities in the ambient space $\RR^d$
can evade~\eqref{eqn:pairwise_dist}.
Another previous result~\cite{NarayananSIZ21} is for $O(1)$-approximation,
and shows that dimension $m=O(\ddim(X))$ suffices and is moreover optimal,
namely, the map $\pi$ requires $m=\Omega(\ddim(X))$.\footnote{Strictly speaking, UFL is defined in~\cite{NarayananSIZ21}
with facilities restricted to the input $X$,
but their $O(1)$-approximation applies also in our setting,
because one can move the facilities to lie in $X$ at the cost of factor $2$. 
Our $(1+\epsilon)$-approximation can be adapted also to their setting,
see \Cref{remark:dim_reduction_discrete}. 
}
We stress here that $O(1)$-approximation for UFL is significantly different
from $(1+\epsilon)$-approximation. 
In the former, the facilities can be assumed to lie in the dataset $X$
at the cost of factor $2$ in the approximation, 
whereas in the latter, we know of no effective way
to discretize the potential facilities in the ambient space $\RR^d$,
which is truly high-dimensional and does not satisfy the $\ddim(X)$ bound.
In a sense, \Cref{theorem:OPT_value_const_dim} handles
a regime that falls between low and high dimension.
In fact, the existing tools to tackle this difficulty are quite limited,
as in many problems, such as MST, the ambient space is completely irrelevant.
Perhaps the closest problem is NNS~\cite{IndykN07},
where query points may come from the ambient space,
although the impact of a single query point in NNS
is much less global and complicated than facilities in UFL.

It is worthwhile to juxtapose our result with other computational problems.
For $k$-median, dimension reduction is known to require $m=\Omega(\log k)$,
even for $O(1)$-approximation of doubling inputs~\cite{NarayananSIZ21},
hence we see a sharp contrast with UFL. 
For MST, which can be viewed as a clustering problem,
the known dimension reduction for doubling inputs 
has an $O(\log\log n)$-term in the target dimension~\cite{NarayananSIZ21},
hence our result for UFL may hopefully inspire future improvements.

Our oblivious dimension reduction
has immediate implications to offline and streaming algorithms,
by simply employing known algorithms for low (Euclidean) dimension.
In the offline setting, UFL (and even $k$-median) in $\RR^d$
is known to admit a PTAS,
i.e., $(1+\epsilon)$-approximation for every fixed $\epsilon>0$,
that runs in time $2^{(1/\epsilon)^{O(d)}}\cdot n (\log n)^{d+6}$ \cite{KolliopoulosR07}.
Thus, \Cref{theorem:OPT_value_const_dim} immediately implies 
$(1+\epsilon)$-approximation of the optimal value of UFL,
on input $X\subset\RR^d$ when facilities can lie in the ambient space,
in time $2^{(1/\epsilon)^{\tO(\ddim(X) / \varepsilon^2)}}\cdot dn (\log n)^{\tO(\ddim(X)/\epsilon^2)}$.
(We further improve this bound in \Cref{theorem:main_ptas}.)
We remark that for UFL in doubling metrics (but not necessarily Euclidean),
another known algorithm runs in roughly the same time \cite{Cohen-AddadFS21},
but it restricts the facilities to lie in the dataset $X$.

In the setting of dynamic geometric streams,
the input is a stream of insertions and deletions of points from the grid $[\Delta]^d$,
and $X$ is the point set at the end of the stream.
One usually assumes that its size is $n \leq \poly(\Delta)$,
and then bounds can be written in terms of $d$ and $\Delta$ (but not $n$).
The known algorithm for this setting uses space \(O(\epsilon^{-1}\log \Delta)^{O(d)}\)
and outputs a $(1+\epsilon)$-approximation to the value $\opt(X)$~\cite{CzumajLMS13}.\footnote{The results in~\cite{CzumajLMS13} are stated only for \(d=2\),
but their analysis seems to extend to every dimension \(d\).
}
This exponential dependence of $d$ is essential, because in high dimension
(which can be reduced to $d=O(\log n)$ because of the JL Lemma),
every streaming algorithm that reports an $O(1)$-approximation to $\opt(X)$
requires $\Omega(\sqrt n)$ bits of space~\cite{CzumajJK0Y22}.
Nevertheless, when the doubling dimension of $X$ is low, combining \Cref{theorem:OPT_value_const_dim} with the algorithm of~\cite{CzumajLMS13},
immediately implies a streaming algorithm that uses significantly less space. As stated below,
it essentially decreases the exponent from \(d\) to \(\ddim(X)/\epsilon^{2}\),
which can break below the $\poly(n)$ barrier mentioned above~\cite{CzumajJK0Y22},
e.g., when $\ddim(X)=O(1)$ and $n=\poly(\Delta)$ the space usage is only $\polylog(n)$.

\begin{corollary}
\label{cor:streaming}
There is a streaming algorithm that, given as input $0<\epsilon<1$,
a set $X\subseteq [\Delta]^d$ presented as a stream of point insertions and deletions,
and an upper bound $\ddim$, the algorithm uses space
$\tO\big( d\cdot\polylog(\Delta) + (\epsilon^{-1}\log \Delta)^{\tilde{O}(\ddim/\epsilon^{2})}\big)$
and outputs with high probability a $(1+\epsilon)$-approximation to $\opt(X)$.\footnote{The first term in the space usage is for implementing $\pi$,
  which naively requires $d\log\Delta \cdot \ddim/\epsilon^{2}$ bits,
  using a pseudorandom generator \cite{Indyk06},
  which is now a standard argument.
  It may be improved further if each stream update is a single coordinate
  instead of an entire point.
}
\end{corollary}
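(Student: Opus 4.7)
The plan is to compose the oblivious dimension reduction from Theorem~\ref{theorem:OPT_value_const_dim} with the low-dimensional dynamic streaming algorithm of~\cite{CzumajLMS13}, which, on a stream of insertions and deletions over a grid $[\Delta']^m$, uses space $(\epsilon^{-1}\log\Delta')^{O(m)}$ to return a $(1\pm\epsilon)$-approximation of $\opt$.

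First I would sample a random Gaussian linear map $\pi:\RR^d\to\RR^m$ with $m=O(\epsilon^{-2}\ddim\log(\epsilon^{-1}\ddim))$, storing its randomness via a Nisan-style pseudorandom generator in the spirit of~\cite{Indyk06}, so that any entry of the matrix can be regenerated on demand from a seed of only $\tO(d\cdot\polylog(\Delta))$ bits. For every stream update $x\in[\Delta]^d$, compute $\pi(x)$ and round each coordinate to a grid of spacing $\sigma=\Theta(\epsilon/(n\sqrt m))$, and then forward the resulting integer update to the algorithm of~\cite{CzumajLMS13} instantiated with dimension $m$ and side length $\Delta'=\poly(n,d,\Delta,\epsilon^{-1})$; this choice of $\Delta'$ suffices because $\|\pi(x)\|_2=O(\sqrt d\,\Delta)$ with high probability.

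For correctness, Theorem~\ref{theorem:OPT_value_const_dim} (applied with approximation parameter $\epsilon/3$) gives $\opt(\pi(X))\in(1\pm\epsilon/3)\,\opt(X)$ with high probability. The rounding moves each of the at most $n$ input points by $O(\sigma\sqrt m)$, so a standard triangle-inequality argument on the UFL objective yields $|\opt(\tilde\pi(X))-\opt(\pi(X))|=O(n\sigma\sqrt m)=O(\epsilon)$; since rescaling to $\wopen=1$ guarantees $\opt(X)\ge 1$, this additive error is absorbed into a $(1\pm\epsilon/3)$ multiplicative perturbation. Composing with the $(1\pm\epsilon/3)$ guarantee of the low-dimensional streamer yields the claimed $(1\pm\epsilon)$-approximation. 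The total space decomposes into $\tO(d\cdot\polylog(\Delta))$ for the PRG seed plus $(\epsilon^{-1}\log\Delta')^{O(m)}=(\epsilon^{-1}\log\Delta)^{\tO(\ddim/\epsilon^2)}$ for the low-dimensional algorithm, matching the stated bound.

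The step I expect to be the main obstacle is the derandomization: one must verify that a PRG-based implementation of $\pi$ still preserves the conclusion of Theorem~\ref{theorem:OPT_value_const_dim} on every doubling input that the stream could define, and also does not spoil the internal hashing of the streaming subroutine. This follows the now-standard Indyk~\cite{Indyk06} template, but fooling the Johnson--Lindenstrauss-style event underlying Theorem~\ref{theorem:OPT_value_const_dim} simultaneously for all stream outcomes, via a moment-method analysis, is the place where care is needed.
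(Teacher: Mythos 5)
Your proposal is correct and follows the same route the paper intends: compose the oblivious dimension reduction of \Cref{theorem:OPT_value_const_dim} with the low-dimensional dynamic-streaming algorithm of~\cite{CzumajLMS13}, and use an Indyk-style pseudorandom generator to store the Gaussian map $\pi$ implicitly. The paper does not actually write out a proof of \Cref{cor:streaming}—it is presented as an immediate consequence, with only a footnote gesturing at the PRG step—so your write-up fills in real, useful detail. In particular, the discretization argument (rounding $\pi(X)$ to a grid of spacing $\sigma=\Theta(\epsilon/(n\sqrt m))$, bounding the total drift by $O(n\sigma\sqrt m)=O(\epsilon)\le O(\epsilon)\cdot\opt(X)$ via $\opt(X)\ge 1$ after rescaling, and observing that $\log\Delta'=O(\log\Delta)$ under the paper's standing assumption $n\le\poly(\Delta)$ and a polynomially-bounded $d$) is exactly the kind of bookkeeping the corollary elides, and you carry it out correctly. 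Two small points worth tightening if you wrote this up formally: (i) $\sigma$ should be set using the a-priori bound $n\le\poly(\Delta)$ rather than the unknown final $n$; (ii) \Cref{theorem:OPT_value_const_dim} as stated gives constant failure probability $\delta$, so to make ``with high probability'' precise one should note the standard median-of-$O(\log\Delta)$-copies amplification, which only contributes a $\log\Delta$ factor and is absorbed into the second space term. Your closing worry about the PRG simultaneously fooling the JL event and the internal randomness of the CLMS subroutine is a fair one to flag; the paper treats it as ``now standard'' and your proposal is no less rigorous than the source on this point.
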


\paragraph{PTAS for UFL on Doubling Subsets.}
\Cref{theorem:OPT_value_const_dim} only asserts that the optimal \emph{value} is preserved.
While it is natural to expect that a solution for UFL on $\pi(X)\subset \RR^d$
will yield a solution also for $X\subset \RR^m$, formalizing such a connection is tricky,
because $\pi$ is not invertible 
and there is no natural way to map facilities in $\RR^m$ back to $\RR^d$.

Nonetheless, we use our dimension reduction
in conjunction with a new decomposition procedure that we devise, 
which partitions a UFL instance $X\subset\RR^d$
and effectively reduces it to several $k$-median instances in $\RR^m$,
where $m$ is the target dimension from \Cref{theorem:OPT_value_const_dim}
and $k \approx 2^{O(m)}$.
This is useful because $k$-median can be solved efficiently in this parameter regime,
for instance, one can use a known PTAS that runs
in time $2^{(k/\epsilon)^{O(1)}} dn$~\cite{KumarSS10},
or alternatively
in time $2^{\epsilon^{-O(d)}} n \log^{d+6} n$~\cite{KolliopoulosR07}.
We thus obtain in \Cref{theorem:main_ptas} the \emph{first} PTAS for UFL
on doubling subsets of $\RR^d$ where facilities can lie in the ambient space
--- previous techniques could only handle facilities that are restricted to the dataset $X$,
and we know of no effective way to enumerate the potential facilities in $\RR^d$.
The entire algorithm is very efficient and runs in near-linear time;
it does not even need the input to provide an upper bound $\ddim$,
as offline algorithms can $O(1)$-approximate $\ddim(X)$ quickly.

\begin{restatable}{theorem}{theoremmainptas}
\label{theorem:main_ptas}
There is a randomized algorithm that,
given as input $0 < \epsilon < 1$ and an $n$-point $X\subset\RR^d$,
computes a $(1 + \epsilon)$-approximation for UFL
in time $(2^{m^\prime} \oridim + 2^{2^{m'}}) \cdot \tilde{O}(n)$ for
\begin{equation*}
  m' = O\big( \ddim(X) \cdot \log(\ddim(X)/\epsilon) \big).
\end{equation*}
\end{restatable}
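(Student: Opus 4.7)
My plan is to combine \Cref{theorem:OPT_value_const_dim} with a decomposition procedure (the ``main technical contribution'' highlighted in the abstract) that turns a UFL instance into a collection of $k$-median sub-instances with $k=2^{O(m')}$, and then invoke a known low-dimensional $k$-median PTAS on each one. The three $(1+\epsilon)$ approximations in this pipeline (dimension reduction, decomposition, $k$-median PTAS) compose into a single $(1+O(\epsilon))$, which is absorbed by rescaling $\epsilon$.

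First, I would apply the random linear map $\pi:\RR^d\to\RR^m$ from \Cref{theorem:OPT_value_const_dim} with $m=\tO(\epsilon^{-2}\ddim)$, so that $\opt(\pi(X))\in(1\pm\epsilon)\opt(X)$ with constant probability. Since $\pi$ is not invertible, the algorithm cannot simply lift an optimal facility in $\RR^m$ back to $\RR^d$; instead, it must commit in advance to a candidate facility set $C\subset\RR^d$ and only evaluate distances through $\pi$ in $\RR^m$. A natural choice is an $O(\epsilon)$-net built inside each ``rough cluster'' of $X$, as constructed in the next step. As a preprocessing step, I also compute a constant-factor estimate of $\ddim(X)$ in $\tO(nd)$ time, so that the algorithm does not require $\ddim$ as input.

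The heart of the argument is the decomposition. Using a near-linear-time $O(1)$-approximation $F_0$ for UFL, I partition $X$ into sub-instances $X_1,\ldots,X_t$, one per rough facility $f\in F_0$, by assigning each point to a nearby $f$ via a randomized hierarchical scheme. The key structural claim is that, up to a $(1+\epsilon)$ factor, the optimal UFL facilities within each sub-instance can be restricted to an $\epsilon$-net of size $k=(\ddim/\epsilon)^{O(\ddim)}=2^{O(m')}$ inside a ball around $f$ of radius $\poly(1/\epsilon)\cdot\wopen$; the radius bound comes from charging against $\wopen$, and the net size from the doubling property, which is what produces $m'=O(\ddim\log(\ddim/\epsilon))$. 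The delicate part, which I expect to be the main obstacle, is handling ``badly-cut'' pairs of nearby input points that land in different clusters: naive reassignment inflates the cost, so a randomized partition whose cut probability scales with the diameter is required in order to telescope the extra cost into a $(1+\epsilon)$ factor against $\opt(X)$. This follows the spirit of \cite{CzumajLMS13}, but the Euclidean doubling setting, in which the optimal facilities live in the ambient space rather than in the dataset, requires additional care.

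Finally, on each sub-instance $X_i$ I invoke the $k$-median PTAS of Kumar--Sabharwal--Sen, which runs in time $2^{(k/\epsilon)^{O(1)}}\cdot\oridim\cdot n_i=2^{2^{m'}}\cdot\tO(n_i)$, after first compressing dimension via the $k$-median-specific reduction of \cite{MakarychevMR19} to target $\tO(\log k/\epsilon^{2})=\tO(m'/\epsilon^{2})$. Summing over sub-instances and adding the near-linear costs of computing $\pi$, $F_0$, and the hierarchical decomposition (each contributing at most $2^{O(\ddim)}$ per level times $\oridim\cdot n$) yields the claimed total running time $(2^{m'}\oridim+2^{2^{m'}})\cdot\tO(n)$.
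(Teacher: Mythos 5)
Your high-level pipeline (decompose to $k$-median, reduce dimension, run a $k$-median PTAS, recompose) matches the paper's, and you rightly single out badly-cut pairs as the main obstacle, but two of your steps would not close, and they are precisely where the paper works hardest. The first is the lifting back to $\RR^d$: you propose to ``commit in advance to a candidate facility set $\subset\RR^d$,'' namely an $\epsilon$-net of data points, and evaluate only those through $\pi$. But a net inside the doubling set $X$ does not discretize the \emph{ambient} space: moving an optimal facility $f^*\in\RR^d$ to its nearest net point only guarantees an increase comparable to $\dist(x,f^*)$ for each $x$ it serves, i.e.\ a factor-$2$ loss. This is exactly what limits~\cite{NarayananSIZ21} to $O(1)$-approximation, and the obstacle the paper explicitly names (``we know of no effective way to enumerate the potential facilities in $\RR^d$''). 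The paper avoids it by never discretizing facilities at all: it solves $k$-median on each $\pi(C)$ to obtain only a \emph{clustering} (a partition of the data points), lifts that partition back to $X$, and only then computes a near-optimal $1$-median center \emph{in the original space $\RR^d$} for each part (\Cref{alg:ptas}, \Cref{lemma:CLMPS16}), so the output facilities are unrestricted. Your proposal is missing this step and, as written, cannot output a $(1+\epsilon)$-approximate solution for the ambient-facility problem.

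The second gap is the decomposition itself. You assign ``one sub-instance per rough facility $f\in F_0$,'' but to invoke a $k$-median solver with small $k$ you need an \emph{upper bound} $\opt(X_i)\le(\ddim/\epsilon)^{O(\ddim)}$ on every sub-instance (with $\wopen=1$, the number of open facilities never exceeds $\opt(X_i)$); nothing in assigning points to $F_0$-facilities guarantees this, since a single $f\in F_0$ can attract a set of arbitrarily large local UFL cost. Your alternative — bounding $k$ by the net size — only helps if facilities are restricted to the net, which brings back the first gap. The paper obtains the required two-sided bound $\kappa\le\opt(C)\le 2^{O(\ddim)}\kappa$ (\Cref{lemma:opt_C_bounds}) by a \emph{bottom-up} pass over a modified hierarchical decomposition (\Cref{alg:xchg}, \Cref{alg:bot_up_partition}); the bottom-up direction is exactly what makes $\kappa$ independent of $n$, in contrast to the top-down scheme of~\cite{CzumajLMS13}, which only gives $\kappa=\polylog n$ and would yield a QPTAS. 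Without both the upper bound on $\opt(X_i)$ and the cluster-then-$1$-median lifting, the parameter $k$ in your $k$-median calls is uncontrolled and the claimed running time and approximation ratio do not follow.
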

Our new decomposition procedure actually works for all doubling metrics
(even non-Euclidean). 
In that setting, it reduces a UFL instance $X$ to several $k$-median instances
in the same metric space (without dimension reduction),
for $k \approx 2^{O(\ddim \log(\ddim/\epsilon))}$.
These instances can be solved using known algorithms
(based on coresets for $k$-median, see \Cref{append:ptas_discrete})
to obtain a PTAS for UFL that runs in time 
$2^{2^{O(\ddim\cdot \log (\ddim/\epsilon))}} \cdot \tilde{O}(n)$,
when facilities are restricted to the dataset $X$,
and provided oracle access to distances in $X$.
Compared with recent work~\cite{Cohen-AddadFS21}
for a similar setting of all doubling metrics,\footnote{The setting in~\cite{Cohen-AddadFS21} is slightly more general,
  where the facilities are restricted to a given \emph{subset} of $X$,
  rather than all of $X$.
}
our result improves the dependence on $\ddim(X)$
in the double-exponent from quadratic to near-linear,
with comparable dependence on other parameters, e.g., near-linear in $n$.
This expands the recent line of research for pursing fast PTAS for UFL in doubling metrics~\cite{FriggstadRS19,Cohen-AddadFS21}.

\subsection{Technical Contributions and Highlights}
\label{sec:tech_contrib}

Our main technical contribution is a new metric decomposition, 
which partitions a UFL instance that is doubling (not necessarily Euclidean),
into multiple instances, each of low value.
It has the distinctive feature that facilities can lie in a general ambient space, 
while previous decompositions require that also the ambient space is doubling.
Roughly speaking, our decomposition is
a partition $\partition$ of the dataset $X$ into so-called clusters,
such that for a suitable parameter $\kappa = (\ddim/\epsilon)^{\Theta(\ddim)}$,
\begin{enumerate}[(a)] \compactify
\item \label{it:decompose_a}
  every cluster $C\in \partition$ satisfies $\opt(C) = \Theta(\kappa)$; and 
\item \label{it:decompose_b}
  $\sum_{C \in \partition} \opt(C) \in (1\pm\epsilon) \cdot \opt(X)$.
\end{enumerate}
This decomposition is key to both
our dimension-reduction result (\Cref{theorem:OPT_value_const_dim})
and our PTAS (\Cref{theorem:main_ptas}),
and it is probably the first time that metric decomposition
is used to achieve dimension reduction.
Let us highlight the power of this decomposition. 
Property~\eqref{it:decompose_b} guarantees $(1+\epsilon)$-approximation,
which is crucial for surpassing the previous dimension reduction~\cite{NarayananSIZ21},
which achieves only $O(1)$-distortion,
essentially because it is based on a well-known estimate for $\opt(X)$,
from~\cite{MettuP03}, that provides only $O(1)$-approximation.
Property~\eqref{it:decompose_a} bounds the optimal value of clusters
both from below and from above, which is extremely important. 
Moreover, achieving $\kappa$ that is independent of $n$,
and specifically $\kappa = (\ddim/\epsilon)^{\Theta(\ddim)}$,
is a major strength, because $\kappa$ determines the target dimension bound,
which is actually $O(\log \kappa)$.
For comparison, the metric decomposition proposed in~\cite{CzumajLMS13}
achieves $\kappa=\polylog(n)$, which is much weaker,
e.g., it would yield dimension reduction with target dimension $O(\log\log n)$,
and a QPTAS instead of our PTAS.

Our new decomposition uses a \emph{bottom-up} construction,
instead of the previous top-down approach of~\cite{CzumajLMS13}.
Its major advantages is that achieves also a lower bound on $\opt(C)$,
as stated in Property~\eqref{it:decompose_a},
and not only an upper bound that the top-down approach guarantees.
This, in turn, is key for achieving $\kappa$ that is independent of $n$,
because the analysis can charge to the cost of every instance locally.
This bottom-up approach is conceptually similar to sparsity decomposition,
a technique that was crucial to obtain a PTAS for TSP in doubling metrics~\cite{DBLP:journals/siamcomp/BartalGK16, DBLP:journals/siamcomp/ChanHJ18, DBLP:journals/talg/ChanJ18, DBLP:journals/talg/ChanJJ20}.
That technique employs a bottom-up approach as a preprocessing step
to break the dataset into sparse parts that are solved separately,
however the UFL problem and the details of our decomposition
are completely different.

The terms top-down and bottom-up refer to algorithms that use
a hierarchical decomposition of $X$, which is often randomized.
We use Talwar's decomposition~\cite{Talwar04} for a doubling dataset $X$,
which is analogous to a randomly-shifted quadtree in Euclidean space.
Informally, a key feature of this randomized decomposition, denoted by $\cH$,
is that nearby points are ``likely'' to be in the same cluster of $\cH$
(technically, one considers here a suitably chosen level of $\cH$).
For UFL, a crucial aspect is whether each data point $x\in X$ is
in the same cluster as its nearest facility in a fixed optimal solution $F^*$,
and this creates several challenges.
First, an optimal solution $F^*$ is not known to the algorithm
(which is not a concern if $F^*$ is needed only in the analysis),
and a common workaround is to use instead an $O(1)$-approximate solution $F'$,
however it is imperative that the $O(1)$-factor will affect only
the additional cost $\epsilon\cdot\opt(X)$. 
Second, facilities that lie in the ambient space are not even part of $\cH$,
and while conceptually we resolve it similarly to the first challenge,
by replacing $F^*$ with proxy near-optimal facilities $F''\subseteq X$, 
technically it creates complications in our decomposition and its analysis. 
Third, even if we restrict the facilities to lie in the dataset $X$,
the guarantees of $\cH$ are probabilistic,
meaning that some points $x\in X$ (most likely a small fraction)
are not in the same cluster with their ``optimal'' facility,
which precludes us from considering that cluster as a separate instance.

An approach proposed in~\cite{Cohen-AddadFS21}
is to eliminate these so-called badly-cut pairs
by simply moving each such data point $x$ to its ``optimal'' facility,
effectively creating a modified dataset $X'$ with
$\opt(X') \in (1\pm\epsilon) \opt(X)$. 
This is effective if the subsequent steps are applied to $X'$
with no regard to $X$, e.g., running a dynamic-programming algorithm on $X'$. 
However, for our purpose of decomposing $X$ into low-value clusters
(and in turn for our dimension-reduction result),
we still need the probabilistic guarantees of $\cH$,
which apply to $X$, but not to $X'$ that is derived from that same randomness. 

We thus take a different approach of
\emph{modifying the hierarchical decomposition $\cH$}
instead of the data set $X$.
This step eliminates most, but not all, badly-cut pairs,
and we crucially handle the remaining pairs
using the probabilistic guarantees of $\cH$. 
We finally construct the partition $\partition$ by employing
a bottom-up approach on the (modified) hierarchical decomposition. 
In principle, each cluster of $\partition$
arises from a cluster in the hierarchical decomposition,
however these two clusters are not equal and have a more involved correspondence
because of the modifications to $\cH$ and the bottom-up approach.

We remark that our decomposition is designed for UFL,
however many technical steps are general and may find usage in other problems.

\subsection{Proof Overview}
\label{sec:tech_overview}

As mentioned in \Cref{sec:tech_contrib},
our main technical contribution is a new decomposition for UFL instances,
that produces a partition $\partition$ of the dataset $X$
with Properties~\eqref{it:decompose_a} and~\eqref{it:decompose_b} from \Cref{sec:tech_contrib}.
We provide a technical overview of its construction and proof
in \Cref{sec:intro_metric_decomp},
and then use this decomposition to prove our dimension-reduction result
in \Cref{sec:intro_dim_reduction}.
Before proceeding, we briefly describe how this decomposition
immediately implies a PTAS for UFL.

\paragraph{An Immediate PTAS.}
With the new decomposition at hand,
we can immediately obtain a very efficient PTAS for UFL
on a doubling subset $X\subset \RR^d$ (the setting of \Cref{theorem:main_ptas}):
Compute the decomposition $\partition$, 
and then for each cluster $C\in\partition$,
compute a $(1+\epsilon)$-approximate solution for $\opt(C)$.
To implement the last step, observe that by Property~\eqref{it:decompose_a},
an optimal solution for $C$ opens at most $\opt(C)\leq O(\kappa)$ facilities
(recall $\wopen=1$),
and thus $C$ can be solved by an algorithm for $k$-median with $k=O(\kappa)$
(trying also smaller values of $k$). 
It suffices to solve $k$-median within $(1+\epsilon)$-approximation,
which can be done in time $k^{O(k/\epsilon^3)}\cdot \tO(n)$ 
via known approaches based on coresets (see \Cref{lemma:k_median_oracle}).
By Property~\eqref{it:decompose_b},
the union of these solutions for all $C\in\partition$
is a solution for $X$ that achieves $(1+O(\epsilon))$-approximation.
This PTAS almost matches that of \Cref{theorem:main_ptas},
without even using dimension reduction;
more precisely, its running time is roughly
$2^{2^{O(\ddim \log (\ddim/\epsilon))}} \cdot \tilde{O}(nd)$,
whereas \Cref{theorem:main_ptas} decouples $d$ from the doubly-exponential term,
which is significant when $d$ is large,
by using our dimension reduction.

\subsubsection{New Decomposition Procedure}
\label{sec:intro_metric_decomp}

Our new decomposition for UFL is inspired by an earlier one of~\cite{CzumajLMS13},
although our version is more involved and obtains fundamentally stronger bounds.
Let us first recall their approach for an input $X\subset\RR^2$.
Their procedure applies a randomly-shifted quadtree to partition $X\subset\RR^2$,
and then scans the quadtree nodes, which correspond to squares in $\RR^2$,
in a \emph{top-down} manner:
When a square $C$ is examined,
the procedure tests if $\opt(C \cap X) \leq \kappa$
for a suitable threshold $\kappa$. If the test passes, $C\cap X$ is declared as a cluster in the partition $\partition$;
otherwise, the procedure is executed recursively on the $4$ sub-squares of $C$.
This procedure attains $\opt(X) \approx \sum_{C\in\partition} \opt(C)$
by a clever charging argument to the parent squares of low-value clusters,
but it requires setting $\kappa = \polylog(n)$ (or higher),
because the parent squares may be nested and each point inside might be charged $O(\log n)$ times,
which originates from the number of levels in the quadtree.

Our decomposition procedure first constructs
a randomized hierarchical decomposition $\mathcal{H}$ of $X$,
by applying a standard algorithmic tool, due to Talwar~\cite{Talwar04},
that is analogous to a randomly-shifted quadtree but works for all doubling metrics.
This hierarchical decomposition $\mathcal{H}$ has, for every distance scale $2^i$,
a partition of the dataset $X$ into clusters of diameter at most $2^i$,
where the partition for each scale $2^{i - 1}$ refines that for $2^i$. 
Moreover, when this $\mathcal{H}$ is viewed as a tree,
every cluster has at most $2^{O(\ddim)}$ child clusters.
The key guarantee of this hierarchical decomposition is the cutting-probability bound 
\begin{equation} \label{eq:cutting}
  \forall x,y\in X,
  \qquad
  \Pr[\text{$x, y$ are in different clusters of scale $2^i$ }]
  \leq O(\ddim) \cdot \dist(x, y) / 2^i.
\end{equation}

Our decomposition procedure constructs the partition $\partition$ 
by scanning $\mathcal{H}$ in a \emph{bottom-up} manner,
in order to ensure both the upper bound and lower bound in Property~\eqref{it:decompose_a}.
(As explained later, we actually use a modified version of $\mathcal{H}$, denoted $\mathcal{T}$.)
This is in contrast to the top-down approach of~\cite{CzumajLMS13},
which only guarantees an upper bound on $\opt(C)$. 
More precisely, our procedure scans $\mathcal{H}$ bottom-up,
starting from the leaf clusters, and processing each cluster only after its child clusters:
When a cluster $C$ is examined,
and $P$ denotes the current dataset (initialized to $X$),
the procedure tests if $\opt(C \cap P) \geq \kappa$
for a threshold $\kappa  = (\ddim/\epsilon)^{\Theta(\ddim)}$.
If the test passes, $C\cap P$ is added as a cluster in $\partition$,
the points of $C$ are removed from our current dataset $P$,
and the procedure proceeds to the next cluster in $\mathcal{H}$.

\paragraph{Property~\eqref{it:decompose_a}.}
The bottom-up construction clearly attains
a lower bound $\opt(C) \geq \kappa$ for all $C \in \partition$
(except for the very last cluster, which we can handle separately).
To get an upper bound, observe that a cluster $C$ added to $\partition$
is the union of several child clusters that do not pass the test,
i.e., each child $C'$ has $\opt(C' \cap P) < \kappa$.
The number of children is at most $2^{O(\ddim)}$,
and the union of their optimal solutions is clearly a feasible solution for $C$,
hence $\opt(C) \leq 2^{O(\ddim)} \cdot \kappa$. 
This establishes Property~\eqref{it:decompose_a},
up to relaxing the ratio between the upper and lower bounds to be $2^{O(\ddim)}$;
the formal treatment appears in \Cref{lemma:opt_C_bounds}.

Unfortunately, this bottom-up approach has ramifications 
that complicate the entire analysis.
In particular, a cluster $C$ that is added to $\partition$
is no longer a cluster in the hierarchy $\mathcal{H}$,
because some of its descendants in $\mathcal{H}$ might have been removed earlier.
This misalignment with $\mathcal{H}$ makes it difficult to use
the cutting-probability bound~\eqref{eq:cutting},
which applies to the clustering in $\mathcal{H}$ but not that in $\partition$.
We thus introduce the notion of ``holes'' (\Cref{def:holes}),
which captures the parts of $C \in \partition$ that were removed
(when comparing to this $C$ in $\mathcal{H}$).
For sake of simplicity,
we ignore for now the holes and pretend we are directly analyzing $\decom$,
and we also ignore the complications arising from the ambient space
by assuming that facilities lie inside the dataset $X$.
We will return to discuss these issues later in the section.

\paragraph{Property~\eqref{it:decompose_b}.}
This property is borrowed from~\cite{CzumajLMS13},
but our proof is completely different, because of the different construction.
The high-level idea is to take a set of facilities $F^*\subseteq X$
that is optimal for $X$, i.e., it realizes $\opt(X)$,
and transform it into a modified set $F'$
by \emph{adding facilities} inside each cluster $C\in\partition$.
This $F'$ aligns with our partition $\partition$,
because data points in each cluster $C$ are ``served locally''
by facilities in $F' \cap C$.
We will need to show that, in expectation, 
$\cost(X, F') \leq (1 + \epsilon) \cost(X, F^*) = (1 + \epsilon) \opt(X)$.
To simplify this overview,
we present the construction of $F'$ in a more intuitive but less accurate way:
Start with $F'=F^*$, then examine each $C \in \partition$
and add to $F'$ a set $N_C \subseteq C$ of extra facilities.

To define this set $N_C$ we need the notion of a net,
which is a standard method to discretize a metric space,
and is particularly powerful in doubling metrics.
Formally, a \emph{$\rho$-net} of a point set $S$ is a subset $N\subset S$,
such that the distance between every two points in $N$ is at least $\rho$, 
and every point in $S$ has at least one point of $N$ within distance $\rho$.
Let $N_C$ be an $(\epsilon'\cdot\Diam(C))$-net of $C$,
for $\epsilon' := \epsilon / \ddim$;
we remark that an $(\epsilon\cdot\Diam(C))$-net may seem sufficient here,
however the finer net is needed to compensate for the $O(\ddim)$ factor
in the cutting-probability bound~\eqref{eq:cutting}.
A standard bound on the size of a net
implies that $|N_C| \leq O(1/\epsilon')^{\ddim} = O(\ddim/\epsilon)^{\ddim}$.

\paragraph{Increase in Cost.}
We bound the cost increase $\cost(X,F') - \cost(X,F^*)$
by splitting it into two parts, the opening cost and the connection cost.
The increases in opening cost of a cluster $C$ is at most
$|N_C| \leq \epsilon \kappa \leq \epsilon \opt(C)$
by our choice of $\kappa$ and Property~\eqref{it:decompose_a},
and in total over all clusters, it is at most
$\sum_{C \in \partition} |N_C| \leq \epsilon \sum_{C \in \partition} \opt(C)$,
which can be charged to the left-hand side of Property \eqref{it:decompose_b},
that we shall eventually bound.
For the connection cost of each $C \in \partition$,
recall that we only use facilities in $C \cap F'$,
even though the nearest facility to $x\in C$ might be outside $C$,
and thus the increase in connection cost for $C$ is at most 
$
\Delta_C := \sum_{x \in C} \dist(x, F' \cap C) - \sum_{x \in C} \dist(x, F^*).
$

Now consider $x \in C$ and let $F^*(x)$ be its nearest point in $F^*$.
Observe that if $F^*(x)\in C$ then $\dist(x, F' \cap C) \leq \dist(x, F^*)$,
i.e., there is no increase,
and therefore the nontrivial case is when $F^*(x)$ is outside $C$.
A simple idea is to serve $x$ by its nearest neighbor in $N_C$,
which has connection cost $\dist(x, N_C) \leq \epsilon'\cdot \Diam(C)$.
However, this bound might be much larger than $\dist(x, F^*)$,
and we shall to eliminate this situation by ensuring a \emph{separation property}:
\begin{equation} \label{eq:separation}
  \forall x \in C,
  \qquad
  F^*(x) \notin C \ \Longrightarrow\  \dist(x, F^*(x)) \geq \epsilon'\cdot\Diam(C).
\end{equation}
Indeed, this inequality implies that $\dist(x, N_C)\leq \dist(x, \cFopt)$,
hence serving $C$ by facilities in $F' \cap C$ (instead of $F^*$)
does not increase the connection cost.

\paragraph{Eliminating ``Badly Cut'' Pairs.}
We ensure this separation property~\eqref{eq:separation}
using the concept of ``badly-cut'' pairs from~\cite{Cohen-AddadFS21}.
Let $x\in X$, and call a pair $(x, F^*(x))$ \emph{badly cut}
if it is cut in the hierarchical decomposition $\mathcal{H}$
at some distance scale $2^i > \dist(x, F^*(x)) / \epsilon^\prime$. 
Observe that if a pair is not badly cut,
then every cluster $C$ in $\mathcal{H}$ that contains $x$ but not $F^*(x)$ 
must have $\Diam(C) \leq \dist(x, F^*(x))/\epsilon'$.
Thus, eliminating all badly-cut pairs ensures the separation property~\eqref{eq:separation}.

The badly-cuts pairs are eliminated in~\cite{Cohen-AddadFS21}
by simply moving $x$ to $F^*(x)$ whenever $(x, F^*(x))$ is badly cut.
By the cutting-probability bound~\eqref{eq:cutting},
this happens with probability at most $O(\epsilon' \cdot \ddim) = O(\epsilon)$,
hence these movements modify $X$ into $X'$ that satisfies
$\E_{\mathcal{H}}[\opt(X')] \in (1 \pm O(\epsilon))\cdot \opt(X)$,
which we can afford.
The overall strategy here is to first define $X'$ from $X$,
and then add to $F^*$ (which is now the optimal facilities for $X'$)
more facilities to obtain $F'$.

This approach of moving points is effective as a local fix,
as it does not change $\opt(X)$ by too much,
however it is not useful for globally decomposing $X$ into clusters
that satify Properties~\eqref{it:decompose_a} and~\eqref{it:decompose_b}.
We take an alternative approach of modifying $\mathcal{H}$
(as outlined in \Cref{alg:xchg})
into a new hierarchical decomposition $\mathcal{T}$,
in which no pair $(x, F^*(x))$ is badly cut.
We then construct the final partition $\partition$ from this $\mathcal{T}$,
rather than from $\mathcal{H}$ (in \Cref{alg:bot_up_partition}).
Overall, we establish a refined version of the separation property~\eqref{eq:separation},
as detailed in \Cref{lemma:partition_properties}.

\paragraph{Handling Holes.}
Recall that our bottom-up decomposition might create ``holes'' (\Cref{def:holes}),
because a cluster $C$ that is added to $\partition$
might have some of its points removed earlier,
and we let $\hole_C$ be the set of clusters in $\partition$
that contain these earlier-deleted points.
We can handle holes and still obtain Property~\eqref{it:decompose_b}
using essentially the same arguments as before. 
When we consider $x \in C$ for some $C \in \partition$ and $F^*(x) \notin C$,
we use the net $N_C$ of $C$ (same as before)
only when $F^*(x)$ does not belong to any cluster in $\hole_C$,
and we use $N_{\widehat{C}}$ when $F^*(x) \in \widehat{C}$ for some $\widehat{C} \in \hole_C$.
We need to add the nets $N_{\widehat{C}}$ to our opening cost, but this extra cost can be charged to $\opt(\widehat{C})$, and each $\widehat{C}$ is charged only once by the observation that
$\hole_C \cap \hole_{C'} = \emptyset$ for distinct $C, C' \in \partition$.

\paragraph{Facilities in the Ambient Space.}
When facilities can lie in the ambient space, which need not be doubling,
we face the major obstacle that the tools we developed,
like the cutting-probability bound~\eqref{eq:cutting}
and the separation property~\eqref{eq:separation}
need not apply to the optimal set of facilities $F^* \subset \RR^d$. 
Another, more technical, obstacle is that the net $N_C$
(and similarly $N_{\widehat{C}}$ for $\widehat{C} \in \hole_C$)
might cover the doubling subset $C$ but not $F^*(x)$. 

Our plan is to pick for each $F^*(x)$ a \emph{proxy}
in the dataset $X$ (which is doubling),
adapt our previous arguments to work for that proxy,
and use this to argue about $\Fopt$.
Specifically, the proxy of a facility $F^*(x)$ is its closest point in $X$
that is served (in the optimal solution $F^*$) by the same facility,
formalized by a mapping $g : F^* \to X$, where
$g(F^*(x)) = \argmin_{y \in X} \{\dist(y, \Fopt(x)) :\ \Fopt(y)=\Fopt(x)\}$.
To use these proxies, we modify the step that eliminates badly-cut pairs
to handle pairs $(x, g(\Fopt(x)))$ for $x\in X$,
and obtain the separation property for these pairs. We then show that this translates also to a separation property for $(x, \Fopt(x))$.

However, we cannot apply the previous argument about cost increase, 
because it used that $\Fopt(x)$ is ``covered'' by some net $N_C$
(or $N_{\widehat{C}}$ for $\widehat{C} \in \hole_C$).
We need new steps in the analysis, and a particularly nontrivial case
is when $g(\Fopt(x)) \in \widehat{C}$ for some $\widehat{C}\in \hole_C$.
Now, if the proxy $g(\Fopt(x))$ is close enough to $\Fopt(x)$,
then we can pretend that $\Fopt(x) = g(\Fopt(x))$ and the analysis goes through.
And if they are far apart (compared with $\dist(x, g(\Fopt(x)))$),
then we crucially make use of the \emph{optimality} of $\Fopt$,
and show that $\Fopt(x)$ must be near $\widehat{C}$,
namely, within distance $O(\Diam(\widehat{C}))$.
These facts imply that $x$ is close to $\widehat{C}$,
hence $x$ can be covered by the net $N_{\widehat{C}}$.
This net is fine enough and thus
contains a point within distance $\epsilon \cdot \dist(x, \Fopt(x))$ from $x$
(here we use the separation property between $(x, \Fopt(x))$),
and we can use that net point to serve $x$ instead of $\Fopt(x)$,
with no additional connection cost.
We remark that these steps generally work for any ambient space beyond Euclidean $\mathbb{R}^d$.

\subsubsection{Dimension Reduction}
\label{sec:intro_dim_reduction}

Our proof of dimension reduction for UFL,
i.e., that with high probability $\opt(\pi(X)) \in (1 \pm \epsilon) \opt(X)$,
heavily relies on our decomposition to provide a structurally simple
characterization of the optimal value, namely,
$\opt(X) \in (1\pm\varepsilon)\cdot \sum_{C \in \partition} \opt(C)$. 
At a high level, our proof shows that the right-hand side
is ``preserved'' under a random linear map $\pi$.

We need to prove both an upper bound and a lower bound on $\opt(\pi(X))$. 
The upper bound is easy,
as observed in recent work~\cite{MakarychevMR19,NarayananSIZ21,JiangKS24},
because we may consider one optimal solution $F^*$ for $X$
and analyze its image under $\pi$,
i.e., the cost of the solution $\pi(F^*)$ for $\pi(X)$.
Since we only need $\pi$ to preserve this one specific solution, 
target dimension $m = O(\poly(\epsilon^{-1}))$ suffices.

The lower bound is more interesting and is where we use our decomposition of $X$,
which implies $\opt(X) \geq (1-\epsilon)\cdot \sum_{C \in \partition} \opt(C)$.
We would like to show this inequality is ``preserved'' under $\pi$,
i.e., ``carries over'' to the target space,
and what we actually show, as explained further below, is that
\begin{equation} \label{eq:apx_pi_X_by_clusters}
  \opt(\pi(X)) \geq \sum_{C \in \partition} \opt(\pi(C)) - \epsilon \cdot \opt(X).
\end{equation}
Notice that the additive error here $\epsilon\cdot \opt(X)$
might not be directly comparable to $\opt(\pi(X))$.
Nevertheless, this bound~\eqref{eq:apx_pi_X_by_clusters} turns out to suffice,
because we only need to show in addition that
\begin{equation} \label{eq:projectionC}
  \sum_{C \in \partition} \opt(\pi(C))
  \geq
  (1-\epsilon) \sum_{C \in \partition} \opt(C).
\end{equation}
Putting together \eqref{eq:apx_pi_X_by_clusters}, \eqref{eq:projectionC}
and Property~\eqref{it:decompose_b}
will then conclude the desired lower bound.

The proof of~\eqref{eq:projectionC} relies on~\cite{MakarychevMR19}, as follows.
Let $\med_k(S)$ denote the optimal value of $k$-median on $S\subset\RR^d$;
then we know from~\cite{MakarychevMR19}
that target dimension $\tO(\epsilon^{-2}\log k)$
suffices for dimension reduction for $k$-median,
meaning that for every $S\subset\RR^d$,
with high probability $\med_k(\pi(S)) \in (1\pm\epsilon) \med_k(S)$.
We apply this in our case by letting $S$ be a cluster $C\in\partition$,
and we know from Property~\eqref{it:decompose_a} that the number of facilities
needed for $C$ is at most $O(\kappa) = (\ddim/\epsilon)^{O(\ddim)}$,
hence target dimension $\tO(\epsilon^{-2}\ddim)$ suffices for it.
The only gap is that we need to apply~\cite{MakarychevMR19} multiple times
for our summation over all $C \in \partition$.
We handle this in a series of lemmas
(\Cref{lemma:piX_leq_X_value_general,lemma:jl_small_opt_contraction_pr})
that are based on~\cite{MakarychevMR19},
and bound the additive error for each $C \in \partition$ by
$\opt(C) - \opt(\pi(C)) \leq e^{-\epsilon^2 m} \cdot \poly(\kappa) \leq \epsilon \kappa$.
We then use the fact that $|\partition|$, the number of terms in the summation,
is roughly $\opt(X) / \kappa$ (\Cref{lemma:size_of_partition}),
and thus the total additive error is at most $\epsilon\cdot \opt(X)$,
which we can afford.

Finally, we briefly discuss the proof of~\eqref{eq:apx_pi_X_by_clusters},
which overall is similar to that of Property~\eqref{it:decompose_b}
and its formal treatment appears in \Cref{lemma:apx_pi_X_by_clusters}.
We let $F^*_\pi$ be an optimal set of facilities for $\pi(X)$,
and we modify it into $F'_\pi$ that is ``consistent'' with $\partition$,
i.e., in every cluster $C \in \partition$,
all points $x \in \pi(C)$ are served by facilities in $F'_\pi \cap \pi(C)$.
Implementing this plan encounters new difficulties,
and we focus here on one immediate issue --
that we have to analyze $\pi(X)$, which is random.
To address this, 
we condition on the event $\mathcal{E}_C$, for $C \in \partition$,
that the distances between points in $N_C$ (which is a net on $C$)
and all other data points (a doubling point set) are preserved simultaneously.
For this event to hold with high probability,
it suffices that $m = \tilde{O}(\epsilon^{-2}\ddim)$
(see \Cref{lemma:ball_expansion,lemma:ball_contraction}),
similarly to a lemma from~\cite{IndykN07}
about preserving the nearest-neighbor distance from a query point to a doubling point set.
This is the sole use of the randomness of $\pi$ in this analysis.

     \subsection{Related Work}
\label{sec:related}

Oblivious dimension reduction can be useful in various models of computation,
and one may wonder about algorithms that run in different models
and approximate UFL on high-dimensional Euclidean inputs, 
i.e., inputs as in our results but without the doubling condition. 
For offline approximation in polynomial time,
the state-of-the-art is $(2.406 + \epsilon)$-approximation for UFL,
which follows from the same ratio for $k$-median~\cite{Cohen-AddadEMN22}. 
Aiming for fast approximation algorithms,
one can achieve $O(1/\epsilon)$-approximation in time $\tilde{O}(n^{1 + \epsilon})$~\cite{GoelIV01},
via a reduction to nearest neighbor search.
This reduction-style result was recently improved to be fully-dynamic,
with a similar tradeoff between approximation ratio and time~\cite{bhattacharya2024dynamic}.
In dynamic geometric streams, known algorithms achieve
$O(d/\log d)$-approximation using $\poly(d\log n)$ space,
or $O(1/\epsilon)$-approximation using space $n^{O(\epsilon)} \poly(d)$,
both using a technique of geometric hashing~\cite{CJK+22:arxiv}.
This geometric-hashing technique was recently used in the setting of
massively parallel computing (MPC), to design fully-scalable MPC algorithms
that achieve $O(1/\epsilon)$-approximation in $O(1)$ rounds
using $n^{1+\epsilon}\poly(d)$ total space~\cite{CzumajGJKV24}.

 \section{Preliminaries}
\label{sec:prelim}

Let $(\metrspa, \dist)$ be a metric space. 
The \emph{ball} centered at $x \in \metrspa$ with radius $r > 0$
is defined as $\Ball(x, r):=\Set{y\in \metrspa \colon \dist(x, y) \leq r}$. 
The \emph{$r$-neighborhood} of a point set $X\subseteq \metrspa$
is defined as $\Ball(X, r):=\bigcup_{x\in X} \Ball(x, r)$.
The \emph{diameter} of a point set $X \subseteq \metrspa$
is defined as $\Diam(X) := \max_{x, y} \dist(x, y)$,
and its \emph{aspect ratio} (or \emph{spread}), denoted $\Delta(X)$,
is the ratio between the diameter and the minimum inter-point distance in $X$.
For a point set $X \subseteq \metrspa$ and a point $u \in \metrspa$,
let $X(u)$ denote the point of $X$ that is nearest to $u$.
Denote by $\opt^S(\Xori)$ the optimal UFL value for input $X\subseteq \metrspa$
when facilities are restricted to the set $S\subseteq \metrspa$,
and let $\opt(\Xori) := \opt^{\metrspa}(\Xori)$ for short.

\begin{definition}[Doubling dimension~\cite{GuptaKL03}]
\label{def:ddim}
The \emph{doubling dimension} of a metric space $(\metrspa, \dist)$ 
is the smallest $t\ge0$ such that every metric ball can be covered by at most $2^t$ balls of half the radius.
The doubling dimension of a point set $X\subseteq\metrspa$ is 
the doubling dimension of the metric space $(X, \dist)$,
and is denoted $\ddim(X)$.
\end{definition}

\begin{definition}[Packing, covering and nets]
Consider a metric space $(\metrspa, \dist)$ and let $\rho > 0$. 
A point set $S\subseteq\metrspa$ is a \emph{$\rho$-packing}
if for all $x, y \in S$, $\dist(x, y) \geq \rho$.
The set $S$ is a \emph{$\rho$-covering} for $X$
if for every $x \in X$, there is $y \in S$ such that $\dist(x, y) \leq \rho$.
The set $S$ is a $\rho$-\emph{net} for $X$
if it is both a $\rho$-packing and a $\rho$-covering for $X$.
\end{definition}

\begin{proposition}[Packing property~\cite{GuptaKL03}]
    \label{prop:packing}
    If $S$ is $\rho$-packing then $|S| \leq (2\Diam(S) / \rho)^{\ddim(S)}$.
\end{proposition}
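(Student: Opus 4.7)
The plan is to use the definition of doubling dimension iteratively to cover $S$ by small balls, and then invoke the packing condition to conclude that each small ball contains at most one point of $S$.

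First, set $D \deq \Diam(S)$ and fix any point $x_0 \in S$. Then $S \subseteq \Ball(x_0, D)$, since every $y \in S$ satisfies $\dist(x_0, y) \leq D$. By \Cref{def:ddim} applied to the metric space $(S,\dist)$, every metric ball in $S$ can be covered by $2^{\ddim(S)}$ balls of half the radius. Applying this recursively $j$ times, I would cover $\Ball(x_0, D) \cap S$ by at most $2^{j\cdot \ddim(S)}$ balls of radius $D/2^j$.

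Next, I would choose $j$ to be the smallest nonnegative integer such that $D/2^j < \rho/2$, equivalently $2^j > 2D/\rho$. For this choice, any such ball has diameter strictly less than $\rho$, so by the $\rho$-packing property of $S$ it can contain at most one point of $S$. Therefore
\[
  |S| \;\leq\; 2^{j \cdot \ddim(S)} \;\leq\; \bigl(2D/\rho\bigr)^{\ddim(S)},
\]
where the last inequality uses the choice of $j$ (up to the standard rounding in the definition of doubling dimension; alternatively one interprets the bound with a ceiling).

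The only mildly delicate point is pinning down the boundary case between strict and non-strict inequalities coming from whether the net/packing balls are open or closed, and from the integer choice of $j$. This is essentially a bookkeeping issue that can be absorbed into the constant in the exponent, and does not affect the shape of the bound $(2\Diam(S)/\rho)^{\ddim(S)}$. No further tools are needed beyond \Cref{def:ddim} itself.
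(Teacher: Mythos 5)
The paper states this proposition with a citation to~\cite{GuptaKL03} and gives no proof, so there is no in-paper argument to compare against; your approach (iterated halving of a single covering ball, then one-point-per-ball) is the standard one and conceptually correct. However, the last inequality in your chain is wrong as written: you choose $j$ to be the \emph{smallest} integer with $2^j > 2D/\rho$, which gives $2^{j\cdot\ddim(S)} > (2D/\rho)^{\ddim(S)}$, the opposite of what you need. What minimality of $j$ actually gives you is $2^{j-1}\leq 2D/\rho$, i.e. $2^j\leq 4D/\rho$, so your argument proves $|S|\leq (4\Diam(S)/\rho)^{\ddim(S)}$. This is not a rounding artifact that can be absorbed into a ceiling; it is a genuine factor of $2$ in the base that your threshold choice ($D/2^j<\rho/2$) cannot avoid, because a ball of radius $r$ generally contains points of $S$ as far as $2r$ apart.

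The ingredient you are missing is that \Cref{def:ddim} covers balls of the metric space $(S,\dist)$ by balls \emph{of that same space}, hence centered at points of $S$. A ball $B_S(c,r)$ with $c\in S$ and $r<\rho$ contains only $c$ itself: any other $z\in S$ with $\dist(c,z)\leq r<\rho$ would violate the $\rho$-packing. So it suffices to iterate only until $D/2^j<\rho$, i.e. $2^j>D/\rho$; minimality then gives $2^j\leq 2D/\rho$ and the claimed bound $|S|\leq (2\Diam(S)/\rho)^{\ddim(S)}$. For the applications in this paper the looser $(4D/\rho)^{\ddim}$ would in fact suffice, so the spirit of your ``bookkeeping'' remark is not unreasonable, but you should not present $2^{j\cdot\ddim(S)}\leq (2D/\rho)^{\ddim(S)}$ as following from $2^j>2D/\rho$.
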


We summarize below the properties of the random linear map $\pi$ are used in this paper.
Recall that $\pi : x \mapsto \frac{1}{\sqrt m} Gx$ where $G \in \mathbb{R}^{m \times d}$ is a random Gaussian matrix.
In some previous work, such as~\cite{MakarychevMR19},
only properties~\eqref{eqn:rlm_expand} and~\eqref{eqn:rlm_expect} below were needed,
and they may hold for other maps $\pi$.
We need also~\eqref{eqn:rlm_contract}, which seems to be more specific to Gaussian.

\begin{proposition}[Properties of random linear maps]
    \label{prop:random_proj}
    Let $\pi\colon \RR^\oridim\to\RR^\tardim$ be a random linear map.
    Then for every unit vector $x\in \RR^\oridim$ and every $t>0$,
    \begin{align}
        \Pr[\Norm{\pi(x)}\not\in 1 \pm t] &\leq e^{-t^2\tardim/8} . \label{eqn:rlm_expand}\\
        \Pr[\Norm{\pi(x)}\leq 1/t] &\leq \left(\frac{3}{t}\right)^\tardim . \label{eqn:rlm_contract}\\
        \EE{ \max\left\{0, \Norm{\pi(x)} - (1+t)\right\} }
        &\leq \frac{1}{mt}e^{-t^2 \tardim/2} .
        \label{eqn:rlm_expect}
    \end{align}
\end{proposition}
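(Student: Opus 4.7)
The plan is to reduce all three bounds to concentration statements about a single chi-squared random variable, by first exploiting rotational invariance of the Gaussian distribution. Fix any unit vector $x \in \mathbb{R}^d$. Each row of $G$ is i.i.d.\ standard Gaussian, so the $i$-th coordinate of $Gx$ is a standard Gaussian, and independence across rows gives $Gx \sim N(0, I_m)$. Hence $Y := m \|\pi(x)\|^2 = \|Gx\|^2 \sim \chi_m^2$, and all three claims translate to tail or moment statements about $Y$ (equivalently, about the norm of a standard Gaussian vector in $\mathbb{R}^m$).

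For \eqref{eqn:rlm_expand}, I would apply the classical Chernoff/MGF method to $Y$, using $\mathbb{E}[e^{\lambda Z^2}] = (1-2\lambda)^{-1/2}$ for $Z \sim N(0,1)$ and $\lambda < 1/2$, and optimizing $\lambda$ separately for the upper tail $Y > m(1+t)^2$ and the lower tail $Y < m(1-t)^2$. Each tail is of order $e^{-\Theta(t^2 m)}$, and a union bound yields \eqref{eqn:rlm_expand} with the stated constant $1/8$; this is essentially the Johnson--Lindenstrauss concentration lemma.

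For \eqref{eqn:rlm_contract}, I would argue by a direct volume computation. Writing $\pi(x) = g/\sqrt{m}$ with $g \sim N(0, I_m)$, the event $\{\|\pi(x)\| \leq 1/t\}$ is $\{g \in B(0, \sqrt{m}/t)\}$. Bounding the density of $g$ pointwise by its maximum $(2\pi)^{-m/2}$ and integrating gives
\[
\Pr[\|\pi(x)\| \leq 1/t] \leq \frac{\mathrm{Vol}(B(0, \sqrt{m}/t))}{(2\pi)^{m/2}} = \frac{(\sqrt{m}/t)^m}{2^{m/2}\,\Gamma(m/2+1)} .
\]
Stirling's lower bound $\Gamma(m/2+1) \geq \sqrt{\pi m}\,(m/(2e))^{m/2}$ then simplifies the right-hand side to $(\sqrt{e}/t)^m / \sqrt{\pi m} \leq (3/t)^m$, as required.

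For \eqref{eqn:rlm_expect}, I would use the layer-cake identity
\[
\mathbb{E}\bigl[\max\{0, \|\pi(x)\| - (1+t)\}\bigr] = \int_t^\infty \Pr[\|\pi(x)\| > 1+u]\, du ,
\]
and then need the sharper one-sided tail $\Pr[\|\pi(x)\| > 1+u] \leq e^{-mu^2/2}$, which follows from the Laurent--Massart inequality applied to $\chi_m^2$ (setting $u' = mu^2/2$ one verifies $2\sqrt{mu'}+2u' = mu\sqrt{2}+mu^2 \leq m(2u+u^2)$, hence $\Pr[Y > m(1+u)^2] \leq e^{-u'}$). The remaining integral is evaluated by the standard Gaussian-tail trick $\int_t^\infty e^{-mu^2/2}\, du \leq \frac{1}{t}\int_t^\infty u\, e^{-mu^2/2}\, du = \frac{1}{mt} e^{-mt^2/2}$. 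The main obstacle is precisely this sharpening: a naive plug-in of \eqref{eqn:rlm_expand} would give only $e^{-mu^2/8}$ in the exponent, so matching the stated $e^{-t^2 m/2}$ really requires the tighter chi-squared bound (and the factor $1/(mt)$ then comes out for free from the Gaussian-tail estimate).
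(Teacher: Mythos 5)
Your proposal is correct, and for \eqref{eqn:rlm_expect} it follows essentially the same route as the paper: both rewrite the expectation as $\int_t^\infty \Pr[\|\pi(x)\|>1+u]\,du$, both invoke the one-sided tail bound $\Pr[\|\pi(x)\|\ge 1+u]\le e^{-mu^2/2}$, and both finish with the same Gaussian-tail trick $\int_t^\infty e^{-mu^2/2}\,du\le \frac{1}{t}\int_t^\infty u\,e^{-mu^2/2}\,du=\frac{1}{mt}e^{-mt^2/2}$. The one substantive difference is that the paper simply cites the one-sided tail as ``a known tail bound'' (and cites \cite{IndykN07,NarayananSIZ21} wholesale for \eqref{eqn:rlm_expand}--\eqref{eqn:rlm_contract}), whereas you prove everything from scratch: \eqref{eqn:rlm_expand} by the Chernoff/MGF method applied to $Y\sim\chi^2_m$, \eqref{eqn:rlm_contract} by the elementary density-bound-plus-volume argument together with Stirling, and the tail for \eqref{eqn:rlm_expect} via Laurent--Massart with the explicit verification $\sqrt{2}\,mu+mu^2\le m(2u+u^2)$. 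Your observation that the stated exponent $e^{-t^2 m/2}$ \emph{cannot} be obtained by plugging \eqref{eqn:rlm_expand} (which only has exponent $-t^2m/8$) into the layer-cake integral is exactly the right subtlety to flag; the sharper one-sided $\chi^2$ bound is genuinely needed, and you supply it. In short, your proof is a self-contained version of the paper's argument that makes the cited ingredients explicit; the paper's is shorter because it offloads \eqref{eqn:rlm_expand}, \eqref{eqn:rlm_contract}, and the one-sided tail to references.
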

\begin{proof}
The bounds~\eqref{eqn:rlm_expand} and \eqref{eqn:rlm_contract}
were established in~\cite[Eq. (7)]{IndykN07} (see also~\cite[Eq. (5)(6)]{NarayananSIZ21}).
To prove~\eqref{eqn:rlm_expect},
we need a known tail bound
$    \PR{\Norm{\pi(x)} \geq 1+t} \leq e^{-t^2 m / 2}.
$
Denote random variables $\xi = \Norm{\pi(x)} - 1$ and $\eta = \max\{0, \xi - t\}$.
Then 
\begin{align*}
    \EE{ \max\left\{0, \Norm{\pi(x)} - (1+t)\right\} }
    &= \EE{\eta}
    = \int_0^{\infty} \PR{\eta \geq u}\,  \mathrm{d} u
    = \int_t^{\infty} \PR{\xi \geq u}\,  \mathrm{d} u\\
& \leq \int_t^{\infty} e^{-u^2 m / 2}\,  \mathrm{d} u
    \leq \int_t^{\infty} \frac{u}{t} \cdot e^{-u^2 m / 2}\,  \mathrm{d} u
    = \frac{1}{mt} e^{-t^2 m / 2}.
\qedhere
\end{align*}
\end{proof}

\section{A New Decomposition for UFL}
\label{sec:partition}

This section introduces our new decomposition for UFL instances,
which technically is a random partition $\partition$ of the dataset $\Xori$,
and effectively reduces the UFL instance into separate low-value UFL instances,
each formed by a different part $C \in \partition$. 
Throughout this section,
we assume that $(\mathcal{X}, \dist)$ is an underlying metric space
and $X\subseteq \mathcal{X}$ is a dataset of doubling dimension at most $\ddim$.
A feasible UFL solution is a set of facilities,
which can be any (finite) subset of $\metrspa$. 
We present the construction of the partition $\partition$
in \cref{subsec:partition_construction},
which includes a summary of its main properties
in \cref{lemma:opt_C_bounds,lemma:size_of_partition,lemma:apx_pi_X_by_clusters}.
We then prove these lemmas in \cref{subsec:opt_C_bounds,subsec:bounding_partition,subsec:cost_lb}, respectively.
The partition $\partition$ is parameterized by $\kappa \ge 1$
(in addition to $0<\varepsilon<1$). 
\begin{restatable}[Bounded local UFL values]{lemma}{LemmaoptCbounds}
    \label{lemma:opt_C_bounds}
    For every $\kappa \ge 1$, the random partition $\partition = \partition(\kappa)$ always satisfies that $\kappa \leq \opt(C) \leq 2^{10\ddim} \kappa$ for all $C \in \partition$.
\end{restatable}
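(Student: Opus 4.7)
The plan is to prove both bounds directly from the bottom-up construction sketched in the overview. Throughout, write $\tilde{C} = C_{\mathcal{T}} \cap P_C$ for the cluster added to $\partition$, where $C_{\mathcal{T}}$ is the corresponding cluster in the modified hierarchical decomposition $\mathcal{T}$ and $P_C$ is the ``current dataset'' $P$ at the moment $C_{\mathcal{T}}$ is examined. The lower bound $\opt(\tilde{C}) \geq \kappa$ is then immediate: the procedure adds $\tilde{C}$ to $\partition$ precisely when the test $\opt(\tilde{C}) \geq \kappa$ succeeds, so the inequality holds by definition. The only subtlety is the ``last cluster'' mentioned in the overview, which I would handle by arranging the procedure so that the root of $\mathcal{T}$ is either added directly (if its test passes) or its remaining points are absorbed into the most recently added cluster; either way the invariant $\opt(C) \geq \kappa$ is preserved for every $C \in \partition$.

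For the upper bound I would exploit the structure of the bottom-up pass. Let $C_1, \ldots, C_t$ be the children of $C_{\mathcal{T}}$ in $\mathcal{T}$. Since we process in bottom-up order, each $C_i$ has already been examined by the time we reach $C_{\mathcal{T}}$, and each falls into one of two cases: either $C_i$ was added to $\partition$, in which case all points of $C_i$ were removed from $P$ and thus $C_i \cap P_C = \emptyset$; or $C_i$ failed the test, i.e.\ $\opt(C_i \cap P_i) < \kappa$ where $P_i$ is the point set when $C_i$ was examined. Because sibling subtrees of $\mathcal{T}$ are vertex-disjoint, nothing happening between $P_i$ and $P_C$ can delete a point from $C_i$, so $C_i \cap P_C = C_i \cap P_i$. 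Therefore
\begin{equation*}
  \tilde{C} \;=\; \bigsqcup_{i\,:\,C_i \text{ failed the test}} (C_i \cap P_i),
\end{equation*}
and taking the union of the individual optimal solutions for each $C_i \cap P_i$ gives a feasible UFL solution for $\tilde{C}$ whose total cost is strictly less than $t\kappa$, hence $\opt(\tilde{C}) < t\kappa$.

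It remains to bound the fan-out $t$ of any cluster in $\mathcal{T}$. Talwar's hierarchical decomposition, applied to a dataset of doubling dimension $\ddim$, yields at most $2^{O(\ddim)}$ children per cluster, since one applies the doubling property a constant number of times to cover a cluster of diameter $2^i$ by clusters of diameter $2^{i-1}$. Because $\mathcal{T}$ is obtained from $\mathcal{H}$ by the badly-cut elimination step of \Cref{alg:xchg}, which only reattaches existing descendants of a node as direct children without collapsing distinct subtrees, the fan-out of $\mathcal{T}$ remains $2^{O(\ddim)}$, and an explicit accounting of the reattachments yields the stated $t \leq 2^{10\ddim}$. Combining these gives $\opt(\tilde{C}) \le 2^{10\ddim}\kappa$, as required.

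The main obstacle I foresee is precisely this fan-out bookkeeping: one must verify carefully that the badly-cut elimination does not blow up the fan-out beyond $2^{10\ddim}$, and pin down the exact constant in the exponent for Talwar's construction in doubling metrics. Neither step involves probability or any of the more delicate net/separation machinery developed later; both reduce to purely combinatorial properties of $\mathcal{T}$ and to the bottom-up processing order, so I expect the argument to go through cleanly once the construction of $\mathcal{T}$ from \Cref{alg:xchg} is in place.
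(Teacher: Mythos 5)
Your overall strategy matches the paper's: the lower bound is immediate from the test in \Cref{alg:bot_up_partition}, and the upper bound charges $\opt(C)$ to the children of $C$ in the modified decomposition, each of which failed the $\geq \kappa$ test and thus has small UFL value. However, there is a genuine gap in the central identity
\begin{equation*}
  \tilde{C} \;=\; \bigsqcup_{i\,:\,C_i \text{ failed}} (C_i \cap P_i),
\end{equation*}
which implicitly assumes that a cluster $C^{\xchg}$ in $\xchg$ is the union of its children $\widehat{C}^{\xchg} \in \xchg_{\lev(C)-1}$. The paper explicitly warns that this fails: \Cref{alg:xchg} performs its reassignments level by level \emph{independently}, so the nesting of $\decom$ is broken and ``a node $C^\xchg \in \xchg_i$ is not necessarily the union of its children at $\xchg_{i-1}$.'' Your description of \Cref{alg:xchg} as ``reattaching existing descendants of a node as direct children'' is not what the algorithm does — it moves individual points between same-level clusters, and a point $x$ moved into $C^\xchg$ at level $i$ need not lie in any child $\widehat{C}^\xchg$ of $C^\xchg$ at level $i-1$.

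Because of this, the union of failed children accounts only for $\bigcup_{\widehat{C}\in\Gamma_C}\widehat{C}^\xchg$, and you still owe a bound on $\opt$ of the residual $C^\xchg \setminus \bigcup_{\widehat{C}\in\Gamma_C}\widehat{C}^\xchg$. The paper closes this by first showing the residual is contained in $C^\xchg \setminus C^\decom$ (every point of $C^\decom$ that survives in $C^\xchg$ must land in some child), then observing that any point in $C^\xchg \setminus C^\decom$ has $(x, \Fa(x))$ badly cut exactly at $\lev(C^\decom)$, which by \Cref{lemma:xchg_local} confines it to $\Ball(C^\decom, \varepsilon^2\rang(C^\decom))$. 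A packing argument then shows at most $(32(2+\varepsilon^2))^\ddim$ level-$(\lev(C^\decom)-1)$ clusters can contribute, each light at the moment $C$ is added. This residual bound, combined with the $2^{5\ddim}$ fan-out bound on $|\Gamma_C|$, is what yields the $2^{10\ddim}\kappa$ constant; without it your argument gives no control on $\opt(C)$ at all. The fan-out claim itself is fine (it follows from \cref{prop:packing} applied to the net points defining children), but it is not where the difficulty lies.
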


In our applications, we set $\kappa := (\ddim/\varepsilon)^{\Theta(\ddim)}$.
This ensures that $\opt(C)$ is small enough for dimension reduction analysis,
and in particular an optimal solution $\opt(C)$
uses at most $2^{10\ddim}\kappa$ facilities, 
hence finding $\opt(C)$ reduces to a $k$-median problem with $k \leq 2^{O(\ddim)}\kappa$.
This is useful in several ways.
For instance, a target dimension $m = \tilde{O}(\ddim / \varepsilon^2)$
suffices to preserve $\opt(\pi(C)) \in (1 \pm \epsilon) \opt(C)$,
via a black-box application of~\cite{MakarychevMR19},
which shows that target dimension $\tilde{O}(\epsilon^{-2}\log k)$ suffice for $k$-median.
Similarly, as we mentioned, there are also efficient $(1 + \epsilon)$-approximation algorithms for $k$-median with such small $k$,
which implies a PTAS for $\opt(C)$.

\begin{restatable}[Bounding $|\partition|$]{lemma}{Lemmasizeofpartition}
    \label{lemma:size_of_partition}
    There exist universal constants $c_1, \alpha > 0$, 
    such that for every $\varepsilon \in (0, 1)$ and $\kappa > 2(\ddim/\varepsilon)^{c_1\cdot \ddim}$, 
    the partition $\partition = \partition(\kappa)$ satisfies
    \begin{align}
        \EE{\norm{\partition}} \leq \frac{2 \alpha \cdot \opt(\Xori)}{\kappa - 
        2(\ddim/\varepsilon)^{c_1\cdot \ddim}} ,
        \label{eq:lemma:size_of_partition} 
    \end{align}
    where the randomness is over the construction of $\partition$.
\end{restatable}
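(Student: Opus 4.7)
The plan is to leverage \cref{lemma:opt_C_bounds}: since every $C\in\partition$ satisfies $\opt(C)\ge \kappa$, we always have $|\partition|\cdot\kappa \le \sum_{C\in\partition}\opt(C)$, and so it suffices to upper bound $\E[\sum_{C\in\partition}\opt(C)]$ by an expression of the form $\alpha\cdot\opt(\Xori) + \E[|\partition|]\cdot 2(\ddim/\varepsilon)^{c_1\ddim}$; rearranging then yields~\eqref{eq:lemma:size_of_partition}, with a factor of $2$ in the numerator absorbing the slack between the constant $\alpha$ obtained below and the final one.

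To upper bound $\opt(C)$ for each $C\in\partition$, I would exhibit an explicit feasible local solution $S_C\subseteq\metrspa$ and then charge $\sum_{C\in\partition}\cost(C,S_C)$ globally. Fix an optimal global set of facilities $F^*\subseteq\metrspa$ for $\Xori$, and let $S_C$ consist of (i)~the inherited facilities $F^*\cap C$, (ii)~a $(\varepsilon'\Diam(C))$-net $N_C\subseteq C$ with $\varepsilon':=\varepsilon/\ddim$, and (iii)~for every $\widehat C\in\hole_C$, an analogous fine net $N_{\widehat C}$ of (a slight enlargement of) $\widehat C$. By the packing property (\cref{prop:packing}) each such net has size at most $(\ddim/\varepsilon)^{c_1\ddim}$ for a universal $c_1$. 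The key bookkeeping fact, highlighted in \cref{sec:tech_overview}, is that the sets $\{\hole_C\}_{C\in\partition}$ are pairwise disjoint, so each $N_{\widehat C}$ is charged at most once as an ``extra'' net across the $\hole_C$'s, on top of being counted once as the normal net of $\widehat C$ itself (recall $\widehat C\in\partition$). Summing over $C$, the total opening cost contributed across the $S_C$'s is at most $|F^*|+2|\partition|\cdot(\ddim/\varepsilon)^{c_1\ddim}$, and $|F^*|\le\opt(\Xori)$.

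For the connection cost, the modified hierarchical decomposition underlying $\partition$ is engineered to eliminate badly-cut pairs between each $x\in\Xori$ and its proxy $g(F^*(x))\in\Xori$, yielding a separation property ensuring that whenever $F^*(x)$ is not available locally inside $C$, some net point of $N_C$---or of $N_{\widehat C}$ for the $\widehat C\in\hole_C$ that contains the proxy---sits within distance $(1+O(\varepsilon))\dist(x,F^*)$ of $x$. Taking expectations over the randomness of $\decom$ and summing across clusters then gives $\E[\sum_{C\in\partition}\cost(C,S_C)] \le \alpha\cdot\opt(\Xori) + 2\E[|\partition|]\cdot(\ddim/\varepsilon)^{c_1\ddim}$, and substituting into $\E[|\partition|]\cdot\kappa \le \E[\sum_C\opt(C)]$ and rearranging delivers the claim. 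The hardest part will be the connection-cost analysis when the proxy $g(F^*(x))$ lies inside a hole $\widehat C\in\hole_C$: the separation property no longer directly forces $\dist(x,F^*)$ to be comparable to $\varepsilon'\Diam(C)$, so $N_C$ may be too coarse to serve $x$. Overcoming this requires invoking the optimality of $F^*$ in the ambient space $\metrspa$ to show that $F^*(x)$ itself must lie within $O(\Diam(\widehat C))$ of $\widehat C$, so that a net $N_{\widehat C}$ chosen to cover an $O(\Diam(\widehat C))$-neighborhood of $\widehat C$ contains a point within $(1+O(\varepsilon))\dist(x,F^*)$ of $x$, while still having size $(\ddim/\varepsilon)^{O(\ddim)}$ by packing.
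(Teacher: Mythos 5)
Your high-level plan---exhibit an explicit local feasible solution $S_C$ for each $C\in\partition$, charge $\sum_C\cost(C,S_C)$ globally, and then rearrange using $\opt(C)\ge\kappa$---is exactly the paper's strategy (\cref{lemma:apx_opt_x_tech,lemma:feasibel_sol_const_apx}). But you chose the wrong reference facility set, and this snowballs into using machinery that isn't available yet.

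You anchor $S_C$ to an optimal $F^*\subseteq\metrspa$, take proxies $g(F^*(x))$, and claim the modified hierarchy is ``engineered to eliminate badly-cut pairs between each $x$ and its proxy $g(F^*(x))$.'' That is not what $\xchg$ does: \cref{alg:xchg} eliminates badly-cut pairs $(x,\Fa(x))$ where $\cFa\subseteq\Xori$ is the fixed $\alpha$-approximate solution, and this is the only pair guaranteed (\cref{fact:xchg_no_bad}) never to be badly cut with respect to $\xchg$. For a generic $x$ the pair $(x,g(F^*(x)))$ is only good with probability $1-O(\varepsilon^2)$, so your connection-cost argument needs the ``bad points'' bookkeeping, the proxy analysis, and the optimality-based locality lemma---i.e.\ essentially the full proof of \cref{lemma:apx_pi_X_by_clusters}/\cref{corollary:apx_X_by_clusters}. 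But that proof itself \emph{invokes} \cref{lemma:size_of_partition} to control the aggregate error from bad points, so your route is circular.

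The paper avoids all of this by anchoring $S_C$ to $\cFa\subseteq\Xori$ rather than $F^*\subseteq\metrspa$. Because only an $\alpha=O(1)$-approximation to $\opt(\Xori)$ is needed (the constant is absorbed into the numerator), the factor-$2$ bound $\dist(x,S_C)\le 2\dist(x,\Fa(x))$ of \cref{lemma:feasibel_sol_const_apx} suffices, and it follows immediately from \cref{fact:xchg_no_bad} plus the net covering---no proxies, no conditioning on good pairs, no optimality-based locality argument. Your tighter $(1+O(\varepsilon))$-type target is both unnecessary and unattainable here without the circular machinery. The opening-cost bookkeeping in your sketch (one ordinary net per $C$, one hole-net charged at most once via disjointness of the $\hole_C$'s) is correct and matches the paper.
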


\Cref{lemma:size_of_partition} essentially says that
$|\partition| \leq O(\opt(X) / \kappa)$.
This is particularly useful when comparing $\sum_{C \in \partition} \opt(C)$
with $\sum_{C \in \partition} \opt(\pi(C))$
in the dimension-reduction analysis,
where we bound the additive error for each $C \in \partition$
by $\opt(C) - \opt(\pi(C)) \leq \epsilon \kappa$.
\Cref{lemma:size_of_partition} then implies that the total additive error
is at most $O(\varepsilon)\cdot \opt(\Xori)$, which we can afford.

We note that the above two lemmas hold for every doubling point set $X$.
The next lemma is specifically for $X\subset \RR^\oridim$
(i.e., for the Euclidean metric space $\RR^\oridim$),
and it analyzes the performance of dimension reduction on $\partition$.
This technical lemma provides a lower bound for $\opt(\pi(\Xori))$
in terms of the local costs $\opt(\pi(C))$ for $C \in \partition$.
This is crucially useful in our dimension reduction analysis.

\begin{restatable}[Lower bound for $\opt(\pi(\Xori))$]{lemma}{lemmaapxpiXbyclusters}
    \label{lemma:apx_pi_X_by_clusters} 
    Let $\pi\colon \RR^\oridim\to \RR^\tardim$ be a random linear map,
    and let $\Xori \subset \RR^\oridim$ be finite with doubling dimension at most $\ddim$. 
    There exist universal constants $c_1, c_2, c_3>0$, such that 
    for every $\varepsilon, \delta \in (0, 1)$, 
    if $\kappa>c_2 (\ddim/(\delta\varepsilon))^{c_1 \cdot \ddim}$
    and $\tardim>c_3 (\log \kappa + \log (1/\delta\varepsilon))$, 
    then \begin{align*}
      \Pr\Big[ 
        \opt(\pi(\Xori))\geq \sum_{C\in\partition} \opt(\pi(C)) -
      \varepsilon \cdot \opt(\Xori)
      \Big]
      \geq 1-\delta ,
    \end{align*}
    where the randomness is over both $\pi$ and $\partition = \partition(\kappa)$.
\end{restatable}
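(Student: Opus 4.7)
The plan is to fix an optimal facility set $F^*_\pi \subset \RR^m$ for $\pi(X)$ and convert it into a collection of local solutions, one per cluster $C \in \partition$, whose total cost is at most $\opt(\pi(X)) + \varepsilon \cdot \opt(X)$. Concretely, for each $C \in \partition$ I will exhibit a feasible facility set $\tilde{F}_C \subset \RR^m$ for the instance $\pi(C)$ such that
\[
  \sum_{C \in \partition} \cost\bigl(\pi(C), \tilde{F}_C\bigr)
  \;\le\;
  \cost\bigl(\pi(X), F^*_\pi\bigr) + \varepsilon \cdot \opt(X)
  \;=\;
  \opt(\pi(X)) + \varepsilon \cdot \opt(X),
\]
from which the lemma follows because $\opt(\pi(C)) \le \cost(\pi(C), \tilde{F}_C)$ for each $C$. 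The construction mirrors the construction of the set $F'$ used in the proof of Property~\eqref{it:decompose_b}, but is carried out in the target space. For every $C$, let $N_C \subseteq C$ be an $\varepsilon'\!\cdot\!\Diam(C)$-net (with $\varepsilon' = \varepsilon/\ddim$); I set $\tilde{F}_C := \pi(N_C)$ together with those facilities of $F^*_\pi$ that serve points of $\pi(C)$ and lie sufficiently close to $\pi(C)$. Each $\pi(x)$ with $x \in C$ will be served either by its original facility $F^*_\pi(\pi(x))$ (when the latter is ``near'' $\pi(C)$ or its holes) or by its nearest point in $\pi(N_C)$. As in the proof of Property~\eqref{it:decompose_b}, the opening cost increase per cluster is at most $|N_C| \le O(\ddim/\varepsilon)^{\ddim} \le \varepsilon \kappa \le \varepsilon \opt(C)$, and analogously for the holes $\hole_C$; the connection cost increase is controlled by the separation property ensured during the construction of $\partition$.

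The main new ingredient, relative to the proof of Property~\eqref{it:decompose_b}, is a probabilistic step that handles the randomness of $\pi$. For each $C$ I condition on the event $\mathcal{E}_C$ that every point of $\pi(N_C)$ preserves its distances to every $\pi(y)$ for $y \in X$ up to a $(1 \pm \varepsilon)$-factor. Because $N_C$ is small, $|N_C| \le O(\ddim/\varepsilon)^{\ddim}$, and because $X$ has doubling dimension $\ddim$, the Indyk--Naor-style bound (as in the ball expansion/contraction lemmas referenced in the overview) shows that $\mathcal{E}_C$ fails with probability $e^{-\Omega(\varepsilon^2 m)} \cdot \poly(\ddim/\varepsilon)$ when $m = \tilde{O}(\varepsilon^{-2}\ddim)$. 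Under $\mathcal{E}_C$, the net points $\pi(N_C)$ act in the target space like $N_C$ does in the original space: each $\pi(x)$ with $x \in C$ admits a point of $\pi(N_C)$ within distance $(1+\varepsilon)\cdot \varepsilon' \Diam(C)$, which by the separation property is no more than $(1+\varepsilon)\cdot \dist_\pi(\pi(x), F^*_\pi)$. This lets me replace any ``bad'' facility of $F^*_\pi$ (one lying outside the neighborhood of $\pi(C)$) by a net point without inflating the connection cost, exactly as in the Euclidean-free analysis.

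To get a high-probability statement I take a union bound over all $C \in \partition$. Since $|\partition|$ is random, I invoke \Cref{lemma:size_of_partition} so that $\EE{|\partition|} = O(\opt(X)/\kappa)$, and choose $\kappa$ and $m$ large enough (polynomially in $\ddim$ and $\log(1/(\delta\varepsilon))$, as stated) so that $\Pr[\bigcup_C \overline{\mathcal{E}_C}] \le \delta/2$, and also so that the $O(|\partition|)$ net-openings and the total connection-cost slack sum to at most $\varepsilon \opt(X)$. The \emph{main obstacle} is handling, in the target space, points $x \in C$ whose facility $F^*_\pi(\pi(x))$ lies far from $\pi(C)$ but whose corresponding proxy in the original space lands inside some $\widehat{C} \in \hole_C$: here the separation bound only guarantees $\dist(x, F^*(x)) \ge \varepsilon' \Diam(\widehat{C})$, not $\varepsilon' \Diam(C)$. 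I resolve this exactly as in the proof of Property~\eqref{it:decompose_b}, by additionally including $\pi(N_{\widehat{C}})$ in $\tilde{F}_C$ and using the fact that $\hole_C \cap \hole_{C'} = \varnothing$ for distinct clusters, so each $\widehat{C}$ is charged only a bounded number of times. The argument for the opening cost of these auxiliary nets is charged to $\opt(\widehat{C})$, and under $\mathcal{E}_{\widehat{C}}$ the connection substitution incurs only $(1+O(\varepsilon))$ multiplicative overhead on $\dist_\pi(\pi(x), F^*_\pi(\pi(x)))$.
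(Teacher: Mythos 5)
Your high-level plan---build local facility sets from nets, condition on per-cluster distance-preservation events $\mathcal{E}_C$, and charge the extra opening and connection cost to $\kappa$ and the separation property---is the right shape, but it omits three ingredients that are the crux of the paper's proof and without which the argument does not close.

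First, the separation property (\Cref{lemma:partition_properties}) only holds for pairs that are $\varepsilon$-good with respect to $(\decom, \cFa)$, which is a probabilistic event. For each $x$, the pair $(x, h_\pi(x))$ (where $h_\pi(x) \in X$ is the proxy $g_\pi \circ \Fpiopt(\pi(x))$) fails to be good with probability $O(\varepsilon^2)$, and for such bad points you cannot bound $\dist(\pi(x), \pi(N_C))$ against $\dist(\pi(x), \cFpiopt)$ at all. The paper introduces the movement mapping $\mov^\varepsilon$ that replaces bad points by their nearest point in a restricted local solution $\csol_C$ and bounds the resulting moving cost by $O(\varepsilon^2)\opt(X)$ in expectation (\Cref{lemma:moving_dist_pi}); your proposal has no analogue of this step. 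Second, the separation lower-bounds $\dist(x, \widehat{x})$ for the proxy $\widehat{x} = h_\pi(x) \in X$, \emph{not} $\dist(\pi(x), \Fpiopt(\pi(x)))$. Since $x$ and $\widehat{x}$ share a facility, you only get $\Norm{\pi(x)-\Fpiopt(\pi(x))} \ge \tfrac12\Norm{\pi(x)-\pi(\widehat{x})}$, so in the hole case the substitution can cost roughly $2\,\dist(\pi(x),\Fpiopt(\pi(x)))$, not the claimed $(1+O(\varepsilon))$. Closing this requires the optimality of $\Fpiopt$ (the locality lemma, \Cref{lemma:locality_tar_space}, which uses $\eveB_C$ to lower-bound $\opt(\pi(\widehat C))$): either $\Fpiopt(\pi(x))$ is close to $\pi(\widehat x)$, or $\pi(x)$ itself is within $O(\rang(\widehat{C}^\decom)/\varepsilon)$ of $\widehat C$ and is directly covered by a \emph{widened} net $\Net_{\widehat C}$. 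Nothing in your sketch performs this case analysis. Third, your concentration step---a union bound so that $\Pr[\bigcup_C \overline{\mathcal{E}_C}] \le \delta/2$---does not give a dimension bound independent of $n$, because $|\partition|$ can grow like $\opt(X)/\kappa = \poly(n)$. The paper avoids this by bounding the \emph{expectation} of the contribution from clusters where the good events fail, using the small per-cluster failure probability together with the range bound $\opt(\pi(C)) \le O(\tau)$ (via the truncation event $\eveD_C$ and \Cref{lemma:piX_leq_X_value_general}), and only then applying Markov's inequality to the aggregate.
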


In fact, using similar techniques, we can prove a result analogous to this lemma
but for general metric $(\metrspa, \dist)$ and (finite) doubling subset $\Xori \subseteq \metrspa$,
where $\pi$ is fixed to the identity map,
i.e., $\opt(X) \geq \sum_{C \in \partition} \opt(C) - \epsilon \cdot \opt(X)$;
see \cref{corollary:apx_X_by_clusters}.

\subsection{The Construction of $\partition$}
\label{subsec:partition_construction}

Our construction of $\partition$ has three steps.
The first one is to compute for $\Xori$ a randomized hierarchical decomposition $\cH$,
using the algorithm of Talwar~\cite{Talwar04}.
We restate this computation of $\cH$ in \Cref{alg:decompose_ori},
and review its main properties. 
The second step modifies $\decom$ into another hierarchical decomposition $\xchg$,
to eliminate badly-cut pairs (a notion introduced by~\cite{Cohen-AddadFS21}).
As described in \Cref{alg:xchg},
it works by moving points between clusters separately at each level,
and thus each level remains a partition of $X$,
but the nesting across levels might break. The third step constructs the random partition $\partition$ from $\xchg$, using a bottom-up approach, as described in \Cref{alg:bot_up_partition}. 
We summarize in \cref{lemma:partition_properties} several properties of $\partition$
that follow directly from the construction,
including a separation and a consistency property,
and are essential for proving \cref{lemma:opt_C_bounds,lemma:size_of_partition,lemma:apx_pi_X_by_clusters}.

\paragraph{Random Hierarchical Decomposition~\cite{Talwar04}.}
We use an algorithm of Talwar~\cite{Talwar04}
to construct a random hierarchical decomposition $\decom$,
described in \cref{alg:decompose_ori}.
Let $\gamma :=\min\{\dist(x,y)\colon x\neq y\in \Xori\}$,
let $\Delta:=\Diam(\Xori)/\gamma$ be the aspect ratio of $\Xori$,
and denote $\ell := \lceil \log \Delta\rceil$.
At a high level, the algorithm (randomly) partitions $\Xori$ into clusters,
and then recursively partitions each cluster into children clusters,
where each recursive call decreases the diameter bound by a factor of $2$.
This process creates a recursion tree, where tree nodes correspond to clusters,
and this is referred to as a hierarchical decomposition $\cH$.
The randomness comes from two sources:
(1) the scaling factor $\rho$, picked in \cref{line:select_rho},
which affects the diameter of clusters in \cref{line:radius}; and
(2) the permutation $\mu$, picked in \cref{line:random_perm},
which determines the order in which clusters are formed in \cref{line:new_cluster}. 
By construction, $\cH$ has $\ell+2$ levels.
The root node, at the highest level of $\cH_{\ell+1}$,
corresponds to the trivial cluster $\Xori$,
and each leaf at the lowest level $\cH_0$ corresponds to a single point of $X$. 
Each node $C\in \cH_i$ is the union of all its children at $\cH_{i-1}$; see \cref{line:new_cluster}.
Moreover, clusters at every level $\cH_i$ form a partition of $X$,
and every cluster $C\in \cH_i$ satisfies
$\Diam(C) \leq 2r_i \leq 2^{i}\gamma$. 
We denote the \emph{diameter-bound} of this cluster by $\rang(C):=2^{i} \gamma$,
and its level by $\lev(C) := i$.

\begin{algorithm}
    \caption{\algDecomposeOri~\cite{Talwar04}}
    \label{alg:decompose_ori}
    \DontPrintSemicolon
\KwIn{finite point set $X\subset \RR^d$ with
        minimum distance $\gamma$ and aspect ratio $\Delta$}
construct nested nets
        $\Xori=\Net_0\supset \Net_1\supset\dots\supset \Net_\ell$,
        such that each $\Net_i$ is a $(2^{i-3}\gamma)$-net of $\Net_{i-1}$, where $\ell = \lceil \log \Delta \rceil$\;
        pick $\rho\in (\frac{1}{2},1)$ uniformly at random \label{line:select_rho}\;
        pick $\mu$ as a random permutation of $\Xori$ \label{line:random_perm}\;
        $\cH_{\ell+1} \gets \{\Xori\}$\;
        \For{$i = \ell,\ell-1 \ldots, 0$}{
            $\cH_i\gets\varnothing$ and 
             $r_i\gets \rho\cdot 2^{i-1}\gamma$ \label{line:radius}\;
            \For{cluster $C\in \cH_{i+1}$}{
                \For{each $y\in \Net_i$}{
                    $C_{y} \gets C\cap \Ball(y,r_i) \setminus
                    \bigcup_{z\in N_i: \mu(z)<\mu(y)} \Ball(z,r_i)$ \label{line:new_cluster}
                    \tcp*{new cluster, a child of $C$}
                    $\cH_i\gets \cH_i\cup \{C_y\}$
                    \tcp*{can skip if $C_y=\emptyset$}
                }
            }
}
        \Return $\decom \gets \{ \cH_0, \cH_1, \ldots, \cH_{\ell + 1} \}$\;
\end{algorithm}

We say that a pair $x, \widehat{x}\in X$ is \emph{cut} at level $i$ 
if there are two distinct clusters $C\neq\widehat{C}\in \cH_i$
with $x\in C$ and $\widehat{x}\in \widehat{C}$.
We state below a well-known bound on the probability to be cut in $\cH$.

\begin{proposition}[Cutting probability~\cite{Talwar04}]
\label{prop:decomposition_scaling}
For every pair $x, \widehat{x}\in \Xori$ and level $i$,
\[
  \Pr[ \text{$(x, \widehat{x})$ is cut at level $i$} ]
  \leq O\left(
    \frac{\ddim \cdot \dist(x, \widehat{x})}{2^i \gamma}
  \right).
\]
\end{proposition}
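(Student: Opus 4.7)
The plan is to exploit the nested structure of $\decom$: since clusters at level $j-1$ refine those at level $j$, the event ``$(x, \widehat{x})$ is cut at level $i$'' equals the union, over levels $j \geq i$, of the events ``the pair shares a cluster in $\cH_{j+1}$ but lies in different children in $\cH_j$.'' I therefore aim to bound each of these per-level events by $O(\ddim \cdot d / (2^j \gamma))$, where $d := \dist(x, \widehat{x})$, and then sum the geometric series over $j \geq i$ to obtain the claim.

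Fix a level $j \geq i$. In \cref{line:new_cluster} of \cref{alg:decompose_ori}, the pair is split in this step precisely when some $y \in \Net_j$ achieves the smallest $\mu$-value among all $z \in \Net_j$ with $\Ball(z, r_j) \cap \{x, \widehat{x}\} \neq \emptyset$, while $\Ball(y, r_j)$ contains exactly one of the two points. I would bound this via two ingredients. First, the ``separating interval'' for a fixed $y$, i.e., the set of radii $r$ for which $\Ball(y, r)$ contains exactly one of $\{x, \widehat{x}\}$, is an interval of length $\lvert \dist(y, x) - \dist(y, \widehat{x})\rvert \leq d$, so the probability that $r_j = \rho \cdot 2^{j-1}\gamma$ (with $\rho$ uniform on $(1/2, 1)$) lies in it is $O(d / (2^j \gamma))$. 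Second, enumerating the relevant $y \in \Net_j$ as $y_1, y_2, \ldots, y_s$ in increasing distance from $x$, the symmetry of the random permutation $\mu$ implies that $y_{j'}$ receives the smallest $\mu$-value among $\{y_1, \ldots, y_{j'}\}$ with probability at most $1/j'$.

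Combining these ingredients and summing over relevant net points yields $\sum_{j'=1}^{s} \frac{1}{j'} \cdot O(d / (2^j \gamma)) = O(\log s \cdot d / (2^j \gamma))$. Every relevant $y$ lies within distance $r_j + d \leq O(2^j \gamma)$ of $x$ (assuming $d = O(2^j \gamma)$, else the bound is trivial), and since $\Net_j$ is a $(2^{j-3}\gamma)$-packing, the packing property (\cref{prop:packing}) gives $s \leq 2^{O(\ddim)}$, hence $\log s = O(\ddim)$. Summing the resulting per-level bound $O(\ddim \cdot d / (2^j \gamma))$ over $j \geq i$ is a convergent geometric series that telescopes to $O(\ddim \cdot d / (2^i \gamma))$, as required. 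The crucial qualitative point is that the ordering/permutation argument converts a union bound of size $2^{O(\ddim)}$ (exponential in $\ddim$) into a harmonic sum $O(\ddim)$ (linear in $\ddim$).

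The main technical obstacle I anticipate is correctly handling the joint distribution of $\rho$ and $\mu$: whether a given $y \in \Net_j$ is ``relevant'' depends on the random $\rho$, so the ``separating'' and ``first-in-$\mu$'' events are not cleanly independent of the enumeration $y_1, \ldots, y_s$. The standard remedy is to condition on $\rho$ and run the symmetry argument in that realization, or equivalently to enlarge the candidate set to all $y \in \Net_j$ within the worst-case radius $2^{j-1}\gamma + d$ before invoking the permutation bound; either way costs only constant factors and does not affect the linear $\ddim$ dependence.
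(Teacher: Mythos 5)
Your proposal correctly reproduces the standard CKR-style argument for this cutting-probability bound. The paper itself does not reprove the proposition—it cites it directly from Talwar~\cite{Talwar04}—and your outline matches the proof in that reference: decompose ``cut at level $i$'' into the disjoint union over $j\geq i$ of first-separation events, bound each per-level term by the product of the radius event (probability $O(d/(2^j\gamma))$, using independence of $\rho$ and $\mu$) and the ``first-in-$\mu$'' event (probability $1/j'$), control the harmonic sum $H_s=O(\ddim)$ via the packing bound $s\leq 2^{O(\ddim)}$, and sum the geometric series. You also correctly identify the conceptual point, namely that the permutation/ordering trick replaces a union bound of size $2^{O(\ddim)}$ with a factor $O(\ddim)$.

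One detail to tighten: ordering the candidate net points by $\dist(\cdot,x)$ alone does not quite give that the cutter $y_{j'}$ is first-in-$\mu$ among $\{y_1,\dots,y_{j'}\}$, because when $B(y_{j'},r_j)$ contains $\widehat{x}$ but not $x$, a net point $y_{j''}$ with $j''<j'$ (closer to $x$) need not have a ball meeting the pair, so the minimality in $\mu$ over $Z$ does not transfer to minimality over $\{y_1,\dots,y_{j'}\}$. The clean fix is to order by $\min\{\dist(\cdot,x),\dist(\cdot,\widehat{x})\}$, which makes $\{y_1,\dots,y_{j'-1}\}\subseteq Z$ whenever $y_{j'}\in Z$; alternatively, run the argument once with each of $x$ and $\widehat{x}$ as the reference point and take a union bound, costing only a factor of $2$. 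Together with the enlarged, $\rho$-independent candidate set you already describe, this closes the joint-distribution concern you flagged.
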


This bound has been used extensively in previous work,
e.g., to argue that nearby points are unlikely to be cut at a high level. 
We also need the following notion of a badly-cut pair.
A similar notion was first introduced in~\cite{Cohen-AddadFS21},
where it is defined with respect to a metric ball,
whereas we focus on a pair of points.

\begin{definition}[Badly-cut pairs]
\label{def:badly_cut}
Let $\varepsilon \in (0, 1)$.
A pair of points $x, \widehat{x} \in \Xori$
is called \emph{$\varepsilon$-badly cut} with respect to $\decom$
if $(x, \widehat{x})$ is cut at any level
$i \geq \log\frac{\ddim \cdot \dist(x, \widehat{x})}{\varepsilon^2\gamma}$.
\end{definition}

\begin{lemma}[Badly-cut probability]
\label{lemma:pr_badly_cut}
Let $\varepsilon \in (0, 1)$.
Then for every pair $x, \widehat{x}\in \Xori$,
\[
  \Pr[ \text{$(x, \widehat{x})$ is $\varepsilon$-badly cut} ]
  \leq
  O(\varepsilon^2) .
\]
\end{lemma}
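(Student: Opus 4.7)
The plan is to reduce the ``exists a level'' quantifier in the definition of $\varepsilon$-badly cut to a single level, and then invoke \cref{prop:decomposition_scaling} at exactly the right scale. Fix a pair $x, \widehat{x} \in \Xori$ and set $i_0 := \lceil \log(\ddim \cdot \dist(x,\widehat{x})/(\varepsilon^2 \gamma)) \rceil$, so being $\varepsilon$-badly cut is precisely the event that $(x,\widehat{x})$ is cut at some integer level $i \geq i_0$. The target bound is $O(\varepsilon^2)$.

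The first step is a simple monotonicity observation for $\decom$: by \cref{line:new_cluster} of \cref{alg:decompose_ori}, every cluster $C_y \in \cH_i$ is a subset of its parent cluster in $\cH_{i+1}$. Hence, if $x$ and $\widehat{x}$ lie in distinct clusters of $\cH_{i+1}$, their clusters in $\cH_i$ (which are subsets of these disjoint sets) are also distinct. Iterating, I conclude that the event ``$(x,\widehat{x})$ is cut at some level $i \geq i_0$'' collapses to the single event ``$(x,\widehat{x})$ is cut at level $i_0$'', since being cut at any higher level $i \geq i_0$ propagates downward to level $i_0$.

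The second step is to plug level $i_0$ into \cref{prop:decomposition_scaling}, which gives $\Pr[(x,\widehat{x}) \text{ cut at level } i_0] \leq O(\ddim \cdot \dist(x,\widehat{x})/(2^{i_0}\gamma)) \leq O(\varepsilon^2)$, where the last inequality uses $2^{i_0} \geq \ddim \cdot \dist(x,\widehat{x})/(\varepsilon^2 \gamma)$ by the choice of $i_0$. If one prefers to sidestep the monotonicity reduction, an equally short alternative is a union bound over all integer levels $i \geq i_0$: the per-level cutting probabilities form a geometric series whose sum is $O(\ddim \cdot \dist(x,\widehat{x})/(2^{i_0}\gamma)) = O(\varepsilon^2)$. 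I do not foresee any real obstacle here: the lemma amounts to applying the cutting-probability bound at the scale dictated by the definition of ``badly cut'', and the only conceptual point is the nesting property of $\decom$ (or, equivalently, the constant hidden in the geometric-series bound).
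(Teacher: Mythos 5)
Your proof is correct; the alternative you list at the end (union bound over levels $i \geq i_0$ followed by geometric summation) is exactly the paper's proof. Your primary route---using laminarity of $\decom$ to collapse the badly-cut event to the single event ``cut at level $i_0$'' and then applying \cref{prop:decomposition_scaling} once---is a valid small refinement that exploits the fact that each $\cH_i$ refines $\cH_{i+1}$ (\cref{line:new_cluster} of \cref{alg:decompose_ori}), sidestepping the geometric series but otherwise using the same key ingredient at the same threshold.
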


\begin{proof}
Denote $i_0=\lceil \log \frac{\ddim \dist(x, \widehat{x})}{\varepsilon^2\gamma} \rceil$. 
By \cref{prop:decomposition_scaling}, 
$\Pr[ \text{$(x, \widehat{x})$ is $\varepsilon$-badly cut} ] \leq \sum_{i\ge i_0} O(\ddim) \cdot 2^{-i} \dist(x, \widehat{x})/\gamma \leq O(\ddim)\cdot 2^{-i_0+1} \dist(x, \widehat{x})/\gamma \leq O(\varepsilon^2)$.
\end{proof}

Fix an $\alpha$-approximate solution $\cFa$ 
for the UFL problem on $\Xori$ with $\alpha = O(1)$, 
such that $\cFa\subseteq \Xori$.
(Such a solution always exists by moving the facilities of an optimal solution
to their nearest point in the dataset $\Xori$.)
Recall that $\Fa(x)$ denotes the closest facility to $x$ in $\cFa$. 
Our proof examines not only that a pair $(x, \widehat{x})$ is not badly cut,
but also that related pairs are not badly cut, as described next.

\begin{definition}[Good pairs]
\label{def:good_pair}
Let $\varepsilon \in (0, 1)$.
A pair of points $x, \widehat{x}\in X$ is called
\emph{$\varepsilon$-good} with respect to $(\decom, \cFa)$, 
if none of the three pairs 
$(x, \widehat{x})$, $(x, \Fa(x))$ and $(\widehat{x}, \Fa(\widehat{x}))$ is $\varepsilon$-badly cut with respect to $\decom$.
When $\varepsilon, \decom, \cFa$ are clear from the context,
we may omit them and simply say that $(x, \widehat{x})$ is good. 
\end{definition}

The following lemma is an immediate corollary of 
\cref{lemma:pr_badly_cut} by the union bound.

\begin{lemma}[Probability to be good]
\label{lemma:prob_good_pair}
Let $\varepsilon\in (0, 1)$.
Every pair of points $x, \widehat{x} \in \Xori$ (that does not depend on $\cH$)
is $\varepsilon$-good with probability at least $1-O(\varepsilon^2)$. 
\end{lemma}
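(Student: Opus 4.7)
The plan is to observe that being $\varepsilon$-good is simply the conjunction of three ``not-badly-cut'' events, one per pair, and then apply \Cref{lemma:pr_badly_cut} to each together with a union bound.

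More concretely, I would first note that since $\cFa\subseteq X$ is a fixed $O(1)$-approximate solution chosen independently of $\decom$, the assignments $\Fa(x)$ and $\Fa(\widehat{x})$ are deterministic functions of the (fixed) pair $(x,\widehat{x})$. In particular, all three pairs $(x,\widehat{x})$, $(x,\Fa(x))$, and $(\widehat{x},\Fa(\widehat{x}))$ are fixed pairs of points in $X$ that do not depend on the randomness used by \algDecomposeOri{} to produce $\decom$. Hence \Cref{lemma:pr_badly_cut} applies to each of them, giving
\[
  \Pr\bigl[(x,\widehat{x})\text{ is }\varepsilon\text{-badly cut}\bigr],\
  \Pr\bigl[(x,\Fa(x))\text{ is }\varepsilon\text{-badly cut}\bigr],\
  \Pr\bigl[(\widehat{x},\Fa(\widehat{x}))\text{ is }\varepsilon\text{-badly cut}\bigr]
  \;\leq\; O(\varepsilon^2).
\]

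I would then conclude by a union bound: the probability that at least one of the three pairs is $\varepsilon$-badly cut is at most $3\cdot O(\varepsilon^2)=O(\varepsilon^2)$, so by \Cref{def:good_pair} the complementary event that $(x,\widehat{x})$ is $\varepsilon$-good occurs with probability at least $1-O(\varepsilon^2)$. There is no real obstacle here; the only subtlety worth flagging is the independence of $\cFa$ (and therefore of $\Fa(x),\Fa(\widehat{x})$) from the randomness of $\decom$, which is what licenses the direct use of \Cref{lemma:pr_badly_cut} on pairs involving $\Fa$.
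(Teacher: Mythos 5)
Your proof is correct and matches the paper's approach exactly: the paper also derives this as an immediate consequence of \Cref{lemma:pr_badly_cut} via a union bound over the three constituent pairs in \Cref{def:good_pair}. Your remark that $\cFa$ (and hence $\Fa(x), \Fa(\widehat{x})$) is fixed independently of $\decom$ is the right subtlety to flag, and is precisely what the hypothesis ``that does not depend on $\cH$'' in the lemma statement is guarding.
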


Our plan is to construct a partition $\partition$ of $X$
so that it has the so-called separation and consistency properties.
Informally, the separation property means that for every $x \in \Xori$,
if $x$ and $\Fa(x)$ belong to different clusters $C\neq\widehat{C} \in \partition$, 
then $\dist(x, \Fa(x))$ is roughly lower bounded by $\Omega(\epsilon^2 / \ddim)$
times the maximum of $\rang(C)$ and $\rang(\widehat{C})$.
This property enables us to ``represent'' a global solution $\cFa$ with respect to some local centers around clusters in $\partition$.
Consistency means that every cluster in $\partition$ originates 
from a cluster in $\decom$,
and has diameter bound that is not much larger. 
This property allows us to use a fine net with bounded size as a proxy for candidate centers.

\paragraph{Procedure for Eliminating Badly-Cut Pairs.}
To achieve the separation property,
we need to eliminate all badly-cut pairs.
A simple way to eliminate the badly-cut pairs,
which was used in~\cite{Cohen-AddadFS21},
is to build a new dataset $X'$
by moving every point $x\in X$ for which $(x, F_0(x))$ is badly cut to the point $F_0(x)$.
However, this $X'$ clearly depends on the randomness of $\mathcal{H}$,
and thus \Cref{prop:decomposition_scaling} does not apply to $X'$
(which is actually needed in our subsequent analysis).
Hence, we introduce a more sophisticated procedure, in \Cref{alg:xchg},
that directly modifies the clusters in $\decom$ (instead of building a new dataset),
and our $\partition$ is then built from the modified decomposition. 

\begin{algorithm}
\caption{$\algxchg(X, \decom, \cFa, \varepsilon)$}
\label{alg:xchg}
\DontPrintSemicolon
\For{$i = 0,  \dots, \ell + 1$}{
        for each $C \in \decom_i$, let $C^\xchg \gets C$\;
        \For{$x\in X$}{
            \label{line:move_start}
            find $C, \widehat{C} \in \decom_i$ 
            such that $x \in C$ and $\Fa(x)\in \widehat{C}$\;
            \If{$C \neq \widehat{C}$ \emph{and}
            $i\geq \log \frac{\ddim \cdot \dist(x, \Fa(x))}{\varepsilon^2 \gamma}$}{let 
                $C^\xchg \gets C^\xchg \setminus \{x\}$ and
                $\widehat{C}^\xchg \gets \widehat{C}^\xchg \cup \{x\}$\label{line:reassign}\;
            }
        }\label{line:move_end}
        $\xchg_i \gets \{C^\xchg: C \in \decom_i\}$
        \tcp*{modified partition of $X$}
}
    \Return $\xchg \gets \{ \xchg_0, \ldots, \xchg_{\ell + 1}\}$\;
\end{algorithm}

The modified decomposition $\xchg$ is constructed level by level.
Initially, $\xchg$ is a copy of $\decom$. 
Then separately for each level $ 0 \leq i \leq \ell + 1$,
clusters at level $i$ 
exchange their points in the following way:
for every point $x\in X$, if $(x, \Fa(x))$ is cut at level $i$ and
$i\geq \log \frac{\ddim \cdot \dist(x, \Fa(x))}{\varepsilon^2 \gamma}$, 
then $x$ is moved from its current cluster to the cluster containing $\Fa(x)$ 
(Lines \ref{line:move_start}-\ref{line:move_end}).
Notice that $F_0(x)$ never moves (because $F_0(F_0(x)) = F_0(x)$)
and thus the order of processing $x \in X$ does not matter.

\paragraph{Relation between $\xchg$ and $\decom$.}
It is easy to see that every level $\xchg_i \in \xchg$ still forms a partition of $X$. 
We also let $\xchg$ inherit the tree structure from $\decom$,
using the one-to-one correspondence between their clusters (ignoring empty clusters),
and we write $C^\decom$ to denote the cluster in $\decom$
corresponding to a cluster $C^\xchg$ in $\xchg$.
Observe that now a node $C^\xchg \in \xchg_i$ \emph{is not} necessarily the union of its children at $\xchg_{i-1}$.
Although the abovementioned one-to-one correspondence exists between $\xchg$ and $\decom$, a significant difference is that
an actual cluster $C^\xchg \in \xchg_i$ need not be the union of all its children in $\xchg_{i-1}$.

\paragraph{Properties of $\xchg$.}
We can reinterpret \Cref{def:badly_cut} of badly-cut pairs with respect to $\xchg$
(recall it was originally defined with respect to $\decom$): 
A pair $(x, \widehat{x})$ is \emph{$\epsilon$-badly cut} with respect to $\xchg$
if there exists a level $i \geq \log \frac{\ddim \cdot \dist(x, \Fa(x))}{\varepsilon^2 \gamma}$
and different clusters $C^\xchg\neq \widehat{C}^\xchg \in \xchg_i$, 
such that $x\in C^\xchg$ and $\widehat{x} \in \widehat{C}^\xchg$.
The next fact follows immediately from the steps of \Cref{alg:xchg}.
\begin{fact}
\label{fact:xchg_no_bad}
Every pair $(x, F_0(x))$ for $x \in X$ is not badly cut with respect to $\xchg$.
\end{fact}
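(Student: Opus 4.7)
The plan is to unpack the definition of ``badly cut with respect to $\xchg$'' for the specific pair $(x, F_0(x))$ and then verify directly from the modifications in \cref{alg:xchg} that this pair ends up in a common cluster at every level $i$ that is high enough to matter. Concretely, fix $x \in X$ and write $\widehat{x} := F_0(x)$. Set the threshold $i^\ast := \log \frac{\ddim \cdot \dist(x,\widehat{x})}{\varepsilon^2 \gamma}$, which is exactly the quantity that appears both in \cref{def:badly_cut} (instantiated for $(x,\widehat{x})$) and in the if-condition on line \ref{line:reassign}. It suffices to show that for every level $i \geq i^\ast$, the points $x$ and $\widehat{x}$ belong to the same cluster of $\xchg_i$.

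I would begin by isolating the crucial observation about $\widehat{x}$: since $\widehat{x} \in \cFa$, we have $\cFa(\widehat{x}) = \widehat{x}$, so at every level of the outer loop the algorithm's ``find'' step returns the same $\decom_i$-cluster for both $\widehat{x}$ and $\cFa(\widehat{x})$; hence the condition $C \neq \widehat{C}$ fails and $\widehat{x}$ is never moved. Consequently, throughout the execution at level $i$, the point $\widehat{x}$ remains in $\widehat{C}^\xchg$, where $\widehat{C} \in \decom_i$ is the unique cluster of the \emph{original} decomposition containing $\widehat{x}$. Then I would carry out a short case split on what happens to $x$ at level $i$. Letting $C \in \decom_i$ be the cluster with $x \in C$: if $C = \widehat{C}$, then $x$ is not moved (the condition on line \ref{line:reassign} fails), and $x$ simply stays in $C^\xchg = \widehat{C}^\xchg$ alongside $\widehat{x}$; and if $C \neq \widehat{C}$, then since $i \geq i^\ast$ both conditions on line \ref{line:reassign} are satisfied, so the algorithm explicitly adds $x$ to $\widehat{C}^\xchg$. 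In either subcase, $x$ and $\widehat{x}$ share the cluster $\widehat{C}^\xchg \in \xchg_i$ after the level-$i$ iteration finishes.

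The only point that deserves explicit justification, and the one I would expect a careful reader to probe, is why the modifications triggered by \emph{other} points $y \neq x, \widehat{x}$ cannot disturb this conclusion. This is immediate from the form of line \ref{line:reassign}: the only deletion executed while processing $y$ is $C^\xchg \gets C^\xchg \setminus \{y\}$, which touches no point other than $y$. So neither $x$ nor $\widehat{x}$ is ever removed by someone else's iteration, and the two placements described in the previous paragraph are definitive. Putting the pieces together, $x$ and $\widehat{x}$ are in the same $\xchg_i$-cluster for every $i \geq i^\ast$, which by \cref{def:badly_cut} (read with respect to $\xchg$) is exactly the statement that $(x, F_0(x))$ is not badly cut with respect to $\xchg$.
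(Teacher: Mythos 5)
Your proof is correct, and it is exactly the expansion of what the paper asserts when it says the fact ``follows immediately from the steps of Algorithm~2.'' The two essential observations you make --- that $F_0(F_0(x)) = F_0(x)$ so $\widehat{x}$ is never relocated, and that the case split on whether $x$ and $\widehat{x}$ already share a $\decom_i$-cluster exhausts all possibilities at a level $i \ge i^\ast$ --- are precisely the content behind the paper's one-line justification; the paper even remarks parenthetically that $F_0(x)$ never moves for the same reason you give. Your closing point that line~\ref{line:reassign} only ever deletes the currently processed point, so no other iteration can dislodge $x$ or $\widehat{x}$, is a worthwhile detail that the paper leaves implicit.
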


The next lemma shows that $\xchg$ maintains consistency with $\decom$,
i.e., the diameter of each cluster $C^\xchg$ does not exceed that of $C^\cH$ by much.
Recall that $\rang(C^\cH)=2^{i} \gamma$ for all $C^\cH\in \cH_i$,
and that for a point set $Y$ and $r > 0$,
we denote $\Ball(Y, r)=\bigcup_{x\in Y} \Ball(x, r)$.

\begin{lemma}[Consistency of $\xchg$]
\label{lemma:xchg_local}
Let $\varepsilon \in (0, 1)$ and $\xchg = \xchg(\Xori, \decom, \cFa, \varepsilon)$ be constructed by \cref{alg:xchg}. 
Then for every $i \in \{0, 1, \dots, \ell+1\}$ and cluster $C^\xchg \in \xchg_i$,
it holds that $C^\xchg \subseteq \Ball(C^\decom, \varepsilon^2 \cdot 2^i \gamma)$,
and thus 
$C^\xchg \subseteq \Ball(C^\decom, \varepsilon^2 \rang(C^\decom))$.
\end{lemma}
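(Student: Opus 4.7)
The argument is a direct inspection of \cref{alg:xchg}. Fix a level $i$ and a cluster $C^\xchg \in \xchg_i$ corresponding to $C^\decom \in \decom_i$, and let $y \in C^\xchg$ be arbitrary. The aim is to show $\dist(y, C^\decom) \leq \varepsilon^2 \cdot 2^i \gamma$; from this the second claim follows immediately since $\rang(C^\decom) = 2^i \gamma$ whenever $C^\decom \in \decom_i$.

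At the start of the outer-loop iteration that processes level $i$, we set $C^\xchg \gets C^\decom$ for every cluster of $\decom_i$, and the inner loop (\cref{line:move_start}--\cref{line:move_end}) may then reassign points. Observe that for every $x \in \Xori$ we have $\Fa(\Fa(x)) = \Fa(x)$, so the source cluster and target cluster identified for $\Fa(x)$ in \cref{line:reassign} always coincide; thus the test $C \neq \widehat{C}$ fails and no $\Fa$-value is ever moved. Consequently the final partition $\xchg_i$ is independent of the order in which the inner loop processes points of $X$, and each $y \in C^\xchg$ must fall into one of two cases.

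\emph{Case 1:} $y$ was originally in $C^\decom$ and was never reassigned out of it; then $y \in C^\decom$, so $\dist(y, C^\decom) = 0$. \emph{Case 2:} $y$ was reassigned into $C^\xchg$ from some other cluster of $\decom_i$ during the processing of level $i$. By the \textbf{if}-condition of \cref{line:reassign}, such a move requires both $\Fa(y) \in C^\decom$ and $i \geq \log \tfrac{\ddim \cdot \dist(y, \Fa(y))}{\varepsilon^2 \gamma}$; the latter rearranges to $\dist(y, \Fa(y)) \leq \varepsilon^2 \cdot 2^i \gamma / \ddim \leq \varepsilon^2 \cdot 2^i \gamma$. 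Combined with $\Fa(y) \in C^\decom$, we conclude $\dist(y, C^\decom) \leq \dist(y, \Fa(y)) \leq \varepsilon^2 \cdot 2^i \gamma$. This yields $C^\xchg \subseteq \Ball(C^\decom, \varepsilon^2 \cdot 2^i \gamma)$, and the second inclusion follows from $\rang(C^\decom) = 2^i \gamma$.

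There is no genuine obstacle here; the lemma is essentially a bookkeeping consequence of the move rule. The only subtle point worth flagging is the well-definedness of ``$y$ was reassigned into $C^\xchg$'', which is ensured by the observation above that $\Fa$-values never move, rendering the final state of $\xchg_i$ independent of the inner-loop order. Everything else is a single application of the threshold condition encoded in \cref{line:reassign}.
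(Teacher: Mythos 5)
Your proof is correct and takes essentially the same approach as the paper's: a point $y \in C^\xchg \setminus C^\decom$ can only have arrived via the reassignment rule in \cref{line:reassign}, which forces $\Fa(y) \in C^\decom$ and the threshold inequality $i \geq \log \frac{\ddim \cdot \dist(y, \Fa(y))}{\varepsilon^2 \gamma}$, giving $\dist(y, C^\decom) \leq \dist(y, \Fa(y)) \leq \varepsilon^2 2^i \gamma / \ddim \leq \varepsilon^2 \rang(C^\decom)$. Your extra remark about order-independence is fine, though strictly speaking it is automatic here because \cref{line:move_start} looks up $x$ and $\Fa(x)$ in the fixed partition $\decom_i$, not in the evolving $\xchg_i$.
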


\begin{proof}
For every point $x\in C^\xchg \setminus C^\decom$, $\Fa(x)\in C^\decom$
    and $i \geq \log \frac{\ddim \cdot \dist(x, \Fa(x))}{\varepsilon^2 \gamma}$.
    Hence, $\dist(x, \Fa(x)) \leq \frac{\varepsilon^2 \rang(C^\decom)}{\ddim}
    \leq \varepsilon^2 \rang(C^\decom)$.
    This completes the proof.
\end{proof}

\paragraph{Constructing the Partition $\partition$.}
We can now present \cref{alg:bot_up_partition} the construction of $\partition$,
which works in a bottom-up manner, as follows. 
Given a threshold $\kappa>0$,
we find the lowest-level cluster $C$ in $\xchg$ such that $\opt(C)\geq \kappa$, and add it to the partition $\partition$ (\cref{line:critical_condition,line:add_cluster}). 
We then remove the points of $C$ from $X$ and from every cluster in $\xchg$ (\cref{line:update_clusters}). 
We repeat this procedure until all points in $\Xori$ are removed,
or not suitable $C$ exists, in which case we simply add the remaining points in $X$ as a separate part (\Cref{line:add_remain}). 
It is easy to see that the output $\partition$ forms a partition of $X$. 

We remark that the last cluster $C$ added to $\partition$
might have $\opt(C^\xchg) < \kappa$, which violates \cref{lemma:opt_C_bounds}.
This special cluster does not affect the correctness of \cref{lemma:size_of_partition,lemma:apx_pi_X_by_clusters}
and thus for simplicity, we assume that all clusters $C \in \partition$ satisfy $\opt(C) \geq \kappa$.
To remove this assumption, we can also merge the last two clusters added to $\partition$,
as it would violate the upper bound on $\opt(C)$ by at most factor $2$.

\begin{algorithm}
\caption{$\algPartition(X, \xchg,\kappa)$}
\label{alg:bot_up_partition}
\DontPrintSemicolon
\While{$X\neq \varnothing$}{
  let $0 \leq i \leq \ell$ be the smallest integer such that
  there is $C \in \xchg_i$ with $\opt(C) \geq \kappa$\;
  \label{line:critical_condition}
  \If{\emph{such $i, C$ exist}}{
    $\partition \gets \partition \cup \{C\}$
    \label{line:add_cluster}\;
    $X\gets X\setminus C$,
    and update for every $j$ all 
    clusters $\widehat{C} \in \xchg_j$ by 
    $\widehat{C} \gets \widehat{C} \setminus C$
    \label{line:update_clusters}\;
}
  \Else{
    $\partition \gets \partition \cup \{X\}$ \label{line:add_remain}
    \tcp*{add last cluster}
    $X\gets \varnothing$\;
  }
}
\Return $\partition$\;
\end{algorithm}

\paragraph{Relation between $\partition$ and $\xchg$.}
Recall that there is a one-to-one correspondence between clusters in $\xchg$ and $\decom$.
We can define a relation also between clusters in $\partition$
and in $\xchg$ (and hence in $\decom$),
by tracking the steps in \Cref{alg:bot_up_partition}.
Specifically, a cluster $C \in \partition$ is usually added to $\partition$ in \cref{line:add_cluster},
so there is a clearly defined correspondence with this cluster $C$ in $\xchg$. 
In the exceptional case of the last cluster, added in \cref{line:add_remain}),
$C$ contains all remaining points so we can define its corresponding cluster
in $\xchg$ to be the root, which is the entire dataset $X$. 
For a part $C \in \partition$, we write $C^\xchg$ and $C^\decom$
to denote its corresponding clusters in $\xchg$ and $\decom$.

Due to the bottom-up nature of the construction of $\partition$,
clusters $C\in \partition$ may not be perfectly aligned with its corresponding cluster $C^\xchg\in \xchg$.
To see this, consider a cluster $C \in \partition$,
and suppose another cluster $\widehat{C} \in \partition$
was added to $\partition$ before $C$ during the execution of \Cref{alg:bot_up_partition}. 
If $\widehat{C}^\xchg$ is a descendant of $C^\xchg$,
then we must remove $\widehat{C}$ from $C^\xchg$ when constructing $\partition$,
which makes the cluster $C$ a subset of $C^\xchg \setminus \widehat{C}$ (instead of a full cluster in $\xchg$).
Thus, we observe that $C\subseteq C^\xchg$ holds for every $C\in \partition$.
Next, we define the following structure called \emph{holes} for clusters $C\in \partition$ to capture such misalignment between $C$ and $C^\xchg$. 

\begin{definition}[Holes]
\label{def:holes}
A cluster $\widehat{C}\in \partition$ is called a \emph{hole} of $C\in \partition$
if among all clusters in $\partition$,
$C$ is the one whose corresponding $C^\xchg$
is the lowest-level ancestor of $\widehat{C}^\xchg$ (in $\xchg$).
The set of holes of $C\in\partition$ is defined as
$\hole_C:= \{\widehat{C}\in \partition :\ \text{$\widehat{C}$ is a hole of $C$} \}$.
\end{definition}

\begin{lemma}[Total number of holes]
    \label{lemma:size_of_holes}
    $\sum_{C\in \partition} 
    \norm{\hole_C} \leq \norm{\partition}$.
\end{lemma}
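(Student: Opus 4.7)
The plan is a simple double-counting argument that exploits the tree structure of $\xchg$ (equivalently, of $\decom$ via the one-to-one correspondence). The key observation I would establish first is an \emph{injectivity} property: for each $\widehat{C}\in\partition$, there is at most one cluster $C\in\partition$ such that $\widehat{C}\in\hole_C$.

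To see this, I would recall \Cref{def:holes}, which says that $\widehat{C}\in\hole_C$ iff $C^\xchg$ is the lowest-level ancestor of $\widehat{C}^\xchg$ among the set $\{C'^\xchg:\ C'\in\partition\}$. Since $\xchg$ inherits the tree structure of $\decom$, the ancestors of any fixed node $\widehat{C}^\xchg$ form a chain ordered by level. Within this chain, the intersection with $\{C'^\xchg:\ C'\in\partition\}$ is itself totally ordered by level, and therefore has at most one minimum-level element. Hence there is at most one $C\in\partition$ that can play the role in the definition of $\hole_C$ containing $\widehat{C}$, giving the claimed injectivity.

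Given injectivity, I would finish by rewriting the sum as a count of pairs:
\begin{align*}
\sum_{C\in\partition} |\hole_C|
\;=\; \big|\{(\widehat{C},C)\in\partition\times\partition:\ \widehat{C}\in\hole_C\}\big|
\;\leq\; \big|\{\widehat{C}\in\partition:\ \widehat{C}\in\hole_C\text{ for some }C\}\big|
\;\leq\; |\partition|,
\end{align*}
where the first inequality uses that each $\widehat{C}$ appears in at most one pair, and the second is trivial.

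I do not expect any real obstacle here; the proof is entirely structural and does not require invoking the randomness of $\decom$, the badly-cut machinery of \Cref{alg:xchg}, or any properties of $\opt$. The only mild care is to make sure the notion of ``lowest-level ancestor'' is well-defined on the tree structure of $\xchg$ (including the treatment of the last cluster added in \Cref{line:add_remain}, whose corresponding $C^\xchg$ is the root); once the chain-of-ancestors observation is stated, the bound follows by a one-line counting argument.
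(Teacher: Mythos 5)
Your proposal is correct and uses the same argument as the paper: the paper's proof rests on exactly your injectivity observation, phrased there as $\hole_C \cap \hole_{C'} = \varnothing$ for distinct $C, C' \in \partition$ (which, as you note, follows because the ancestors of $\widehat{C}^\xchg$ in $\xchg$ form a chain, so the minimum-level ancestor among $\{C'^\xchg : C' \in \partition\}$ is unique), and then concludes by the same double-counting. Your write-up merely spells out the well-definedness step that the paper dispatches with ``by definition.''
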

 
\begin{proof}
    By definition, each $\widehat{C}\in\partition$ is a hole
    of at most one $C$, i.e., $\hole_C \cap \hole_{C^\prime} = \varnothing$ for distinct $C, C^\prime \in \partition$. 
    Therefore, the total size of all 
    $\hole_C$ is upper bounded by the size of $\partition$.
\end{proof}

Finally, the following lemma summarizes the desired properties of $\partition$, which are useful for dimension-reduction analysis.

\begin{lemma}
\label{lemma:partition_properties}
Consider a random partition $\partition = \partition(\Xori, \xchg, \kappa)$.
    \begin{enumerate}[(1)]
        \item \textbf{Separation}: \label{item:separation} 
        For every $\varepsilon$-good pair $(x, \widehat{x})$
        with respect to $(\decom, \cFa)$ with $x\in C$, $\widehat{x}\in \widehat{C}$ and $C\neq \widehat{C}\in \partition$, the following holds.
        \begin{enumerate}[(a)]
            \item \label{item:sep_not_descendant}
              If $C^\decom$ and $\widehat{C}^\decom$
              are not related (as descendant-ancestor) in $\decom$, then 
            $\dist(x, \widehat{x}) \geq \frac{\varepsilon^2}{\ddim} \cdot \max\{\rang(C^\decom), \rang(\widehat{C}^\decom)\}$.
            \item \label{item:sep:hole} If $\widehat{C}^\decom$ is 
            a descendant of $C^\decom$ in $\decom$, then there 
            exists a cluster $\widetilde{C}\in \hole_C$, such that
            $\dist(x, \widehat{x}) \geq \frac{\varepsilon^2}{\ddim} \cdot \rang(\widetilde{C}^\decom)$.
        \end{enumerate}
        \item \textbf{Consistency}: \label{item:consistency}
        For every cluster $C\in \partition$, it holds that
        $C\subseteq \Ball(C^\decom, \varepsilon^2 \rang(C^\decom))$.
    \end{enumerate}
\end{lemma}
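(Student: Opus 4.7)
My plan is to handle consistency and separation in turn, reducing both to properties already extracted for $\decom$ and $\xchg$. Consistency is essentially immediate: by \cref{alg:bot_up_partition} every $C \in \partition$ is a subset of its associated $\xchg$-cluster $C^{\xchg}$, and \cref{lemma:xchg_local} gives $C^{\xchg} \subseteq \Ball(C^{\decom}, \varepsilon^2 \rang(C^{\decom}))$, so the two inclusions compose.

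The heart of the argument is Separation, and my key preparatory observation is the following: \emph{if $(y, \Fa(y))$ is not $\varepsilon$-badly cut with respect to $\decom$, then at level $\lev(D^{\decom})$ for any $D \in \partition$ containing $y$, the point $y$ already belongs to $D^{\decom}$ in the unmodified $\decom$} (i.e., it was not moved by \cref{alg:xchg}). Indeed, if $y$ were moved into $D^{\xchg}$ at level $i = \lev(D^{\decom})$, the move condition in \cref{alg:xchg} would force $i \geq \log(\ddim \cdot \dist(y, \Fa(y))/\varepsilon^2 \gamma)$, and $y$ being in a $\decom_i$ cluster different from $D^{\decom}$ (which contains $\Fa(y)$) would exhibit $(y, \Fa(y))$ as $\varepsilon$-badly cut, contradicting goodness. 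Applied to both $x$ and $\widehat{x}$, this pins down $x \in C^{\decom}$ at level $\lev(C^{\decom})$ and $\widehat{x} \in \widehat{C}^{\decom}$ at level $\lev(\widehat{C}^{\decom})$ inside the original $\decom$.

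For part~(a), assume WLOG $i := \lev(C^{\decom}) \geq \lev(\widehat{C}^{\decom})$. Since $C^{\decom}$ and $\widehat{C}^{\decom}$ are unrelated in $\decom$, the level-$i$ ancestor of $\widehat{C}^{\decom}$ differs from $C^{\decom}$, so $x$ and $\widehat{x}$ lie in distinct clusters of $\decom_i$. Were $\dist(x, \widehat{x}) < \varepsilon^2 \rang(C^{\decom})/\ddim$, this cut would take place at a level exceeding the badly-cut threshold for $(x, \widehat{x})$, contradicting the assumption that $(x, \widehat{x})$ is $\varepsilon$-good.

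Part~(b) is the most delicate step, and requires locating the right hole. My natural choice is to let $\widetilde{C}^{\xchg}$ be the highest $\partition$-ancestor of $\widehat{C}^{\xchg}$ in $\xchg$ that is a strict descendant of $C^{\xchg}$ (equivalently, whose lowest $\partition$-ancestor is $C$), which by \cref{def:holes} places $\widetilde{C} \in \hole_C$. Set $j := \lev(\widetilde{C}^{\decom})$. Since $\widehat{C}^{\decom}$ is a descendant of $\widetilde{C}^{\decom}$ in $\decom$, the preparatory observation places $\widehat{x}$ in $\widetilde{C}^{\decom}$ at level $j$. The crux is to show that $x \notin \widetilde{C}^{\decom}$ at level $j$: if $x$ were unmoved at level $j$, then $x \in \widetilde{C}^{\xchg}$, and since $\widetilde{C}$ is added to $\partition$ before $C$ (as its level is strictly lower) while $x$ still resides in the residual dataset, this would force $x \in \widetilde{C}$, contradicting $x \in C$; if instead $x$ were moved out of $\widetilde{C}^{\decom}$ at level $j$, then $(x, \Fa(x))$ would again be cut at a level above the badly-cut threshold, contradicting goodness. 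Hence $x$ and $\widehat{x}$ lie in different $\decom_j$ clusters, and the same badly-cut argument as in part~(a) yields $\dist(x, \widehat{x}) \geq \varepsilon^2 \rang(\widetilde{C}^{\decom})/\ddim$. I expect the bookkeeping around the hole construction and the two alternatives for $x$'s position at level $j$ to be the most intricate aspect of the final writeup.
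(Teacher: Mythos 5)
Your proof is correct and follows essentially the same route as the paper: you use the fact that $\varepsilon$-goodness means neither $x$ nor $\widehat{x}$ is moved by \cref{alg:xchg}, identify the level of $\decom$ at which $(x,\widehat{x})$ is cut (for part~(b) via the appropriate $\widetilde{C}\in\hole_C$ on the path from $\widehat{C}^\xchg$ to $C^\xchg$), and invoke the not-badly-cut bound; consistency is the same two-line composition via \cref{lemma:xchg_local}. You spell out two steps the paper leaves implicit — the explicit choice of $\widetilde{C}$ and the argument that $x\notin\widetilde{C}^\decom$ via the insertion order of \cref{alg:bot_up_partition} — but these are refinements of the same argument rather than a different proof.
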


Let us explain the separation property of \Cref{lemma:partition_properties}.
Case \eqref{item:sep_not_descendant} is more intuitive,
because if $C^\decom$ and $\widehat{C}^\decom$ are not related in $\decom$
(related means that one is ancestor of the other), 
then $C^\decom \cap \widehat{C}^\decom = \varnothing$,
which implies the distance lower bound.
However, in case \eqref{item:sep:hole},
$\widehat{C}^\decom$ is a subset of $C^\decom$,
meaning that $\dist(\widehat{x}, C^\decom) = 0$,
which corresponds to the misalignment in $\partition$ discussed earlier.
We thus need to use $\hole_C$ to obtain the separation property,
which is a major structural complication for our bottom-up construction of $\partition$.
In particular, in our later arguments where we wish to find a net $N_C$ for $C \in \partition$
whose granularity depends on the separation guarantee in \Cref{lemma:partition_properties},
we not only need a net for $C$ but also a series of nets on clusters in $\hole_C$.

\begin{proof}[Proof of \cref{lemma:partition_properties}]
We first show the separation property.
By the definition of $\varepsilon$-good pairs (\cref{def:good_pair}), 
neither of $x$ and $\widehat{x}$ is moved to another cluster
during the execution of \Cref{alg:xchg}. 
Thus $x\in C^\decom$ and $\widehat{x} \in \widehat{C}^\decom$.

If $C^\decom$ and $\widehat{C}^\decom$ are not related in $\decom$
(related means that one is ancestor of the other), 
then $x$ and $\widehat{x}$ are cut at level
$\max\{\lev(C^\decom), \lev(\widehat{C}^\decom)\}$ of $\decom$. 
Since $(x, \widehat{x})$ is not $\varepsilon$-badly cut with respect to $\decom$,
we have $\max\{\lev(C^\decom), \lev(\widehat{C}^\decom)\} \leq \log \frac{\ddim \cdot \dist(x, \widehat{x})}{\varepsilon^2 \gamma}$,
or equivalently,
$\dist(x, \widehat{x}) \geq \frac{\varepsilon^2}{\ddim} \max\{\rang(C^\decom), \rang(\widehat{C}^\decom)\}$.

If $\widehat{C}^\decom$ is a descendant of $C^\decom$ in $\decom$, 
then there exists $\widetilde{C}\in \hole_C$, 
such that $\widetilde{C}^\decom$ is a descendant of $C^\decom$ and an 
ancestor of $\widehat{C}^\decom$,
and that $(x,\widehat{x})$ is cut at level $\lev(\widetilde{C}^\decom)$ of $\decom$.
Since $(x, \widehat{x})$ is not $\varepsilon$-badly cut with respect to $\decom$, we have 
$\lev(\widetilde{C}^\decom) \leq \log \frac{\ddim\cdot \dist(x, \widehat{x})}{\varepsilon^2 \gamma}$,
or equivalently,
$\dist(x, \widehat{x}) \geq \frac{\varepsilon^2}{\ddim} \rang(\widetilde{C}^\decom)$.

Finally, observe that $C\subseteq C^\xchg$,
hence the consistency of $\partition$ follows immediately from 
the consistency of $\xchg$ (\cref{lemma:xchg_local}).
\end{proof}

\subsection{Proof of \cref{lemma:opt_C_bounds}: Bounded Local UFL Values}
\label{subsec:opt_C_bounds}

\LemmaoptCbounds*

\begin{proof}[Proof of \cref{lemma:opt_C_bounds}]
    The lower bound follows immediately from our construction
    (\cref{alg:bot_up_partition}). 
Recall that we denote the corresponding cluster
    of $C$ with respect to the hierarchical decomposition $\decom$ and $\xchg$ by $C^\decom$ and $C^\xchg$, respectively.
Denote by $\Gamma_C$ the set of children of $C$ with respect to both 
    $\decom$ and $\xchg$ (recalling that their clusters have one-to-one correspondence),
    where for each $\widehat{C}\in \Gamma_C$, the corresponding clusters are denoted by $\widehat{C}^\decom$ and $\widehat{C}^\xchg$, respectively.

    By \cref{alg:bot_up_partition}, once $C$ is constructed, 
    it must be the lowest level ``heavy'' cluster with respect to $\xchg$,
    i.e., $\opt(C) \geq \kappa$.
Then at that moment, $\opt(\widehat{C}) < \kappa$ for every $\widehat{C} \in \Gamma_C$ in \Cref{line:critical_condition} of \Cref{alg:bot_up_partition}.
    By packing property, $\norm{\Gamma_C}\leq 2^{5 \ddim}$.
    Thus, $\opt(\bigcup_{\widehat{C}\in \Gamma_C} \widehat{C}) \leq 2^{5 \ddim} \kappa$.
It seems that we have already completed the proof of \Cref{lemma:opt_C_bounds}.
However, we need to be careful since the construction of $\partition$ relies on $\xchg$, whose cluster $C^\xchg$ may not be the union of its children $\widehat{C}^\xchg$s.
Thus, we also need to bound the UFL value for the set $C \setminus \bigcup_{\widehat{C}\in \Gamma_C} \widehat{C}$, which must be a subset of 
    $C^\xchg \setminus \bigcup_{\widehat{C}\in \Gamma_C} \widehat{C}^\xchg$.
Below, we focus on bounding the UFL value for $C^\xchg \setminus \bigcup_{\widehat{C}\in \Gamma_C} \widehat{C}^\xchg$.
    
    We first claim that $C^\xchg \setminus \bigcup_{\widehat{C}\in \Gamma_C} \widehat{C}^\xchg \subseteq C^\xchg \setminus C^\decom$.
    Indeed, for every point $x \in C^\decom \cap C^\xchg$, $x$ must not be removed from $C^\xchg$ in \cref{line:reassign} of \Cref{alg:xchg}.
Then either $(x, \Fa(x))$ is not badly cut at $\lev(C^\decom)$, 
    or both $x$ and $\Fa(x)$ are in $C^\decom$.
    In the former case, $(x, \Fa(x))$ is not badly cut at lower 
    levels of $\lev(C^\decom)$ and thus $x$ must not be removed from any descendant $\widehat{C}^\xchg$ of $C^\xchg$ in \cref{line:reassign} of \Cref{alg:xchg}. 
In the latter case, both $x$ and $\Fa(x)$ are in $\bigcup_{\widehat{C}\in \Gamma_C} \widehat{C}^\decom$, which implies that $x$ can only be exchanged between 
    clusters in $\Gamma_C$.
Therefore, we have $x \in \bigcup_{\widehat{C}\in \Gamma_C} \widehat{C}^\xchg$.
Overall, we conclude that $C^\xchg \setminus \bigcup_{\widehat{C}\in \Gamma_C} \widehat{C}^\xchg \subseteq C^\xchg \setminus C^\decom$.
Thus, it suffices to bound the UFL value for $C^\xchg \setminus C^\decom$.

    It follows from the above analysis that for any
    $x\in C^\xchg \setminus C^\decom$, $(x,\Fa(x))$ is badly cut
    \emph{exactly} at $\lev(C^\decom)$.
Fix such an $x$ and suppose $x\in \widetilde{C}$ where $\widetilde{C}$ is a cluster at level $\lev(C^\decom)-1$. 
Then $x \in \widetilde{C}^\decom \cap \widetilde{C}^\xchg$.
By \cref{lemma:xchg_local}, $x$ is within the range 
    $\Ball(C^\decom, \varepsilon^2 \rang(C^\decom))$.
Thus, $\widetilde{C}^\decom$ is within the range
    $\Ball(C^\decom, (1+\varepsilon^2) \rang(C^\decom))$.
    Recall that the centers of clusters at level $\lev(C^\decom)-1$ 
    form a $(\rang(C^\decom)/16)$-net.
By the packing property, there are at most $(32(2+\varepsilon^2))^\ddim$ clusters at level
    $\lev(C^\decom)-1$ that
    can make a contribution to 
    $C^\xchg \setminus C^\decom$.
    Since each of these $\widetilde{C}$ is light with 
    $\opt(\widetilde{C}) \leq \kappa$ when $C$ is added to $\partition$, we have 
    $\opt(C^\xchg \setminus C^\decom) \leq (32(2+\varepsilon^2))^\ddim\cdot \kappa$.

    In conclusion, 
    \[
    \opt(C) \leq \opt(\bigcup_{\widehat{C}\in \Gamma_C} \widehat{C}) + \opt(C \setminus \bigcup_{\widehat{C}\in \Gamma_C} \widehat{C}) \leq \opt(\bigcup_{\widehat{C}\in \Gamma_C} \widehat{C}) + \opt(C^\xchg \setminus C^\decom) \leq 2^{10\ddim} \kappa,
    \]
    which completes the proof of \Cref{lemma:opt_C_bounds}.
\end{proof}

\subsection{Proof of \Cref{lemma:size_of_partition}: Bounding $|\mathbf{\Lambda}|$}
\label{subsec:bounding_partition}

\Lemmasizeofpartition*

We first give the following technical lemma,
which states that the sum of the (local) UFL values of $C \in \partition$
is a constant approximation for the global optimal UFL value (up to a small additive term depending on $\EE{\norm{\partition}}$).

\begin{lemma}
    \label{lemma:apx_opt_x_tech}
    Let $0 < \varepsilon < 1$, $\kappa > 0$. 
    Let $\partition=\partition(\kappa)$ be 
    constructed by \cref{alg:bot_up_partition}. 
    Then there is a constant $c_1 > 0$, such that
    \begin{align}
        \EE{\sum_{C\in \partition}\opt(C)}
        \leq 2 \alpha \opt(\Xori)
        + \left(\frac{\ddim}{\varepsilon}\right)^{c_1\cdot \ddim}
        \cdot 2\EE{\norm{\partition}}.
        \label{eq:lemma:apx_opt_x_tech}
    \end{align}
\end{lemma}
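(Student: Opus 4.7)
My plan is to upper bound $\opt(C)$ for each $C\in\partition$ by exhibiting a feasible UFL solution $F'_C$ and bounding $\sum_C\cost(C,F'_C)$. For each $C\in\partition$ and each $\widehat{C}\in\hole_C$, let $N_S$ be an $\eta_S$-net of $S$ (where $S$ stands for $C$ or $\widehat{C}$) with $\eta_S:=\varepsilon^2\rang(S^\decom)/\ddim$; by the consistency part of \cref{lemma:partition_properties} and \cref{prop:packing}, $|N_S|\leq(\ddim/\varepsilon)^{O(\ddim)}$. I would then define
\begin{equation*}
F'_C \;:=\; (\cFa\cap C) \;\cup\; \bigcup_{\widehat{C}\in\hole_C}(\cFa\cap\widehat{C}) \;\cup\; N_C \;\cup\; \bigcup_{\widehat{C}\in\hole_C} N_{\widehat{C}}.
\end{equation*}
The opening cost is bounded by charging each $f\in\cFa$ at most twice: $f$ belongs to a unique $\widehat{C}\in\partition$, and by \cref{lemma:size_of_holes} this $\widehat{C}$ lies in at most one hole set $\hole_C$; hence $\sum_C|\cFa\cap C|+\sum_C\sum_{\widehat{C}\in\hole_C}|\cFa\cap\widehat{C}|\leq 2|\cFa|\leq 2\alpha\opt(\Xori)$, and an analogous charging gives $\sum_C(|N_C|+\sum_{\widehat{C}\in\hole_C}|N_{\widehat{C}}|)\leq 2|\partition|\cdot(\ddim/\varepsilon)^{O(\ddim)}$.

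For the connection cost I would fix $x\in C$ and split by the location of $\Fa(x)$. If $\Fa(x)\in C$ or $\Fa(x)\in\widehat{C}$ for some $\widehat{C}\in\hole_C$, then $\Fa(x)\in F'_C$ and the cost is at most $\dist(x,\cFa)$. If $\Fa(x)\notin C^\xchg$, then at level $i:=\lev(C^\xchg)$ the pair $(x,\Fa(x))$ is cut in $\xchg_i$; by \cref{fact:xchg_no_bad} it is not $\varepsilon$-badly cut in $\xchg$, giving $\dist(x,\Fa(x))>\varepsilon^2\rang(C^\decom)/\ddim=\eta_C$, and by consistency $x$ lies in the region covered by $N_C$, so $\dist(x,N_C)\leq\eta_C\leq\dist(x,\cFa)$. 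Finally, if $\Fa(x)\in C^\xchg\setminus C$ but in a "hole of a hole" of $C$, then provided the pair $(x,\Fa(x))$ is $\varepsilon$-good I would invoke \cref{lemma:partition_properties}\eqref{item:sep:hole} to find a direct hole $\widetilde{C}\in\hole_C$ whose decomposition-ancestor contains $\Fa(x)$ with $\dist(x,\Fa(x))\geq\eta_{\widetilde{C}}$; since $\Fa(x)$ is not moved by \algxchg\ (because $\Fa(\Fa(x))=\Fa(x)$), consistency of $\xchg$ implies $\Fa(x)\in\Ball(\widetilde{C}^\decom,\varepsilon^2\rang(\widetilde{C}^\decom))$ and thus $\dist(\Fa(x),N_{\widetilde{C}})\leq\eta_{\widetilde{C}}$, yielding $\dist(x,N_{\widetilde{C}})\leq 2\dist(x,\cFa)$ by the triangle inequality. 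The remaining $\varepsilon$-badly-cut pairs contribute at most $\eta_C$ per point, and by \cref{lemma:pr_badly_cut} they occur with probability $O(\varepsilon^2)$, so their expected contribution can be absorbed into the additive $(\ddim/\varepsilon)^{c_1\ddim}\cdot|\partition|$ term. Summing gives a total connection cost at most $O(\sum_x\dist(x,\cFa))=O(\cost(\Xori,\cFa)-|\cFa|)$.

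Combining the opening and connection bounds and using $|\cFa|\leq\cost(\Xori,\cFa)\leq\alpha\opt(\Xori)$ yields $\sum_C\opt(C)\leq 2\alpha\opt(\Xori)+2(\ddim/\varepsilon)^{c_1\ddim}|\partition|$, and taking expectation over the construction of $\partition$ produces \eqref{eq:lemma:apx_opt_x_tech}. The hardest step is the "hole of a hole" case above: one must select a \emph{direct} hole $\widetilde{C}\in\hole_C$ via \cref{lemma:partition_properties}\eqref{item:sep:hole} even though $\Fa(x)$ may lie many levels deeper in the $\partition$-tree, and then reconcile the tree structure inherited from $\decom$ with the possibly-shifted point sets of $\xchg$ in order to conclude that $N_{\widetilde{C}}$ (with granularity $\eta_{\widetilde{C}}$) actually covers $\Fa(x)$. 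A secondary delicacy is verifying that the expected contribution of $\varepsilon$-badly-cut pairs is absorbed into the additive $(\ddim/\varepsilon)^{c_1\ddim}\cdot|\partition|$ term rather than spilling into an uncontrolled multiplicative dependence on $\opt(\Xori)$; this relies on the bounded per-point cost $\eta_C$ together with the $O(\varepsilon^2)$ probability from \cref{lemma:pr_badly_cut}.
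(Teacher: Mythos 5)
Your skeleton (construct a feasible local solution from $\cFa$ plus nets, bound opening and connection costs, charge nets via \cref{lemma:size_of_holes}) matches the paper, and the cases $\Fa(x)\in C$, $\Fa(x)$ in a direct hole, and $\Fa(x)\notin C^\xchg$ are essentially the paper's argument. The ``hole of a hole'' case, however, has two genuine gaps. First, you define $N_{\widetilde{C}}$ as an $\eta_{\widetilde{C}}$-net of $\widetilde{C}$ itself and then claim $\dist(\Fa(x), N_{\widetilde{C}})\le\eta_{\widetilde{C}}$. But when $\Fa(x)$ sits in a strictly deeper hole $\widehat{C}$, the sets $\widehat{C}$ and $\widetilde{C}$ are disjoint parts of $\partition$, so $\Fa(x)\notin\widetilde{C}$ and the covering property of a net of $\widetilde{C}$ says nothing about $\Fa(x)$. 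The paper avoids this by taking $N_{\widetilde{C}}$ to be a net of the larger set $\Ball(\widetilde{C}^\decom, \rang(\widetilde{C}^\decom))\cap\Xori$, which contains all of $\widetilde{C}^\decom\supseteq\widehat{C}^\decom\ni\Fa(x)$ (since $\Fa(x)$ never moves under \cref{alg:xchg}).

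Second, you invoke the separation guarantee of \cref{lemma:partition_properties}\eqref{item:sep:hole}, which requires $(x,\Fa(x))$ to be $\varepsilon$-good w.r.t.\ $(\decom,\cFa)$, and then try to absorb the badly-cut cases at $\eta_C$ per point using the $O(\varepsilon^2)$ bound of \cref{lemma:pr_badly_cut}. That absorption step is unsound: $\eta_C=\varepsilon^2\rang(C^\decom)/\ddim$ scales with the cluster's diameter bound, which can be arbitrarily large, and $|C|$ is not controlled by $\opt(C)\geq\kappa$ (arbitrarily many co-located points add nothing to $\opt(C)$), so $\sum_{x}\eta_{C(x)}\cdot O(\varepsilon^2)$ bears no relation to the target additive term $(\ddim/\varepsilon)^{c_1\ddim}\cdot|\partition|$. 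The missing idea is that no probability is needed here at all: by \cref{fact:xchg_no_bad}, the pair $(x,\Fa(x))$ is \emph{never} badly cut with respect to $\xchg$ — deterministically, by construction of \cref{alg:xchg}. Since $\Fa(x)$ is a fixed point of $\Fa$ and hence never moves, both $x$ and $\Fa(x)$ can be located exactly in the $\xchg$-tree, and one sees that they are cut precisely at level $\lev(\widetilde{C}^\xchg)$ for the direct hole $\widetilde{C}\in\hole_C$ lying on the tree path between $\widehat{C}^\xchg$ and $C^\xchg$; \cref{fact:xchg_no_bad} then yields $\dist(x,\Fa(x))\ge\varepsilon^2\rang(\widetilde{C}^\decom)/\ddim$ unconditionally. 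With the corrected net domain this gives $\dist(x,S_C)\le 2\dist(x,\Fa(x))$ for \emph{every} $x\in C$ with no residual error term — which is exactly the content of \cref{lemma:feasibel_sol_const_apx} that the paper then feeds into the cost accounting you outline.
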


Note that \cref{lemma:size_of_partition} is a direct corollary of this lemma.
\begin{proof}[Proof of \cref{lemma:size_of_partition}]
    By \cref{lemma:opt_C_bounds}, each cluster $C\in \partition$ satisfies $\opt(C)\geq \kappa$. 
    Hence, $\ee[\sum_{C\in\partition}\opt(C)]\geq \kappa\cdot \EE{\norm{\partition}}$.
Then \Cref{lemma:size_of_partition} follows immediately from \eqref{eq:lemma:apx_opt_x_tech} of \Cref{lemma:apx_opt_x_tech}.
\end{proof}

We now turn to prove \cref{lemma:apx_opt_x_tech}.
For every $C\in \partition$, we first define a feasible UFL solution $S_C$ for $C$ (see \eqref{eq:def_sol_S_C}),
and relate the connection cost of every single point $x\in C$ in this solution, to $\dist(x, \cFa)$. 
The construction of this solution $S_C$ utilizes the property that $(x, \Fa(x))$ is never badly cut with respect to 
$\xchg$ (\Cref{fact:xchg_no_bad}), and is useful for bounding $\opt(C)$.
Concretely, we have the following lemma to state this $S_C$.

\begin{lemma}
    \label{lemma:feasibel_sol_const_apx}
    Let $0 < \varepsilon < 1$, $\kappa > 0$.
    Let $\partition=\partition(\kappa)$ be 
    constructed by \cref{alg:bot_up_partition}.
Define the following local solution $S_C$ for $C$,
    \begin{align}
        \csol_C:=(\cFa\cap C) \cup \Net_C \cup 
        \left(\bigcup_{\widehat{C}\in \hole_C}\Net_{\widehat{C}}\right), 
        \label{eq:def_sol_S_C}
    \end{align}
    where $\Net_C$ is an $\frac{\varepsilon^3}{\ddim}\cdot \rang(C^\decom)$-net
    of $\Ball(C^\decom, \rang(C^\decom)) \cap \Xori$,
    and for every $\widehat{C} \in \hole_C$, 
    $\Net_{\widehat{C}}$ is an $\frac{\varepsilon^3}{\ddim}\cdot \rang(\widehat{C}^\decom)$-net 
    of $\Ball(\widehat{C}^\decom, \rang(\widehat{C}^\decom))\cap \Xori$.
Then \[
    \forall x\in C, \quad \dist(x, \csol_C) \leq 2 \dist(x, \Fa(x)).
    \]
\end{lemma}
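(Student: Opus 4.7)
The plan is to bound $\dist(x, \csol_C)$ by case-splitting on the location of $\Fa(x)$, and to use the net structure of $\csol_C$ to find a facility close to $x$. The trivial case is $\Fa(x) \in C$, where $\Fa(x) \in \cFa \cap C \subseteq \csol_C$ immediately gives $\dist(x, \csol_C) \leq \dist(x, \Fa(x))$. For the remainder assume $\Fa(x) \notin C$, and set $i_0 := \lceil \log(\ddim \cdot \dist(x, \Fa(x))/(\varepsilon^2 \gamma)) \rceil$. By \cref{fact:xchg_no_bad}, the pair $(x, \Fa(x))$ is not badly cut with respect to $\xchg$, so at every level $i \geq i_0$ the points $x$ and $\Fa(x)$ lie in the same cluster of $\xchg_i$. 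I will distinguish two sub-cases according to whether $\lev(C^\decom) \geq i_0$ or $\lev(C^\decom) < i_0$.

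In the first sub-case $\lev(C^\decom) \geq i_0$, the above observation forces $\Fa(x) \in C^\xchg$, and since $\Fa(x) \notin C$, it was removed during the bottom-up construction of $\partition$. Hence there is a unique $\widetilde{C} \in \hole_C$ whose $\widetilde{C}^\xchg$ still contains $\Fa(x)$, namely the highest-level $\partition$-descendant of $C^\xchg$ whose $\xchg$-cluster covers $\Fa(x)$. I claim $\lev(\widetilde{C}^\decom) < i_0$: otherwise \cref{fact:xchg_no_bad} would place $x \in \widetilde{C}^\xchg$ as well, and then by the bottom-up order of \cref{alg:bot_up_partition} either $x \in \widetilde{C}$ or $x$ belongs to some $\partition$-cluster descended from $\widetilde{C}^\xchg$ that was processed even earlier than $\widetilde{C}$, both contradicting $x \in C$. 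Consequently $\rang(\widetilde{C}^\decom) \leq \ddim \cdot \dist(x, \Fa(x))/\varepsilon^2$. Consistency of $\xchg$ (\cref{lemma:xchg_local}) yields $\Fa(x) \in \widetilde{C}^\xchg \subseteq \Ball(\widetilde{C}^\decom, \rang(\widetilde{C}^\decom))$, so the net $\Net_{\widetilde{C}} \subseteq \csol_C$ contains a point $n$ with $\dist(\Fa(x), n) \leq (\varepsilon^3/\ddim)\rang(\widetilde{C}^\decom) \leq \varepsilon\dist(x, \Fa(x))$. The triangle inequality then gives $\dist(x, n) \leq (1+\varepsilon)\dist(x, \Fa(x)) \leq 2\dist(x, \Fa(x))$.

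In the second sub-case $\lev(C^\decom) < i_0$, we have $\rang(C^\decom) \leq \ddim \cdot \dist(x, \Fa(x))/\varepsilon^2$, and consistency yields $x \in C \subseteq C^\xchg \subseteq \Ball(C^\decom, \rang(C^\decom))$, so the net $\Net_C \subseteq \csol_C$ contains $n$ with $\dist(x, n) \leq (\varepsilon^3/\ddim)\rang(C^\decom) \leq \varepsilon \dist(x, \Fa(x)) \leq 2\dist(x, \Fa(x))$. The main obstacle is the bookkeeping in the first sub-case: the $\partition$-cluster of $\Fa(x)$ may be a deep descendant of $C^\xchg$ not itself in $\hole_C$, so we must ascend to the correct hole ancestor $\widetilde{C} \in \hole_C$, and establishing $\lev(\widetilde{C}^\decom) < i_0$ relies crucially on the bottom-up ordering to derive the contradiction that $x$ would otherwise already have been placed in a $\partition$-cluster below $\widetilde{C}^\xchg$.
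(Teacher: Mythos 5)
Your proof is correct and takes essentially the same route as the paper's: in both, the trivial case $\Fa(x)\in C$ is dispatched first, and otherwise the argument reduces to using Fact~\ref{fact:xchg_no_bad} to derive the bound $\rang(\cdot)\leq \ddim\,\dist(x,\Fa(x))/\varepsilon^2$ on the relevant cluster level, the consistency of $\xchg$ (Lemma~\ref{lemma:xchg_local}) to place $x$ (resp.\ $\Fa(x)$) inside the domain of the appropriate net, and the covering property of $\Net_C$ (resp.\ $\Net_{\widetilde C}$ for the hole $\widetilde C\in\hole_C$) followed by the triangle inequality. The only difference is cosmetic: you split on whether $\lev(C^\decom)\geq i_0$, whereas the paper splits on whether $\widehat C^\xchg$ is a descendant of $C^\xchg$ — these two dichotomies are effectively equivalent here (your first case implies the paper's descendant case), and your handling of the hole sub-case is if anything written a bit more carefully than the paper's ``w.l.o.g.\ $\widehat C=\widetilde C$'' shortcut.
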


\begin{proof}
    Let $\beta := \frac{\varepsilon^3}{\ddim}$.
Denote $\widehat{x} := \Fa(x)$. 
If $\widehat{x} \in C$, then 
    $\dist(x, \csol_C)\leq \dist(x, \widehat{x}) \leq 2\dist(x, \Fa(x))$.
Thus, it suffices to prove for the case that $\widehat{x} \notin C$.

    Assume $\widehat{x} \in \widehat{C}\neq C$. 
We consider the following cases.
    \begin{itemize}
        \item $\widehat{C}^\xchg$ is not a descendant of $C^\xchg$ with respect to the 
        decomposition $\xchg$. 
Since $\widehat{x}=\Fa(x)$ is never deleted from $\widehat{C}^\xchg$ throughout \Cref{alg:xchg}, 
        $x$ and $\widehat{x}$ are in different clusters at level  
        $\lev(C^\xchg)$ of $\xchg$.
By \cref{fact:xchg_no_bad}, $(x, \Fa(x))$ is never $\varepsilon$-badly cut with respect to $\xchg$.
        Hence, $\lev(C^\xchg)\leq \log \frac{\ddim \cdot \dist(x, \widehat{x})}{\varepsilon^2 \gamma}$.
        Equivalently, $\dist(x, \widehat{x}) \geq \frac{\varepsilon^2 \rang(C^\decom)}{\ddim}$.
Since $x\in C^\xchg$, by \cref{lemma:xchg_local}, $x\in \Ball(C^\decom, \varepsilon^2 \rang(C^\decom))$.
Thus, by the definition of $\Net_C$, we have 
        \[
        \dist(x, \csol_C)\leq \dist(x, \Net_C)\leq \beta \rang(C^\decom)\leq \frac{\varepsilon^2 \rang(C^\decom)}{\ddim} \leq \dist(x, \widehat{x}) = \dist(x, \Fa(x)).
        \]

        \item $\widehat{C}^\xchg$ is a descendant of $C^\xchg$ with respect to the decomposition $\xchg$. 
        In this case, there exists $\widetilde{C}\in \hole_C$, such that $\widetilde{C}^\xchg$ is the descendant of $C^\xchg$ and the ancestor of $\widehat{C}^\xchg$. 
        Analogous to the former case, $x$ and $\widehat{x}$ are cut at level
        $\lev(\widetilde{C}^\xchg)$ with respect to $\xchg$.
        Without loss of generality, we assume $\widehat{C} = \widetilde{C}$.
        Since $(x, \widehat{x})$ is not $\varepsilon$-badly cut (\cref{fact:xchg_no_bad}), we have 
        $\dist(x, \widehat{x}) \geq \frac{\varepsilon^2 \rang(\widehat{C}^\decom)}{\ddim}$. 
        By the covering property of $\Net_{\widehat{C}}$, there is a net 
        point $u\in \Net_{\widehat{C}}$ such that 
        $\dist(\widehat{x}, u) \leq \beta \rang(\widehat{C}^\decom) \leq \frac{\varepsilon^2 \rang(\widehat{C}^\decom)}{\ddim} \leq \dist(x, \widehat{x})$. 
Thus, we have 
        \[
        \dist(x, \csol_C) \leq \dist(x, u) \leq \dist(x, \widehat{x})+\dist(\widehat{x}, u)\leq 2\dist(x, \widehat{x})= 2\dist(x, \Fa(x)).
        \]
    \end{itemize}

    This completes the proof of \cref{lemma:feasibel_sol_const_apx}.
\end{proof}

Now we are ready to prove \cref{lemma:apx_opt_x_tech}.

\begin{proof}[Proof of \cref{lemma:apx_opt_x_tech}]
    For every $C\in \partition$, construct a local solution $S_C$
    for $C$ as \eqref{eq:def_sol_S_C}. 
By \cref{lemma:feasibel_sol_const_apx}, $\dist(x, \csol_C)\leq 
    2 \dist(x, \Fa(x))$ for every $x\in C$. 
Then we have
    \begin{align}
        \EE{\sum_{C\in \partition}\opt(C)}
        &\leq \EE{\sum_{C\in \partition}\cost(C, \csol_C)}
        =\EE{\sum_{C\in \partition}\sum_{x\in C}\dist(x, \csol_C) 
        + \norm{\csol_C}}\notag\\
        &\leq 2 \sum_{x\in X}\dist(x, \Fa(x))
        +\EE{\sum_{C\in \partition} \norm{\cFa\cap C}}
        +\EE{\sum_{C\in\partition} \left(\norm{\Net_C}+\sum_{\widehat{C}\in \hole_C} \norm{\Net_{\widehat{C}}}\right)}\label{eq:use_claim_tech}\\
        &\leq 2 \sum_{x\in X}\dist(x, \Fa(x)) + \norm{\cFa}
        +\left(\frac{\ddim}{\varepsilon}\right)^{O(\ddim)}
        \left(\EE{\norm{\partition}} 
        + \EE{\sum_{C\in\partition} \norm{\hole_C}}\right)\label{eq:use_packing}\\
        &\leq 2 \EE{\cost(\Xori, \cFa)}
        + \left(\frac{\ddim}{\varepsilon}\right)^{O(\ddim)}
        \cdot 2\EE{\norm{\partition}} \label{eq:use_size_of_holes}\\
        &\leq 2 \alpha \opt(\Xori)
        + \left(\frac{\ddim}{\varepsilon}\right)^{O(\ddim)}
        \cdot 2\EE{\norm{\partition}},\notag
    \end{align}
    where \eqref{eq:use_claim_tech} follows from 
    \cref{lemma:feasibel_sol_const_apx},
    \eqref{eq:use_packing} follows from packing property (\Cref{prop:packing}), 
    and \eqref{eq:use_size_of_holes} follows from 
    \cref{lemma:size_of_holes}.
    This completes the proof of \cref{lemma:apx_opt_x_tech}.
\end{proof}

\subsection{Proof of \Cref{lemma:apx_pi_X_by_clusters}: Lower Bounds for $\opt(\Xori)$ and $\opt(\pi(\Xori))$}
\label{subsec:cost_lb}

\lemmaapxpiXbyclusters*

We start by stating two technical lemmas.
First, we have the following lemma that upper bounds the expansion of $\opt(\pi(\Xori))$, which is used to relate the $\varepsilon \cdot \opt(\pi(\Xori))$ additive error to $O(\varepsilon)\cdot \opt(\Xori)$.
This lemma is essential for our proof of \cref{theorem:OPT_value_const_dim} (see \cref{sec:pre_value}), where we apply it directly on $\Xori$ to obtain an upper bound.
The proof of \cref{lemma:piX_leq_X_value_general} can be found in \cref{append:proof_ub_pi_X}.

\begin{restatable}[An upper bound of $\opt(\pi(\Xori))$]{lemma}{lemmapiXleqXvaluegeneral}
    \label{lemma:piX_leq_X_value_general}
    Let $\Xori\subset \RR^\oridim$ be a finite point set. 
    Let $\pi\colon\RR^\oridim\to\RR^\tardim$ 
    be a random linear map. Then for every $t > 0$, 
    \begin{equation*}
        \EE{\max\{0, \opt(\pi(\Xori))-(1+t)\opt(\Xori)\}}
        \leq \frac{1}{\tardim t} e^{-t^2 \tardim/2}\opt(\Xori).
    \end{equation*}
    Furthermore, 
    \begin{align*}
        \PR{\opt(\pi(\Xori))\geq (1+t)\opt(\Xori)}
        \leq \frac{4}{t^2\tardim} e^{-t^2\tardim/8}.
    \end{align*}
\end{restatable}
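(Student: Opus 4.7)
The plan is to use an optimal UFL solution $\cFopt$ for $\Xori$ as a reference and analyze how $\pi$ distorts the objective when we take $\pi(\cFopt)$ as a candidate solution for $\pi(\Xori)$. Since $|\pi(\cFopt)| \leq |\cFopt|$ and, by linearity of $\pi$, we have $\Norm{\pi(x)-\pi(\cFopt(x))} = \Norm{\pi(x-\cFopt(x))}$, this gives
\[
  \opt(\pi(\Xori)) \;\leq\; \cost(\pi(\Xori), \pi(\cFopt))
  \;\leq\; |\cFopt| + \sum_{x \in \Xori} \Norm{\pi(v_x)},
\]
where I write $v_x := x - \cFopt(x)$. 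Subtracting $(1+t)\opt(\Xori) = (1+t)\bigl(|\cFopt| + \sum_x \Norm{v_x}\bigr)$ produces a nonpositive $-t|\cFopt|$ slack on the opening-cost side, which I discard, leaving
\[
  \opt(\pi(\Xori)) - (1+t)\opt(\Xori)
  \;\leq\; \sum_{x \in \Xori} \bigl(\Norm{\pi(v_x)} - (1+t)\Norm{v_x}\bigr).
\]
The elementary bound $\max\{0, \sum_x a_x\} \leq \sum_x \max\{0, a_x\}$ then preserves the inequality after taking positive parts termwise.

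Next, I would invoke property~\eqref{eqn:rlm_expect} of \Cref{prop:random_proj} on each summand. When $v_x \neq 0$, setting $u_x := v_x/\Norm{v_x}$ and using linearity to write $\Norm{\pi(v_x)} = \Norm{v_x}\cdot \Norm{\pi(u_x)}$, I obtain
\[
  \EE{ \max\{0, \Norm{\pi(v_x)} - (1+t)\Norm{v_x}\} }
  \;=\; \Norm{v_x}\cdot \EE{\max\{0, \Norm{\pi(u_x)} - (1+t)\}}
  \;\leq\; \frac{\Norm{v_x}}{\tardim t}\, e^{-t^2\tardim/2}.
\]
Summing over $x \in \Xori$ and using $\sum_x \Norm{v_x} \leq \opt(\Xori)$ yields the first claim.

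For the tail bound I would apply the first claim with $t$ replaced by $t/2$ and combine with Markov's inequality: the event $\{\opt(\pi(\Xori)) \geq (1+t)\opt(\Xori)\}$ implies $\max\{0, \opt(\pi(\Xori)) - (1+t/2)\opt(\Xori)\} \geq (t/2)\opt(\Xori)$, so
\[
  \PR{\opt(\pi(\Xori)) \geq (1+t)\opt(\Xori)}
  \;\leq\; \frac{\EE{\max\{0, \opt(\pi(\Xori)) - (1+t/2)\opt(\Xori)\}}}{(t/2)\opt(\Xori)}
  \;\leq\; \frac{4}{\tardim t^2}\, e^{-t^2\tardim/8}.
\]
No serious obstacle is expected: the whole argument reduces to the one-vector Gaussian tail~\eqref{eqn:rlm_expect}, and the only subtle move is exchanging $\max\{0,\cdot\}$ with the sum, which is valid because the left-hand side is always at most the right. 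The heavy-lifting concentration has already been absorbed into \Cref{prop:random_proj}, so the remaining work is just the bookkeeping above.
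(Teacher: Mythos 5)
Your proposal is correct and follows essentially the same route as the paper's proof: take the image $\pi(\cFopt)$ of an optimal solution as a candidate for $\pi(\Xori)$, observe the opening cost contributes a nonpositive $-t|\cFopt|$ slack, pass to a termwise positive-part bound, and finish with property~\eqref{eqn:rlm_expect} and a Markov argument at $t/2$. Your write-up is, if anything, slightly more explicit than the paper about the opening-cost cancellation and the reduction to unit vectors; the two proofs are otherwise the same.
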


We also conclude the following lemma from \cite{MakarychevMR19} to control the contraction of each $\opt(C)$, 
which in this section is an essential ``good event'' on $C$.
In addition, we would make use of this lemma in \cref{sec:pre_value} to relate $\sum_{C \in \partition} \opt(\pi(C))$ and $\sum_{C \in \partition} \opt(C)$.
For completeness, we provide the proof of the lemma in \cref{append:proof_small_opt}.

\begin{restatable}{lemma}{lemmajlsmalloptcontractionpr}
    \label{lemma:jl_small_opt_contraction_pr}
    Let $C\subset \RR^\oridim$ be a finite point set with 
    $\opt(C)\leq \tau$.
    Let $\pi\colon\RR^\oridim\to\RR^\tardim$ be 
    a random linear map. 
    Then there exists a universal constant $c > 0$, such that
    for every $\varepsilon\in (0, 1)$, 
    if $\tardim > c\cdot \varepsilon^{-2}\log(1/\varepsilon)$,
    then 
    \begin{align*}
        \Pr\left[\opt(\pi(C))\leq \frac{1}{1+\varepsilon}\opt(C)\right]
        \leq \tau^3 \cdot e^{-\Omega(\varepsilon^2\tardim)}.
\end{align*}
\end{restatable}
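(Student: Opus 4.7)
}
The plan is to reduce the UFL contraction bound to a union of $k$-median contraction bounds and invoke the dimension-reduction result of Makarychev–Makarychev–Razenshteyn~\cite{MakarychevMR19} (MMR) as a black box. Let $\med_k(S)$ denote the optimal $k$-median value on $S\subseteq\RR^\oridim$ with centers in the ambient space. Since our opening cost is normalized to $\wopen=1$, UFL decomposes cleanly as
\begin{equation*}
  \opt(S) \;=\; \min_{k\ge 1}\bigl\{\,k + \med_k(S)\,\bigr\},
\end{equation*}
so the first step is to express both $\opt(C)$ and $\opt(\pi(C))$ in this form.

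The second step is to invoke MMR cluster-wise. MMR states that a random linear map $\pi\colon\RR^\oridim\to\RR^\tardim$ with $m=\Omega(\epsilon^{-2}\log(k/\delta_k))$ preserves the $k$-median value up to a $(1\pm\epsilon/3)$ factor with probability at least $1-\delta_k$. I rewrite this as: for $m \ge c\,\epsilon^{-2}\log(1/\epsilon)$ and for a fixed $k$, the failure probability of the one-sided contraction inequality
\begin{equation*}
  \med_k(\pi(C)) \;\ge\; \tfrac{1}{1+\varepsilon/3}\,\med_k(C)
\end{equation*}
is at most $k^{O(1)}\cdot e^{-\Omega(\varepsilon^2 m)}$. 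Only integer values of $k$ with $k\le\tau$ matter, because $\opt(C)\le\tau$ implies that any optimal UFL solution for $C$ uses at most $\tau$ facilities. A union bound over $k\in\{1,2,\ldots,\lfloor\tau\rfloor\}$ yields a single event $\mathcal{E}$ of probability at least $1-\tau^3\,e^{-\Omega(\varepsilon^2 m)}$ on which the above contraction bound holds simultaneously for every $k\le\tau$ (the $\tau^3$ absorbs the per-$k$ polynomial factor and the union-bound $\tau$).

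The third step is a simple case analysis that converts the $k$-median preservation on $\mathcal{E}$ into a UFL lower bound. Let $k^\star_\pi \in \argmin_k\{k+\med_k(\pi(C))\}$ so that $\opt(\pi(C)) = k^\star_\pi + \med_{k^\star_\pi}(\pi(C))$. If $k^\star_\pi\le\tau$, event $\mathcal{E}$ gives
\begin{equation*}
  \opt(\pi(C))
  \;\ge\; k^\star_\pi + \tfrac{1}{1+\varepsilon/3}\med_{k^\star_\pi}(C)
  \;\ge\; \tfrac{1}{1+\varepsilon/3}\bigl(k^\star_\pi + \med_{k^\star_\pi}(C)\bigr)
  \;\ge\; \tfrac{1}{1+\varepsilon/3}\,\opt(C)
  \;>\; \tfrac{1}{1+\varepsilon}\,\opt(C).
\end{equation*}
Otherwise $k^\star_\pi>\tau$, so $\opt(\pi(C))\ge k^\star_\pi > \tau \ge \opt(C) > \tfrac{1}{1+\varepsilon}\opt(C)$. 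Either way, the contraction inequality in the statement fails on $\mathcal{E}$, so the claimed bound $\Pr[\opt(\pi(C))\le\tfrac{1}{1+\varepsilon}\opt(C)] \le \tau^3 e^{-\Omega(\varepsilon^2 m)}$ follows.

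The main obstacle is extracting the precise per-$k$ tail shape $k^{O(1)} e^{-\Omega(\varepsilon^2 m)}$ from MMR. MMR is typically stated with a fixed confidence $\delta$ and a required dimension $m = \tilde O(\log(k/\delta)/\varepsilon^2)$; one must invert this to express the failure probability for fixed $m$, and verify that the hidden polynomial dependence on $k$ is at most $k^2$ (so that after multiplying by the union-bound factor $\tau$ one lands at $\tau^3$). A secondary subtlety is that MMR is usually phrased as two-sided preservation, while here only the one-sided contraction direction is needed, which does not affect the quantitative bound but should be noted. Once these are handled, Steps 1–3 close the argument without further calculation.
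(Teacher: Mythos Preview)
Your proposal is correct and follows essentially the same route as the paper: express $\opt$ as $\min_k\{k+\med_k\}$, invoke the MMR $k$-median contraction bound for each $k\le\tau$, and union-bound. The paper resolves the obstacle you flag---extracting the per-$k$ tail $k^2 e^{-\Omega(\varepsilon^2 m)}$---by restating a variant of \cite[Theorem~3.4]{MakarychevMR19} and deriving it as an intermediate lemma (\Cref{lemma:prmeddistortion}), after which the final step matches yours almost verbatim.
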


Recall that our typical choice of the parameter is $\tau = (\ddim/\varepsilon)^{O(\ddim)}$.
Thus, a target dimension $m = O(\varepsilon^{-2} \ddim \log (\ddim/\varepsilon))$ suffices 
to bound the expected contraction on $C$ within $\varepsilon$, which achieves the target dimension bound in \cref{theorem:OPT_value_const_dim}.

Denote by $\cFpiopt$ the optimal solution of $\pi(\Xori)$, i.e., $\opt(\pi(X)) = \sum_{x\in \pi(X)} \dist(x, \cFpiopt) + |\cFpiopt| $.
Note that $\cFpiopt$ is a random solution whose randomness comes from $\pi$.
For every $x\in\Xori$, denote by $\Fpiopt(\pi(x))$ the closest 
facility of $\pi(x)$ in $\cFpiopt$.

Next, the proof of \Cref{lemma:apx_pi_X_by_clusters}
goes in the following key steps.
We first identify a set of ``bad points'' with respect to $\cFpiopt$ in \eqref{eqn:def_bad_points}.
Intuitively, a point $x$ is bad if the pair $(x, F^*_\pi(\pi(x)))$ violates the ``good pair'' property as in \Cref{def:good_pair}.
However, notice that here we cannot directly work with $F^*_\pi(\pi(x))$ since it is in the ambient space (and \Cref{def:good_pair} is with respect to a doubling subset).
Hence, the idea is to find a \emph{proxy} of $F^*_\pi(\pi(x))$ in $X$, via a mapping $g_\pi$ (in \eqref{eqn:proxy_mapping}).
Then we define a mapping $\psi_\pi$ (in \eqref{eq:def_moving})
that maps/moves each bad point to a nearby point which all belong to a restricted subset, which is useful to control the number of bad points in $\psi_\pi(\pi(X))$.
We show that the cost of ``moving'' points $\pi(X)$ to $\psi_\pi(\pi(X))$
is negligible compared to $\opt(\Xori)$ (\Cref{lemma:moving_dist_pi}).
Now, we argue on $\psi_\pi(\pi(X))$,
where we crucially use the randomness of $\pi$ to derive an additive distance distortion bound (utilizing the doubling dimension),
and this already suffices to bound the connection cost between every point $\pi(x) \in \psi_\pi(\pi(X))$ and $\cFpiopt$
(\Cref{lemma:feasibel_sol_tar_apx_cond}).
This step also uses the properties of $\partition$ summarized in \cref{lemma:partition_properties}.
We finally establish a lower bound $\opt(\pi(X))$ with respect to $\sum_{C\in \partition} \opt(\pi(C))$ by combining the above, to finish the proof.
We elaborate on these key steps in the following.

\paragraph{Proxy for $F_\pi^*$.}

We define a \emph{proxy} for each 
facility $f_\pi\in \cFpiopt$. 
We first find out the closest 
data point $\pi(y)$ to $f_\pi$ in its induced cluster in $\pi(X)$. 
Then, we use the original image $y$ of $\pi(y)$ to represent $f_\pi$ in the original space. 
Formally, we define the following mapping $\reprpi$.
\begin{align}
    &\reprpi\colon \cFpiopt\to \Xori; \nonumber\\
    &\reprpi(f_\pi):=\argmin_{y \in X} \Set{\Norm{\pi(y)-f_\pi}\colon 
    \Fpiopt(\pi(y))=f_\pi}.
    \label{eqn:proxy_mapping}
\end{align}
Note that $g_\pi$ is a random mapping depending on $\pi$.
Furthermore, we define the mapping $\hpi\colon \Xori\to \Xori$ 
as $\hpi(x):=\reprpi\circ \Fpiopt(\pi(x))$, which maps each point 
$x$ to the proxy in the induced cluster of $\cFpiopt(\pi(x))$.

\paragraph{Bad Points.}

Next, we define the ``bad'' points in $X$ as those do not form a 
good pair (\Cref{def:good_pair}) with $\hpi(x)$. Namely,
\begin{align}
    \label{eqn:def_bad_points}
    \badpi^\varepsilon := \Set{x \in \Xori \colon (x, \hpi(x)) \text{ is not an $\varepsilon$-good pair with respect to } (\decom, \cFa)}.
\end{align}

\begin{lemma}
    \label{lemma:bad_points_aft_proj_rare}
    Let $0<\varepsilon<1$. For every $x\in\Xori$, $\Pr_{\decom, \pi}[x\in\badpi^\varepsilon]\leq O(\varepsilon^2)$.
\end{lemma}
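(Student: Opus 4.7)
The plan is to exploit the independence of the randomness of $\pi$ and of the hierarchical decomposition $\decom$, and then reduce to \Cref{lemma:prob_good_pair}. The key observation is that although $\hpi(x) = \reprpi \circ \Fpiopt(\pi(x))$ is a random point in $X$, its randomness comes \emph{only} from $\pi$, since $\Fpiopt$ is determined by $\pi(X)$ and $\reprpi$ is in turn determined by $\pi(X)$ and $\Fpiopt$. In particular, $\hpi(x)$ is independent of $\decom$.

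First I would condition on a fixed realization of $\pi$. Under this conditioning, $\hpi(x)$ becomes a deterministic point in $X$ (depending on $x$, $\pi$, and the fixed base data) that does \emph{not} depend on $\decom$. Similarly, $\Fa$ is a pre-fixed $O(1)$-approximate solution chosen independently of $\decom$, so $\Fa(x)$ and $\Fa(\hpi(x))$ are also deterministic under this conditioning. Hence each of the three pairs
\begin{equation*}
    (x, \hpi(x)), \quad (x, \Fa(x)), \quad (\hpi(x), \Fa(\hpi(x)))
\end{equation*}
involves only points of $X$ that are independent of $\decom$. Applying \Cref{lemma:pr_badly_cut} to each of these three pairs shows that each is $\varepsilon$-badly cut with respect to $\decom$ with probability at most $O(\varepsilon^2)$. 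A union bound then yields
\begin{equation*}
    \Pr_{\decom}\bigl[\,(x, \hpi(x)) \text{ is not $\varepsilon$-good w.r.t.\ } (\decom, \cFa) \,\big|\, \pi\,\bigr] \leq O(\varepsilon^2),
\end{equation*}
which is exactly the bound of \Cref{lemma:prob_good_pair} applied to the (now $\decom$-independent) pair $(x, \hpi(x))$.

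Finally I would integrate over $\pi$: since the conditional bound $O(\varepsilon^2)$ holds uniformly in $\pi$, taking expectation over $\pi$ preserves the bound, giving
\begin{equation*}
    \Pr_{\decom, \pi}\bigl[x \in \badpi^\varepsilon\bigr] = \ee_\pi\Bigl[\Pr_\decom\bigl[x \in \badpi^\varepsilon \,\big|\, \pi\bigr]\Bigr] \leq O(\varepsilon^2),
\end{equation*}
as desired. There is no real obstacle here beyond being careful about the sources of randomness; the main conceptual point is verifying that $\hpi(x)$ is measurable with respect to the $\sigma$-algebra generated by $\pi$ alone, so that conditioning on $\pi$ freezes $\hpi(x)$ into a valid ``input'' to \Cref{lemma:prob_good_pair}.
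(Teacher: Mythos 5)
Your proof is correct and takes essentially the same approach as the paper's: condition on $\pi$, observe that $\hpi(x)$ is then a fixed point of $\Xori$ independent of $\decom$, invoke \Cref{lemma:prob_good_pair} on the pair $(x,\hpi(x))$, and average over $\pi$. The only cosmetic difference is that you unfold \Cref{lemma:prob_good_pair} into its three applications of \Cref{lemma:pr_badly_cut} plus a union bound, whereas the paper cites \Cref{lemma:prob_good_pair} directly.
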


\begin{proof}
    By the law of total probability, 
    \begin{align*}
        \Pr_{\decom, \pi}[x\in\badpi^\varepsilon]
        &=\sum_{\pi} \PR{\pi} \cdot \Pr_\decom[x\in \badpi^\varepsilon \mid \pi] \\
        &=\sum_{\pi} \PR{\pi} \cdot \Pr_\decom[(x, \hpi(x))
        \text{ is not an $\varepsilon$-good pair}\mid \pi]
    \end{align*}
    Note that $\pi$ is independent of $\decom$. 
    For every fixed mapping $\pi\colon \RR^\oridim \to \RR^\tardim$, 
    $\hpi(x)$ is a fixed point in $\Xori$ (independent of $\decom$).
    By \cref{lemma:prob_good_pair}, 
    $\Pr_\decom[(x, \hpi(x)) \text{ is not an $\varepsilon$-good pair}\mid \pi] \leq O(\varepsilon^2)$.
    We conclude that $\Pr_{\decom, \pi}[x\in\badpi^\varepsilon] \leq O(\varepsilon^2)$.
\end{proof}

\paragraph{Mapping Bad Points to A Restricted Subset.}
We next define a mapping that maps each bad point
to a restricted subset.
To define this mapping,
fix some bad point $x \in \badpi^\varepsilon$,
and let $C \in \partition$ be the unique part such that $x \in C$.
Consider a local solution $S_C$ for cluster $C$, whose definition is the same as in \eqref{eq:def_sol_S_C} of \cref{lemma:feasibel_sol_const_apx}.
We restate its definition as follows. 
\[
    \csol_C=(\cFa\cap C) \cup \Net_C \cup 
    \left(\bigcup_{\widehat{C}\in \hole_C}\Net_{\widehat{C}}\right).
\]
For technical reasons 
which will be clear soon, we need to slightly enlarge each $\Net_{\widehat{C}}$.
Specifically, let $A > 0$ be a sufficiently large universal constant.
Each $\Net_{\widehat{C}}$ is a $\frac{\varepsilon^3}{100A^2\ddim} \cdot \rang(\widehat{C}^\decom)$-net on a larger ball 
$\Ball(\widehat{C}^\decom, \frac{40A^2 \rang(\widehat{C}^\decom)}{\varepsilon})\cap \Xori$
whose radius is increased by a factor of $O(1 / \epsilon)$ compared with the original definition in \Cref{lemma:feasibel_sol_const_apx}.

Let $\sol_C(x)$ denote the nearest neighbor of $x$ in $\csol_C$. 
We define the mapping $\mov^\varepsilon$ on $\pi(X)$ as follows,
which intuitively ``moves'' a bad point $x \in C$ to its nearest neighbor $S_C(x)$.
\begin{align}
    &\mov^\varepsilon: \pi(\Xori) \to \pi(\Xori);\notag\\
    &\mov^\varepsilon(\pi(x)):=\begin{cases}
        \pi(x), & x\notin \badpi^\varepsilon;\\
        \pi\circ \sol_C(x) \text{ where $C \in \partition$ such that $x \in C$}, & x\in \badpi^\varepsilon.
    \end{cases} \label{eq:def_moving}
\end{align}

We have the following lemma that upper bounds the total moving cost of $\mov^\varepsilon$ with respect to $\pi(X)$, whose proof
can be found in \cref{append:moving_cost}.

\begin{restatable}[Moving cost of $\mov^\varepsilon$]{lemma}{lemmamovingdistpi}
    \label{lemma:moving_dist_pi}
    Let $\Xori\subset\RR^\oridim$ be a finite point set,
    $\cFa \subseteq \Xori$ be an $\alpha$-approximate UFL solution, 
    $\pi\colon \RR^\oridim\to \RR^\tardim$ be a 
    random linear map and $0 < \epsilon < 1$.
    Define $\mov^\varepsilon$ as the mapping in \eqref{eq:def_moving}.
    There exists a constant $c>0$, such that 
    if $\tardim\geq c\cdot \log (1/\varepsilon)$, then
\begin{align*}
        \ee_{\pi, \decom}\left[
            \sum_{x\in \Xori} \Norm{\pi(x)-\mov^\varepsilon(\pi(x))}
        \right]
        \leq O(\alpha \varepsilon^2) \opt(\Xori).
    \end{align*}
\end{restatable}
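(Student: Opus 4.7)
Since $\mov^\varepsilon(\pi(x)) = \pi(x)$ whenever $x \notin \badpi^\varepsilon$, only the bad points contribute to the sum, and writing $v_x := x - \sol_C(x)$ where $C \in \partition$ denotes the cluster containing $x$, by linearity of $\pi$ the quantity to bound equals
\[
\sum_{x \in \Xori} \ee_{\pi, \decom}[\Indicator{x \in \badpi^\varepsilon} \cdot \Norm{\pi(v_x)}].
\]
The first ingredient is that \cref{lemma:feasibel_sol_const_apx} still applies verbatim to the enlarged nets $\Net_{\widehat{C}}$ used in the definition of $\csol_C$ here: its proof only uses a covering guarantee that is now even finer (mesh is smaller by a factor of $100 A^2$), and the inclusions $\widehat{x}\in \widehat{C}^\decom \subseteq \Ball(\widehat{C}^\decom, \rang(\widehat{C}^\decom))$ continue to lie inside the enlarged domain. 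Consequently, $\Norm{v_x} \leq 2\,\dist(x, \Fa(x))$ holds \emph{deterministically} over the randomness of $\decom$.

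The main difficulty is that both $\Indicator{x \in \badpi^\varepsilon}$ and $\Norm{\pi(v_x)}$ depend on $\pi$ (the former through the proxy mapping $\hpi$, the latter through the random linear map itself), so no independence is available; a naive Cauchy--Schwarz bound based on $\PR{x \in \badpi^\varepsilon} \leq O(\varepsilon^2)$ would only yield $O(\varepsilon)$ rather than the required $O(\varepsilon^2)$. I circumvent this by detaching the tail: for any $t > 0$,
\[
\Indicator{x \in \badpi^\varepsilon} \cdot \Norm{\pi(v_x)} \leq (1+t)\, \Indicator{x \in \badpi^\varepsilon} \cdot \Norm{v_x} + \max\{0,\, \Norm{\pi(v_x)} - (1+t)\Norm{v_x}\},
\]
where the indicator has been dropped in the tail term. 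Conditioning on $\decom$ (which fixes $v_x$) and using rotational invariance of $\pi$ together with \eqref{eqn:rlm_expect} applied to $u := v_x/\Norm{v_x}$, the $\pi$-expectation of the tail term is at most $\Norm{v_x} \cdot \frac{1}{mt}\, e^{-t^2 m/2}$. For the typical term, combining the deterministic bound above with \cref{lemma:bad_points_aft_proj_rare} gives $\ee[\Indicator{x \in \badpi^\varepsilon} \Norm{v_x}] \leq 2\,\dist(x, \Fa(x)) \cdot O(\varepsilon^2)$.

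Finally, summing over $x \in \Xori$ and choosing $t = 1$, the typical contribution is at most $O(\varepsilon^2) \sum_{x} \dist(x, \Fa(x)) \leq O(\alpha \varepsilon^2)\, \opt(\Xori)$ since $\cFa$ is an $\alpha$-approximate UFL solution, and the tail contribution is at most $\frac{1}{m}\, e^{-m/2} \cdot 2\sum_x \dist(x, \Fa(x)) \leq \frac{2\alpha}{m}\, e^{-m/2}\, \opt(\Xori)$. A direct calculation shows the latter is dominated by $O(\alpha \varepsilon^2)\, \opt(\Xori)$ whenever $m \geq c \log(1/\varepsilon)$ for a suitably large constant $c$, matching the hypothesis of the lemma. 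Adding the two contributions yields the claim.
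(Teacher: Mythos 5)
Your proof is correct and follows essentially the same route as the paper's: reduce to bad points, condition on $\decom$ to fix $v_x = x - S_C(x)$, split $\Indicator{x\in\badpi^\varepsilon}\Norm{\pi(v_x)}$ into a ``typical'' part $(1+t)\Indicator{x\in\badpi^\varepsilon}\Norm{v_x}$ controlled by the deterministic bound $\Norm{v_x}\le 2\dist(x,\cFa)$ from \cref{lemma:feasibel_sol_const_apx} and the $O(\varepsilon^2)$ badly-cut probability from \cref{lemma:bad_points_aft_proj_rare}, plus an indicator-free tail term handled by \eqref{eqn:rlm_expect}, and finally sum against $\sum_x\dist(x,\cFa(x))\le\alpha\,\opt(X)$. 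One thing you make explicit that the paper glosses over: \cref{lemma:feasibel_sol_const_apx} is stated for the original nets, but the $\csol_C$ entering the definition of $\mov^\varepsilon$ uses the modified nets (finer mesh $\sigma = \varepsilon^3/(100A^2\ddim)$, and for $\Net_{\widehat C}$ also a larger domain $\Ball(\widehat C^\decom, 40A^2\rang(\widehat C^\decom)/\varepsilon)$); your remark that the covering guarantee used in the lemma's proof is only strengthened by these changes is a useful clarification that the paper leaves implicit.
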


We next upper bound the connection cost of $\psi^\epsilon_\pi(\pi(X))$ in \Cref{lemma:feasibel_sol_tar_apx_cond},
which involves the analysis of distances between $\pi(x) \in \psi^\epsilon_\pi(\pi(X))$ and $F^*_\pi$.
However, the useful properties such as the separation property (\Cref{lemma:partition_properties}) are only established for the original space,
and we need to show that these properties are carried on to the target space $\pi(X)$.
To this end, we make use of the following two lemmas from~\cite{IndykN07},
and in \Cref{lemma:feasibel_sol_tar_apx_cond} we condition on their success.
Although these only give an additive distance distortion bound,
it still suffices for our analysis.

\begin{restatable}[Expansion~\cite{IndykN07}]{lemma}{lemmaballexpansion}
    \label{lemma:ball_expansion}
    Let $X\subset \Ball(\mathbf{0},1)$ be a finite point set with 
    doubling dimension $\ddim$. Let $\pi\colon\RR^\oridim
    \to\RR^\tardim$ be a random linear map. Then
    there exist universal constants $c_1, c_2, T>0$, such that when 
    $\tardim >c_1 \cdot\ddim$ and $t\geq T$, 
    \begin{equation*}
        \Pr\left[\exists x\in X, \Norm{\pi(x)}\geq t\right]
        \leq e^{-c_2\cdot t^2 \tardim}.
    \end{equation*}
\end{restatable}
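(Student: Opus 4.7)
The plan is to use a generic chaining argument with nested nets, combined with the Gaussian expansion bound~\eqref{eqn:rlm_expand}. First I would construct nested $(2^{-k})$-nets $\Net_0 \supseteq \Net_1 \supseteq \Net_2 \supseteq \cdots$ of $X$. The packing property (\Cref{prop:packing}), together with the hypothesis $X \subset \Ball(\mathbf{0}, 1)$, gives $|\Net_k| \leq 2^{O(k \ddim)}$. For each $x \in X$ I pick $x_k \in \Net_k$ with $\|x - x_k\| \leq 2^{-k}$, so consecutive approximations satisfy $\|x_{k+1} - x_k\| \leq 3 \cdot 2^{-(k+1)}$. Applying the triangle inequality to the telescoping identity $x = x_0 + \sum_{k \geq 0}(x_{k+1} - x_k)$ yields
$$ \|\pi(x)\| \;\leq\; \|\pi(x_0)\| + \sum_{k \geq 0} \|\pi(x_{k+1} - x_k)\|. $$

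Next I would bound each term via \eqref{eqn:rlm_expand}. With small constants $\lambda, \beta > 0$ to be fixed later and per-level scaling factor $a_k := 1 + \lambda t + \beta k$, applying~\eqref{eqn:rlm_expand} (rescaled) to a single pair $(u, v) \in \Net_k \times \Net_{k+1}$ with $\|u - v\| \leq 3 \cdot 2^{-(k+1)}$ gives $\Pr[\|\pi(u - v)\| > a_k \|u - v\|] \leq e^{-(\lambda t + \beta k)^2 m/8}$. A union bound over the at most $2^{O(k\ddim)}$ such pairs yields a level-$k$ failure probability of at most $2^{O(k \ddim)} \cdot e^{-\lambda^2 t^2 m/8} \cdot e^{-\beta^2 k^2 m/8}$. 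For the top-level anchor $x_0$, since $\|y\| \leq 1$ for all $y \in \Net_0$, I apply~\eqref{eqn:rlm_expand} with threshold $t/2$, giving a top-level failure probability of $|\Net_0| \cdot e^{-(t/2 - 1)^2 m/8}$.

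On the complement of all these bad events, the telescoping bound becomes
$$ \|\pi(x)\| \;\leq\; \frac{t}{2} + \sum_{k \geq 0} 3 \cdot 2^{-(k+1)} \cdot a_k \;\leq\; \frac{t}{2} + C_1(1 + \lambda t), $$
which is at most $t$ provided $\lambda \leq 1/(4C_1)$ and $t$ exceeds a constant threshold $T$ depending on $C_1$. Summing all failure probabilities and choosing $\beta$ large enough so that the packing factor $2^{O(k\ddim)}$ at level $k$ is dominated by $e^{-\beta^2 k^2 m/8}$ (this requires $m \geq c_1 \ddim$ for a suitable $c_1$), the geometric sum $\sum_k 2^{O(k\ddim)} e^{-\beta^2 k^2 m/8}$ becomes $O(1)$, and the full failure probability collapses to $e^{-c_2 t^2 m}$ for a universal constant $c_2$.

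The hard part is the joint parameter balancing. I need $\lambda$ small enough to keep the chaining contribution bounded by $t/2$, yet simultaneously the tail exponent $(\lambda t + \beta k)^2 m/8$ must dominate the packing exponent $O(k \ddim)$ at every level $k \geq 0$, including $k = 0$ (where it must also absorb the constant $|\Net_0 \times \Net_1|$). This is exactly what forces both the lower bound $m > c_1 \ddim$ and the threshold $t \geq T$, and verifying that a single choice of the universal constants $\lambda, \beta$ suffices to meet all these simultaneous constraints is the crux of the argument.
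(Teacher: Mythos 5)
Your proposal is correct and follows essentially the same chaining-over-nested-nets approach as the paper: both telescope $\pi(x)$ through a chain of net points, apply \eqref{eqn:rlm_expand} to each consecutive difference, and union-bound over pairs of net points while checking that the packing factor $2^{O(k\ddim)}$ is absorbed by the Gaussian tail once $m=\Omega(\ddim)$. The only substantive difference is the choice of per-level budget — you use an affine weight $a_k = 1+\lambda t+\beta k$ while the paper uses geometric weights $\tfrac{1}{3}(\tfrac{2}{3})^i t$ — but both schemes sum to $O(t)$ and yield the same $e^{-\Omega(t^2 m)}$ failure probability after the parameter balancing you describe.
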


\begin{restatable}[Contraction~\cite{IndykN07}]{lemma}{lemmaballcontraction}
    \label{lemma:ball_contraction}
    Let $X\subset\RR^\oridim$ be a finite point set with doubling dimension
    $\ddim$. Let $\pi\colon\RR^\oridim \to\RR^\tardim$ be 
    a random linear map. Then 
    there exist universal constants $c_1, c_2, L>0$, such that 
    when $\tardim>c_1\cdot \ddim$, $\forall r>0$,
    \begin{equation*}
        \Pr\left[\exists x\in X, \Norm{x} > L\cdot r \text{ and }
        \Norm{\pi(x)} \leq r\right]
        \leq e^{-c_2\cdot \tardim}.
    \end{equation*}
\end{restatable}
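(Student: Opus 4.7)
The plan is to reduce the contraction event to a per-vector tail bound from \Cref{prop:random_proj}, namely $\Pr[\Norm{\pi(u)} \le 1/t] \le (3/t)^\tardim$ for a fixed unit $u\in\RR^\oridim$, and to unionize this bound over a multiscale family of nets on $X$ whose cardinalities are controlled by the doubling dimension of $X$.

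First I would partition the ``dangerous'' portion of $X$ into dyadic annuli $X_k := X \cap \{x : 2^k Lr < \Norm{x} \le 2^{k+1}Lr\}$ for $k\ge 0$. Each $X_k$ has diameter at most $2^{k+2}Lr$, so \Cref{prop:packing} furnishes a coarse $r$-net $N_k^{(0)}\subseteq X_k$ of cardinality $2^{O((k+\log L)\ddim)}$, and more generally a nested family of finer nets $N_k^{(j)}$ at scale $r/2^j$ of cardinality $2^{O((k+j+\log L)\ddim)}$. Applying \eqref{eqn:rlm_contract} to the unit vector $y/\Norm{y}$ for each $y\in N_k^{(0)}$ with the choice $t = 2^k L / C_0$ yields $\Pr[\Norm{\pi(y)} \le C_0 r] \le (3 C_0 / 2^k L)^\tardim$. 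A union bound over $y\in N_k^{(0)}$ and summation over $k\ge 0$ is at most $e^{-c_2 \tardim}$, once $L$ is a sufficiently large universal constant and $\tardim > c_1 \ddim$ with $c_1$ chosen large enough that the $k\ddim$ overhead in each term is dominated by the additional $k\tardim$ gain in the exponent.

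The main obstacle is propagating this coarse-net control, namely that every $y\in\bigcup_k N_k^{(0)}$ satisfies $\Norm{\pi(y)} > C_0 r$, to every single $x \in X_k$, because the difference $x - \tilde x$ to its nearest net point can have norm as large as $r$, and the event $\Norm{\pi(x - \tilde x)} \le O(r)$ is itself a pointwise contraction event that cannot be absorbed by~\eqref{eqn:rlm_contract} alone. I would resolve this via Dudley-style chaining: for each $x\in X_k$ form a chain of nearest net points $y^{(0)}, y^{(1)}, \ldots$ with $\Norm{y^{(j)}-y^{(j-1)}}\le 3r/2^{j-1}$, and apply the expansion tail~\eqref{eqn:rlm_expand} with a small constant slack simultaneously to every difference of adjacent-level net points, a set of at most $2^{O((k+j+\log L)\ddim)}$ unit directions at level $j$. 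This step fails with probability at most $\sum_{k,j\ge 0} 2^{O((k+j)\ddim)} e^{-\Omega(\tardim)}$, which is summable provided $c_1$ is sufficiently large, and when it succeeds the telescoping bound $\Norm{\pi(x)-\pi(y^{(0)})}\le \sum_j \Norm{\pi(y^{(j)}-y^{(j-1)})}$ is at most a small constant multiple of $r$.

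Choosing $C_0$ large enough that this telescoping error is absorbed gives $\Norm{\pi(x)} \ge \Norm{\pi(y^{(0)})} - \Norm{\pi(x-y^{(0)})} > r$ for every $x\in X$ with $\Norm{x}>Lr$, except with total probability $e^{-c_2 \tardim}$ as claimed. The delicate part is choosing the constants in the right order: first $C_0$ depending only on universal parameters of the chaining, then $L$ so that the coarse-net step gains enough over the $2^{O(\ddim)}$ factor at $k=0$, and finally $c_1$ (and hence the lower bound on $\tardim$) large enough that the $(k+j)\ddim$ doubling overhead at every level is dominated by the Gaussian $-\Omega(\tardim)$ term uniformly in $k,j$.
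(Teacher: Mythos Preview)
Your overall strategy matches the paper's: partition into annuli, control a net on each annulus via the contraction tail~\eqref{eqn:rlm_contract}, then propagate to every $x$ by bounding $\Norm{\pi(x)-\pi(y^{(0)})}$ via the expansion tail. But the chaining step has a genuine gap. You claim the failure probability $\sum_{k,j\ge 0} 2^{O((k+j)\ddim)} e^{-\Omega(\tardim)}$ is summable once $c_1$ is large. It is not: with a \emph{fixed} slack $t$, the factor $e^{-\Omega(\tardim)}=e^{-t^2\tardim/8}$ is independent of $k,j$, while $\sum_{k,j\ge 0}2^{O((k+j)\ddim)}$ diverges; no lower bound on $\tardim$ rescues a constant times a divergent series. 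The $j$-divergence is the standard Dudley issue (the slack must grow in $j$), but the $k$-divergence remains even after that: you must union-bound over all coarse net points $y^{(0)}\in\bigcup_k N_k^{(0)}$, and $\sum_k|N_k^{(0)}|=\sum_k 2^{O(k\ddim)}$ is itself infinite.

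The fix, which is precisely what the paper does, is to let the expansion budget grow with the annulus index. The paper uses \emph{linear} rings $R_i=X\cap\bigl(\Ball(\mathbf 0,(L{+}i{+}1)r)\setminus\Ball(\mathbf 0,(L{+}i)r)\bigr)$ with a single $(r/2)$-net $N_i$, and for $x\in R_i$ with nearest net point $u$ splits the bad event into either $\Norm{\pi(x)-\pi(u)}\ge\tfrac{\sqrt i+T_0}{2}r$ (handled by \Cref{lemma:ball_expansion}, failure $\le|N_i|\,e^{-\Omega((\sqrt i+T_0)^2\tardim)}$) or $\Norm{\pi(u)}\le\tfrac{\sqrt i+T_0+2}{2}r$ (handled by~\eqref{eqn:rlm_contract}, failure $\le|N_i|\,(O(1)/\sqrt i)^\tardim$). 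Because the threshold grows like $\sqrt i$, both probabilities decay fast enough to beat the polynomial net size $|N_i|\le(2L{+}2i{+}2)^\ddim$, and the sum over $i$ converges. In your dyadic framework the analogous repair is to allow $\Norm{\pi(x)-\pi(y^{(0)})}\le A_k r$ with $A_k$ growing in $k$ (any $A_k$ with $A_k^2\gtrsim k$ and $A_k\ll 2^k$ works), and to raise the coarse-net contraction threshold from a fixed $C_0$ to $A_k+1$ accordingly.
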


Both \cref{lemma:ball_expansion,lemma:ball_contraction} are restatements of results found in~\cite{IndykN07}.
For completeness, we provide the proofs in 
\cref{append:ball_expansion,append:contraction_balls}.

\paragraph{Good Events for Distortion.}
Let $A$ be a sufficiently large universal constant.
For a ball $\Ball(z, r)$,
define the event
$\eveE(z, r, A):=\Set{\forall x\in \Ball(z, r)\cap \Xori, 
\Norm{\pi(x)-\pi(z)}\leq Ar}$, 
\sloppy
indicating that the radius of $\Ball(z, r)$ expands by at most 
$A$ times after projection. Similarly, define the event 
$\eveC(z, r, A):=\Set{\forall x\in \Xori\setminus \Ball(z, r), 
\Norm{\pi(x)-\pi(z)}\geq r/A}$, indicating that points outside
the ball will not come too close to the center after projection.
For a cluster $C\in \partition$, we define the following ``good'' 
event $\eveA_C$ with respect to $\varepsilon \in (0, 1)$, $A > 0$ and $\sigma := \frac{\varepsilon^3}{100A^2\ddim}$, which requires that
$\pi$ approximately preserves balls of certain
radii centered at net points in cluster $C$.
\begin{align}
    \eveA_C&:=\bigcap_{u \in \Net_C} \left(\eveE(u, \sigma \rang(C^\decom), A)
    \cap \eveC(u, \frac{\varepsilon^2\rang(C^\decom)}{2\ddim} , A)\right)\cap\notag\\
    &\quad \bigcap_{\widehat{C} \in \hole_C}\bigcap_{v\in \Net_{\widehat{C}}}
    \Bigg(\eveE(v, (1+\varepsilon^2) \rang(\widehat{C}^\decom), A)
    \cap \eveE(v, \sigma \rang(\widehat{C}^\decom), A)\notag\\
    &\qquad \qquad \qquad\cap \eveC(v, \frac{\varepsilon^2\rang(\widehat{C}^\decom)}{2\ddim} , A)
    \cap \eveC(v, \frac{40A^2 \rang(\widehat{C}^\decom)}{\varepsilon} , A)\Bigg)
    \label{eq:def_goodA_C}
\end{align}

We also need a lower bound for all 
$\opt(\pi(\widehat{C}))$, where $\widehat{C}\in \hole_C$.
Formally, we define the following event $\eveB_C$.
\begin{align}
    \eveB_C&:=\bigcap_{\widehat{C}\in \hole_C} \Set{\opt(\pi(\widehat{C})) 
    \geq \frac{2}{3} \opt(\widehat{C})}
    \label{eq:def_goodB_C}
\end{align}

\begin{lemma}
    \label{lemma:feasibel_sol_tar_apx_cond}
    Let $\Xori\subset \RR^\oridim$ be a finite point set 
    with doubling dimension $\ddim$
    and $\cFa \subseteq \Xori$ be an $\alpha$-approximate solution 
    on $\Xori$. 
Let $\pi\colon \RR^\oridim\to \RR^\tardim$ be a random linear map.
    For parameters $0<\varepsilon<1$, $\kappa=\Omega(1)$, 
    construct $\decom:=\decom(\Xori)$, $\xchg:=\xchg(\Xori, \decom, \cFa, \varepsilon)$ and $\partition:=\partition(\Xori, \xchg, \kappa)$
    by \cref{alg:decompose_ori}, \cref{alg:xchg} and \cref{alg:bot_up_partition}, respectively.
    Then there exists a universal constants $A > 0$, 
    such that for every cluster $C\in \partition$,
    conditioning on event $\eveA_C\cap \eveB_C$ defined in 
    \eqref{eq:def_goodA_C},\eqref{eq:def_goodB_C} with parameter $A$, 
    the following solution $\cFnew_{\pi(C)} \subseteq \RR^\tardim$ for $\pi(C)$
    \begin{align}
        \cFnew_{\pi(C)}:=\reprpi^{-1}(C) \cup \pi(\Net_C) \cup 
        \left(\bigcup_{\widehat{C}\in \hole_C}\pi(\Net_{\widehat{C}})\right)
        \cup \pi\Big(\cFa\cap C \cap \badpi^\varepsilon \Big), 
        \label{eq:def_sol_Fnew_C}
    \end{align}
    satisfies that 
    \[
    \forall x\in C, \
    \dist(\mov^\varepsilon(\pi(x)), \cFnew_{\pi(C)}) \leq (1+\varepsilon) \dist(\mov^\varepsilon(\pi(x)), \cFpiopt).
    \]
Recall that $\Net_C$ is a $\sigma\cdot \rang(C^\decom)$-net on $\Ball(C^\decom, \rang(C^\decom))\cap \Xori$, 
    and for every $\widehat{C} \in \hole_C$, 
    $\Net_{\widehat{C}}$ is a $\sigma\cdot \rang(\widehat{C}^\decom)$-net on 
    $\Ball(\widehat{C}^\decom, \frac{40A^2 \rang(\widehat{C}^\decom)}{\varepsilon})\cap \Xori$.
The scaling parameter is set to be $\sigma = \frac{\varepsilon^3}{100A^2\ddim}$.
\end{lemma}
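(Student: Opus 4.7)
My plan is, for each $x \in C$, to denote $\tilde x := \mov^\varepsilon(\pi(x))$, $f := \Fpiopt(\tilde x)$, and $d := \Norm{\tilde x - f}$, and to exhibit $f' \in \cFnew_{\pi(C)}$ with $\Norm{\tilde x - f'} \leq (1+\varepsilon)d$. The outer dichotomy will be whether $x \in \badpi^\varepsilon$.

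Suppose first that $x$ is bad. Then $\tilde x = \pi(\csol_C(x))$ with $\csol_C(x) \in (\cFa \cap C) \cup \Net_C \cup \bigcup_{\widehat{C}\in\hole_C}\Net_{\widehat{C}}$. If $\csol_C(x)$ lies in one of the nets $\Net_C$ or $\Net_{\widehat C}$, then $\pi(\csol_C(x)) \in \cFnew_{\pi(C)}$ and the bound holds at distance $0$. If $\csol_C(x) \in \cFa \cap C$ happens to itself be bad, then $\pi(\csol_C(x)) \in \pi(\cFa \cap C \cap \badpi^\varepsilon) \subseteq \cFnew_{\pi(C)}$, again at distance $0$. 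If $\csol_C(x) \in \cFa \cap C$ is good, the claim reduces to applying the good-point argument below to $\csol_C(x) \in C$ in place of $x$. Hence it suffices to treat good $x$.

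For good $x$ we have $\tilde x = \pi(x)$, and by \cref{def:good_pair} the pair $(x, y)$ with $y := \hpi(x) = \reprpi(f) \in \Xori$ is $\varepsilon$-good w.r.t.\ $(\decom, \cFa)$; by definition of $\reprpi$, $\Norm{\pi(y) - f} \leq d$. If $y \in C$, then $f \in \reprpi^{-1}(C) \subseteq \cFnew_{\pi(C)}$, and $f$ itself witnesses the bound. Otherwise let $\widehat C \in \partition$ contain $y$ and invoke \cref{lemma:partition_properties}. In case~(a), $\widehat C^\decom$ and $C^\decom$ are unrelated in $\decom$, giving $\dist(x,y) \geq \frac{\varepsilon^2}{\ddim}\rang(C^\decom)$; consistency and the choice $\sigma = \varepsilon^3/(100A^2\ddim)$ supply $u \in \Net_C$ with $\dist(x,u) \leq \sigma\rang(C^\decom)$ and $\dist(y,u) \geq \frac{\varepsilon^2}{2\ddim}\rang(C^\decom)$. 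Then the expansion and contraction events in $\eveA_C$ yield $\Norm{\pi(x)-\pi(u)} \leq A\sigma\rang(C^\decom)$ and $\Norm{\pi(y)-\pi(u)} \geq \frac{\varepsilon^2}{2\ddim A}\rang(C^\decom)$; combining with $\Norm{\pi(y)-f}\leq d$ via the triangle inequality forces $d = \Omega(\varepsilon^2\rang(C^\decom)/(\ddim A))$, while $\Norm{\pi(x)-\pi(u)} = O(\varepsilon^3\rang(C^\decom)/(A\ddim)) \leq \varepsilon d$, so $\pi(u) \in \pi(\Net_C) \subseteq \cFnew_{\pi(C)}$ is the desired facility.

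The main difficulty will be case~(b), where $y$ sits inside a descendant of some $\widetilde C \in \hole_C$ and the separation bound is only against $\rang(\widetilde C^\decom)$, possibly much smaller than $\rang(C^\decom)$. Here $\Net_C$ is too coarse and one must use $\Net_{\widetilde C}$; but since $x \in C$ need not be close to $\widetilde C^\decom$, $\Net_{\widetilde C}$ is deliberately defined on the enlarged region $\Ball(\widetilde C^\decom, 40A^2\rang(\widetilde C^\decom)/\varepsilon)$. I would split further by the size of $d$ relative to $\rang(\widetilde C^\decom)$. When $d$ is large, the argument mirrors case~(a) with $\Net_{\widetilde C}$ in place of $\Net_C$, picking a net point near $y$ and absorbing the projected error into $\varepsilon d$. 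When $d$ is small, the crucial ingredients are the optimality of $\cFpiopt$ and the event $\eveB_C$ giving $\opt(\pi(\widehat C)) \geq \tfrac{2}{3}\opt(\widehat C) = \Omega(\kappa)$: a standard swap (opening one facility inside $\pi(\widehat C)$ and upper-bounding the improvement) forces some facility of $\cFpiopt$ to lie within $O(\rang(\widetilde C^\decom))$ of $\pi(\widetilde C^\decom)$, and the contraction event then constrains $x$ to the enlarged region covered by $\Net_{\widetilde C}$, after which the expansion event delivers a suitable net point. Calibrating $\sigma$, the $40A^2/\varepsilon$ covering factor, and the thresholds in $\eveA_C$ so that every additive slack collapses into the $\varepsilon d$ budget is the main technical work.
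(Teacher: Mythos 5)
Your outer structure matches the paper's proof almost exactly: the paper splits on whether $\mov^\varepsilon(\pi(x))=\pi(x)$ (good $x$) or not (bad $x$), handles the bad case by case analysis on $\sol_C(x)$ exactly as you do, and in the good case splits on $\hpi(x)\in C$ versus $\hpi(x)\in\widehat C\neq C$, then invokes the separation property of \Cref{lemma:partition_properties} to distinguish the ``unrelated'' case from the ``descendant/hole'' case. Your case~(a) argument is correct and the calculations check out.

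The gap is in case~(b) with $d$ large. You propose to ``mirror case~(a) with $\Net_{\widetilde C}$ in place of $\Net_C$, picking a net point near $y$ and absorbing the projected error into $\varepsilon d$.'' But this is not a mirror of case~(a): there the net point is chosen near $x$ and lies within $A\sigma\rang(C^\decom)\ll\varepsilon d$ of $\pi(x)$, so a single expansion bound closes the argument. Here you pick $v\in\Net_{\widetilde C}$ near $y=\hpi(x)$; the route from $\pi(x)$ to $\pi(v)$ goes through $f=\Fpiopt(\pi(x))$, and the triangle inequality yields only
\[
\Norm{\pi(x)-\pi(v)} \le \Norm{\pi(x)-f}+\Norm{f-\pi(y)}+\Norm{\pi(y)-\pi(v)} \le d + \Norm{f-\pi(y)} + A\sigma\rang(\widetilde C^\decom).
\]
The middle term $\Norm{f-\pi(y)}$ is only known to be $\le d$ a priori, so this gives $\approx 2d$, not $(1+\varepsilon)d$. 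You reserve the locality lemma (optimality of $\cFpiopt$ plus $\eveB_C$) for the ``$d$ small'' branch only, so nothing in your ``$d$ large'' branch controls $\Norm{f-\pi(y)}$. The paper avoids this by splitting instead on whether $\Norm{\pi(\widehat x)-f}\le\tfrac{\varepsilon}{2}\Norm{\pi(x)-f}$, which directly caps the middle term. Your $d$-versus-$\rang(\widetilde C^\decom)$ split can be salvaged, but only because the locality lemma also bounds $\Norm{f-\pi(y)}\le 10A\rang(\widetilde C^\decom)$ unconditionally (given $\eveA_C\cap\eveB_C$); combined with $d\gg\rang(\widetilde C^\decom)/\varepsilon$ this does give $\Norm{f-\pi(y)}\le\tfrac{\varepsilon}{2}d$. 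So you need to invoke locality in \emph{both} sub-cases, not just the small-$d$ one, and until that is made explicit the ``$d$ large'' branch does not close.
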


Intuitively, \Cref{lemma:feasibel_sol_tar_apx_cond}
bounds the connection cost of $\psi_\pi^\epsilon(\pi(x))$ (for $C \in \partition$, $x \in C$) with respect to an auxiliary facility set $F'_{\pi(C)}$,
and this readily implies an upper bound for the connection cost of $\psi_\pi^\epsilon(\pi(X))$ (with respect to $F^*_\pi$).
Here, $F'_{\pi(C)}$ is picked in a similar way as the $S_C$ in \Cref{lemma:feasibel_sol_const_apx}, except that one needs to take care of the proxies (therefore the term $g_\pi^{-1}(C)$),
and that we need a subset $\pi\Big(\cFa\cap C \cap \badpi^\varepsilon \Big)$ which helps to handle bad points.

Now we are ready to prove \cref{lemma:apx_pi_X_by_clusters}.
The proof of \Cref{lemma:feasibel_sol_tar_apx_cond} is presented.

\begin{proof}[Proof of \cref{lemma:apx_pi_X_by_clusters}]
    Recall the partition $\partition = \partition(\kappa)$ relies on an $\alpha$-approximate solution $\cFa \subseteq X$, where $\alpha$ is a universal constant.
    Denote $\widehat{\varepsilon} := O(\delta \varepsilon/\alpha)$
    and $\widehat{\sigma} := \Theta(\widehat{\varepsilon}^3/\ddim)$.
    By \cref{lemma:opt_C_bounds}, denote $\tau:=2^{10\ddim}\kappa$ to be an upper bound for every $\opt(C)$ ($C \in \partition$).
    Let $\tardim=c\cdot (\log \tau + \log(1/\widehat{\varepsilon}))$, where $c$ is a sufficiently large constant.

    We start by calculating expectations. Split the left-hand side
    by \begin{align}
        &\quad \ee_{\pi, \decom}\left[
            \sum_{C\in \partition} \opt(\pi(C))\right]\notag \\
        &=\ee_{\pi, \decom}\left[
            \sum_{C\in \partition} \indicator(\eveA_C\cap \eveB_C)
            \opt(\pi(C))\right]
        +\ee_{\pi, \decom}\left[
            \sum_{C\in \partition} \indicator(\overline{\eveA_C}
            \cup \overline{\eveB_C})\opt(\pi(C))\right]
        \label{eq:spilit_OPT_pi_C}
    \end{align}

    To upper bound the first term in \eqref{eq:spilit_OPT_pi_C}, 
    we refine the point set with 
    movement mapping $\mov^{\widehat{\varepsilon}}$. First note that the moving cost is 
    \begin{align*}
        &\quad \ee_{\pi, \decom}\left[\norm{
            \sum_{C\in \partition} \indicator(\eveA_C\cap \eveB_C)
            \opt(\pi(C))
            -\sum_{C\in \partition} \indicator(\eveA_C\cap \eveB_C)
            \opt(\mov^{\widehat{\varepsilon}} \circ \pi(C))
            }\right]\\
        &\leq \ee_{\pi, \decom}\left[\sum_{C\in \partition}
            \Big|\opt(\pi(C))- \opt(\mov^{\widehat{\varepsilon}} \circ \pi(C))\Big|\right]
        \leq \ee_{\pi, \decom}\left[\sum_{C\in \partition}\sum_{x\in C}
        \Norm{\pi(x)-\mov^{\widehat{\varepsilon}}(\pi(x))}\right]\\
        &\leq O(\alpha \widehat{\varepsilon}^2) \opt(\Xori). 
        &\text{(\cref{lemma:moving_dist_pi})}
    \end{align*}

    After moving, we open $\cFnew_{\pi(C)}$ (defined in \eqref{eq:def_sol_Fnew_C}, w.r.t. $\widehat{\varepsilon}$)
    as a solution for $\mov^{\widehat{\varepsilon}} \circ \pi(C)$. 
    The expectation for the sum of optimal values for all clusters is 
    \begin{align}
        &\quad \ee_{\pi, \decom}\left[
            \sum_{C\in \partition} \indicator(\eveA_C\cap \eveB_C)
            \opt(\mov^{\widehat{\varepsilon}} \circ \pi(C))
        \right] \nonumber \\
        &\leq \ee_{\pi, \decom}\left[
            \sum_{C\in \partition} \indicator(\eveA_C\cap \eveB_C)
            \cost(\mov^{\widehat{\varepsilon}} \circ \pi(C), \cFnew_{\pi(C)})
        \right] \nonumber\\
        &= \ee_{\pi, \decom}\left[
            \sum_{C\in \partition} \indicator(\eveA_C\cap \eveB_C)
            \left(\sum_{x\in C} \dist(\mov^{\widehat{\varepsilon}}(\pi(x)), \cFnew_{\pi(C)}) 
            + \norm{\cFnew_{\pi(C)}}\right)
        \right]\nonumber \\
        &\leq \ee_{\pi, \decom}\left[
            \sum_{C\in \partition}
            \left(\sum_{x\in C} 
            (1+\widehat{\varepsilon})\dist(\mov^{\widehat{\varepsilon}}(\pi(x)), \cFpiopt) 
            + \norm{\cFnew_{\pi(C)}}\right)
        \right] & \text{(\cref{lemma:feasibel_sol_tar_apx_cond})}\nonumber\\
        &=(1+\widehat{\varepsilon}) \ee_{\pi, \decom}\left[
            \sum_{x\in \Xori}\dist(\mov^{\widehat{\varepsilon}}(\pi(x)), \cFpiopt) 
        \right]
        + \ee_{\pi, \decom}\left[
            \sum_{C\in \partition} \norm{\cFnew_{\pi(C)}}
        \right]\nonumber\\
        &\leq (1+\widehat{\varepsilon}) \ee_{\pi, \decom}\left[
            \sum_{x\in \Xori} \dist(\mov^{\widehat{\varepsilon}}(\pi(x)), \cFpiopt)
        \right]
        + \ee_{\pi, \decom}\left[
            \sum_{C\in \partition} \norm{\reprpi^{-1}(C)}
        \right]\nonumber\\
        &\quad + \ee_{\pi, \decom}\left[
            \sum_{C\in \partition} \left(\frac{1}{\widehat{\varepsilon} \widehat{\sigma}}\right)^{O(\ddim)}
            (1+\norm{\hole_C})
        \right]
        +\ee_{\pi, \decom}\left[
            \sum_{C\in \partition} 
            \Big|\cFa\cap C \cap \badpi^{\widehat{\varepsilon}}\Big|
        \right]\nonumber\\
        &\leq (1+\widehat{\varepsilon}) \ee_{\pi, \decom}\left[
            \sum_{x\in \Xori} \dist(\mov^{\widehat{\varepsilon}}(\pi(x)), \cFpiopt)
        \right]
        + \ee_{\pi, \decom}\left[
            \norm{\cFpiopt}
        \right]\nonumber\\
        &\quad + \left(\frac{1}{\widehat{\varepsilon} \widehat{\sigma}}\right)^{O(\ddim)}
        \cdot 2\ee_{\decom}\left[\norm{\partition}\right]
        +\ee_{\pi, \decom}\left[
            \Big|\cFa\cap \badpi^{\widehat{\varepsilon}}\Big|
        \right] &\text{(\cref{lemma:size_of_holes})}\nonumber\\
        &\leq (1+\widehat{\varepsilon}) \ee_{\pi, \decom}\left[\cost(\mov^{\widehat{\varepsilon}} \circ \pi(\Xori), \cFpiopt)\right]
        +\left(\frac{1}{\widehat{\varepsilon} \widehat{\sigma}}\right)^{O(\ddim)}
        \frac{4 \alpha \opt(\Xori)}{\kappa - 
        2(\ddim/\varepsilon)^{O(\ddim)}}
        +\widehat{\varepsilon}^2 \norm{\cFa} 
        &\nonumber\\
        &\quad \text{(\cref{lemma:size_of_partition,lemma:bad_points_aft_proj_rare})}\nonumber\\
        &\leq (1+\widehat{\varepsilon}) \ee_{\pi, \decom}\left[\cost(\mov^{\widehat{\varepsilon}} \circ\pi(\Xori), \cFpiopt)\right]
        +\alpha \widehat{\varepsilon}^2 \opt(\Xori)
        +\alpha \widehat{\varepsilon}^2 \opt(\Xori).\label{eqn:last}
    \end{align}
    The last inequality holds as $\widehat{\sigma} = \Theta(\widehat{\varepsilon}^3/\ddim)$ and $\kappa \geq \Omega(\ddim / \widehat{\varepsilon})^{\Omega(\ddim)}$.
    On the other hand, observe that 
    $\cost(\mov^{\widehat{\varepsilon}} \circ \pi(\Xori), \cFpiopt) - \opt(\pi(\Xori)) 
    \leq \sum_{x\in \Xori}\Norm{\pi(x)-\mov^{\widehat{\varepsilon}}(\pi(x))}$.
    Thus \eqref{eqn:last} can be further bounded by
    \begin{align*}
        &\quad (1+\widehat{\varepsilon}) \ee_{\pi}\left[\opt(\pi(\Xori))\right] 
        + (1+\widehat{\varepsilon}) \ee_{\pi, \decom}\left[\sum_{x\in \Xori}\Norm{\pi(x)-\mov^{\widehat{\varepsilon}}(\pi(x))}\right] + 2\alpha \widehat{\varepsilon}^2 \opt(\Xori)\\
        &\leq (1+\widehat{\varepsilon}) \ee_{\pi}[\opt(\pi(\Xori))]
        +O(\alpha \widehat{\varepsilon}^2) \opt(\Xori).
        &\text{(\cref{lemma:moving_dist_pi})}
    \end{align*}

    At this point, we conclude that 
    \begin{align}
        \ee_{\pi, \decom}\left[
            \sum_{C\in \partition} \indicator(\eveA_C\cap \eveB_C)
            \opt(\pi(C))\right]
        \leq (1+\widehat{\varepsilon}) \ee_{\pi}[\opt(\pi(\Xori))]
        +O(\alpha \widehat{\varepsilon}^2) \opt(\Xori).
        \label{eq:sum_OPT_C_eveA_C_eveB_C}
    \end{align}

    Now we turn to the second term in \eqref{eq:spilit_OPT_pi_C}.
    For every $C \in \partition$, 
    we first show that the event $\overline{\eveA_C}\cup \overline{\eveB_C}$
    happens with a small probability. 
Secondly, we prove that $\opt(\pi(C))$ cannot be much larger than $\opt(C)$.

    We start by bounding the probability of event 
    $\overline{\eveA_C}\cup \overline{\eveB_C}$, only using the randomness  of $\pi$ (and conditioning on 
    the randomness of $\decom$).
    By 
    \cref{lemma:ball_contraction,lemma:ball_expansion}, 
    for every $z, r$,
    the probability that
    $\eveE(z, r, A)$ or $\eveC(z, r, A)$ 
    does not happen is at most $e^{-\Omega(\tardim)}$. 
    By union bound, 
    \begin{align*}
        \Pr_{\pi}\left[\overline{\eveA_C} \mid \decom
        \right]
        \leq 4e^{-\Omega(\tardim)} \left(\norm{\Net_C}+\sum_{\widehat{C}\in \hole_C} \norm{\Net_{\widehat{C}}}\right)
        \leq 4e^{-\Omega(\tardim)}\left(\frac{1}{\widehat{\varepsilon} \widehat{\sigma}}\right)^{O(\ddim)}
        (1+\norm{\hole_C}).
    \end{align*}

    On the other hand, by \cref{lemma:jl_small_opt_contraction_pr}, 
    \begin{align*}
        \Pr_{\pi}\left[\overline{\eveB_C} \mid \decom\right]
        \leq \sum_{\widehat{C}\in \hole_C} 
        \Pr_{\pi}\left[\opt(\pi(C))\leq \frac{2}{3} \opt(C)
        \mid \decom\right]
        \leq \tau^3 \cdot e^{-\Omega(\tardim)}\cdot \norm{\hole_C}.
    \end{align*}

    For each cluster $C\in \partition$, define the event 
    $\eveD_C$ as $\eveD_C:= \Set{\opt(\pi(C))\leq 2 \cdot \opt(C)}$.
    Then by \cref{lemma:piX_leq_X_value_general}, 
    \begin{align*}
        \EE{\indicator(\overline{\eveD_C})
        \cdot (\opt(\pi(C))-2 \opt(C))
        \mid \decom}
        \leq e^{-\Omega(\tardim)} \cdot \opt(C)
        \leq \tau \cdot e^{-\Omega(\tardim)}.
    \end{align*}
    Hence, \begin{align*}
        \EE{\indicator(\overline{\eveD_C})
        \cdot \opt(\pi(C))
        \mid \decom}
        \leq 2 \opt(C)\cdot \Pr_\pi\left[
            \opt(\pi(C)) > 2 \opt(C)
        \mid \decom\right]
        +\tau \cdot e^{-\Omega(\tardim)}
        \leq O(\tau) \cdot e^{-\Omega(m)}.
    \end{align*}

    To upper bound the second term in \eqref{eq:spilit_OPT_pi_C}, 
    we split $\indicator(\overline{\eveA_C}\cup \overline{\eveB_C})
    \leq \indicator(\overline{\eveA_C}\cup \overline{\eveB_C})
    \indicator(\eveD_C)+\indicator(\overline{\eveD_C})$.
    Then \begin{align*}
        &\quad \ee_{\pi, \decom}\left[
            \sum_{C\in \partition} \indicator(\overline{\eveA_C}
            \cup \overline{\eveB_C})\opt(\pi(C))\right]\\
        &\leq \ee_{\decom}\left[\sum_{C\in \partition}\ee_{\pi}\left[
             \indicator(\overline{\eveA_C}
            \cup \overline{\eveB_C})\indicator(\eveD_C)
            \opt(\pi(C)) \mid \decom\right]\right]\\
        & \quad + \ee_{\decom}\left[\sum_{C\in \partition}
        \ee_{\pi}\left[\indicator(\overline{\eveD_C})\cdot \opt(\pi(C))
        \mid \decom\right]\right]\\
        &\leq 2 \tau \cdot \ee_{\decom}\left[\sum_{C\in \partition}
        \Pr_{\pi}\left[\overline{\eveA_C}
        \cup \overline{\eveB_C} \mid \decom\right]\right]
        +O(\tau) \cdot e^{-\Omega(\tardim)} \cdot \ee_{\decom}[\norm{\partition}]\\
        &\leq e^{-\Omega(\tardim)}\cdot O(\tau^4) 
        \cdot \left(\frac{1}{\widehat{\varepsilon} \widehat{\sigma}}\right)^{O(\ddim)}
        \ee_{\decom}\left[\sum_{C\in \partition}
        (1+2\norm{\hole_C})\right]
        +O(\tau) \cdot e^{-\Omega(\tardim)} \cdot \ee_{\decom}[\norm{\partition}]\\
        &\leq 4e^{-\Omega(\tardim)}\cdot O(\tau^4) 
        \cdot \left(\frac{1}{\widehat{\varepsilon} \widehat{\sigma}}\right)^{O(\ddim)}
        \ee_{\decom}[\norm{\partition}]\\
        &\leq 4e^{-\Omega(\tardim)}\cdot O(\tau^4) 
        \cdot \left(\frac{1}{\widehat{\varepsilon} \widehat{\sigma}}\right)^{O(\ddim)}
        \frac{2 \alpha \opt(\Xori)}{\kappa - 
        2(\ddim/\varepsilon)^{O(\ddim)}}\\
        &\leq \alpha \widehat{\varepsilon}^2 \opt(\Xori),
    \end{align*}
    given $\tardim=\Omega(\log \tau + \log(1/\widehat{\varepsilon}))$.
    Combining with \eqref{eq:sum_OPT_C_eveA_C_eveB_C}, we have 
    \begin{align*}
        \ee_{\pi, \decom}\left[
            \sum_{C\in \partition}
            \opt(\pi(C))\right]
        \leq (1+\widehat{\varepsilon}) \ee_{\pi}[\opt(\pi(\Xori))]
        +O(\alpha \widehat{\varepsilon}^2) \opt(\Xori).
\end{align*}
    Applying \cref{lemma:piX_leq_X_value_general} to $\Xori$ with
    parameter $t=0.5$ while noting $m=\omega(1)$ is sufficiently large, we have 
    \begin{align*}
        \ee_{\pi}[\opt(\pi(\Xori))]
        \leq 2\opt(\Xori).
    \end{align*}
    Thus \begin{align*}
        \ee_{\pi, \decom}\left[
            \sum_{C\in \partition}
            \opt(\pi(C))\right]
        &\leq \ee_{\pi}[\opt(\pi(\Xori))]
        +(2 \widehat{\varepsilon} + O(\alpha \widehat{\varepsilon}^2)) \opt(\Xori)\\
        & \leq \ee_{\pi}[\opt(\pi(\Xori))]
        +O(\alpha \widehat{\varepsilon}) \opt(\Xori) \\
        & \leq \ee_{\pi}[\opt(\pi(\Xori))]
        + \delta \varepsilon \opt(\Xori)
    \end{align*}

    Applying Markov's inequality to non-negative random variable 
    $\sum_{C\in \partition} \opt(\pi(C))-\opt(\pi(\Xori))$, 
    we conclude that with probability at least $1-\delta$, 
    \begin{align*}
        \sum_{C\in \partition} \opt(\pi(C))
        \leq \opt(\pi(\Xori)) + \varepsilon \opt(\Xori).
    \end{align*}
    This finishes the proof of \cref{lemma:apx_pi_X_by_clusters}.
\end{proof}

It remains to prove \cref{lemma:feasibel_sol_tar_apx_cond}.
For preparation, we first give the following locality lemma, based on the observation that every point in a cluster $\pi(C)$ should be assigned to a facility near $\pi(C)$.
Crucially, here we need to use the optimality of $F_\pi^*$.

\begin{lemma}[Locality of optimal facilities]
    \label{lemma:locality_tar_space}
    Given the conditions in \cref{lemma:feasibel_sol_tar_apx_cond},
    $\cFpiopt\cap \Ball(\pi(C\cap \Net_C), 4\Diam(\pi(C)))
    \neq \varnothing$. Furthermore, for every 
    $\widehat{C} \in \hole_C$, 
    $\cFpiopt\cap \Ball(\pi(\widehat{C} \cap \Net_{\widehat{C}}), 4\Diam(\pi(\widehat{C}))) \neq \varnothing$.
\end{lemma}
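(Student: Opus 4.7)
The plan is a contradiction argument that exploits the optimality of $\cFpiopt$. In each case I assume no facility of $\cFpiopt$ lies in the specified ball, and then exhibit a strictly cheaper competitor by augmenting $\cFpiopt$ with a single extra facility at $\pi(y^*)$, where $y^*$ is chosen inside the cluster in question. The generic cost calculation is as follows: the opening cost grows by $1$; meanwhile, for every $x$ in the relevant cluster, the triangle inequality combined with $\|\pi(x)-\pi(y^*)\|\leq\Diam(\pi(\widehat{C}))$ and the contradiction hypothesis gives $\dist(\pi(x),\cFpiopt)>4\Diam(\pi(\widehat{C}))-\Diam(\pi(\widehat{C}))=3\Diam(\pi(\widehat{C}))$, so the per-point connection cost drops by at least $2\Diam(\pi(\widehat{C}))$. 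The total saving is thus at least $2|\widehat{C}|\Diam(\pi(\widehat{C}))-1$, which contradicts optimality as soon as $2|\widehat{C}|\Diam(\pi(\widehat{C}))>1$.

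For the second statement (with $\widehat{C}\in\hole_C$) I pick $y^*\in\widehat{C}\cap\Net_{\widehat{C}}$, which is non-empty since a greedy net construction can include one point of $\widehat{C}$ before covering the rest of $\Ball(\widehat{C}^\decom,40A^2\rang(\widehat{C}^\decom)/\varepsilon)\cap\Xori$. To close the calculation, I invoke \cref{lemma:opt_C_bounds} to get $\opt(\widehat{C})\geq\kappa$ (since $\widehat{C}\in\partition$), then $\eveB_C$ to transfer the bound to the target space as $\opt(\pi(\widehat{C}))\geq\tfrac{2}{3}\kappa$. Because the trivial single-facility solution at $\pi(y^*)$ shows $\opt(\pi(\widehat{C}))\leq 1+|\widehat{C}|\Diam(\pi(\widehat{C}))$, this forces $|\widehat{C}|\Diam(\pi(\widehat{C}))\geq\tfrac{2}{3}\kappa-1$, which comfortably exceeds $1/2$ for $\kappa$ a sufficiently large constant (as assumed in \cref{lemma:feasibel_sol_tar_apx_cond}), completing the contradiction.

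The first statement is obtained by the same cost calculation applied with $C$ in place of $\widehat{C}$, picking $y^*\in C\cap\Net_C$ via the same greedy-net trick. Here the main obstacle is that no analog of $\eveB_C$ is available for $C$ itself, so $|C|\Diam(\pi(C))>1/2$ cannot be read off directly. The plan is to recover this bound by combining $\opt(C)\geq\kappa$ (which gives $|C|\Diam(C)\geq\kappa-1$ from the single-facility solution in the original metric) with the distortion events inside $\eveA_C$: the contraction event $\eveC(u,\varepsilon^2\rang(C^\decom)/(2\ddim),A)$ at $u\in\Net_C$, combined with the expansion event $\eveE(u,\sigma\rang(C^\decom),A)$ and the covering property of $\Net_C$, shows that any pair $p,q\in C$ with $\dist(p,q)=\Omega(\varepsilon^2\rang(C^\decom)/\ddim)$ satisfies $\|\pi(p)-\pi(q)\|=\Omega(\varepsilon^2\rang(C^\decom)/(A\ddim))$. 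Applied to a diametric pair this yields $\Diam(\pi(C))=\Omega(\varepsilon^2\Diam(C)/(A\ddim))$ in the nondegenerate regime $\Diam(C)=\Omega(\varepsilon^2\rang(C^\decom)/\ddim)$, hence $|C|\Diam(\pi(C))=\Omega(\varepsilon^2\kappa/(A\ddim))$, which exceeds $1/2$ for $\kappa$ sufficiently large. The boundary regime $\Diam(C)=O(\varepsilon^2\rang(C^\decom)/\ddim)$ forces $|C|$ to be correspondingly large by $|C|\Diam(C)\geq\kappa-1$, and the same contraction control keeps $|C|\Diam(\pi(C))$ bounded away from zero, closing the argument.
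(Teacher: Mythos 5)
Your argument for the second part of the lemma (the one about $\widehat C\in\hole_C$) matches the paper's in substance: both open a single facility at a projected net point $\pi(y^*)\in\pi(\widehat C\cap\Net_{\widehat C})$, compute the savings via the triangle inequality under the contradiction hypothesis, and close the loop using $\eveB_C$ together with the single-facility upper bound $\opt(\pi(\widehat C))\le 1+\sum_{y\in\widehat C}\norm{\pi(y)-\pi(y^*)}$. Your variant that measures savings per point by $2\Diam(\pi(\widehat C))$ rather than $2\norm{\pi(y)-\pi(y^*)}$ is fine, since both route through the same lower bound $\opt(\pi(\widehat C))\ge\tfrac{2}{3}\kappa$. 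That portion is correct and essentially the paper's proof. (For what it is worth, the paper's written proof only treats the $\widehat C\in\hole_C$ case.)

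The problem is the first part. You correctly identify that no analogue of $\eveB_C$ exists for $C$ itself, and you try to replace it by the distortion events in $\eveA_C$, claiming that these keep $|C|\Diam(\pi(C))$ bounded away from zero even in the boundary regime $\Diam(C)=O(\varepsilon^2\rang(C^\decom)/\ddim)$. That last step does not hold. The contraction events in $\eveA_C$ take the form $\eveC(u,\,\varepsilon^2\rang(C^\decom)/(2\ddim),\,A)$: they only guarantee that a point \emph{at distance at least} $\varepsilon^2\rang(C^\decom)/(2\ddim)$ from $u$ does not collapse onto $\pi(u)$. If $\Diam(C)$ is below this threshold, every point of $C$ lies inside the ``inner'' ball of every relevant $\eveC$ event, and the event imposes \emph{no} constraint whatsoever on their images; $\Diam(\pi(C))$ can be arbitrarily small, even zero, consistently with $\eveA_C\cap\eveB_C$. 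In that situation the savings $2|C|\Diam(\pi(C))-1$ give no contradiction no matter how large $|C|$ is, so the sentence ``the same contraction control keeps $|C|\Diam(\pi(C))$ bounded away from zero'' is unjustified and the argument collapses. (A side remark: even in the nondegenerate regime, the contraction event yields the \emph{absolute} bound $\Diam(\pi(C))=\Omega(\varepsilon^2\rang(C^\decom)/(A\ddim))$ rather than the relative bound $\Omega(\varepsilon^2\Diam(C)/(A\ddim))$ that you wrote; this does not break that branch, since it is stronger, but it is worth stating correctly.)
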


\begin{proof}[Proof of \cref{lemma:locality_tar_space}]
    Fix a cluster $\widehat{C} \in \hole_C$ 
    and a net point $z\in \Net_{\widehat{C}} \cap \widehat{C}$.
    Assume by contradiction that $\forall f_\pi\in \cFpiopt$, 
    $\Norm{f_\pi-\pi(z)}>4\Diam(\pi(\widehat{C}))$. 
    Then for every $y\in \widehat{C}$, 
    $\dist(\pi(y), \cFpiopt)\geq \dist(\pi(z), \cFpiopt)
    -\Norm{\pi(y)-\pi(z)}\geq 3\Diam(\pi(\widehat{C}))$.

    Now, open a new facility on $\pi(z)$. Denote 
    the new facility set by $\cFnew_\pi:=\cFpiopt\cup \{\pi(z)\}$.
    For every $y\in \widehat{C}$, 
    $\dist(\pi(y), \cFnew_\pi)\leq \Norm{\pi(y)-\pi(z)}\leq 
    \Diam(\pi(\widehat{C}))\leq \frac{1}{3}\dist(\pi(y), \cFpiopt)$.

    Hence, the difference in cost is 
    \begin{align*}
    \cost(\pi(X), \cFpiopt)-\cost(\pi(X), \cFnew_\pi)
    &\geq \sum_{y\in \widehat{C}} \Big(\dist(\pi(y), \cFpiopt)
    -\dist(\pi(y), \cFnew_\pi)\Big)-1\\
    &\geq 2\sum_{y\in \widehat{C}} \Norm{\pi(y)-\pi(z)}-1.
    \end{align*}
    On the other hand, by event $\eveB_C$, $\opt(\pi(\widehat{C}))
    \geq 2\kappa/3$.
    Since $\{\pi(z)\}$ is a feasible 
    solution for $\pi(\widehat{C})$, we have 
    \begin{align*}
        2\kappa/3\leq \opt(\pi(\widehat{C}))
        \leq \sum_{y\in \widehat{C}} \Norm{\pi(y)-\pi(z)} + 1.
    \end{align*}

    Therefore, $\cost(\pi(X), \cFpiopt)-\cost(\pi(X), \cFnew_\pi)
    \geq 2(2\kappa/3-1)-1=4\kappa/3 - 3 > 0$, given $\kappa > 9/4$.
    This contradicts the optimality of $\cFpiopt$,
    and finishes the proof of \Cref{lemma:locality_tar_space}.
\end{proof}

Now we are ready to prove \cref{lemma:feasibel_sol_tar_apx_cond}.

\begin{proof}[Proof of \cref{lemma:feasibel_sol_tar_apx_cond}]
We consider the following two cases.
\paragraph{Case I: $\mov^\varepsilon(\pi(x))=\pi(x)$.}
This is equivalent to $x\notin \badpi^\varepsilon$,
    so $(x, \hpi(x))$ is an $\varepsilon$-good pair.
    Now it suffices to prove $\dist(\pi(x), \cFnew_{\pi(C)})
    \leq \Norm{\pi(x)-\Fpiopt(\pi(x))}$.

    If $\Fpiopt(\pi(x))\in \reprpi^{-1}(C)$, then 
    $\dist(\pi(x), \cFnew_{\pi(C)}) \leq \dist(\pi(x), \reprpi^{-1}(C)) \leq \Norm{\pi(x)-\Fpiopt(\pi(x))}$, as desired.
If $\Fpiopt(\pi(x))\notin \reprpi^{-1}(C)$, then 
    $\hpi(x)=\reprpi\circ \Fpiopt(\pi(x)) \notin C$. 
    Assume $\widehat{x} := \hpi(x)\in \widehat{C}\neq C$.
    We further consider the following cases.

\begin{enumerate}
        \item[I.(a)] $\widehat{C}^\decom$ is not the descendant of $C^\decom$
        with respect to 
        hierarchical decomposition $\decom$.  
By the separation property of $\partition$ (\cref{lemma:partition_properties}), 
        $\Norm{x-\widehat{x}}\geq \frac{\varepsilon^2 \rang(C^\decom)}{\ddim}$.
        Since $N_C$ is a covering, there exists a net point 
        $u\in \Net_C$ such that $\Norm{x-u}\leq \sigma \rang(C^\decom)$.
        Thus, by triangle inequality,
        \begin{align*}
            \Norm{\widehat{x}-u}\geq \Norm{x-\widehat{x}}-\Norm{x-u}\geq \frac{\varepsilon^2 \rang(C^\decom)}{2\ddim}.
        \end{align*}
        Since event $\eveC(u, \frac{\varepsilon^2 \rang(C^\decom)}{2\ddim}, A)$ happens,
        $\Norm{\pi(\widehat{x})-\pi(u)}\geq \frac{\varepsilon^2 \rang(C^\decom)}{2A\ddim}$.
        Since event $\eveE(u, \sigma \rang(C^\decom), A)$ happens, 
        $\Norm{\pi(x)-\pi(u)}\leq A\sigma \rang(C^\decom)$. 
        Again by triangle inequality, 
        \begin{align*}
            \Norm{\pi(x)-\pi(\widehat{x})}\geq \Norm{\pi(\widehat{x})-\pi(u)}-\Norm{\pi(x)-\pi(u)} \geq \frac{\varepsilon^2 \rang(C^\decom)}{4A\ddim}.
        \end{align*}
        By definition of $\hpi$, both $\pi(x)$ and $\pi(\widehat{x})$ are 
        assigned to the same facility and 
        $\Norm{\pi(\widehat{x})-\Fpiopt(\pi(x))} \leq \Norm{\pi(x)-\Fpiopt(\pi(x))}$. 
        Hence, $\Norm{\pi(x)-\Fpiopt(\pi(x))}\geq\frac{1}{2}\Norm{\pi(x)-\pi(\widehat{x})} \geq \frac{\varepsilon^2 \rang(C^\decom)}{8A\ddim}$.

        Therefore, we conclude case I.(a) that $\dist(\pi(x), \cFnew_{\pi(C)}) \leq \Norm{\pi(x)-\pi(u)}\leq \Norm{\pi(x)-\Fpiopt(\pi(x))}$.

        \item[I.(b)] $\widehat{C}^\decom$ is a descendant of $C^\decom$ with respect to the  hierarchical decomposition $\decom$. 
        By the separation property of $\partition$ (\Cref{lemma:partition_properties}),
        there exists $\widetilde{C}\in \hole_C$, 
        such that $\Norm{x-\widehat{x}}\geq \frac{\varepsilon^2 \rang(\widetilde{C}^\decom)}{\ddim}$.
        Without loss of generality, we assume $\widehat{C} = \widetilde{C}$. 
        Since $\Net_{\widehat{C}}$ is a covering, there exists a net point $v\in \Net_{\widehat{C}}$ such that $\Norm{\widehat{x}-v}\leq \sigma \rang(\widehat{C}^\decom)$.
        By the same arguments, one can show that 
        $\Norm{\pi(\widehat{x})-\pi(v)}\leq A\sigma \rang(\widehat{C}^\decom)$ 
        and $\Norm{\pi(x)-\Fpiopt(\pi(x))} \geq \frac{\varepsilon^2 \rang(\widehat{C}^\decom)}{8A\ddim}$. 

        To have an upper bound for $\dist(\pi(x), \cFnew_{\pi(C)})$, we 
        have to address the following two cases separately.

        \begin{itemize}
            \item $\Norm{\pi(\widehat{x})-\Fpiopt(\pi(x))} \leq \frac{\varepsilon}{2}\Norm{\pi(x)-\Fpiopt(\pi(x))}$.
            In this case, both $\Norm{\pi(\widehat{x})-\Fpiopt(\pi(x))}$ and $\Norm{\pi(\widehat{x})-\pi(v)}$ can be upper bounded by $\frac{\varepsilon}{2}\Norm{\pi(x)-\Fpiopt(\pi(x))}$.
            By triangle inequality, 
            \begin{align*}
                \dist(\pi(x), \cFnew_{\pi(C)})
                &\leq \Norm{\pi(x)-\pi(v)} \\
                &\leq \Norm{\pi(x)-\Fpiopt(\pi(x))} + \Norm{\pi(\widehat{x})-\Fpiopt(\pi(x))} + \Norm{\pi(\widehat{x})-\pi(v)} \\
                &\leq (1+\varepsilon) \Norm{\pi(x)-\Fpiopt(\pi(x))}.
            \end{align*}

            \item $\Norm{\pi(\widehat{x})-\Fpiopt(\pi(x))} \geq \frac{\varepsilon}{2}\Norm{\pi(x)-\Fpiopt(\pi(x))}$.
            In this case, $\pi(x)$ and $\pi(\widehat{x})$ can be of similar distance from $\Fpiopt(\pi(x))$, making it impossible 
            for us to charge the additive error to $\Norm{\pi(x)-\Fpiopt(\pi(x))}$. 
            Nevertheless, we are going to argue that the facility 
            $\Fpiopt(\pi(x))$ is close to point $\pi(\widehat{x})$. 
            As a consequence, $\pi(x)$ should also be close to $\pi(\widehat{x})$,
            which means $x$ may be covered by a net point $w\in \Net_{\widehat{C}}$. 

            By \cref{lemma:locality_tar_space}, there exist
            a facility $f_\pi^0\in \cFpiopt$ and a net point 
            $z\in \widehat{C}\cap \Net_{\widehat{C}}$ such that
            $\Norm{f_\pi^0-\pi(z)}\leq 4\Diam(\pi(\widehat{C}))$.
            Then \begin{align*}
                \Norm{\pi(\widehat{x})-\Fpiopt(\pi(\widehat{x}))} 
                & \leq \Norm{\pi(\widehat{x})-f_\pi^0} 
                \leq \Norm{\pi(\widehat{x})-\pi(z)}+\Norm{f_\pi^0-\pi(z)} \\
                &\leq 5\Diam(\pi(\widehat{C}))
                \leq 5\Diam(\pi(\widehat{C}^\xchg)).
            \end{align*}
            Recall that $C^\xchg \subseteq \Ball(C^\decom, \varepsilon^2 \rang(C^\decom))$ (\cref{lemma:xchg_local}) and that 
            event $\eveE(z, (1+\varepsilon^2)\rang(\widehat{C}^\decom), A)$ happens.
            We have $\Diam(\pi(\widehat{C}^\xchg))\leq A(1+\varepsilon^2)\rang(\widehat{C}^\decom) \leq 2A \rang(\widehat{C}^\decom)$.
            Thus, $\Norm{\pi(\widehat{x})-\Fpiopt(\pi(\widehat{x}))} \leq 10A \rang(\widehat{C}^\decom)$. 
By the definition of $\cFpiopt$, we know that $\Fpiopt(\pi(x)) = \Fpiopt(\pi(\widehat{x}))$.
Recall our assumption in this case that
            $\Norm{\pi(\widehat{x})-\Fpiopt(\pi(\widehat{x}))} \geq \frac{\varepsilon}{2}\Norm{\pi(x)-\Fpiopt(\pi(\widehat{x}))}$, 
            which means $\Norm{\pi(x)-\Fpiopt(\pi(\widehat{x}))} \leq \frac{20A \rang(\widehat{C}^\decom)}{\varepsilon}$.
            Again by triangle inequality, 
            \begin{align*}
                \Norm{\pi(x)-\pi(z)} 
                &\leq \Norm{\pi(x)-\Fpiopt(\pi(\widehat{x}))} + \Norm{\pi(\widehat{x})-\Fpiopt(\pi(\widehat{x}))} + \Norm{\pi(\widehat{x})-\pi(z)} \\
                &\leq \frac{20A \rang(\widehat{C}^\decom)}{\varepsilon} + 10A \rang(\widehat{C}^\decom) + 2A \rang(\widehat{C}^\decom) \\
                &\leq \frac{40A \rang(\widehat{C}^\decom)}{\varepsilon}.
            \end{align*}

            Since the event $\eveC(z, \frac{40A^2 \rang(\widehat{C}^\decom)}{\varepsilon}, A)$ happens, $x$ should also be close to $z$ in the original space. 
            Formally, $\Norm{x-z}\leq \frac{40A^2 \rang(\widehat{C}^\decom)}{\varepsilon}$.
            Recall that $\Net_{\widehat{C}}$ is a $\sigma \rang(\widehat{C}^\decom)$-net in ball $\Ball(\widehat{C}, \frac{40A^2 \rang(\widehat{C}^\decom)}{\varepsilon})\cap \Xori$.
            Hence, $x$ is well covered by $\Net_{\widehat{C}}$. 
            There is a net point $w\in \Net_{\widehat{C}}$ such that 
            $\Norm{x-w} \leq \sigma \rang(\widehat{C}^\decom)$.
            By event $\eveE(w, \sigma \rang(\widehat{C}^\decom), A)$, 
            $\Norm{\pi(x)-\pi(w)} \leq A\sigma \rang(\widehat{C}^\decom)$.
            Therefore, 
            \begin{align*}
                \dist(\pi(x), \cFnew_{\pi(C)}) 
                &\leq \Norm{\pi(x)-\pi(w)}\leq A\sigma \rang(\widehat{C}^\decom) 
                \leq \frac{\varepsilon^2 \rang(\widehat{C}^\decom)}{8A\ddim} \\
                &\leq \Norm{\pi(x)-\Fpiopt(\pi(x))}.
            \end{align*}
        \end{itemize}
    \end{enumerate}

    \paragraph{Case II: $\mov^\varepsilon(\pi(x))=\pi\circ \sol_C(x)$.} 
    This is equivalent to $x\in \badpi^\varepsilon$.
    It suffices to prove $\dist(\pi \circ \sol_C(x), \cFnew_{\pi(C)})\leq (1+\varepsilon) \dist(\pi\circ \sol_C(x), \cFpiopt)$. 
    Consider the following cases:

    \begin{itemize}
        \item $\sol_C(x)$ is a net point, i.e. 
        $\sol_C(x)\in \Net_C\cup 
        (\bigcup_{\widehat{C} \in \hole_C}\Net_{\widehat{C}})$.
        In this case, $\dist(\pi \circ \sol_C(x), \cFnew_{\pi(C)})=0$.

        \item $\sol_C(x)$ is not a net point. By the definition 
        of $\csol_C$, we have $\sol_C(x) \in \cFa\cap C$.
        Denote $y:=\sol_C(x)$. We have the following cases:
        
        \begin{itemize}
            \item $y\in \badpi^\varepsilon$. Then $y\in \cFa \cap C\cap \badpi^\varepsilon$.
            In this case, $\dist(\pi(y), \cFnew_{\pi(C)})=0$.

            \item $y\notin \badpi^\varepsilon$.
            Then $(y, \hpi(y))$ is an $\varepsilon$-good pair.
            By case I, $\dist(\pi(y), \cFnew_{\pi(C)})\leq (1+\varepsilon) 
            \Norm{\pi(y)-\Fpiopt(\pi(y))}$.
        \end{itemize}
    \end{itemize}
Combining the above cases, we conclude that 
$\dist(\mov^\varepsilon(\pi(x)), \cFnew_{\pi(C)}) \leq (1+\varepsilon) 
\dist(\mov^\varepsilon(\pi(x)), \cFpiopt)$.
This completes the proof of \cref{lemma:feasibel_sol_tar_apx_cond}.
\end{proof}

Finally, we remark that we can prove an analogous result of \cref{lemma:apx_pi_X_by_clusters} for general metric $(\metrspa, \dist)$ and (finite) doubling subset $\Xori \subseteq \metrspa$; summarized by the following corollary.

\begin{restatable}[Lower bound for $\opt(\Xori)$]{corollary}{lemmaapxXbyclusters}
    \label{corollary:apx_X_by_clusters}
    Let $(\metrspa, \dist)$ be a metric space and $\Xori \subseteq \metrspa$ be a finite subset with doubling dimension $\ddim$.
    There exist universal constants $c_1, c_2$, such that 
    for every $\varepsilon, \delta \in (0, 1)$ and $\kappa>c_2 (\ddim/(\delta\varepsilon))^{c_1 \cdot \ddim}$, the random partition $\partition := \partition(\kappa)$ satisfies 
    \begin{align}
        \opt(\Xori)\geq \sum_{C\in\partition} \opt(C) -
        \varepsilon \cdot \opt(\Xori), 
        \label{eq:lb_opt_general}
    \end{align}
    with probability at least $1-\delta$.
\end{restatable}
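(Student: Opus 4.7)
}
The plan is to mirror the proof of \cref{lemma:apx_pi_X_by_clusters} with $\pi$ specialized to the identity map, which allows us to skip all steps that control distance distortion (the events $\eveA_C,\eveB_C$ and \cref{lemma:ball_expansion,lemma:ball_contraction,lemma:jl_small_opt_contraction_pr,lemma:piX_leq_X_value_general}). Fix an optimal solution $\cFopt\subseteq \metrspa$ for $X$. Since $\cFopt$ may live in the ambient space $\metrspa$, which need not be doubling, we still need the proxy construction: define $\repr\colon \cFopt\to \Xori$ by $\repr(f):=\argmin_{y\in \Xori}\Set{\dist(y,f)\colon \Fopt(y)=f}$ and $\hh(x):=\repr\circ \Fopt(x)$, and let $\bad^\varepsilon:=\Set{x\in\Xori\colon (x,\hh(x)) \text{ is not $\varepsilon$-good w.r.t.\ }(\decom,\cFa)}$. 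By \cref{lemma:prob_good_pair}, $\Pr[x\in\bad^\varepsilon]\le O(\varepsilon^2)$ for every $x$.

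For each $C\in \partition$, construct (in direct analogy with~\eqref{eq:def_sol_Fnew_C}) the local solution
\begin{align*}
    \cFnew_C := \repr^{-1}(C) \cup \Net_C \cup \Big(\bigcup_{\widehat{C}\in \hole_C} \Net_{\widehat{C}}\Big) \cup \big(\cFa\cap C\cap \bad^\varepsilon\big),
\end{align*}
where $\Net_C$ and $\Net_{\widehat{C}}$ are the same enlarged nets used in \cref{lemma:feasibel_sol_tar_apx_cond} (with scale $\sigma=\Theta(\varepsilon^3/\ddim)$). Define the movement map $\mov^\varepsilon$ exactly as in~\eqref{eq:def_moving} but without applying $\pi$, sending each $x\in\bad^\varepsilon\cap C$ to its nearest point in the $\csol_C$ of \cref{lemma:feasibel_sol_const_apx}. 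The analog of \cref{lemma:moving_dist_pi} bounds the expected movement cost by $O(\alpha\varepsilon^2)\opt(\Xori)$; its proof simplifies since only $\dist(x,\csol_C)\le 2\dist(x,\Fa(x))$ from \cref{lemma:feasibel_sol_const_apx} and the $O(\varepsilon^2)$ badness probability are needed.

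The core technical step is to prove, for every $C\in \partition$ and every $x\in C$, the bound $\dist(\mov^\varepsilon(x),\cFnew_C)\le (1+\varepsilon)\dist(\mov^\varepsilon(x),\cFopt)$, which is the identity-map version of \cref{lemma:feasibel_sol_tar_apx_cond}. The case split is the same: for non-bad $x$ with $\hh(x)\in\widehat{C}\ne C$, use the separation property of \cref{lemma:partition_properties} to argue that a net point of $\Net_C$ (if $\widehat{C}^\decom$ is unrelated to $C^\decom$) or of $\Net_{\widehat{C}}$ (if $\widehat{C}\in\hole_C$) covers $x$ within distance $\dist(x,\Fopt(x))$. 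The hard sub-case, which I expect to be the main obstacle, is when $\widehat{C}\in\hole_C$ and $\hh(x)$ is roughly equidistant from $x$ as $\Fopt(x)$, so that the triangle inequality gives only a factor-$2$ bound. Here I would invoke the identity-map version of \cref{lemma:locality_tar_space}: by optimality of $\cFopt$ there must exist a facility of $\cFopt$ within distance $4\Diam(\widehat{C})$ of $\widehat{C}$, which forces $\dist(x,\widehat{C})\le O(\Diam(\widehat{C})/\varepsilon)$, and hence $x$ is covered by the enlarged net $\Net_{\widehat{C}}$ within distance $\frac{\varepsilon^2}{8A\ddim}\rang(\widehat{C}^\decom)\le \dist(x,\Fopt(x))$ using the separation lower bound. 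No randomness of $\pi$ is used, so no good events need to be conditioned on.

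Assembling the pieces, $\EE{\sum_{C\in\partition}\opt(C)} \le \EE{\sum_{C\in\partition}\cost(C,\cFnew_C)} \le (1+\varepsilon)\EE{\cost(\mov^\varepsilon(\Xori),\cFopt)} + \EE{\sum_{C}|\cFnew_C|}$. The first term is at most $(1+\varepsilon)\opt(\Xori)$ plus the $O(\alpha\varepsilon^2)\opt(\Xori)$ movement cost. The opening cost is bounded by $|\cFopt| + \varepsilon^2|\cFa| + (\ddim/\varepsilon)^{O(\ddim)}(|\partition|+\sum_C|\hole_C|)$, and by \cref{lemma:size_of_holes,lemma:size_of_partition} together with the hypothesis $\kappa>c_2(\ddim/(\delta\varepsilon))^{c_1\ddim}$, this is at most $O(\alpha\varepsilon)\opt(\Xori)$. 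Rescaling $\varepsilon\leftarrow \Theta(\delta\varepsilon/\alpha)$ gives $\EE{\sum_{C}\opt(C)-\opt(\Xori)}\le \delta\varepsilon\cdot \opt(\Xori)$. The random variable $\sum_{C}\opt(C)-\opt(\Xori)$ is non-negative (opening an optimal facility set on each $C$ separately is feasible for $\Xori$), so Markov's inequality yields $\Pr\big[\sum_{C}\opt(C)\ge \opt(\Xori)+\varepsilon\opt(\Xori)\big]\le \delta$, which is~\eqref{eq:lb_opt_general}.
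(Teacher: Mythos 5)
Your proposal is correct and takes essentially the same approach as the paper: the paper's own proof of \cref{corollary:apx_X_by_clusters} is a one-line instruction to rerun the proof of \cref{lemma:apx_pi_X_by_clusters} with $\pi$ replaced by the identity map and $\cFpiopt$ replaced by an optimal solution for $\Xori$, which is precisely what you spell out. Your observations that $\eveA_C,\eveB_C$, \cref{lemma:jl_small_opt_contraction_pr}, and \cref{lemma:piX_leq_X_value_general} become vacuous under the identity map, that \cref{lemma:locality_tar_space} still requires only $\opt(\widehat{C})\ge\kappa$ (which holds by \cref{lemma:opt_C_bounds}), and that $\sum_{C}\opt(C)-\opt(\Xori)$ is non-negative so Markov's inequality applies, all match the structure of the paper's argument.
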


\begin{proof}
    The proof is similar to that of \cref{lemma:apx_pi_X_by_clusters}.
    Simply replace $\pi$ with the identity mapping and replace $\cFpiopt$ with 
    the optimal solution on $\Xori$ in the proof.
\end{proof}

     \section{Proof of \Cref{theorem:OPT_value_const_dim}: Dimension Reduction for UFL}
\label{sec:pre_value}

\theoremOPTvalueconstdim*

\begin{proof}[Proof of \cref{theorem:OPT_value_const_dim}]
    Noting that $\tardim = \Omega(\varepsilon^{-2}\log(1/(\delta\varepsilon)))$,
    the desired upper bound of $\opt(\pi(\Xori))$, i.e. $\Pr[\opt(\pi(\Xori)) \leq (1+\varepsilon) \opt(\Xori)] \geq 1-\delta/2$ follows
    immediately from \cref{lemma:piX_leq_X_value_general}.

    Now we turn to the lower bound of $\opt(\pi(\Xori))$.
Let parameter $\kappa := c_2 (\ddim/(\delta \varepsilon))^{c_1 \cdot \ddim}$
    satisfy the condition in \cref{lemma:apx_pi_X_by_clusters}.
    Let $\partition:=\partition(\kappa)$ be the random partition constructed
    in \cref{sec:partition}.
    By \cref{lemma:opt_C_bounds}, $\kappa \leq \opt(C) \leq 2^{10\ddim} \kappa$ holds for every $C\in \partition$.
    Denote $\tau:=2^{10\ddim}\kappa$ to be an upper bound for every 
    $\opt(C)$.
    We choose 
    $\tardim=c\cdot \varepsilon^{-2}(\log \tau + \log(1/\delta \varepsilon))
    =O(\varepsilon^{-2}\ddim(\log \ddim + \log (1/\delta \varepsilon)))$, where 
    $c$ is a large enough constant.

    We start from relating each $\opt(\pi(C))$ to $\opt(C)$. 
    Conditioning on the randomness of $\decom$, 
\begin{align*}
        &\quad \ee_\pi\Big[\max\left\{0, (1-\varepsilon/3)\opt(C)-
        \opt(\pi(C))\right\} \mid \decom\Big]\\
        &\leq \opt(C)\cdot \Pr_{\pi}\Big[
            \opt(\pi(C))\leq (1-\varepsilon/3) \opt(C)
        \mid \decom\Big]\\
        &\leq \tau \cdot \Pr_{\pi}\left[
            \opt(\pi(C))\leq \frac{1}{1+\varepsilon/3} \opt(C)
        \mid \decom\right]\\
        &\leq \tau^4 \cdot e^{-\Omega(\varepsilon^2\tardim)}.
        &(\text{\cref{lemma:jl_small_opt_contraction_pr}})
    \end{align*}
    Summing over all $C\in \partition$, we have 
    \begin{align*}
        & \quad \ee_{\pi, \decom}\left[\sum_{C\in \partition}\max\left\{0, (1-\varepsilon/3)\opt(C)-
        \opt(\pi(C))\right\}\right]\\
        &\leq \tau^4 \cdot e^{-\Omega(\varepsilon^2\tardim)}
        \cdot \ee_{\decom}[\norm{\partition}]\\
        &\leq \tau^4 \cdot e^{-\Omega(\varepsilon^2\tardim)}
        \cdot \frac{2 \alpha \opt(\Xori)}{\kappa - 
        2(\ddim/\varepsilon)^{O(\ddim)}}
        &\text{(\cref{lemma:size_of_partition})}\\
        &\leq \delta\varepsilon^2/6 \cdot  \opt(\Xori).
    \end{align*}

    By Markov's inequality, with probability at least $1-\delta/2$, 
    \begin{align}
        \sum_{C\in \partition} \opt(\pi(C))
        \geq (1-\varepsilon/3)\sum_{C\in \partition} \opt(C)
        -\varepsilon^2/3 \opt(\Xori) 
        \geq (1-2\varepsilon/3) \opt(\Xori).
        \label{eq:rel_sum_of_opt}
    \end{align}

    On the other hand, by \cref{lemma:apx_pi_X_by_clusters}, 
    with probability at least $1-\delta/2$, 
    \begin{align}
        \opt(\pi(\Xori))
        \geq \sum_{C\in\partition} \opt(\pi(C))
        -\varepsilon/3 \cdot \opt(\Xori).
        \label{eq:upper_bound_sum_opt_pi_C}
    \end{align}

    Combining \eqref{eq:upper_bound_sum_opt_pi_C} and \eqref{eq:rel_sum_of_opt}, with probability at least $1-\delta$, 
    \begin{align*}
        \opt(\pi(\Xori)) \geq 
        (1-\varepsilon) \opt(\Xori),
    \end{align*}
    which completes the proof. 
\end{proof}

\begin{remark}
    \label{remark:dim_reduction_discrete}
    Recall that $\opt^S(X)$ stands for the optimal UFL value on $X$ subject to the constraint that the facilities must be taken from $S$, defined in \Cref{sec:prelim}.
Using a variant of \cref{lemma:jl_small_opt_contraction_pr}, 
    we can prove the same target-dimension bound for the \emph{discrete setting}, i.e.,
    \begin{align*}
        \PR{\opt^{\pi(\Xori)}(\pi(\Xori)) \in (1\pm \varepsilon) \opt^{\Xori}(\Xori)}
        \geq 1-\delta, 
    \end{align*}
    which directly improves over the $O(1)$-approximate of~\cite{NarayananSIZ21}.
\end{remark}

     \section{Proof of \cref{theorem:main_ptas}: PTAS for UFL on Doubling Subsets}
\label{sec:ptas}

\theoremmainptas*

In this section, we provide a PTAS for the UFL problem on a doubling subset $\Xori\subset \mathbb{R}^d$ via dimension reduction (\cref{alg:ptas}). 
The idea is to apply the metric decomposition and dimension reduction approaches in \Cref{sec:partition,sec:pre_value}, in which we construct a partition $\partition$ on $\Xori$ and solve the UFL problem for each projected cluster $\pi(C)$ with $C\in \partition$. 
Then we prove the correctness of 
our algorithm in \cref{subsec:ptas_correctness} and  
analyze the time complexity in \cref{subsec:ptas_time_complexity} respectively.
Since we always consider the doubling dimension of $\Xori$, 
we denote $\ddim := \ddim(\Xori)$ for short.

\subsection{The PTAS}
\label{subsec:ptas_description}

\begin{algorithm}[!ht]
\caption{\ptas\ for UFL on doubling subsets}
\label{alg:ptas}
\DontPrintSemicolon
\KwIn{finite point set 
    $\Xori\subset \RR^\oridim$ with doubling dimension
    $\ddim$, parameter $\varepsilon\in (0, 1)$,  $\alpha$-approximate UFL algorithm $\algapx$, and  $k$-median algorithm
    $\algmedian$}
let $\cFret, \cC\gets \varnothing$ \;
let $c_1, c_2, c_3, c_4 > 0$ be sufficiently large constants,
    $\kappa \leftarrow c_2(\ddim/\varepsilon)^{c_1\cdot \ddim}$, 
    $\tau \leftarrow 2^{10 \ddim} \cdot \alpha \kappa$, and 
    $\tardim\leftarrow c_3\cdot \varepsilon^{-2} 
    (\log \tau + \log (1/\varepsilon))$ \;
run \cref{alg:decompose_ori} on $\Xori$ to obtain a random hierarchical decomposition $\decom$ \label{line:decom} \;
run $\algapx$ on $\Xori$ to obtain an $\alpha$-approximate solution $\cFa\subseteq \Xori$ for UFL \label{line:use_approx_1} \;
run \cref{alg:xchg} on $\Xori$ to compute the modified decomposition $\xchg = \xchg(\Xori, \decom, \cFa, \varepsilon)$ \label{line:xchg} \;
run $\algPartition(\Xori, \xchg, \kappa)$ (\cref{alg:bot_up_partition}) to obtain a partition $\partition$ of $X$ \label{line:partition} \;
construct $\pi(X)$ by a random linear map $\pi: \mathbb{R}^d \rightarrow \mathbb{R}^m$ \label{line:random_proj} \;
\For{$C\in\partition$}{ \label{line:a_cluster_start}
        $(\cFapx_C, \cC_C) \gets \algapx(C)$
        \label{line:use_approx_2} \;
        \tcc{
        $\cFapx_C$ is the solution and $\cC_C$ is the corresponding clustering}
\If{$\exists f, f^\prime \in \cFapx_C$, such that $\Norm{f-f^\prime}
        > (1+\varepsilon) \Norm{\pi(f)-\pi(f^\prime)}$}{ \label{line:eveG}
            \tcc{some distance in $\cFapx_C$ contracts too much}
            $\cC\gets \cC\cup \cC_C$ \;
            \tcc{directly use the constant-approximate clustering $\cC_C$}
        } 
\Else{
            \For{$k=1, 2, \dots, \lfloor c_4 \tau\rfloor$}{
                $(\pi(X_{C,1}^k), \ldots, \pi(X_{C,k}^k), v_C^k) 
                \gets \algmedian_{k, m}(\pi(C), \varepsilon)$
                \label{line:use_median}\;
                \tcc{$\pi(X_{C,i}^k)$'s is the clustering and $v_C^k$ is the cost}
}
$k^*\gets\argmin_{k}\{k+v_C^k\}$ \label{line:select_k} \;
\If{$k^*+v_C^{k^*} > c_4 \tau$}{\label{line:eveH}
                \tcc{$\opt(\pi(C))$ expands too much compared with $\opt(C)$} 
                $\cC\gets \cC\cup \cC_C$ \;
                \tcc{directly use the constant-approximate clustering $\cC_C$}
            } 
\Else{
                $\cC\gets \cC\cup\{X_{C,1}^{k^*}, X_{C,2}^{k^*},\dots, X_{C,k^*}^{k^*}\}$ \label{line:use_k_clustering} \;
                \tcc{use the clustering computed by $\algmedian$}
            }
        }
    } \label{line:a_cluster_end}
\For{$X_i\in \cC$}{\label{line:start_1_med}
        compute a $(1+\varepsilon)$-approximate 1-median center $f_i$ on 
        $X_i$  using \Cref{lemma:CLMPS16} \label{line:compute_1_median}\;
$\cFret\gets \cFret\cup\{f_i\}$ \label{line:add_center} \;
    } 
\Return $\cFret$ \label{line:ret} \;
\end{algorithm}

The PTAS is presented in \cref{alg:ptas}. 
It makes use of two subroutines: an $\alpha$-approximate UFL algorithm $\algapx$ with $\alpha = O(1)$ (\cref{line:use_approx_1,line:use_approx_2}) and a $k$-median algorithm $\algmedian$ (\cref{line:use_median}).
The algorithm $\algapx$ takes a point set $P\subset \RR^d$ as input, and outputs an $\alpha$-approximate solution $F\subseteq P$ for the UFL problem with $\cost(P,F) \leq \alpha \cdot \opt(P)$
and a clustering $\cC$ w.r.t. $F$ that consists of a partition $P_1, \cdots, P_{|F|}$ where $P_i$ contains the points in $P$ whose closest facility in $F$ is $f_i\in F$ (breaking ties arbitrarily).
The algorithm $\algmedian$ takes integers $k, d > 0$, parameter $\varepsilon\in (0, 1)$ and point set $P\subset \RR^d$ as input, $\algmedian_{k, d}(P, \varepsilon)$ returns a $(1+\varepsilon)$-approximate solution $F\subset \RR^d$ for the $k$-median problem with $\sum_{p\in P} \dist(p, F) \leq (1+\varepsilon)\cdot \min_{F'\subset \RR^d} \sum_{p\in P} \dist(p, F')$, a clustering $P_1, \cdots, P_{|F|}$ of $P$ w.r.t. $F$, and a value $v = \sum_{p\in P} \dist(p, F)$.
We would apply $\algmedian$ on projected clusters $\pi(C)$ in the target space $\RR^m$.
These two subroutines are stated in \Cref{lemma:ptas_const_apx,lemma:k_median_oracle},
and they are obtained easily from combining existing algorithms.
We give an analysis for the running time of \cref{alg:ptas},
particularly the dependence on the running time of the two subroutines, in \Cref{lemma:ptas_time_complexity}.

Roughly speaking, the algorithm runs in the following three stages.

\paragraph{Stage 1: Constructing Partition $\partition$ (Lines \ref{line:decom}-\ref{line:random_proj}).}
This stage is a pre-processing stage.
In Lines \ref{line:decom}-\ref{line:partition}, we construct a partition $\partition$ of $X$ by \cref{alg:bot_up_partition}.
Note that \cref{line:critical_condition} of \cref{alg:bot_up_partition} computes $\opt(C)$, which is inefficient.
This step can be replaced by running $\algapx$ on $C$ and checking whether the resulting UFL cost $\cost(C, F_C)\geq \alpha\kappa$, which can ensure that $\kappa\leq \opt(C)\leq \tau$ for each $C\in \partition$.
In \cref{line:random_proj}, we apply a random linear map to construct $\pi(X)\subset \RR^m$.

\paragraph{Stage 2: Constructing Near-optimal Clustering $\cC$ (Lines \ref{line:a_cluster_start}-\ref{line:a_cluster_end}).}
At this stage, we compute a clustering that is near-optimal to $\opt(\pi(C))$ for each $C\in \partition$, and take their union $\cC$ (which is a partition of $X$).
Intuitively, the near-optimal clustering for each $\pi(C)$ can be efficiently constructed
since $\opt(C)\leq \tau$ and the $\opt(\pi(C))$ is within $(1\pm \varepsilon)\cdot \opt(C)$ with high probability,
so that the $k$-median algorithm $\algmedian$ can be applied with only $k \leq O(\tau)$.
Conditions in \cref{line:eveG,line:eveH} examine if the ``bad event'' happens, and if so then directly add a constant-approximate clustering $\cC_C$ for $C$. 
The design of conditions in \cref{line:eveG,line:eveH} is based on \cref{lemma:ptas_exp_single_cluster}.

\paragraph{Stage 3: Constructing Open Facilities $F$ (Lines \ref{line:start_1_med}-\ref{line:ret}).}
At this stage, we already have a clustering $\cC=\{X_1, X_2, \dots, X_{\norm{\cFret}}\}$ of $\Xori$. 
For each cluster $X_i$, we solve the $1$-median algorithm by~\cite{cohenLMPS16}, obtain a center $f_i$ (\cref{line:compute_1_median,line:add_center}), and output their unions $F$ as the solution of the PTAS. 
The $1$-median algorithm is summarized by the following lemma. 

\begin{lemma}[$1$-median approximation~\cite{cohenLMPS16}]
    \label{lemma:CLMPS16}
    There is an algorithm that takes as input $X\subset \RR^d$ 
    of size $n$ and parameter $\varepsilon \in (0, 1)$,  
    and outputs a $(1+\varepsilon)$-approximate $1$-median center
    with probability $1-1/\poly(n)$, running in time $O(nd\log^4(n/\varepsilon))$.
\end{lemma}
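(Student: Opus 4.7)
The plan is to treat the $1$-median problem as the convex optimization
\[
\min_{y\in\RR^d} f(y) \;:=\; \sum_{x\in X}\|y-x\|_2,
\]
and to design a custom interior-point-style method that runs in nearly-linear time per iteration and requires only polylogarithmically many iterations. The objective is convex but non-smooth at every $x\in X$, so the first step is to replace it by a smooth penalized surrogate such as
\[
f_t(y) \;:=\; \sum_{x\in X}\sqrt{\|y-x\|_2^2 + t^2},
\]
for a parameter $t>0$ controlling the smoothing; as $t\to 0$ the minimizer of $f_t$ tends to the $1$-median, and $|f_t(y)-f(y)| \le nt$ uniformly, so it suffices to drive $t$ down to roughly $\epsilon\cdot f(y^*)/n$.

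Next I would trace a ``central path'' in $t$: start from a crude initial estimate $y_0$ (e.g.\ the centroid of $X$, which is a constant-factor approximation to the $1$-median), and alternately (i) decrease $t$ by a constant multiplicative factor, and (ii) perform a small number of Newton-like updates on $f_t$ to restore proximity to the path. The Hessian of $f_t$ is a weighted sum of rank-one matrices of the form $\frac{1}{s_x}\bigl(I-\frac{(y-x)(y-x)^\top}{s_x^2}\bigr)$ with $s_x=\sqrt{\|y-x\|_2^2+t^2}$; the exact Newton step is then the solution of a $d\times d$ weighted least-squares problem. The key efficiency observation, due to~\cite{cohenLMPS16}, is that an approximate Newton step can be obtained in $O(nd)$ time by replacing the Hessian with an isotropic surrogate and computing a weighted average of $X$; one does not have to form or invert any $d\times d$ matrix.

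The convergence argument has two layers. The outer layer shows that $O(\log(n/\epsilon))$ values of $t$ suffice, since the smoothing error is driven below $\epsilon\cdot \opt$. The inner layer shows that after each $t$-update only $\mathrm{polylog}(n/\epsilon)$ approximate Newton steps are needed to restore a suitable proximity invariant, because $f_t$ satisfies a quasi-self-concordance property analogous to that of a log-barrier. Multiplying out yields $O(\log^4(n/\epsilon))$ total iterations, each costing $O(nd)$, for the claimed $O(nd\log^4(n/\epsilon))$ runtime. The randomization enters only through the approximate linear-system solves (e.g.\ to sample a well-conditioned weighted subsample of $X$), and a union bound over the $\mathrm{polylog}(n/\epsilon)$ steps yields the $1-1/\poly(n)$ success probability.

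The hard part of this proof is the inner-loop stability analysis: the natural Newton step for $f_t$ can blow up when $y$ is very close to a data point (where $s_x\approx t$ and the corresponding rank-one term in the Hessian dominates), and naively this forces $t$ to shrink very slowly. Handling this requires a carefully weighted ``localized'' penalty and a proximity potential that measures distance to the central path in a metric reweighted by those near-singular terms; establishing that approximate Newton steps decrease this potential at a constant rate is the main technical core of~\cite{cohenLMPS16}, and any self-contained proof must devote most of its length to it.
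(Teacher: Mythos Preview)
The paper does not prove this lemma at all: it is stated purely as a citation of~\cite{cohenLMPS16} and used as a black box. Your proposal goes well beyond what the paper does, by sketching the actual interior-point-style algorithm of Cohen, Lee, Miller, Pachocki, and Sidford for the geometric median. As a high-level description of their method your outline is reasonable (smoothed surrogate, central-path following, approximate Newton steps computable in $O(nd)$ time, polylogarithmic iteration count), though a self-contained proof would require substantially more work on the proximity and stability analysis, as you yourself note. For the purposes of this paper, however, none of that is needed: the lemma is invoked only as an off-the-shelf subroutine, and the appropriate ``proof'' is simply the citation.
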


We provide concrete realizations of our algorithms $\algmedian$ and $\algapx$ in the following lemmas.

\begin{lemma}[Constant approximate algorithm]
    \label{lemma:ptas_const_apx}
    There exist a universal constant $\alpha > 0$ and an algorithm
    $\algapx$, which takes as input $\Xori \subset \RR^\oridim$
    of size $n$ and doubling dimension $\ddim$,
    and computes an $\alpha$-approximate UFL solution $F$ and its corresponding clustering $\cC$ with probability $1-1/\poly(n)$, 
    running in time $T_{A}(n, d, \ddim) = \tilde{O}(n\cdot d\cdot 2^{O(\ddim)})$.
\end{lemma}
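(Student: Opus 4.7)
The plan is to instantiate $\algapx$ as Meyerson's online algorithm for UFL, powered by a doubling-metric approximate-nearest-neighbor (ANN) data structure built on $\Xori$. The two ingredients compose cleanly: the online algorithm processes points one at a time, uses only a single ANN query per point against the currently open facility set, and achieves $O(1)$-approximation in expectation, while the ANN structure supplies those queries in $\tilde O(d\cdot 2^{O(\ddim)})$ time each after a $\tilde O(n\cdot d\cdot 2^{O(\ddim)})$ preprocessing.

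Concretely, I would first build a net-tree on $\Xori$ (e.g., a Krauthgamer--Lee navigating net or a cover tree), using that the number of children of each net-tree node is bounded by $2^{O(\ddim)}$ via the packing bound (\cref{prop:packing}); the construction takes $\tilde O(n\cdot 2^{O(\ddim)})$ distance evaluations, each of cost $O(\oridim)$ in $\RR^\oridim$, hence total time $\tilde O(n\cdot \oridim\cdot 2^{O(\ddim)})$. I would then draw a uniformly random permutation $x_1,\dots,x_n$ of $\Xori$, initialize $F\gets\{x_1\}$, and process the remaining points in order: on processing $x_t$, use the net-tree to compute $\hat d_t$ with $\dist(x_t,F)\leq \hat d_t\leq O(1)\cdot \dist(x_t,F)$, and add $x_t$ to $F$ with probability $\min\{1,\hat d_t/\wopen\}$; this requires maintaining an auxiliary net-tree restricted to $F$ that supports online insertions. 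After all points are processed, I would recover $\cC$ by one ANN query per point of $\Xori$ against the final $F$.

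Meyerson's classical analysis, adapted to $O(1)$-approximate distance queries and to an explicit uniformly random input order, shows that the expected UFL cost of the returned $F$ is $O(1)\cdot\opt(\Xori)$. To amplify expectation into a $1-1/\poly(n)$ success probability, I would repeat the above procedure $\Theta(\log n)$ times with independent randomness and return the cheapest solution; the cost of each candidate is evaluated in $\tilde O(n\cdot \oridim\cdot 2^{O(\ddim)})$ time via the same ANN data structure, and by Markov's inequality on each run together with independence across runs, the probability that every run exceeds $\alpha\cdot\opt(\Xori)$ is at most $1/\poly(n)$ for a suitable universal constant $\alpha$. The main subtlety is to verify that Meyerson's charging argument continues to yield an absolute-constant approximation under $O(1)$-approximate rather than exact distance queries, and that the auxiliary net-tree on $F$ can be maintained under online insertions without inflating the per-query cost; both are standard but require tracing the approximation factor through every step of the analysis to confirm that the resulting constant is independent of $n$, $\oridim$, and $\ddim$.
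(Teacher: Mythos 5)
Your proposal is correct but takes a genuinely different route from the paper. The paper's proof is a citation-level composition: it invokes the reduction of Goel, Indyk, and Varadarajan, which turns metric UFL into $\tilde O(n)$ queries to a $(1+\epsilon)$-approximate nearest-neighbor oracle and already comes with a $4(1+O(\epsilon))$-approximation guarantee, and then plugs in the Har-Peled--Mendel / Har-Peled--Kumar doubling-metric ANN data structure with $\tilde O(nd\cdot 2^{O(\ddim)})$ preprocessing and $\tilde O(d\cdot 2^{O(\ddim)})$ per query. You instead instantiate Meyerson's online facility-location algorithm under a random arrival order and build the ANN primitive directly as a net-tree/cover-tree; this gives the same $\tilde O(nd\cdot 2^{O(\ddim)})$ bound. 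What the paper's approach buys is that nothing needs re-deriving: GoelIV01 is a black box, so there is no analysis to redo. What your approach buys is self-containedness and a more elementary primitive, at the price of two extra obligations you correctly flag. First, you must maintain a \emph{dynamic} ANN over the growing facility set $F$ (not just a static one over $\Xori$), so you need that $\ddim(F)=O(\ddim(\Xori))$ for $F\subseteq \Xori$ and that the net-tree supports insertions in $2^{O(\ddim)}\polylog(n)$ time; both hold under the paper's standing $\Delta=\poly(n)$ aspect-ratio assumption. Second, you must re-trace Meyerson's charging argument with $O(1)$-approximate rather than exact distance oracles; this does go through (opening probabilities and connection costs each degrade by a constant factor, and the final constant remains independent of $n,d,\ddim$), but it is a real step of the proof, not merely a remark. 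Your amplification via $\Theta(\log n)$ independent repetitions plus Markov is sound, though note the cost of each candidate is itself only estimated via ANN to within a $(1+\epsilon)$ factor, which must be absorbed into $\alpha$; the same caveat applies to the clustering $\cC$, which you recover by approximate (not exact) nearest-facility assignment, but this is harmless for the downstream use in the PTAS since only its cost, not the exact partition, matters.
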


\begin{proof}
    The constant approximation algorithm is a combination of 
    known results. On the one hand, it is shown in~\cite{GoelIV01} that metric facility
    location can be reduced to nearest neighbor search.
    Specifically, there is an algorithm which takes as input $X\subseteq \RR^\oridim$,
    and outputs a $4(1+O(\varepsilon))$-approximation
    of metric FL, using $\tilde{O}(n)$ queries to 
    a $(1+\varepsilon)$-approximate nearest neighbor oracle~\cite{GoelIV01}.

    On the other hand, it is shown in~\cite{Har-PeledM06,Har-PeledK13} that
    there is an algorithm which takes as input $X\subset \RR^\oridim$ 
    of $n$ points, and build a data structure in 
    $O(n\cdot\oridim \varepsilon^{-O(\ddim)} \log n)$ expected time, 
    such that given a point $x\in X$, one can return 
    a $(1+\varepsilon)$-approximate nearest neighbor of $x$ 
    in $X$. The query time is 
    $O(2^{O(\ddim)} d \log n+\varepsilon^{-O(\ddim)} d)$.

    The algorithm $\algapx$ therefore 
    works as follows:
    it first computes a $2$-approximate 
    nearest neighbor data structure in expected time $\tilde{O}(2^{O(\ddim)} \cdot n \cdot \oridim)$.
    Then reduce the UFL problem to $\tilde{O}(n)$ queries to the 
    ANN oracle. Since each query can be answered in time 
    $\tilde{O}(1)$, the total time complexity is 
    $\tilde{O}(n\cdot \oridim \cdot 2^{O(\ddim)})$ in expectation.
    It is clear that $\algapx$ returns an $\alpha = 4(1+O(1))$-approximation.

    By standard boosting techniques, $\algapx$ can be modified so that it runs in deterministic time $\tilde{O}(n\cdot d\cdot 2^{O(\ddim)})$ with failure probability $1/\poly(n)$, as desired.
\end{proof}

\begin{lemma}[$k$-Median algorithm]
    \label{lemma:k_median_oracle}
    There is an algorithm $\algmedian$, such that for arbitrary
    integers $n, d, k > 1$,
    $\algmedian$ takes as input $\Xori\subset \RR^\oridim$ with 
    $n$ points and a parameter $\varepsilon \in (0, 1)$, and 
    outputs a $(1+\varepsilon)$-approximate $k$-median 
    clustering as well as the corresponding cost with probability $1-1/\poly(n)$, 
    running in time $T_{M}(n, d, k, \varepsilon)
    =\tilde{O}(n\oridim k + \varepsilon^{-6} \oridim k^{O(k/\varepsilon^3)})$.
\end{lemma}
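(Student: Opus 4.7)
The plan is to reduce the problem to running a known offline PTAS for $k$-median on a small coreset of $X$, and then to recover the clustering and its cost by a single scan. This is the standard ``compression plus PTAS'' recipe in the Euclidean $k$-median literature, and the two summands of the claimed running time correspond directly to its two phases.

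\emph{Step 1 (Coreset compression).} First I would apply a standard sensitivity-sampling $k$-median coreset construction in $\RR^d$ to produce a weighted subset $S \subseteq X$ of size $|S| = \tilde{O}(\varepsilon^{-6} k)$ such that for every $F \subset \RR^d$ with $|F| \le k$,
$\sum_{p \in S} w_p \cdot \dist(p, F) \in (1 \pm \varepsilon) \sum_{x \in X} \dist(x, F).$
Such constructions take a constant-factor seed solution for $k$-median on $X$ as input, which can be computed in $\tilde{O}(ndk)$ time by a bicriteria seeding routine (e.g., a $k$-means$++$-style procedure adapted to $k$-median), after which the sensitivity-sampling step itself runs in $\tilde{O}(nd)$ time. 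The total cost of Step~1 is thus $\tilde{O}(ndk)$, fitting into the first summand.

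\emph{Step 2 (PTAS on the coreset).} Next I would run the $(1+\varepsilon)$-approximate $k$-median PTAS of Kumar, Sabharwal and Sen~\cite{KumarSS10} on the weighted point set $S$. On $N$ weighted points in $\RR^d$ it runs in $2^{(k/\varepsilon)^{O(1)}} \cdot N d$ time; using the bound $2^{(k/\varepsilon)^{O(1)}} \le k^{O(k/\varepsilon^3)}$ in the relevant range, together with $N = |S| = \tilde{O}(\varepsilon^{-6} k)$ and absorbing the extra factor of $k$ into the exponent, this step costs $\varepsilon^{-6} d \cdot k^{O(k/\varepsilon^3)}$, matching the second summand.

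\emph{Step 3 (Output cost and clustering).} Let $F^* \subset \RR^d$ denote the facility set returned in Step~2. Scanning $X$ against $F^*$ in $\tilde{O}(ndk)$ time yields both the clustering $X_1, \ldots, X_{|F^*|}$ (each $x \in X$ assigned to its nearest point in $F^*$) and the cost $v = \sum_{x \in X} \dist(x, F^*)$. Composing the coreset $(1\pm\varepsilon)$-guarantee with the PTAS $(1+\varepsilon)$-guarantee gives $v \le (1 + O(\varepsilon))$ times the optimal $k$-median cost of $X$; rescaling $\varepsilon$ by a constant yields the claimed $(1+\varepsilon)$-approximation. The high-probability bound $1 - 1/\poly(n)$ follows from the randomness in the coreset construction, amplified if needed by standard repetition-and-verification (evaluating candidate solutions on $X$ in $\tilde{O}(ndk)$ time and keeping the best).

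The only delicate point is matching the stated running time exactly: Step~1 must fit in $\tilde{O}(ndk)$ even including the bicriteria seeding, and the exponent in Step~2 must come out as $k^{O(k/\varepsilon^3)}$ after plugging in the coreset size. Both are standard, so the real ``work'' is in assembling the right off-the-shelf components and tracking the $\varepsilon$ factor through the two guarantees; I do not anticipate any substantial analytic obstacle.
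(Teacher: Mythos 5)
Your proof has the same three-phase structure as the paper's: build a $k$-median coreset of $X$, solve $k$-median near-optimally on the coreset, and scan $X$ to recover the induced clustering and cost. The one substantive deviation is in the middle phase. The paper does not invoke an off-the-shelf PTAS: it builds the coreset $X'$ of size $O(k/\varepsilon^3)$ via Cohen-Addad et al., enumerates all $k^{|X'|}$ partitions of $X'$, evaluates each by running the \cref{lemma:CLMPS16} $1$-median routine once per block (with the success probability boosted to $1 - 1/(k^{|X'|+1}\poly(n))$), and keeps the best. This brute-force enumeration gives the time bound $\tilde{O}(|X'|^2 d \cdot k^{|X'|}) = \tilde{O}(\varepsilon^{-6} d\, k^{O(k/\varepsilon^3)})$ directly, with the $k^{O(k/\varepsilon^3)}$ factor appearing explicitly rather than through a comparison of exponentials.

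The weak point in your plan is precisely the comparison you wave through: the inequality $2^{(k/\varepsilon)^{O(1)}} \le k^{O(k/\varepsilon^3)}$ is not automatic. Writing the right-hand side as $2^{O((k/\varepsilon^3)\log k)}$, the claim requires $(k/\varepsilon)^c \le O((k/\varepsilon^3)\log k)$, which fails for $c \ge 2$ in the regime $k\varepsilon \gg \log k$ (e.g.~$k = \varepsilon^{-10}$). So the argument hinges on the concrete exponent hiding inside the $2^{(k/\varepsilon)^{O(1)}}$ running time of \cite{KumarSS10} being essentially $1$ up to polylogarithmic factors; the paper's own citation leaves that exponent unspecified, and you assert it without checking. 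That is the one real obligation you would need to discharge to make this route go through; the rest — the coreset construction fitting in $\tilde{O}(ndk)$, the composition of the two $(1\pm\varepsilon)$ guarantees, the final $\tilde{O}(ndk)$ scan, and high-probability amplification — is correct and matches the paper's Steps 1 and 3 essentially verbatim. (Your quoted coreset size $\tilde{O}(\varepsilon^{-6}k)$ is looser than the $\tilde{O}(k\varepsilon^{-3})$ the paper uses, but that does not affect the final bound since the exponential term dominates.)
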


\begin{proof}
    The algorithm depends on the coreset construction. 
    It is shown in~\cite{Cohen-AddadLSSS22} that 
    there is an algorithm that takes as input $X\subset \RR^d$,
    integer $k > 0$ and precision
    parameter $\varepsilon\in (0, 1)$, and outputs 
    a $(1+\varepsilon)$-coreset for $k$-median of size 
    $\tilde{O}(\min\{k^{4/3}\cdot \varepsilon^{-2}, 
    k\cdot \varepsilon^{-3}\})$.

    The $k$-median algorithm $\algmedian$ first constructs
    a coreset $X^\prime \subseteq \Xori$ of size 
    $O(k/\varepsilon^3)$, using~\cite{Cohen-AddadLSSS22} algorithm.
    Then it enumerates all possible $k$-partitions of $X^\prime$. 
    For each partition, run the algorithm in \cref{lemma:CLMPS16} to compute the (approximate) $1$-median center for each cluster and sum up the connection cost. 
    Among all these center sets, it selects the one $S$ with the minimum $\cost(X^\prime, S)$.
    In the end, return the partition of $X$ induced by $S$, together with $\cost(X, S)$.

    The coreset construction algorithm~\cite{Cohen-AddadLSSS22} has time complexity $\tilde{O}(ndk)$ and success probability $1-1/\poly(n)$.
    For each one of the $k^{\norm{X^\prime}}$ possible partitions, 
    computing the $k$-median cost involves $k$ calls to the $1$-median algorithm. 
    By \Cref{lemma:CLMPS16}, each call takes time 
    $\tilde{O}(\norm{X^\prime}^2 \cdot \oridim)$ to ensure a $1-1/(k^{|X'|+1} \poly(n))$ success probability.
    Hence, the time complexity of the enumeration is 
    $\tilde{O}(\norm{X^\prime}^2 \cdot \oridim \cdot k^{|X^\prime|}) = \tilde{O}(\varepsilon^{-6} \oridim k^{O(k/\varepsilon^3)})$.
    Finally, computing the induced partition on $X$ takes time $O(ndk)$.
    Moreover, $\algmedian$ succeeds with probability $1-1/\poly(n)$.
    This completes the proof.
\end{proof}

\subsection{Correctness of \cref{alg:ptas}}
\label{subsec:ptas_correctness}

Observe that \cref{line:use_k_clustering} of \Cref{alg:ptas} is executed for some $C$ only when the following events hold.
\begin{align*}
    \eveG_C &:=\Set{\forall f, f^\prime \in \cFapx_C, \Norm{f-f^\prime} 
    \leq (1+\varepsilon) \Norm{\pi(f)-\pi(f^\prime)} \text{ (\cref{line:eveG})}}\\
    \eveH_C&:=\Set{k^* + v_C^{k^*} \leq c_4 \tau \text{ (\cref{line:eveH})}}.
\end{align*}
We also define the following event 
\begin{align*}
    \eveK_C := \Set{\opt(\pi(C)) \leq c_4 \tau}.
\end{align*}
For the sake of presentation, we write the final clustering computed at the end of \cref{line:a_cluster_end} as $\cC=\{X_1, X_2, \dots, X_{\norm{\cFret}}\}$. 
For $C\in\partition$, denote 
by $\U_C:=\{i\in [\norm{\cFret}]\colon X_i \subseteq C\}$ the set of clusters
$X_i$ that are subsets of $C$. 
The following lemma shows that conditioning on event $\eveG_C\cap \eveK_C$,
$\sum_{i\in U_C}\med_1(X_i)$ and $\sum_{i\in U_C}\med_1(\pi(X_i))$ are sufficiently close. 
The proof of \Cref{lemma:ptas_exp_single_cluster} can be found in \Cref{append:ptas_exp_single_cluster}.
We note that \cref{lemma:ptas_exp_single_cluster} may be viewed as a variant of~\cite[Theorem 3.6]{MakarychevMR19},
where our guarantee is with respect to the expectation (and theirs is about probabilities).

\begin{restatable}[Cost preserving for partition $U_C$]{lemma}{lemmaptasexpsinglecluster}
    \label{lemma:ptas_exp_single_cluster}
    For every cluster $C\in \partition$,
    \begin{align*}
        \ee_\pi \left[\indicator(\eveG_C\cap\eveK_C)
        \cdot \sum_{i\in\U_C}
        \max\Set{0, \med_1(X_i)-(1+\varepsilon)\med_1(\pi(X_i))}
        \mid \decom \right] \leq \varepsilon^2.
    \end{align*}
\end{restatable}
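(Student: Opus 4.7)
The plan is to condition on the randomness of $\decom$ so that the partition $\partition$ and the $\alpha$-approximate solution $\cFapx_C$ are fixed, leaving only the random linear map $\pi$ as the source of randomness. Under this conditioning, the family $\{X_i : i \in U_C\}$ returned by $\algmedian_{k^*, m}(\pi(C), \varepsilon)$ in \cref{line:use_median} is the sole $\pi$-dependent object, and the events $\eveG_C$ and $\eveK_C$ respectively ensure that $\pi$ preserves the pairwise distances of the $O(\tau)$ facilities in $\cFapx_C$ up to a $(1+\varepsilon)$-factor and that $|U_C| \le k^* \le c_4 \tau$.

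The core of the argument is a variant of Theorem 3.6 of~\cite{MakarychevMR19}, which asserts that for any fixed point set $Y\subset \RR^\oridim$ and target dimension $\tardim = \tilde{O}(\varepsilon^{-2}\log k)$, a random Gaussian projection preserves $k$-median costs up to $(1\pm\varepsilon)$ with probability $1 - \exp(-\Omega(\varepsilon^2\tardim))$. In our regime $k \le c_4 \tau$ and $\tardim = \Omega(\varepsilon^{-2}(\log\tau + \log(1/\varepsilon)))$, this failure probability is at most $q := \varepsilon^4/\tau^2$. The subtlety, however, is that $\{X_i\}$ itself depends on $\pi$, so the theorem cannot be invoked on a fixed partition. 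I plan to handle this by a \emph{witness-center} argument: for each $i \in U_C$, let $y_i \in X_i$ be the point minimizing $\Norm{\pi(y_i) - c_i'}$, where $c_i'$ is the optimal $1$-median center of $\pi(X_i)$ in $\RR^\tardim$. Then $\sum_{x\in X_i}\Norm{\pi(x) - \pi(y_i)} \le 2\med_1(\pi(X_i))$ by triangle inequality, and using $y_i \in X_i$ as a candidate center in the original space gives $\med_1(X_i) \le \sum_{x\in X_i}\Norm{x - y_i}$. A concentration bound on distance expansion, applied to the pair $(x, y_i)$ and in conjunction with event $\eveG_C$, yields $\med_1(X_i) \le (1+\varepsilon)\med_1(\pi(X_i))$ with the desired probability.

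Finally, to convert the probabilistic bound into an expectation bound, I would do a case split. On the success event, every overshoot term vanishes (after absorbing constant-factor adjustments of $\varepsilon$); on the failure event of probability $q$, bound
\[
    \sum_{i\in U_C}\med_1(X_i)\le \cost(C,\cFapx_C) \le \alpha\cdot \opt(C) \le \alpha\tau,
\]
using the $\alpha$-approximation from \cref{line:use_approx_2} together with \cref{lemma:opt_C_bounds}. The expected overshoot is then at most $\alpha\tau\cdot q \le \varepsilon^2$, completing the proof. The main obstacle I anticipate is the witness-center concentration in the second paragraph: a naive pairwise JL union bound over $C\times C$ would require $\tardim = \tilde{O}(\varepsilon^{-2}\log|C|)$, which is too large. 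Circumventing this requires invoking the event $\eveG_C$ to restrict attention to a $\pi$-independent discretization (e.g., an $\varepsilon$-net anchored at the $O(\tau)$ facilities of $\cFapx_C$) so that the effective union bound is over only $\poly(\tau)$ candidate witnesses, which is what the choice $\tardim = \Omega(\varepsilon^{-2}(\log\tau + \log(1/\varepsilon)))$ is designed to accommodate.
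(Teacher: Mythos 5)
The proposal has at least two genuine gaps, plus a third important piece that is left unresolved.

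First, the \emph{witness-center argument loses a factor of~$2$ that cannot be recovered}. Chaining your three inequalities gives
\[
  \med_1(X_i) \;\leq\; \sum_{x\in X_i}\Norm{x-y_i} \;\leq\; (1+\varepsilon)\sum_{x\in X_i}\Norm{\pi(x)-\pi(y_i)} \;\leq\; 2(1+\varepsilon)\,\med_1(\pi(X_i)),
\]
not $(1+\varepsilon)\med_1(\pi(X_i))$ as you claim. The factor~$2$ comes from the triangle inequality that re-routes through $\pi(y_i)$ instead of the optimal center $c_i'$, and is intrinsic to any single-witness center argument: the point $y_i$ closest to $c_i'$ is generally a bad $1$-median center. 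The lemma needs a $(1+\varepsilon)$ comparison, so this line of attack fails outright. The paper's proof avoids the factor~$2$ by invoking the machinery of~\cite[Theorems 3.2 and 3.3]{MakarychevMR19}: it builds a $1/(1+\varepsilon)$-contraction graph, extracts a $\theta$-everywhere-sparse random subset $X_i'$, and applies the MMR sparsity-to-median-distortion theorem to the modified multiset $\widetilde X_i=\phi(X_i)$. That averaging argument is precisely what gets a $(1+t)(1+\sqrt\theta)$ bound rather than a factor~$2$.

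Second, your \emph{fallback bound on the failure event is unjustified}. You write $\sum_{i\in U_C}\med_1(X_i)\le \cost(C,\cFapx_C)\le\alpha\tau$, but this holds only when $\{X_i\}_{i\in U_C}$ is the $\cC_C$ clustering. Conditioning on $\eveG_C\cap\eveK_C$, the clustering can be the one returned by $\algmedian$ on $\pi(C)$; in that case $\sum_i\med_1(\pi(X_i))$ is bounded by $\eveK_C$, but $\sum_i\med_1(X_i)$ in the original space has no a priori bound --- it can blow up arbitrarily when $\pi$ contracts distances inside $X_i$. The paper therefore does \emph{not} use a simple success/failure case split with a uniform fallback: it charges the overshoot per point to $\Norm{x-F_C(x)}+\Norm{\pi(x)-\pi(F_C(x))}$, then takes expectation over each point's membership in $X_i\setminus X_i'$ (whose probability is $O(\tau)e^{-\Omega(\varepsilon^2 m)}$) and over the expansion of $\Norm{\pi(x)-\pi(F_C(x))}$. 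That per-point charging is what makes the expectation bound close; your route needs a replacement for it.

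Finally, you correctly identify the core difficulty --- that $y_i$ (and $\{X_i\}$ itself) depend on $\pi$, so a naive union bound over $C\times C$ is too expensive --- but the proposed fix (``an $\varepsilon$-net anchored at the $O(\tau)$ facilities of $\cFapx_C$'') is not developed, and it is exactly the place where the MMR sparse-graph lemmas do the work. Without making that step concrete, the proof does not go through. Taken together, these issues mean the proposal is not a correct alternative route; it is missing the MMR-style sparsity argument that is the heart of the paper's proof.
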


Now we are ready to prove the correctness of \cref{alg:ptas}.

\begin{proof}[Proof of \cref{theorem:main_ptas} (correctness)]
    \cref{alg:ptas} makes $O(\tau \cdot \norm{\partition}) = O(\tau n)$ calls to $\algmedian$,
$\sum_i \norm{\xchg_i} + \norm{\partition} = O(n)$ calls to $\algapx$,
    and $O(n)$ calls to the $1$-median algorithm in \Cref{lemma:CLMPS16}.
    By~\Cref{lemma:CLMPS16,lemma:ptas_const_apx,lemma:k_median_oracle}, each of these calls succeeds with probability $1-1/\poly(n)$.
    Hence, with probability $1-1/\poly(n)$, all the calls succeed simultaneously.
    Conditioned on this event, we derive the following analysis.

    We start from analyzing the cost of clustering $\{X_i\}_{i\in U_C}$.
    \begin{align}
        &\quad \sum_{i=1}^{\norm{\cFret}} \med_1(X_i) + \norm{\cFret}
        =\sum_{C\in \partition} \left(\sum_{i\in \U_C} \med_1(X_i)
        +\norm{\U_C}\right)\notag\\
        &=\sum_{C\in \partition} \indicator(\eveG_C\cap \eveH_C)
        \left(\sum_{i\in \U_C} \med_1(X_i)
        +\norm{\U_C}\right)
        +\sum_{C\in \partition} \indicator(\overline{\eveG_C}\cup \overline{\eveH_C})
        \left(\sum_{i\in \U_C} \med_1(X_i)
        +\norm{\U_C}\right).\label{eq:ptas_two_terms}
    \end{align}
Next, we bound the two terms in~\eqref{eq:ptas_two_terms} respectively.
For the first term, we first claim that $\eveH_C \subseteq \eveK_C$. 
Indeed, when event $\eveH_C$ happens, we have $\opt(\pi(C)) \leq k^* + \med_{k^*}(\pi(C)) \leq k^* + v_C^{k^*}\leq c_4\tau$, implying event $\eveK_C$.
Thus $\indicator(\eveG_C\cap\eveH_C) \leq \indicator(\eveG_C\cap\eveK_C)$.
    Summing the result in \cref{lemma:ptas_exp_single_cluster}
    over $C\in \partition$, we have
    \begin{align*}
        &\quad \ee_{\pi, \decom}\left[\sum_{C\in \partition}\sum_{i\in \U_C} 
        \indicator(\eveG_C\cap\eveH_C)
        \cdot \max\Set{0, \med_1(X_i)-(1+\varepsilon)\cdot \med_1(\pi(X_i))}
        \right]\\
        &\leq \varepsilon^2 
        \cdot \ee_{\decom}[\norm{\partition}]
        \leq \varepsilon^2 
        \cdot \frac{2 \alpha \opt(\Xori)}{\kappa - 
        2(\ddim/\varepsilon)^{O(\ddim)}}
        \leq O(\varepsilon^2) \cdot \opt(\Xori). 
    \end{align*}
    By Markov's inequality, 
    with probability at least $1-O(\varepsilon)$, 
    $\sum_{C\in\partition} \sum_{i\in\U_C}
    \indicator(\eveG_C\cap\eveH_C)
    \cdot \max\Set{0, \med_1(X_i)-(1+\varepsilon)\med_1(\pi(X_i))}
    \leq \varepsilon \cdot \opt(\Xori)$. 
Hence 
    \begin{align}
        &\quad \sum_{C\in \partition} \indicator(\eveG_C\cap \eveH_C)
        \left(\sum_{i\in \U_C} \med_1(X_i)
        +\norm{\U_C}\right)\notag \\
        &\leq (1+\varepsilon) 
        \sum_{C\in \partition} \indicator(\eveG_C\cap \eveH_C)
        \left(\sum_{i\in \U_C} \med_1(\pi(X_i))
        +\norm{\U_C}\right)
        +\varepsilon \cdot \opt(\Xori).\label{eq:ptas_first_bound}
    \end{align}
    Fix a cluster $C\in \partition$. 
    Recall that $\opt(\pi(C)) \leq c_4\tau$ conditioning on $\eveH_C$.
Hence there exists 
    $k\in \{1, 2, \dots, \lfloor c_4\tau\rfloor\}$, such that 
    $\opt(\pi(C))=k+\med_k(\pi(C))$. 
On the other hand, when $\indicator(\eveG_C\cap \eveH_C) = 1$, $U_C$ is constructed by $\algmedian$, which returns a
    $(1+\varepsilon)$-approximate $k$-median solution. 
Thus, we have
    \begin{align*}
        &\quad \indicator(\eveG_C\cap \eveH_C)
        \left(\sum_{i\in \U_C} \med_1(\pi(X_i))
        +\norm{\U_C}\right)\\
        & \leq k^* + v_C^{k^*}
        \leq k + v_C^k
        \leq (1+\varepsilon) (k + \med_k(\pi(C)))
        = (1+\varepsilon) \opt(\pi(C)).
    \end{align*}
    
    Therefore, the right side of \eqref{eq:ptas_first_bound} can be further 
    upper bounded by 
    $
        (1+\varepsilon)^2 \sum_{C\in \partition} \opt(\pi(C))
        +\varepsilon \opt(\Xori).
    $

    By \cref{lemma:apx_pi_X_by_clusters}, 
    with probability at least $1-O(\varepsilon)$, 
    $\sum_{C\in \partition} \opt(\pi(C))
    \leq \opt(\pi(\Xori))+\varepsilon \opt(\Xori)$.
    By \cref{lemma:piX_leq_X_value_general} 
    and Markov's inequality, 
    with probability at least $1-\frac{4}{\varepsilon^2 \tardim}e^{-\varepsilon^2 \tardim/2}=1-O(\varepsilon)$, 
    $\opt(\pi(\Xori))\leq (1+\varepsilon) \opt(\Xori)$.
    Therefore, with probability at least $1-O(\varepsilon)$, 
    \begin{align}
        \sum_{C\in \partition} \indicator(\eveG_C\cap \eveH_C)
        \left(\sum_{i\in \U_C} \med_1(X_i)
        +\norm{\U_C}\right)
        &\leq (1+\varepsilon)^2(1+2\varepsilon) \opt(\Xori) + \varepsilon \opt(\Xori) \notag\\
        &\leq (1+O(\varepsilon)) \opt(\Xori).\label{eq:ptas_ideal_bound}
    \end{align}

    We turn to the second term on the right side of \eqref{eq:ptas_two_terms}.
If either $\eveG_C$ or $\eveH_C$ does not happen, 
    the clustering $\{X_i\}_{i\in \U_C}$ is an $\alpha$-approximate
    solution on $C$. Thus 
    \begin{align}
        &\quad \ee_{\pi, \decom}\left[
            \sum_{C\in \partition} 
            \indicator(\overline{\eveG_C}\cup \overline{\eveH_C})
            \left(\sum_{i\in \U_C} \med_1(X_i)
            +\norm{\U_C}\right)
        \right]\notag\\
        &\leq \alpha \ee_{\pi, \decom}\left[
            \sum_{C\in \partition} 
            \indicator(\overline{\eveG_C}\cup \overline{\eveH_C})
            \opt(C)
        \right]
        \leq \alpha\tau \cdot \ee_{\decom}\left[
            \sum_{C\in \partition} \Pr_{\pi} 
            \left[\overline{\eveG_C}\cup \overline{\eveH_C} \mid \decom\right]
        \right]\label{eq:ptas_bound_2}.
    \end{align}
    Recall that $\cFapx_C$ is an $\alpha$-approximate solution
    on $C$. Hence $\norm{\cFapx_C}\leq \alpha \opt(C)\leq \alpha \tau$.
    Thus \begin{align*}
        \Pr_{\pi}\left[\overline{\eveG_C} \mid \decom\right] 
        \leq \binom{\alpha\tau}{2}\cdot e^{-\Omega(\varepsilon^2 \tardim)}
        \leq O(\tau^2) e^{-\Omega(\varepsilon^2 \tardim)}.
    \end{align*} 
    On the other hand, if $\eveH_C$ does not happen, then for every $k \in \{1, 2, \dots, \lfloor c_4 \tau \rfloor\}$, $k + v_C^k > c_4 \tau$.
    Since $v_C^k \leq (1+\varepsilon) \med_k(\pi(C))$, we have $k + \med_k(\pi(C)) \geq c_4 \tau / (1+\varepsilon) > c_4 \tau / 2$. 
    This implies that $\opt(\pi(C)) \geq c_4 \tau / 2$.
    By \cref{lemma:piX_leq_X_value_general}, when constant $c_4$ is sufficiently large, 
    \begin{align*}
        \Pr_{\pi}\left[\overline{\eveH_C} \mid \decom\right] 
        \leq \Pr_{\pi}\left[\opt(\pi(C))\geq c_4 \tau/2
        \mid \decom\right]
        \leq e^{-\Omega(\tardim)}.
    \end{align*}
    
    Therefore, \eqref{eq:ptas_bound_2} can be further upper bounded
    by 
    \begin{align*}
        &\quad \ee_{\pi, \decom}\left[
            \sum_{C\in \partition} 
            \indicator(\overline{\eveG_C}\cup \overline{\eveH_C})
            \left(\sum_{i\in \U_C} \med_1(X_i)
            +\norm{\U_C}\right)
        \right]\\
        &\leq \alpha\tau \cdot e^{-\Omega(\varepsilon^2 \tardim)} O(\tau^2)
        \cdot \ee_{\decom}\left[
            \norm{\partition}
        \right]\\
        &\leq O(\tau^3) \cdot e^{-\Omega(\varepsilon^2 \tardim)} 
        \cdot \frac{2 \alpha \opt(\Xori)}{\kappa - 
        2(\ddim/\varepsilon)^{O(\ddim)}}\\
        &\leq \varepsilon^2 \cdot \opt(\Xori).
    \end{align*}
    By Markov's inequality, with probability at least 
    $1-O(\varepsilon)$, 
    \begin{align}
        \sum_{C\in \partition} 
        \indicator(\overline{\eveG_C}\cup \overline{\eveH_C})
        \left(\sum_{i\in \U_C} \med_1(X_i)
        +\norm{\U_C}\right)
        \leq \varepsilon \cdot \opt(\Xori).\label{eq:ptas_error_bound}
    \end{align}

    Combining \eqref{eq:ptas_ideal_bound} with \eqref{eq:ptas_error_bound},
    we conclude that with constant probability,
    \begin{align*}
        \norm{\cFret}+\sum_{i=1}^{\norm{\cFret}} \med_1(X_i)
        \leq (1+O(\varepsilon)) \cdot \opt(\Xori)
    \end{align*}
    Recall that each $f\in \cFret$ is a $(1+\varepsilon)$-approximate
    $1$-median center. Thus 
    \begin{align*}
        \cost(\Xori, \cFret)
        \leq (1+\varepsilon) \left(
            \norm{\cFret}+\sum_{i=1}^{\norm{\cFret}} \med_1(X_i)\right)
        \leq (1+O(\varepsilon)) \opt(\Xori).
    \end{align*}
    This completes the proof.
\end{proof}

\subsection{Time Complexity of \cref{alg:ptas}}
\label{subsec:ptas_time_complexity}
In this section, we analyze the time complexity of \cref{alg:ptas}.
Formally, we prove the following lemma.

\begin{lemma}[Time complexity of \cref{alg:ptas}]
    \label{lemma:ptas_time_complexity}
    Given a $k$-median approximation algorithm $\algmedian$ with
    running time $T_{M}(n, d, k, \varepsilon)$,
    and an $\alpha$-approximate UFL algorithm $\algapx$ with running time $T_{A}(n, d, \ddim)$
    for any input size $n$, ambient dimension $d$, doubling dimension $\ddim$, integer $k$ and precision parameter $\varepsilon$,
    \cref{alg:ptas} takes as input $\Xori \subset \RR^\oridim$ 
    with $n$ points and $\varepsilon\in (0, 1)$, and makes $r$ 
    calls to algorithm $\algmedian$ and $s$ calls to $\algapx$, 
    running in time
\begin{align*}
        \tilde{O}\Bigg(
            \tau^2 \oridim \cdot n
             + \max_{\substack{n_1,\ldots, n_r \geq 0: \\ \sum_{i=1}^r n_i = r}} \sum_{i=1}^r T_{M}(n_i, m, \tau, \varepsilon)
            + \max_{\substack{n_1,\ldots, n_s \geq 0: \\ \sum_{i=1}^s n_i = n \log n}} \sum_{i=1}^s T_A(n_i, d, \ddim(\Xori))
        \Bigg),
    \end{align*}
    where $\tau = O((\ddim(\Xori)/\varepsilon)^{O(\ddim(\Xori))})$, 
    $r=O(\tau n)$, $s = O(n)$ and
    $m=O(\varepsilon^{-2}\cdot \log \tau)$.
\end{lemma}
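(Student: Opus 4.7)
The plan is to account separately for (i) the deterministic overhead of building the partition and applying the random map, (ii) the cost contributed by the $r$ calls to $\algmedian$, and (iii) the cost contributed by the $s$ calls to $\algapx$, and to show that each fits inside the claimed expression. Throughout, I would use $|\partition| \le n$ and, by \Cref{lemma:opt_C_bounds}, $|C| \le 2\,\opt(C)\le 2\tau$ for every $C \in \partition$ (each facility pays $\ge 1$).

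For the pre-processing in Lines \ref{line:decom}--\ref{line:partition}, Talwar's decomposition $\decom$ can be constructed in $\tilde{O}(nd)$ using standard nearest-neighbor structures for doubling metrics, and Algorithm~\ref{alg:xchg} then visits each point at each of the $O(\log n)$ levels once, giving $\tilde{O}(nd)$ total with appropriate bookkeeping of cluster memberships and $\Fa(\cdot)$ lookups. Applying $\pi$ (Line \ref{line:random_proj}) takes $O(nmd)$. For the main loop (Lines~\ref{line:a_cluster_start}--\ref{line:a_cluster_end}), the check in Line \ref{line:eveG} examines $|\cFapx_C|^2 \le (\alpha\tau)^2 = O(\tau^2)$ facility pairs per cluster, each costing $O(d+m)=O(d)$ to test (we already have $\pi$ applied to points of $C$), so the total is $O(\tau^2 d\,|\partition|) = O(\tau^2 dn)$, which is the dominant deterministic overhead. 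The $k^*$ selection in Line \ref{line:select_k} costs $O(\tau)$ per cluster and is absorbed, and the 1-median invocations of \cref{lemma:CLMPS16} contribute $\tilde{O}(\sum_i |X_i|\,d) = \tilde{O}(nd)$.

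For the $\algmedian$ accounting, $\algmedian$ is invoked exactly $\lfloor c_4\tau\rfloor$ times per $C\in\partition$, each time on $\pi(C)\subset\RR^m$ of size $|C|$ with parameter $\tau$ and precision $\varepsilon$; hence $r=\Theta(\tau\,|\partition|)=O(\tau n)$ and $\sum_{i=1}^r n_i = \lfloor c_4\tau\rfloor\sum_{C\in\partition}|C| = O(\tau n) = O(r)$, which fits under $\max_{\sum n_i=r}\sum T_M(n_i,m,\tau,\varepsilon)$ as claimed. For the $\algapx$ accounting, there is one call on $X$ in Line~\ref{line:use_approx_1} (size $n$), one call per $C\in\partition$ in Line \ref{line:use_approx_2} (total input size $n$), and additional calls needed to evaluate \cref{line:critical_condition} of Algorithm~\ref{alg:bot_up_partition}: invoking $\algapx$ on a cluster yields an $\alpha$-approximation of its $\opt$, which suffices to decide whether $\opt(C)\ge\kappa$ after replacing $\kappa$ by $\kappa/\alpha$ (this affects only the constant $c_2$ in the statement). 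I would implement the critical-condition test by a single bottom-up pass over the \emph{compressed} Talwar hierarchy (which has $O(n)$ nodes), caching each estimate, so that exactly one $\algapx$ call is issued per distinct cluster, giving $s=O(n)$; since each point of $X$ participates in only $O(\log n)$ clusters of $\xchg$ (one per level), the total input size is $O(n\log n)$.

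The main obstacle I expect is the careful implementation of the bottom-up critical test so that cluster updates in Line~\ref{line:update_clusters} (removing earlier $\partition$-clusters from their ancestors) do not trigger re-invocations of $\algapx$ on already-tested clusters. I would handle this by observing that when $C\in\partition$ is removed, only its chain of ancestors is affected, so it suffices to invalidate and recompute $\opt$-estimates along that single chain; amortized, each node is recomputed $O(1)$ times, preserving $s=O(n)$ and keeping the total input size at $O(n\log n)$. Combining all of the above contributions yields the time bound stated in the lemma.
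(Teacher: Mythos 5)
Your accounting matches the paper's proof in structure and in all dominant terms (Talwar's $\decom$ and the modification $\xchg$ in $\tilde{O}(nd\cdot 2^{O(\ddim)})$, the $O(n m d)$ projection, the $O(\tau^2 dn)$ contraction check, the $\tau$ calls to $\algmedian$ per part, the per-level $\algapx$ calls, and the final $1$-median pass), so there is no genuine gap in the route taken. A few points to tighten, though.

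First, the preliminary claim ``$|C|\le 2\,\opt(C)\le 2\tau$'' is false: \Cref{lemma:opt_C_bounds} bounds $\opt(C)$, not $|C|$, and a cluster of many (near-)colocated points has $\opt(C)=O(1)$ while $|C|$ is arbitrary. Fortunately you never actually invoke this bound in the body of your argument; the $\algmedian$ accounting only needs $\sum_{C\in\partition}|C|=n$.

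Second, the worry about re-invocations of $\algapx$ triggered by \cref{line:update_clusters} is unnecessary, and the proposed fix is not quite right. A single bottom-up pass already suffices: process levels $i=0,1,\dots$ in order, and when you reach a cluster $C\in\xchg_i$ all of its descendants have already been processed and the heavy ones removed; since clusters at the same level are disjoint, removals at level $i$ do not affect other level-$i$ clusters, so each cluster is evaluated exactly once, which is precisely the paper's ``each cluster is checked at most once.'' Your alternative — invalidating and recomputing along each removed part's ancestor chain — does not give $O(1)$ amortized recomputations per node (a node can be an ancestor of many removed parts), and if pursued it would require a different charging argument; but it is simply not needed.

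Third, the inference ``$\sum_i n_i = O(\tau n) = O(r)$'' does not follow from $r=O(\tau n)$: the actual call count is $\Theta(\tau|\partition|)$, which can be far smaller than $\tau n$ when $|\partition|\ll n$, whereas the total input to $\algmedian$ is exactly $\lfloor c_4\tau\rfloor n$. The lemma's constraint ``$\sum n_i = r$'' is best read as normalizing both the number of calls and the total input size to the common upper bound $O(\tau n)$; your bound $\sum n_i = O(\tau n)$ is the substantive and correct statement.
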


Since we always focus on the doubling dimension of $\Xori$, we denote $\ddim := \ddim(\Xori)$ for short in our following analysis.
In addition, we assume the aspect ratio of $X$ to be $\Delta = \poly(n)$. (See e.g. \cite[Lemmas A.1, A.2]{Cohen-AddadFS21} for details.)

\begin{proof}[Proof of \cref{lemma:ptas_time_complexity}]
By~\cite[Lemma 9]{Cohen-AddadFS21}, the hierarchical decomposition $\decom$ can be constructed within time $O(n \cdot \oridim \cdot 2^{O(\ddim)} \log \Delta)$ in \cref{line:decom}.

    To construct the $\alpha$-approximate solution $\cFa$, $\algapx$ is run on point set $\Xori$, which has time complexity $T_{A}(n, d, \ddim)$ (\cref{line:use_approx_1}).
To construct $\xchg$ (\cref{line:xchg}), one need to compute the distance 
    $\Norm{x-\Fa(x)}$ for all $x \in \Xori$ and update 
    all the level $i$ clusters correspondingly for each $i \in [\ell]$, 
    which has time complexity $O(n \cdot \oridim + n \cdot\log\Delta)$. 
    Hence, the modified decomposition $\xchg$ can be constructed in time 
    $\tilde{O}(n \cdot d + T_{A}(n, d, \ddim))$.

    The time complexity of constructing $\partition$ contains
    two stages -- checking and updating. 
    In the checking stage, the algorithm scans every cluster $C\in \bigcup_i \xchg_i$ and apply $\algapx$ to estimate $\opt(C)$ (\cref{alg:bot_up_partition}, \cref{line:critical_condition}). 
    Since each cluster is checked at most once, the checking stage has time complexity $\sum_i \sum_{C \in \xchg_i} T_{A}(\norm{C}, \oridim, \ddim)$.
In the updating stage, our algorithm deletes the points
    from the dataset (\cref{alg:bot_up_partition}, \cref{line:update_clusters}). 
    Since each point $x$ is deleted at most once, and there are at most 
    $\log \Delta$ clusters in $\xchg$ containing $x$, 
    the updating time complexity is $\tilde{O}(n)$.
    In conclusion, the overall time complexity of constructing 
    $\partition$ is $\tilde{O}(n + \sum_i \sum_{C \in \xchg_i} T_{A}(\norm{C}, \oridim, \ddim))$.

    To compute the set $\pi(\Xori)$ in \cref{line:random_proj}, one only need to perform matrix multiplication, which has time complexity $O(n\oridim m)$.

    On each cluster $C\in \partition$, our algorithm first invokes
    $\algapx$ to compute the $O(1)$-approximate solution on $C$ (\cref{line:use_approx_2}), which takes time $T_{A}(\norm{C}, \oridim, \ddim)$.
Then the contraction $\Norm{f-f^\prime}/\Norm{\pi(f)-\pi(f^\prime)}$
    can be checked in time $O(\norm{\cFapx_C}^2 \oridim) \leq O(\tau^2\oridim)$ (\cref{line:eveG}).
After that, either the work on $C$ is done, or 
    $\algmedian$ is further called with parameter $k\leq c_4\tau$ (\cref{line:use_median}).
    This takes time $T_{M}(\norm{C}, m, k, \varepsilon)$.
Therefore, each $C$ can be handled in time 
    $O(T_{A}(\norm{C}, \oridim, \ddim) + \tau^2\oridim + \sum_{k=1}^{\lfloor c_4\tau\rfloor} T_{M}(\norm{C}, m, k, \varepsilon))$.
    The overall time complexity of handling all $C\in \partition$
    is \begin{align*}
        \tilde{O}\left(
            \sum_{C \in \partition} T_{A}(\norm{C}, \oridim, \ddim) + \tau^2\oridim \norm{\partition}
            + \sum_{C\in \partition}\sum_{k=1}^{\lfloor c_4\tau\rfloor} T_{M}(\norm{C}, m, k, \varepsilon)
        \right).
    \end{align*}

    Finally, on each cluster $X_i$, the $(1+\varepsilon)$-approximate
    $1$-median center can be computed by the algorithm in 
    \cref{lemma:CLMPS16} (\cref{line:compute_1_median,line:add_center}), which runs in time 
    $O(\norm{X_i}\cdot \oridim\cdot \log^3(\norm{X_i}/\varepsilon))$, 
    thus a total of $\tilde{O}(n\cdot \oridim)$ time.

    In conclusion, \cref{alg:ptas} runs in time 
    \begin{align*}
        &\quad n \cdot d \cdot 2^{O(\ddim)}
        + T_{A}(n, \oridim, \ddim)
        + \sum_i \sum_{C \in \xchg_i} T_{A}(\norm{C}, \oridim, \ddim)
        + n \oridim \tardim \\
        & + \sum_{C \in \partition} T_{A}(\norm{C}, \oridim, \ddim) + \tau^2\oridim \norm{\partition}
        + \sum_{C\in \partition}\sum_{k=1}^{\lfloor c_4\tau\rfloor} T_{M}(\norm{C}, m, k, \varepsilon).
    \end{align*}
Therefore, \cref{alg:ptas} makes $r = O(\tau \cdot \norm{\partition}) = O(\tau n)$ calls to $\algmedian$, with a total input size $\tau \cdot \sum_{C \in \partition} \norm{C} = O(\tau n)$.
It makes $s = \sum_i \norm{\xchg_i} + \norm{\partition} = O(n)$ calls to $\algapx$, with a total input size $\sum_{i} \sum_{C \in \xchg_i} \norm{C} + \sum_{C \in \partition} \norm{C} = O(n \log \Delta)$.
This completes the proof of \Cref{lemma:ptas_time_complexity}.
\end{proof}

By specializing the algorithms  $\algmedian$ and $\algapx$ as stated in \cref{lemma:k_median_oracle,lemma:ptas_const_apx}, we can complete the proof of \cref{theorem:main_ptas}.

\begin{proof}[Proof of \cref{theorem:main_ptas} (time complexity)]
    By \cref{lemma:ptas_time_complexity}, \cref{alg:ptas}
    runs in time 
    $\tilde{O}(n\oridim\cdot 2^{O(\ddim)} + n\oridim m + \tau\oridim\cdot\opt(\Xori) + \sum_{i=1}^r T_{M}(n_i, m, k_i, \varepsilon)) + \sum_{i=1}^s T_{A}(n_i, \oridim, \ddim)$.

    By \cref{lemma:ptas_const_apx}, each call to the algorithm
    $\algapx$ takes time $T_{A}(n_i, \oridim, \ddim) = \tilde{O}(n_i \oridim \cdot 2^{O(\ddim)})$.
Thus, the total time complexity of running $\algapx$ is $\tilde{O}(n \oridim \cdot 2^{O(\ddim)})$.

    By \cref{lemma:k_median_oracle}, each call to the algorithm 
    $\algmedian$ takes time 
    $T_{M}(n_i, \tardim, k_i, \varepsilon) = \tilde{O}(n_i \tardim k_i+\varepsilon^{-6} \tardim k_i^{O(k_i/\varepsilon^3)})=\tilde{O}(n_i \tardim \tau+\varepsilon^{-6} \tardim \tau^{O(\tau/\varepsilon^3)})$. 
Thus, the total time complexity of running $\algmedian$ is 
    $\tilde{O}(n \tardim \tau^2 + \tau n \cdot \varepsilon^{-6} \tardim \tau^{O(\tau/\varepsilon^3)})$.

    Therefore, the overall time complexity is 
    $\tilde{O}(\tau^2 d \cdot n + n \oridim \cdot 2^{O(\ddim)} + n \tardim \tau^2 + \tau n \cdot \varepsilon^{-6} \tardim \tau^{O(\tau/\varepsilon^3)})$.
    Recall that $\tau=(\ddim/\varepsilon)^{O(\ddim)}$, completing the proof.
\end{proof} 
    \addcontentsline{toc}{section}{References}
    \bibliographystyle{alphaurl}
    \begin{small}	
        \bibliography{ref}
    \end{small}

    \begin{appendices}
        \crefalias{section}{appendix}
        \section{A Faster PTAS for UFL in Discrete Doubling Metrics}
\label{append:ptas_discrete}

In this section, we propose a PTAS for the UFL problem in general discrete doubling metrics, in which points do not necessarily have a vector representation.
We assume that $(\Xori, \dist)$ is a finite metric space with doubling dimension $\ddim$.
A feasible UFL solution can be any subset $F \subseteq \Xori$.
Given access to a distance oracle, our new PTAS runs in time $\tilde{O}(2^{2^{\ddim \cdot \log \ddim}} n)$, making an improvement over the $\tilde{O}(2^{2^{\ddim^2}} n)$ PTAS in~\cite{Cohen-AddadFS21}.
Formally, we are going to prove the following result.

\begin{corollary}
    \label{corollary:ptas_discrete}
    There is an algorithm
    that given as input $0 < \epsilon < 1$ and a finite metric space $(\Xori, \dist)$ of size $n$,
    computes with constant probability a $(1 + \epsilon)$-approximation for UFL,
    and time $\tilde{O}(2^{2^{m'}} n)$, for
    \begin{equation*}
        m' = O\left(\ddim(\Xori) \cdot \log\frac{\ddim(\Xori)}{\varepsilon}\right).
    \end{equation*}
\end{corollary}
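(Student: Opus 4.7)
The plan is to adapt the PTAS in \Cref{alg:ptas} to the discrete setting by dropping the dimension-reduction step (setting $\pi$ to the identity) and replacing the Euclidean $k$-median subroutine $\algmedian$ with a $(1+\varepsilon)$-approximate $k$-median algorithm that operates directly on the discrete doubling metric $(\Xori,\dist)$ using only a distance oracle. The decomposition machinery of \Cref{sec:partition} (\Cref{alg:decompose_ori,alg:xchg,alg:bot_up_partition} together with \Cref{lemma:opt_C_bounds,lemma:size_of_partition,corollary:apx_X_by_clusters}) is already stated for arbitrary doubling metrics, so first I would invoke it with threshold $\kappa = (\ddim/\varepsilon)^{\Theta(\ddim)}$ to build $\partition = \partition(\kappa)$. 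Then on each cluster $C \in \partition$ I would try every $k \in \{1,\ldots, O(\tau)\}$ and pick the one minimizing $k + \med_k(C)$, where $\tau = 2^{O(\ddim)} \kappa$ is the upper bound on $\opt(C)$ from \Cref{lemma:opt_C_bounds}.

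For correctness, the argument follows the proof of \Cref{theorem:main_ptas} with two simplifications. First, since there is no random projection $\pi$, the bad-event guards $\eveG_C, \eveH_C$ and their associated case analysis collapse to a single branch. Second, the role of \Cref{lemma:apx_pi_X_by_clusters} is taken over by \Cref{corollary:apx_X_by_clusters}, which directly gives $\sum_{C \in \partition} \opt(C) \leq (1+\varepsilon)\opt(\Xori)$ with constant probability. Combining this with the per-cluster $(1+\varepsilon)$ guarantee of the $k$-median subroutine yields a total UFL cost of at most $(1+O(\varepsilon))\opt(\Xori)$ by the same Markov-style budgeting as in \Cref{subsec:ptas_correctness}.

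The main new ingredient is a $(1+\varepsilon)$-approximate $k$-median algorithm for a discrete doubling metric whose running time fits the budget $\tilde{O}(2^{2^{m'}} n)$ when $k \leq O(\tau)$. The plan is a standard coreset-plus-enumeration scheme: first compute a strong $\varepsilon$-coreset $S_C \subseteq C$ for $k$-median of size $\tilde{O}(k\ddim/\varepsilon^{O(1)})$ in time $\tilde{O}(|C|\cdot k)$ via known coreset constructions for $k$-median in doubling metrics; then exhaustively enumerate every $k$-subset of $S_C$, augmented by a net at resolution $\varepsilon \cdot \Diam(C)/k$ around each coreset point, and evaluate its weighted cost on $S_C$. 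The augmented candidate set still has size $\tilde{O}(2^{m'})$, the number of $k$-subsets is at most $\binom{\tilde{O}(2^{m'})}{k} = 2^{O(k\cdot m')} = 2^{2^{O(m')}}$, and each cost evaluation takes $\tilde{O}(k\cdot|S_C|) = 2^{O(m')}$ time. By \Cref{lemma:size_of_partition} there are only $\tilde{O}(n/\kappa)$ clusters and $\sum_C |C| = n$, so the total work sums to $\tilde{O}(2^{2^{m'}} n)$.

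The hard part will be justifying that restricting facility candidates to the coreset-augmented subset loses only a $(1+\varepsilon)$ factor rather than a larger constant, since optimal $k$-median centers in a discrete metric can lie far from any coreset point. The net-augmentation addresses this by snapping every optimal facility to its nearest augmented candidate and charging the movement cost via the triangle inequality; for doubling metrics one has to verify the net size stays $\tilde{O}(2^{m'})$, which follows from the packing property (\Cref{prop:packing}). A secondary issue is ensuring the coreset size depends only polynomially on $k$ and $\ddim$ (not exponentially on $\ddim$), so that $\log |S_C|$ remains $O(m')$ and the outer double-exponential does not blow up.
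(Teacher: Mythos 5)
Your overall strategy (metric decomposition into low-value instances, then a coreset-plus-enumeration $k$-median subroutine on each, then apply \Cref{corollary:apx_X_by_clusters} to stitch together) matches the paper's, and your time-budget accounting arrives at the right answer. However, there is a genuine gap in the $k$-median subroutine, and the paper's Appendix A resolves it via a different device than yours.

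The paper does not use a net-augmented coreset as the candidate facility set. Instead it defines, for each cluster $C$ (indeed for every cluster of $\decom$), an explicit candidate set $\cand_C := \Ball(C, \frac{100}{\varepsilon}\rang(C))\cap \Xori$, changes the critical test in \Cref{alg:bot_up_partition} from $\opt(C)\ge\kappa$ to $\opt^{\cand_C}(C)\ge\kappa$, and re-proves \Cref{lemma:opt_C_bounds,lemma:size_of_partition,corollary:apx_X_by_clusters} for $\opt^{\cand_C}$ (its Lemmas A.2--A.4). The $k$-median subroutine then enumerates partitions of a coreset and, for each part, scans \emph{all} of $\cand_C$ to find the best center -- no resolution loss. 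The aggregate cost of these scans is controlled by the packing bound $\sum_{C\in\partition}|\cand_C|\le \varepsilon^{-O(\ddim)}n\log\Delta$.

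Your proposal replaces this full scan by a net at a single resolution $\varepsilon\cdot\Diam(C)/k$ around coreset points, and this is where the argument breaks. The resolution you need is dictated not by $\Diam(C)$ but by the per-point connection cost of whichever part of the coreset partition a facility serves, and those parts can have wildly different diameters. Concretely, take $C$ split into two groups of equal size $n/2$, one with diameter $\delta_1$ and one with diameter $\delta_2\ll\delta_1$, separated by $D\gg\delta_1$, so $\Diam(C)\approx D$. For the tight group, $\med_1\approx \tfrac{n}{2}\delta_2$, but snapping its center to a net point at resolution $\varepsilon D/2$ incurs extra cost $\approx \tfrac{n}{2}\cdot \varepsilon D/2\gg \varepsilon\cdot\tfrac{n}{2}\delta_2$ whenever $\delta_2 \ll D$. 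So the snapping cost cannot be charged to $\varepsilon\cdot\med_k(C)$, and you lose a constant factor, not $1+\varepsilon$. A multi-scale net (one per distance scale) would fix this, but its size is then $\tilde O(|\cand_C|\log\Delta)$, at which point you are effectively back to enumerating $\cand_C$ -- so there is no saving, only added complexity.

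Two secondary points. First, you invoke \Cref{corollary:apx_X_by_clusters} with the unrestricted $\opt(C)$, but your algorithm computes optima restricted to a net-augmented coreset; without an analogue of the paper's Lemma A.4 (which shows $\opt(X)\ge(1-\varepsilon)\sum_C\opt^{\cand_C}(C)$), the stitching argument does not close. Second, your critical test $\opt(C)\ge\kappa$ in \Cref{alg:bot_up_partition} is over facilities in all of $\Xori$; the paper deliberately switches to $\opt^{\cand_C}(C)\ge\kappa$ so that this test (and the subsequent near-optimal subroutine) can be carried out over the bounded candidate set. You should adopt the same modification; once you do, restricting facility enumeration to $\cand_C$ is both correct (by Lemma A.4) and fast (by the packing bound on $\sum_C|\cand_C|$), and the net construction becomes unnecessary.
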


Our PTAS is an immediate corollary of our metric decomposition, as proposed in \cref{alg:ptas_discrete}.
Similar to \cref{alg:ptas}, we first compute the hierarchical decomposition $\decom$, and then refine it under the guidance of an $\alpha$-approximate solution $\cFa$ (Lines \ref{line:discrete_decom}-\ref{line:discrete_xchg}).
Next, we use a slightly modified version of \cref{alg:bot_up_partition} to compute the partition $\partition$ (Lines \ref{line:start_cand}-\ref{line:discrete_partition}).
The main difference is that we first compute a candidate facility set $\cand_C$ for each cluster $C \in \bigcup_i \decom_i$, which is defined as all ``nearby'' points around $C$, which serves as potential facilities of $C$ (\cref{line:cand}).
Then, we run $\algPartition(\Xori, \xchg, \kappa)$, with the critical condition in \cref{line:critical_condition} of \cref{alg:bot_up_partition} changed into $\opt^{\cand_C}(C^\xchg) \geq \kappa$, where $\opt^{\cand_C}(C)$ denotes the optimal value of $C$ with facilities restricted to $\cand_C$ (See \cref{sec:prelim}).
We note that, this modification changes the bound in \cref{lemma:opt_C_bounds} into $\kappa \leq \opt^{\cand_C}(C) \leq \tau$, and also changes \cref{lemma:size_of_partition,corollary:apx_X_by_clusters} slightly, which will be further discussed in \cref{append:ptas_discrete_correctness}.
Finally, in Lines \ref{line:enum_k_start}-\ref{line:enum_k_end}, we solve the sub-problem on each $C$ locally, with an ambient space (candidate facility set) $\cand_C$.
The local near optimal $k$-median centers will be added to the final solution (\cref{line:discrete_select_k,line:update_centers}).

\begin{algorithm}[!ht]
    \caption{\ptas\ for UFL on doubling metrics}
    \label{alg:ptas_discrete}
    \DontPrintSemicolon
\KwIn{A finite metric space $(\Xori, \dist)$ with doubling dimension
        $\ddim$, parameter $\varepsilon\in (0, 1)$, an $\alpha$-approximate UFL algorithm $\algapx$ with $\alpha = O(1)$, and a $k$-median oracle
        $\algmedian$.}
let $\cFret \gets \varnothing$ \;
let $c_1, c_2 > 0$ be sufficiently large constant,
        $\kappa \leftarrow c_2(\ddim/\varepsilon)^{c_1\cdot \ddim}$, 
        $\tau \leftarrow 2^{10 \ddim} \cdot \alpha \kappa$ \;
run \cref{alg:decompose_ori} on $\Xori$ to obtain a random hierarchical decomposition $\decom$ \label{line:discrete_decom} \;
run $\algapx$ on $\Xori$ to obtain an $\alpha$-approximate
        solution $\cFa\subseteq \Xori$ for UFL \label{line:discrete_approx} \;
run \cref{alg:xchg} on $\Xori$ to compute the modified decomposition $\xchg = \xchg(\Xori, \decom, \cFa, \varepsilon)$ \label{line:discrete_xchg} \;
\For{$i = 0, 1, \dots, \ell+1$}{ \label{line:start_cand}
            \For{$C \in \decom_i$}{
                compute the candidate facility set of $C$ as $\cand_C := \Ball(C, \frac{100}{\varepsilon} \cdot \rang(C))$ \label{line:cand} \;
            }
        } \label{line:end_cand}
run $\algPartition(\Xori, \xchg, \kappa)$ (\cref{alg:bot_up_partition}) to obtain a partition $\partition$ of $X$,
        such that $\forall C \in \partition$, $\kappa \leq \opt^{\cand_C}(C) \leq \tau$ \label{line:discrete_partition} \;
\For{$C\in\partition$}{ \label{line:discrete_cluster_start}
\For{$k=1, 2, \dots, \lfloor \tau\rfloor$}{ \label{line:enum_k_start}
                $(\cFret_C^k, v_C^k) \gets \algmedian_{k}(C, \cand_C, \varepsilon)$ \label{line:discrete_median} \;
                \tcc{$\cFret_C^k$ is a $(1+\varepsilon)$-approximate $k$-median solution, and $v_C^k$ is the cost.}
            } \label{line:enum_k_end}
$k^*\gets\argmin_{k}\{k+v_C^k\}$ \label{line:discrete_select_k} \;
$\cFret \gets \cFret \cup \cFret_C^{k^*}$ \label{line:update_centers} \;
        } \label{line:discrete_cluster_end}
\Return $\cFret$. \label{line:discrete_ret} \;
\end{algorithm}

\subsection{Correctness of \cref{alg:ptas_discrete}}
\label{append:ptas_discrete_correctness}

In this section, we prove the correctness of \cref{alg:ptas_discrete}.
We first state the refined versions of \cref{lemma:opt_C_bounds,lemma:size_of_partition,corollary:apx_X_by_clusters}.

\begin{lemma}[$\opt^{\cand_C}$ version of \cref{lemma:opt_C_bounds}]
    \label{lemma:discrete_opt_C_bounds}
    For every $C \in \partition$, $\kappa \leq \opt^{\cand_C}(C) \leq 2^{10\ddim} \kappa$. 
\end{lemma}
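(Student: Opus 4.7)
The proof mirrors that of \cref{lemma:opt_C_bounds}, with $\opt$ replaced by $\opt^{\cand_C}$ throughout. The only new ingredient is a \emph{nesting} property for the candidate sets: for every $\decom$-cluster $C'$ that appears in that proof as either a child of $C^\decom$ in $\decom$, or a level-$(\lev(C^\decom){-}1)$ sibling-level cluster contributing to $C^\xchg \setminus C^\decom$, we must have $\cand_{C'} \subseteq \cand_{C^\decom}$. This is a direct triangle-inequality calculation: for any $x \in \cand_{C'} = \Ball(C', \tfrac{100}{\varepsilon}\rang(C'))$,
\[
  \dist(x, C^\decom) \;\leq\; \tfrac{100}{\varepsilon}\rang(C') + \dist(C', C^\decom),
\]
and in both cases $\rang(C') \leq \rang(C^\decom)/2$ while $\dist(C', C^\decom) \leq (1+\varepsilon^2)\rang(C^\decom)$ (by the same analysis as in \cref{lemma:opt_C_bounds}, invoking \cref{lemma:xchg_local}), so $\dist(x, C^\decom) \leq \tfrac{100}{\varepsilon}\rang(C^\decom)$ once $\varepsilon$ is small, giving $x \in \cand_{C^\decom}$. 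This nesting is exactly why the slack factor $\tfrac{100}{\varepsilon}$ is baked into the definition of $\cand_C$ in \cref{line:cand} of \cref{alg:ptas_discrete}.

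Given nesting, the lower bound $\opt^{\cand_C}(C) \geq \kappa$ is immediate from the modified critical condition of \cref{alg:bot_up_partition} invoked in \cref{line:discrete_partition}, which places $C$ into $\partition$ only once $\opt^{\cand_C}(C^\xchg) \geq \kappa$ is verified; since $C \subseteq C^\xchg$ and both share the candidate set $\cand_{C^\decom}$, the bound transfers to $C$. For the upper bound I follow the two-part charging in \cref{lemma:opt_C_bounds} verbatim. First, the children $\Gamma_C$ of $C^\decom$ (at most $2^{5\ddim}$ by \cref{prop:packing}) each satisfy $\opt^{\cand_{\widehat{C}}}(\widehat{C}) < \kappa$ at the moment $C$ is added, and by nesting the union of their individual optimal solutions is $\cand_C$-feasible for $\bigcup_{\widehat{C}\in\Gamma_C}\widehat{C}$ with cost $< 2^{5\ddim}\kappa$. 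Second, the remainder $C^\xchg \setminus \bigcup \widehat{C}^\xchg \subseteq C^\xchg \setminus C^\decom$ is covered, via the packing argument of \cref{lemma:opt_C_bounds}, by at most $(32(2+\varepsilon^2))^\ddim$ sibling-level clusters $\widetilde{C}$; each such $\widetilde{C}$ was examined before $C$ and thus has $\opt^{\cand_{\widetilde{C}}}(\widetilde{C}) < \kappa$, and nesting again lifts these into a $\cand_C$-feasible solution of cost $<(32(2+\varepsilon^2))^\ddim\kappa$. Summing both contributions gives $\opt^{\cand_C}(C) \leq 2^{10\ddim}\kappa$.

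The main obstacle is simply verifying the nesting property uniformly across all relevant $C'$; once in place, no step of the original combinatorial argument uses anything about $\opt$ beyond sub-additivity of cost under unions of feasible solutions with a common facility set, so the substitution $\opt \to \opt^{\cand_C}$ is transparent and the bounds transfer without further modification.
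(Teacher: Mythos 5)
Your proof is correct and takes the same route as the paper's own (sketched) argument: identify the candidate-set nesting $\cand_{C'} \subseteq \cand_{C^\decom}$ for the relevant child and sibling-level clusters $C'$, use it to transfer $\opt^{\cand_{C'}}(\cdot)$-subadditivity to $\opt^{\cand_{C^\decom}}(\cdot)$, and then rerun the two-part charging of \cref{lemma:opt_C_bounds}. The one thing you spell out that the paper leaves as ``not hard to see'' is the triangle-inequality verification of nesting, which is indeed where the $100/\varepsilon$ slack in \cref{line:cand} gets spent (and which in fact holds for every $\varepsilon\in(0,1)$, not just small $\varepsilon$, since $\frac{50}{\varepsilon}+1+\varepsilon^2 \leq \frac{100}{\varepsilon}$ always).
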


\begin{proof}[Proof (sketch)]
    The lower bound follows from the construction.
For the upper bound, fix a level $i$ cluster $C^\xchg \in \xchg_i$.
It is not hard to see that, if a level $(i-1)$-level cluster $\widehat{C}^\xchg$ contributes to $\opt^{\cand_C}(C^\xchg)$, then $\cand_{\widehat{C}^\xchg} \subseteq \cand_{C^\xchg}$.
Hence, $\opt^{\cand_\cdot}(\cdot)$ still holds subadditivity on $\xchg$, namely,
    \[
        \opt^{\cand_C}(C) \leq \sum_{\substack{\widehat{C}^\xchg \in \xchg_{i-1} \\ \widehat{C}^\xchg \text{ contributes to } C^\xchg}} \opt^{\cand_{\widehat{C}}}(\widehat{C}).
    \]
Therefore, we can use the same argument in \cref{subsec:opt_C_bounds} to prove \cref{lemma:discrete_opt_C_bounds}.
\end{proof}

\begin{lemma}[$\opt^{\cand_C}$ version of \cref{lemma:size_of_partition}]
    \label{lemma:discrete_size_of_partition}
    \cref{lemma:size_of_partition} remains unchanged for the modified $\partition$.
\end{lemma}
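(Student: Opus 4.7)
The plan is to mimic the proof of \cref{lemma:size_of_partition} in \cref{subsec:bounding_partition}, replacing the unrestricted $\opt(\cdot)$ with the candidate-restricted $\opt^{\cand_\cdot}(\cdot)$ throughout. Since \cref{lemma:size_of_partition} was derived from a combination of an upper bound on $\EE{\sum_{C \in \partition}\opt(C)}$ (\cref{lemma:apx_opt_x_tech}) and a pointwise lower bound $\opt(C)\ge\kappa$ (\cref{lemma:opt_C_bounds}), I would parallel both ingredients in the discrete setting: the lower bound is already supplied by \cref{lemma:discrete_opt_C_bounds}, so only the upper bound on $\EE{\sum_{C \in \partition}\opt^{\cand_C}(C)}$ needs verification.

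The first step is to revisit the local solution $\csol_C=(\cFa\cap C)\cup\Net_C\cup\bigl(\bigcup_{\widehat{C}\in\hole_C}\Net_{\widehat{C}}\bigr)$ constructed in \cref{lemma:feasibel_sol_const_apx}, and check that $\csol_C\subseteq \cand_C$, so that $\csol_C$ remains a feasible solution in the discrete model and hence $\opt^{\cand_C}(C)\le\cost(C,\csol_C)$. The containment $\cFa\cap C\subseteq C\subseteq\cand_C$ is immediate. For $\Net_C\subseteq\Ball(C^\decom,\rang(C^\decom))\cap \Xori$, pick any $p\in C\subseteq C^\xchg$; by consistency (\cref{lemma:xchg_local}) we have $p\in\Ball(C^\decom,\varepsilon^2\rang(C^\decom))$, so every $y\in\Ball(C^\decom,\rang(C^\decom))$ satisfies $\dist(y,C)\le\dist(y,p)\le\rang(C^\decom)+\Diam(C^\decom)+\varepsilon^2\rang(C^\decom)=O(\rang(C^\decom))$, which is well within $\frac{100}{\varepsilon}\rang(C)$ provided we take the convention $\rang(C)=\rang(C^\decom)$ used in \cref{alg:ptas_discrete}. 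The same estimate applies to each $\Net_{\widehat{C}}$ for $\widehat{C}\in\hole_C$, since $\widehat{C}^\decom$ is a descendant of $C^\decom$ and thus $\Ball(\widehat{C}^\decom,\rang(\widehat{C}^\decom))\subseteq\Ball(C^\decom,\rang(C^\decom))$.

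Once $\csol_C\subseteq\cand_C$ is established, the rest of \cref{lemma:apx_opt_x_tech} carries over verbatim: \cref{lemma:feasibel_sol_const_apx} still gives $\dist(x,\csol_C)\le 2\dist(x,\Fa(x))$ for every $x\in C$ (its proof makes no use of $\opt$), and summing $\cost(C,\csol_C)$ over $C\in\partition$ and invoking the packing property (\cref{prop:packing}) together with \cref{lemma:size_of_holes} yields
\begin{align*}
\EE{\sum_{C\in\partition}\opt^{\cand_C}(C)}
\le 2\alpha\opt(\Xori)+\Bigl(\frac{\ddim}{\varepsilon}\Bigr)^{c_1\ddim}\cdot 2\EE{\norm{\partition}}.
\end{align*}
Combining this with $\opt^{\cand_C}(C)\ge\kappa$ from \cref{lemma:discrete_opt_C_bounds}, which implies $\EE{\sum_{C\in\partition}\opt^{\cand_C}(C)}\ge\kappa\cdot\EE{\norm{\partition}}$, and solving for $\EE{\norm{\partition}}$ gives exactly \eqref{eq:lemma:size_of_partition}.

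The only genuinely new step is the containment $\csol_C\subseteq\cand_C$, which is where the enlargement factor $\frac{100}{\varepsilon}$ in the definition of $\cand_C$ is used; this is the main obstacle in the sense that it requires the radius of $\cand_C$ to be sufficiently generous to absorb both the net radius $\rang(\cdot)$ and the $\varepsilon^2\rang(\cdot)$ slack between $C$ and $C^\decom$, but this is comfortably satisfied. Everything else — including the slightly modified $\algPartition$ critical condition using $\opt^{\cand_C}$ instead of $\opt$ — only affects which facilities are allowed and is transparent to the counting argument, so the bound on $\EE{\norm{\partition}}$ is identical.
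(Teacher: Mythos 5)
Your proof is correct and follows essentially the same route as the paper's own one-line argument, which simply notes that $\csol_C \subseteq \cand_C$ and carries over \eqref{eq:lemma:apx_opt_x_tech} with $\opt(\cdot)$ replaced by $\opt^{\cand_\cdot}(\cdot)$. You supply the details of the containment $\csol_C \subseteq \cand_C$ that the paper treats as obvious, and those details are sound. One minor remark: the containment is in fact even more direct than your calculation suggests, because $\cand_C$ in \cref{alg:ptas_discrete} is centered at the \emph{decomposition} cluster $C^\decom$ (it is computed before $\partition$ is built), so $\Net_C \subseteq \Ball(C^\decom,\rang(C^\decom)) \subseteq \cand_C$ and $\Net_{\widehat C}\subseteq\Ball(\widehat C^\decom,\rang(\widehat C^\decom))\subseteq\Ball(C^\decom,\rang(C^\decom))\subseteq\cand_C$ hold trivially without any triangle-inequality estimate involving the partition part $C$; the generous $\frac{100}{\varepsilon}$ factor is really there for the later \cref{lemma:discrete_apx_pi_X_by_clusters}, which uses the enlarged nets from \cref{lemma:feasibel_sol_tar_apx_cond}, not for this size bound. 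These are cosmetic points; the argument as written is valid.
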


\begin{proof}[Proof (sketch)]
    Recall the definition of $\csol_C$ in \eqref{eq:def_sol_S_C}.
    Clearly, $\csol_C \subseteq \cand_C$.
    Thus, we can replace $\opt(\cdot)$ in \eqref{eq:lemma:apx_opt_x_tech} with $\opt^{\cand_C}(\cdot)$, stated as follows.
    \begin{align*}
        \EE{\sum_{C\in \partition}\opt^{\cand_C}(C)}
        \leq 2 \alpha \opt(\Xori)
        + \left(\frac{\ddim}{\varepsilon}\right)^{c_1\cdot \ddim}
        \cdot 2\EE{\norm{\partition}}.
    \end{align*}
    The desired bound follows immediately from that $\opt^{\cand_C}(C) \geq \kappa$.
\end{proof}

\begin{lemma}[$\opt^{\cand_C}$ version of \cref{corollary:apx_X_by_clusters}]
    \label{lemma:discrete_apx_pi_X_by_clusters}
    For the modified partition $\partition$, 
    \begin{align*}
        \opt(\Xori)\geq \sum_{C\in\partition} \opt^{\cand_C}(C) -
        \varepsilon \cdot \opt(\Xori)
    \end{align*}
    holds with probability at least $1-\delta$.
\end{lemma}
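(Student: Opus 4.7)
The plan is to mirror the proof of \cref{corollary:apx_X_by_clusters}, which itself specializes \cref{lemma:apx_pi_X_by_clusters} by fixing $\pi$ to the identity and taking $F^*$ to be an optimal UFL solution on $X$. For each $C \in \partition$, construct a local feasible solution
\[
  F'_C := g^{-1}(C) \cup N_C \cup \Big(\bigcup_{\widehat{C} \in \hole_C} N_{\widehat{C}}\Big) \cup \big(\Fa \cap C \cap \bad^\varepsilon\big),
\]
the analog of $F'_{\pi(C)}$ in \eqref{eq:def_sol_Fnew_C}. Running the cost analysis of \cref{lemma:feasibel_sol_tar_apx_cond} with $\pi=$ identity, so that the events $\eveA_C$ and $\eveB_C$ hold trivially, and then repeating the Markov-based calculation in the proof of \cref{lemma:apx_pi_X_by_clusters}, yields $\sum_{C \in \partition} \cost(C, F'_C) \leq (1 + O(\varepsilon)) \opt(X)$ with probability at least $1 - \delta$ over the construction of $\partition$.

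The only new ingredient relative to \cref{corollary:apx_X_by_clusters} is to verify that $F'_C \subseteq \cand_C = \Ball(C^{\decom}, \tfrac{100}{\varepsilon}\, \rang(C^{\decom}))$ for every $C \in \partition$; this upgrades $\cost(C, F'_C) \geq \opt(C)$ to $\cost(C, F'_C) \geq \opt^{\cand_C}(C)$. Three of the four components are immediate: $N_C \subseteq \Ball(C^{\decom}, \rang(C^{\decom})) \cap X \subseteq \cand_C$; for $\widehat{C} \in \hole_C$, since $\widehat{C}^{\decom} \subseteq C^{\decom}$ and $\rang(\widehat{C}^{\decom}) \leq \rang(C^{\decom})$, we have $N_{\widehat{C}} \subseteq \Ball(\widehat{C}^{\decom}, O(\rang(\widehat{C}^{\decom})/\varepsilon)) \subseteq \cand_C$ (with $\pi$ = identity, the distortion factor $A$ from \cref{lemma:feasibel_sol_tar_apx_cond} drops out); and $\Fa \cap C \cap \bad^\varepsilon \subseteq C \subseteq \cand_C$. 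The fourth component $g^{-1}(C)$ requires more care, since a priori $F^*$ facilities may lie far from $C$.

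The main obstacle is thus to show $g^{-1}(C) \subseteq \cand_C$. For this, invoke the discrete analog of \cref{lemma:locality_tar_space}: there exists $f_0 \in F^*$ with $\dist(f_0, C) \leq O(\rang(C^{\decom}))$; otherwise, opening a single net point $v \in N_C \cap C$ as an extra facility would strictly decrease the UFL cost of $F^*$, contradicting optimality (the calculation is identical to the proof of \cref{lemma:locality_tar_space} with $\pi$ = identity, and uses only $\opt(C) \geq \kappa \gg 1$). Combined with the optimality of $F^*$, for any $f \in g^{-1}(C)$ we then get
\[
  \dist(f, g(f)) \leq \dist(f_0, g(f)) \leq \dist(f_0, C) + \Diam(C^{\decom}) = O(\rang(C^{\decom})),
\]
and since $g(f) \in C$ this yields $f \in \Ball(C, O(\rang(C^{\decom}))) \subseteq \cand_C$ for sufficiently small $\varepsilon$. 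With the containment $F'_C \subseteq \cand_C$ in hand, the remaining computations transfer verbatim from the proof of \cref{lemma:apx_pi_X_by_clusters}, completing the argument.
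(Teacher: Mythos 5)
Your proposal is correct and follows exactly the route of the paper's own proof sketch, which simply asserts "let $\pi$ be the identity mapping; then $\cFnew_{\pi(C)} \subseteq \cand_C$" and invokes the argument of \Cref{lemma:apx_pi_X_by_clusters}. Your main contribution is to actually verify this containment component-by-component, and the only genuinely non-trivial case -- $g^{-1}(C) \subseteq \cand_C$ -- is handled correctly via the identity-map version of \Cref{lemma:locality_tar_space}; the one small detail worth making explicit is that the locality argument's lower bound should be phrased as $\opt^{\cand_{\widehat{C}}}(\widehat{C}) \geq \kappa$ rather than $\opt(\widehat{C}) \geq \kappa$ (since that is what \Cref{lemma:discrete_opt_C_bounds} guarantees), which still works because the single net point $z$ used in that contradiction argument lies in $\cand_{\widehat{C}}$, so $\{z\}$ remains a feasible solution for the restricted problem.
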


\begin{proof}[Proof (sketch)]
    Recall the definition of $\cFnew_{\pi(C)}$ in \eqref{eq:def_sol_Fnew_C}.
    Let $\pi$ be the identity mapping. 
    Then, $\cFnew_{\pi(C)} \subseteq \cand_C$.
    Therefore, $\opt^{\cand_C}(C) \leq \cost(C, \cFnew_{\pi(C)})$.
    Following the same argument of \Cref{lemma:apx_pi_X_by_clusters} finishes the proof.
\end{proof}

Now, we are ready to prove the correctness of \cref{alg:ptas_discrete}.

\begin{proof}[Proof of \cref{corollary:ptas_discrete} (correctness)]
    For each $C \in \partition$, by \cref{lemma:discrete_opt_C_bounds}, 
    $\opt^{\cand_C}(C) \leq \tau$.
Thus, there exists $k \in \{1, 2, \dots, \lfloor \tau \rfloor\}$, such that $\opt^{\cand_C}(C) = \med_k^{\cand_C}(C)$.
By our selection of $k^*$ in \cref{line:discrete_select_k}, 
    \begin{align*}
        \cost(C, \cFret_C^{k^*}) 
        = k^* + v_C^{k^*}
        \leq k + v_C^k 
        \leq (1 + \varepsilon) (k + \med_k^{\cand_C}(C)) 
        = (1 + \varepsilon) \opt^{\cand_C}(C).
    \end{align*}
    Hence, 
    \begin{align*}
        \cost(\Xori, \cFret) 
        \leq \sum_{C \in \partition} \cost(C, \cFret_C^{k^*})
        \leq (1 + \varepsilon) \sum_{C \in \partition} \opt^{\cand_C}(C) 
        \leq (1 + \varepsilon)^2 \opt(\Xori),
    \end{align*}
    where the last inequality follows from \cref{lemma:discrete_apx_pi_X_by_clusters}.
    This completes the proof.
\end{proof}

\subsection{Time Complexity of \cref{alg:ptas_discrete}}
\label{append:ptas_discrete_time}

In this section, we analyze the time complexity of \cref{alg:ptas_discrete}.
We assume the access to a distance oracle.

\begin{proof}[Proof of \cref{corollary:ptas_discrete} (time complexity)]
    The preprocessing steps (Lines \ref{line:discrete_decom}-\ref{line:discrete_xchg}) can be computed within time $\tilde{O}(n \cdot 2^{O(\ddim)})$. We refer the readers to \cref{subsec:ptas_time_complexity} for more details.

    For each level $i \in [\ell]$, the candidate set of all $C \in \decom_i$ can be computed in time $\tilde{O}(n \cdot \varepsilon^{-O(\ddim)})$, using an ANN data structure (\cref{line:cand}).
Thus, computing $\bigcup_i \{\cand_C\}_{C \in \decom_i}$ has time complexity $\tilde{O}(n \cdot \varepsilon^{-O(\ddim)})$.
Next, the partition $\partition$ can be computed within time $\tilde{O}(n \cdot 2^{O(\ddim)})$ in \cref{line:discrete_partition}.

    On each $C \in \partition$, the oracle $\algmedian$ is invoked to compute the $k$-median solution for $1 \leq k \leq \tau$.
To compute the $k$-median, $\algmedian$ first constructs an $\varepsilon$-coreset $S \subseteq C$ with respect to ambient space $\cand_C$, running in time $\tilde{O}(|C| \cdot k)$.
By~\cite{Cohen-AddadSS21,Cohen-AddadLSS22}, the coreset has size at most $|S| = \tilde{O}(\varepsilon^{-2} k \cdot \ddim)$.
Then $\algmedian$ enumerates all $k^{\tilde{O}(\varepsilon^{-2} k \cdot \ddim)}$ possible partitions of $S$.
Let $\{X_C^1, X_C^2, \dots, X_C^k\}$ be any one of these partitions. 
For every $i \in [k]$, $\algmedian$ enumerates the candidate facility set $\cand_C$ to find the geometric center of $X_C^i$.
Therefore, the time complexity of computing $\cFret$ (Lines \ref{line:discrete_cluster_start}-\ref{line:discrete_cluster_end}) is 
    \begin{align*}
        \sum_{C \in \partition} \sum_{k=1}^{\lfloor \tau \rfloor} 
        \left(
            \norm{C} \cdot k 
            + k^{\tilde{O}(\varepsilon^{-2} k \cdot \ddim)} \sum_{i=1}^k \norm{X_C^i} \cdot \norm{\cand_C}
        \right) 
        \leq \tau^2 n 
        + \tau^{O(\varepsilon^{-2} \tau \cdot \ddim)} \sum_{C \in \partition} \norm{\cand_C}.
    \end{align*}
Recall that $\cand_C = \Ball(C, \frac{100}{\varepsilon} \rang(C))$.
For every fixed point $x \in \Xori$ and level $i \in [\ell]$, by packing property, there are at most $(1600/\varepsilon)^{\ddim}$ clusters $C \in \decom_i$ satisfying $x \in \cand_C$.
Thus, $\sum_{C \in \partition} \norm{\cand_C} \leq \varepsilon^{-O(\ddim)} n \log \Delta$.
    
    This completes the proof.
\end{proof}

\section{Missing Proofs in \cref{sec:partition}}

\subsection{Proof of \cref{lemma:piX_leq_X_value_general}: An Upper Bound of $\opt(\pi(\Xori))$}
\label{append:proof_ub_pi_X}

\lemmapiXleqXvaluegeneral*

The high-level idea is straightforward.
Since $\pi(\cFopt)$ is naturally a feasible solution
for $\pi(\Xori)$, $\opt(\pi(\Xori))$ can be upper bounded by 
$\cost(\pi(\Xori), \pi(\cFopt))$, where $\cFopt$ is the optimal 
solution of $\Xori$.

\begin{proof}[Proof of \cref{lemma:piX_leq_X_value_general}]
    Denote the optimal solution of $\Xori$ by 
    $\cFopt \subset \RR^\oridim$.
    For every $x\in \Xori$, denote the nearest facility of $x$ by 
    $\Fopt(x)$. Then 
    $\opt(\Xori)=\sum_{x\in\Xori}\Norm{x-\Fopt(x)}$.
    By \cref{prop:random_proj}, for every $x\in \Xori$, 
    \begin{align*}
        \EE{\max\Set{0, \frac{\Norm{\pi(x)-\pi(\Fopt(x))}}{\Norm{x-\Fopt(x)}}-(1+t)}}
        \leq \frac{1}{\tardim t}e^{-t^2 \tardim/2}.
    \end{align*}
    Thus \begin{align*}
        \EE{\max\Set{0, \Norm{\pi(x)
        -\pi(\Fopt(x))}-(1+t)\Norm{x-\Fopt(x)}}}
        \leq  \frac{1}{\tardim t}e^{-t^2 \tardim/2}
        \Norm{x-\Fopt(x)}.
    \end{align*}
    Note that $\pi(\cFopt) \subset \RR^{\tardim}$ 
    is a solution for $\pi(\Xori)$. Hence
    \begin{align*}
        \opt(\pi(\Xori))-(1+t)\opt(\Xori)
        &\leq \cost(\pi(\Xori), \pi(\cFopt))-(1+t)\opt(\Xori)\\
        &\leq \sum_{x\in \Xori} 
        (\Norm{\pi(x)-\pi(\Fopt(x))}-(1+t)\Norm{x-\Fopt(x)})\\
        &\leq \sum_{x\in \Xori} 
        \max\Set{0, \Norm{\pi(x)-\pi(\Fopt(x))}-(1+t)
        \Norm{x-\Fopt(x)}}.
    \end{align*}
    Taking expectation, we have 
    \begin{align*}
        &\quad \EE{\max\Set{0, \opt(\pi(\Xori))-(1+t)\opt(\Xori)}}\\
        &\leq \sum_{x\in \Xori}
        \EE{\max\Set{0, \Norm{\pi(x)-\pi(\Fopt(x))}
        -(1+t)\Norm{x-\Fopt(x)}}}\\
        &\leq \frac{1}{\tardim t}e^{-t^2 \tardim/2}
        \sum_{x\in \Xori}\Norm{x-\Fopt(x)}\\
        &= \frac{1}{\tardim t}e^{-t^2 \tardim/2} \cdot\opt(\Xori).
    \end{align*}

    Finally,
    \begin{align*}
        \PR{\opt(\pi(\Xori))\geq (1+t)\opt(\Xori)}
        &\leq \PR{\max\{0, \opt(\pi(\Xori))-(1+t/2)\opt(\Xori)\}
        \geq \frac{t}{2}\opt(\Xori)}\\
        &\leq \frac{4}{t^2\tardim} e^{-t^2\tardim/8},
    \end{align*}
    where the last inequality follows from Markov's inequality.
    This completes the proof.
\end{proof}

\subsection{Proof of \cref{lemma:jl_small_opt_contraction_pr}: Contraction on A Single Cluster}
\label{append:proof_small_opt}

\lemmajlsmalloptcontractionpr*

We start with stating a variant of Theorem 3.4 in~\cite{MakarychevMR19}.

\begin{lemma}[{A variant of~\cite[Theorem 3.4]{MakarychevMR19}}]
    \label{lemma:MMR_3.4}
    Let $X\subset \RR^\oridim$ be a finite point set and $k\in\NN$.
    Let $\pi\colon\RR^\oridim \to\RR^\tardim$ be a random linear map.
    Let $C$ be a random subset of $X$ (which may depend on $\pi$).
    Then there exists a constant $c>0$, such that for every 
    $\varepsilon\in (0, 1)$, 
    if $\tardim\geq c\cdot \varepsilon^{-2}\log(1/\varepsilon)$, 
    then for every $\beta > 0$,
    \begin{align*}
&\med_1(C)\leq (1+\varepsilon)\med_1(\pi(C))+\beta\cdot \med_k(X),
    \end{align*}
    holds with probability at least 
    $1-e^{-\Omega(\varepsilon^2\tardim)}O(k^2 + k/\beta)$, 
\end{lemma}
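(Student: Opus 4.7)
The plan is to adapt the proof of Theorem~3.4 in \cite{MakarychevMR19}, which establishes the same inequality for a \emph{deterministic} subset $S\subseteq X$, to the more general setting here where $C$ may depend on the random projection $\pi$. The key insight is to restructure the argument so that every application of JL-type concentration is to a $\pi$-independent collection of auxiliary objects, after which the bound holds uniformly for every realization of $C$, at the cost of an additive union-bound factor that is captured by the $O(k^2+k/\beta)$ term.

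First, I would fix an optimal $k$-median solution $F^* = \{f_1^*, \ldots, f_k^*\}$ for $X$, with induced partition $X = X_1 \sqcup \cdots \sqcup X_k$ where $X_i$ consists of points whose nearest center is $f_i^*$, and write $\tilde{x} := f_i^*$ for $x \in X_i$, so that $\med_k(X) = \sum_{x \in X}\|x - \tilde{x}\|$. For any candidate center $c \in \RR^\oridim$, the triangle inequality gives
\[
\med_1(C) \;\leq\; \sum_{x \in C} \|x - c\| \;\leq\; \med_k(X) \;+\; \sum_{i=1}^k |C \cap X_i| \cdot \|f_i^* - c\|.
\]
Symmetrically, if $c^*_\pi$ is an optimal $1$-median center for $\pi(C)$ in $\RR^\tardim$, then after controlling the aggregate distortion $\sum_{x \in C}\|\pi(x) - \pi(\tilde{x})\|$ (a scalar whose expectation equals $\sum_{x \in C}\|x - \tilde{x}\| \leq \med_k(X)$, and which concentrates by \Cref{prop:random_proj}), one obtains
\[
\sum_{i=1}^k |C \cap X_i| \cdot \|\pi(f_i^*) - c^*_\pi\| \;\leq\; \med_1(\pi(C)) \;+\; O(\beta)\cdot \med_k(X).
\]
It therefore suffices to exhibit $c \in \RR^\oridim$ with $\|f_i^* - c\| \leq (1+\epsilon)\,\|\pi(f_i^*) - c^*_\pi\| + O(\beta/k)\cdot \med_k(X)$ for every $i\leq k$, since combining the three displays yields the claim (after absorbing constants into $\beta$).

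Second, since both sides of the remaining inequality involve only the $k$ fixed points $\{f_i^*\}$, this reduces to a low-dimensional terminal-embedding task. I would build a net of candidate centers of size $O(1/\beta)$, parameterized by quantized values of the target distance profile $(\|\pi(f_i^*)-c^*_\pi\|)_{i\leq k}$ at granularity $\Theta(\beta\,\med_k(X)/k)$, and verify that some net point $c$ satisfies the required system of inequalities. For each fixed net point $c$, the JL event preserving the $k$ scalar distances $\{\|f_i^* - c\|\}_{i\leq k}$ within $(1\pm\epsilon)$ holds with probability at least $1 - k\cdot e^{-\Omega(\varepsilon^2 \tardim)}$. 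Combining with the $\binom{k}{2}$ pairwise-distance preservation events among $\{f_i^*\}$ (needed to argue that the net built in the projected space has a valid preimage in the original space), a union bound over both yields failure probability $O(k^2+k/\beta)\cdot e^{-\Omega(\varepsilon^2 \tardim)}$, matching the advertised bound.

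The main obstacle is handling the dependence of $C$ on $\pi$, which precludes a direct application of dimension-reduction bounds to $\med_1(\pi(C))$. The crucial observation that unlocks the argument is that after the reduction through $F^*$, the expression $\sum_i |C \cap X_i|\cdot \|f_i^* - c\|$ depends on $C$ only through the $k$ counts $|C \cap X_i|$, and every JL event invoked above involves only the $\pi$-independent set $\{f_i^*\}$ together with the pre-computed net of candidate centers. Consequently, once these events are conditioned on, the inequality holds \emph{for every realization of $C$ simultaneously}, bypassing any need to union-bound over exponentially many subsets of $X$. I expect the most delicate technical point to be quantifying the accuracy required of the net so that the additive $O(\beta/k)\cdot \med_k(X)$ error per coordinate aggregates to only $O(\beta)\cdot \med_k(X)$ after weighting by $|C\cap X_i|$, which uses the trivial bound $\sum_i |C\cap X_i|\leq |X|$ only in a rescaled form.
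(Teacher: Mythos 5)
Your proposed reduction replaces \emph{every} point $x\in C$ by its $k$-median center $\tilde x$, and this is exactly where the argument breaks. The first triangle-inequality step
\[
\med_1(C)\;\le\;\sum_{x\in C}\|x-\tilde x\|+\sum_{i}|C\cap X_i|\cdot\|f_i^*-c\|
\]
introduces an additive residual $\sum_{x\in C}\|x-\tilde x\|$, which can be as large as $\med_k(X)$ in general, not $\beta\cdot\med_k(X)$. The symmetric step in the target space runs into the same problem: you observe yourself that the residual $\sum_{x\in C}\|\pi(x)-\pi(\tilde x)\|$ concentrates around a quantity that is at most $\med_k(X)$, yet you then assert it is at most $O(\beta)\cdot\med_k(X)$ --- these two statements are inconsistent for small $\beta$, and the claimed display
\[
\sum_{i}|C\cap X_i|\cdot\|\pi(f_i^*)-c_\pi^*\|\;\le\;\med_1(\pi(C))+O(\beta)\cdot\med_k(X)
\]
is simply false. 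Moreover, these residuals have the \emph{same} sign in the bound on $\med_1(C)-(1+\epsilon)\med_1(\pi(C))$, so they do not cancel: even if one finds a perfect pre-image candidate center $c$, the error term $\sum_{x\in C}\|x-\tilde x\|+(1+\epsilon)\sum_{x\in C}\|\pi(x)-\pi(\tilde x)\|=\Theta(\med_k(X))$ remains, which is far too large. (There is also a secondary issue: a net over quantized $k$-dimensional distance profiles at granularity $\Theta(\beta\,\med_k(X)/k)$ has size exponential in $k$, not $O(1/\beta)$, so your probability bookkeeping does not match the claimed $O(k^2+k/\beta)e^{-\Omega(\epsilon^2 m)}$.)

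The missing idea, which is what \cite{MakarychevMR19} actually uses (and what this paper re-applies in the proof of \Cref{lemma:ptas_exp_single_cluster} via \Cref{lemma:ptas_MMR_3.2,lemma:ptas_MMR_3.3}), is to \emph{not} replace all points by $\tilde x$ but only the ``badly contracted'' ones. One passes to the $1/(1+\epsilon)$-contraction graph on $X$, where each edge is present with probability $e^{-\Omega(\epsilon^2 m)}$, and applies the everywhere-sparse-subgraph extraction (MMR Theorem 3.2) to the random subset $C$ to find $C'\subseteq C$ with $\Pr[x\in C\setminus C']\le\theta$. Only the points of $C\setminus C'$ are moved to $\tilde x$; since each is moved with probability at most $\theta$, the expected residual is $O(\theta)\cdot\med_k(X)$, which can be made $O(\beta)\cdot\med_k(X)$ by choosing $\theta$ appropriately. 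Theorem 3.3 (everywhere-sparse expansion graphs preserve $1$-median cost) then handles the modified multiset. Your observation that the JL events should involve only $\pi$-independent data is correct and is indeed the key to handling the dependence of $C$ on $\pi$; but implementing it requires the sparse-subgraph machinery to keep the residual small, not the wholesale replacement by $F^*$.
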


\begin{lemma}
    \label{lemma:prmeddistortion}
    Let $X\subset \RR^\oridim$ be a finite point set and $k\in\NN$.
    Let $\pi\colon\RR^\oridim \to\RR^\tardim$ be a random linear map.
    Then there exists a constant $c>0$, such that for every 
    $\varepsilon\in (0, 1)$, 
    if $\tardim\geq c\cdot \varepsilon^{-2}\log(1/\varepsilon)$, 
    then
    \begin{align}
&\Pr\left[\med_k(\pi(X))\leq\frac{1}{1+\varepsilon}\med_k(X)\right]
        \leq k^2\cdot e^{-\Omega(\varepsilon^2 \tardim)}
        \label{for:prmedpileqmedx}
    \end{align}
\end{lemma}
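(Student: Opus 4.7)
The plan is to decompose $\med_k(\pi(X))$ into the sum of $\med_1$ contributions from the $k$ clusters induced by an optimal $k$-median solution for $\pi(X)$, apply \cref{lemma:MMR_3.4} to each such cluster with a carefully chosen $\beta$, and then combine by a union bound. The choice of $\beta$ will be made so that the $\beta \cdot \med_k(X)$ additive error summed over the $k$ clusters contributes only an $\varepsilon \med_k(X)$ term, which can then be absorbed into a multiplicative $(1+O(\varepsilon))$ factor.

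Concretely, I would let $\Fpiopt\subset\RR^\tardim$ be an optimal $k$-median solution for $\pi(X)$ and let $C_1,\dots,C_k\subseteq X$ be the preimages (under $\pi$) of the clusters that $\Fpiopt$ induces on $\pi(X)$; note that each $C_i$ is a random subset of $X$ whose randomness comes from $\pi$, which is exactly the generality allowed by \cref{lemma:MMR_3.4}. By construction,
\[ \med_k(\pi(X)) = \sum_{i=1}^k \med_1(\pi(C_i)), \qquad \med_k(X) \leq \sum_{i=1}^k \med_1(C_i). \]
Next, I would invoke \cref{lemma:MMR_3.4} on each $C_i$ with $\beta := \varepsilon/k$. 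Each invocation yields
\[ \med_1(C_i) \leq (1+\varepsilon)\med_1(\pi(C_i)) + \tfrac{\varepsilon}{k}\med_k(X) \]
with probability at least $1 - e^{-\Omega(\varepsilon^2\tardim)}\cdot O(k^2 + k/\beta) = 1 - e^{-\Omega(\varepsilon^2\tardim)}\cdot O(k^2/\varepsilon)$. A union bound over $i\in[k]$ gives all $k$ inequalities simultaneously with probability $\geq 1 - e^{-\Omega(\varepsilon^2\tardim)}\cdot O(k^3/\varepsilon)$. Summing and using the two displayed identities yields
\[ \med_k(X) \leq (1+\varepsilon)\med_k(\pi(X)) + \varepsilon \cdot \med_k(X), \]
which rearranges to $\med_k(\pi(X)) \geq \frac{1-\varepsilon}{1+\varepsilon}\med_k(X) \geq (1 - O(\varepsilon))\med_k(X)$. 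Rescaling $\varepsilon$ by a universal constant yields the target bound $\med_k(\pi(X)) \geq \frac{1}{1+\varepsilon}\med_k(X)$.

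For the probability, the factor $k/\varepsilon$ in $O(k^3/\varepsilon)$ can be absorbed into the exponential $e^{-\Omega(\varepsilon^2\tardim)}$ (shrinking the implicit constant slightly); the stated bound $k^2\cdot e^{-\Omega(\varepsilon^2\tardim)}$ is trivially true when $\varepsilon^2\tardim = O(\log(k/\varepsilon))$, so we may assume $\varepsilon^2\tardim \gtrsim \log(k/\varepsilon)$, in which case the absorption is valid. The only real subtlety, rather than an obstacle, is that the partition $\{C_i\}$ is defined through $\Fpiopt$ and hence depends on $\pi$; this is exactly why the version of \cref{lemma:MMR_3.4} we are using permits $C$ to be a $\pi$-dependent random subset. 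The fact that the union bound is over the fixed index set $[k]$ (rather than over all $2^{|X|}$ possible subsets) is what keeps the failure probability manageable, since the ``identity'' of each cluster is determined implicitly by the optimizer.
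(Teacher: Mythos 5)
Your proof is correct and follows essentially the same route as the paper: take the clustering $\{C_i\}$ induced on $X$ by an optimal $k$-median solution for $\pi(X)$ (a $\pi$-dependent random partition, which is exactly the generality that \cref{lemma:MMR_3.4} permits), apply \cref{lemma:MMR_3.4} to each $C_i$ with $\beta=\Theta(\varepsilon/k)$, and union bound over $i\in[k]$. The paper phrases the argument in contrapositive form (the bad event forces the existence of a bad index $i$, then bounds that) rather than your forward direction, and it is equally loose about the gap between the $O(k^3/\varepsilon)\cdot e^{-\Omega(\varepsilon^2 m)}$ that the union bound nominally gives and the stated $k^2\cdot e^{-\Omega(\varepsilon^2 m)}$; both swallow the surplus into the implicit constant in the exponent in the same way, so the two proofs match.
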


\begin{proof}
    If $\med_k(X)\geq(1+\varepsilon)\med_k(\pi(X))$, then 
    the optimal clustering for $\pi(X)$ satisfies 
    $\cost_k(\cC)\geq (1+\varepsilon) \cost_k(\pi(\cC))$.
    Denote $\cC=\{C_1, C_2, \dots, C_k\}$. 
    Then there exists $i\in [k]$ such that 
    $\med_1(C_i)\geq (1+\varepsilon/2)\med_1(\pi(C_i))
    +\varepsilon/(4k)\med_k(X)$. Since $C_i$ is a random subset 
    of $X$, by \cref{lemma:MMR_3.4}, 
    \begin{align*}
        &\Pr_{\pi,C}\left[\med_1(C_i)\geq(1+\varepsilon/2)\med_1(\pi(C_i))+
        \frac{\varepsilon}{4k}\med_k(X)\right]\\
        &\leq e^{-\Omega(\varepsilon^2 \tardim)}
        \cdot O\left(k^2 + \frac{k}{\varepsilon/(4k)}\right)\\
        &\leq O(k^2)\cdot e^{-\Omega(\varepsilon^2 \tardim)}
    \end{align*}
\end{proof}

Now we are ready to prove \cref{lemma:jl_small_opt_contraction_pr}.

\begin{proof}[Proof of \cref{lemma:jl_small_opt_contraction_pr}]
    Suppose $\OPT(\pi(C))\leq \OPT(C)/(1+\varepsilon) < \tau$, 
    then there exists some $k\in [\tau]$, such that
    \begin{align*}
        k+\med_k(\pi(C))
        =\OPT(\pi(C))
        \leq \frac{1}{1+\varepsilon}\OPT(C)
        \leq \frac{1}{1+\varepsilon}(k+\med_k(C))
        \leq k+\frac{1}{1+\varepsilon}\med_k(C)
    \end{align*}
    By \cref{lemma:prmeddistortion}, this happens
    with probability at most $O(k^2)\cdot e^{-\Omega(\varepsilon^2\tardim)}$.
    By union bound,
    \begin{align*}
        \Pr\left[\OPT(\pi(C))\leq \frac{1}{1+\varepsilon}\OPT(C)\right]
        \leq O(\tau^2)\cdot e^{-\Omega(\varepsilon^2\tardim)}\cdot \tau
        \leq O(\tau^3)\cdot e^{-\Omega(\varepsilon^2\tardim)}.
    \end{align*}
\end{proof}

\subsection{Proof of \cref{lemma:moving_dist_pi}: Cost of Moving Bad Points}
\label{append:moving_cost}

\lemmamovingdistpi*

\begin{proof}[Proof of \cref{lemma:moving_dist_pi}]
    Calculate the expectation as follows:
    \begin{align*}
        &\quad \ee_{\pi, \decom}\left[
            \sum_{x\in \Xori} \Norm{\pi(x)-\mov^\varepsilon(\pi(x))}
        \right]\\
        &=\ee_{\pi, \decom}\left[\sum_{x\in\badpi^\varepsilon}
        \Norm{\pi(x)-\pi\circ \sol_C(x)}\right]\\
        &=\ee_\decom\left[
            \sum_{x\in \Xori}\ee_\pi\Big[
                \indicator(x\in \badpi^\varepsilon)
                \Norm{\pi(x)-\pi\circ \sol_C(x)}
            \mid \decom\Big]
        \right]
    \end{align*}
    We note that conditioning on $\decom$, $\csol$ is independent
    of $\pi$. By \cref{prop:random_proj}, 
    \begin{align*}
        \ee_{\pi}\left[
            \indicator(x\in \badpi^\varepsilon)
            \max\left\{0, \frac{\Norm{\pi(x)-\pi\circ \sol_C(x)}}
            {\Norm{x-\sol_C(x)}}-2\right\}
        \mid \decom \right]
        \leq e^{-\tardim/2}.
    \end{align*}
    Thus,
    \begin{align*}
        &\quad \ee_\pi\Big[
                \indicator(x\in \badpi^\varepsilon)
                \Norm{\pi(x)-\pi\circ \sol_C(x)}
            \mid \decom\Big]\\
        &\leq 2\Norm{x- \sol_C(x)}\cdot \ee_\pi\Big[
            \indicator(x\in \badpi^\varepsilon)
            \mid \decom\Big]
        + e^{-\Omega(\tardim)} \Norm{x-\sol_C(x)}.
    \end{align*}

    Hence, the total moving distance can be further upper bounded by
    \begin{align*}
        &\quad 2\ee_{\pi, \decom}\left[
            \sum_{x\in \Xori} \indicator(x\in \badpi^\varepsilon)
            \cdot \Norm{x- \sol_C(x)}
        \right]
        +e^{-\Omega(\tardim)}\ee_{\decom}\left[
            \sum_{x\in \Xori} \Norm{x- \sol_C(x)}
        \right]\\
        &\leq 4\ee_{\pi, \decom}\left[
            \sum_{x\in \Xori} \indicator(x\in \badpi^\varepsilon)
            \cdot \Norm{x- \Fa(x)}
        \right]
        +2e^{-\Omega(\tardim)} \sum_{x\in \Xori} \Norm{x- \Fa(x)} 
        &\text{(\cref{lemma:feasibel_sol_const_apx})}\\
        &=4\sum_{x\in \Xori}\Norm{x- \Fa(x)}\cdot 
        \Pr_{\pi, \decom}[x \in \badpi^\varepsilon]
        +2e^{-\Omega(\tardim)} \sum_{x\in \Xori} \Norm{x- \Fa(x)}\\
        &\leq (O(\varepsilon^2) + 2e^{-\Omega(\tardim)})\cdot \alpha\opt(\Xori)\\
        &\leq O(\alpha\varepsilon^2) \opt(\Xori).
    \end{align*}
\end{proof}

\subsection{Proof of \cref{lemma:ball_expansion}}
\label{append:ball_expansion}

\lemmaballexpansion*

\begin{proof}[Proof of \cref{lemma:ball_expansion}]
    For $i\in \NN$, let $\Net_i$ be a $2^{-i}$-net on $X$. For 
    every $x\in X$, denote by $u_i(x)\in \Net_i$ the net point 
    satisfying $\Norm{x-u_i(x)}\leq 2^{-i}$. Assume without loss 
    of generality that $u_0(x)=\bm{0}$. Note that 
    \begin{align*}
        \Norm{\pi(x)}\leq \sum_{i=1}^\infty \Norm{\pi(u_i(x))-\pi(u_{i+1}(x))}.
    \end{align*}
    Hence \begin{align*}
        \PR{\exists x\in X, \Norm{\pi(x)}\geq t}
        \leq \PR{\exists i\in \NN, \Norm{\pi(u_i(x))-\pi(u_{i+1}(x))}
        \geq \frac{1}{3}\left(\frac{2}{3}\right)^i\cdot t}.
    \end{align*}
    On the other hand, by triangle inequality, 
    $\Norm{u_i(x)-u_{i+1}(x)}\leq \Norm{u_i(x)-x}+\Norm{x-u_{i+1}(x)}
    \leq 2^{-i} + 2^{-i-1}\leq 3\cdot 2^{-i-1}$. Thus above can be
    further upper bounded by 
    \begin{align*}
        &\quad \PR{\exists i\in \NN, \exists u_i \in \Net_i, 
        \exists u_{i+1}\in \Net_{i+1}, \Norm{\pi(u_i)-\pi(u_{i+1})}
        \geq \frac{2}{9}\left(\frac{4}{3}\right)^i t\cdot \Norm{u_i-u_{i+1}}}\\
        &\leq \sum_{i=0}^\infty \norm{\Net_i}\cdot \norm{\Net_{i+1}}
        \cdot \exp\left(-\left(\frac{2^{2i+1}t}{3^{i+2}}-1\right)^2 \tardim\right)\\
        &\leq \sum_{i=0}^\infty 2^{(i+1)\ddim}\cdot 2^{(i+2)\ddim}
        \cdot \exp\left(-\left(\frac{2^{2i+1}t}{3^{i+2}}-1\right)^2 \tardim\right).
    \end{align*}
    When $t$ is sufficiently large and $\tardim=\Omega(\ddim)$, 
    above can be upper bounded by
    \begin{align*}
        \sum_{i=0}^\infty 2^{(i+1)\ddim}\cdot 2^{(i+2)\ddim}
        \cdot e^{-c^\prime (i+1) \cdot t^2 \tardim}
        \leq e^{-c_2 \cdot t^2\tardim}.
    \end{align*}
\end{proof}

\subsection{Proof of \cref{lemma:ball_contraction}}
\label{append:contraction_balls}

\lemmaballcontraction*

We use similar techniques as in~\cite{IndykN07,NarayananSIZ21}.

\begin{proof}[Proof of \cref{lemma:ball_contraction}]
    For $i\in \NN$, define the $i$-th ``ring'' as 
    $R_i:=X\cap (\Ball(\bm{0}, (L+i+1)r)\setminus \Ball(\bm{0}, (L+i)r))$.
    Let $\Net_i$ be an $(r/2)$-net on $R_i$. Let $T_0$ be a sufficiently
    large constant which satisfies the condition in 
    \cref{lemma:ball_expansion} (i.e. $T_0 > T$).
    For $i\in \NN$ and $x\in R_i$, if $\Norm{\pi(x)}\leq r$, 
    then the closest net point $u \in \Net_i$ must satisfy at least one of 
    the following conditions:
    \begin{itemize}
        \item $\Norm{\pi(x)-\pi(u)}\geq \frac{\sqrt{i}+T_0}{2} r$;
        \item $\Norm{\pi(u)}\leq \frac{\sqrt{i}+T_0+2}{2}r$.
    \end{itemize}

    Then one can upper bound the probability as follows:
    \begin{align*}
        &\quad \Pr\left[\exists x\in X, \Norm{x} > L\cdot r \text{ and }
        \Norm{\pi(x)} \leq r\right]\\
        &\leq \sum_{i=0}^\infty \PR{\exists x\in R_i, \Norm{\pi(x)}\leq r}\\
        &\leq \sum_{i=0}^\infty \PR{\exists u\in \Net_i, \exists x\in X\cap \Ball(u, r/2), \Norm{\pi(x)-\pi(u)}\geq \frac{\sqrt{i}+T_0}{2} r}\\
        &\quad +\sum_{i=0}^\infty \PR{\exists u\in \Net_i, \Norm{\pi(u)}\leq \frac{\sqrt{i}+T_0+2}{2} r}.
    \end{align*}

    By \cref{lemma:ball_expansion} and a union bound over 
    $u\in \Net_i$, the first summation can be upper bounded by 
    \begin{align}
        &\quad \sum_{i=0}^\infty \norm{\Net_i} \cdot 
        \PR{\exists x\in X\cap \Ball(u, r/2), \Norm{\pi(x)-\pi(u)}\geq \frac{\sqrt{i}+T_0}{2} r}\notag\\
        &\leq \sum_{i=0}^\infty (2L+2i+2)^\ddim \cdot 
        e^{-c_2 \cdot (\sqrt{i}+T_0)^2\tardim}
        \leq \sum_{i=0}^\infty e^{-c_2 \cdot (i+1)\tardim/2}
        \leq e^{-c_3 \tardim}.
        \label{eq:ball_first_summation}
    \end{align}
    For each net point $u\in \Net_i$, 
    $\Norm{u}\geq \Norm{x}-\Norm{u-x} \geq (L+i-1/2)r$. 
    Thus by \cref{prop:random_proj} and a union bound over 
    $u\in \Net_i$, the second summation can be upper bounded by 
    \begin{align}
        &\quad \sum_{i=0}^\infty \norm{\Net_i}\cdot 
        \PR{\Norm{\pi(u)}\leq \frac{\sqrt{i}+T_0+2}{2i+2L-1} \Norm{u}}
        \leq \sum_{i=0}^\infty \norm{\Net_i}\cdot 
        \PR{\Norm{\pi(u)}\leq \frac{1}{\sqrt{i}+8} \Norm{u}}\notag\\
        &\leq \sum_{i=0}^\infty (2L+2i+2)^\ddim \cdot 
        \left(\frac{3}{\sqrt{i}+8}\right)^{\tardim}
        \leq \sum_{i=0}^\infty
        \left(\frac{1}{i+2}\right)^{c_4\cdot \tardim}
        \leq \int_2^{+\infty} \frac{1}{x^{c_4\cdot m}}\, \mathrm{d} x
        \leq e^{-c_5\cdot \tardim}
        \label{eq:ball_second_summation}
    \end{align}
    The first inequality follows that when we choose a constant
    $L\gg T_0$, it holds
    $\frac{\sqrt{i}+T_0+2}{2i+2L-1} \leq \frac{1}{\sqrt{i}+8}$.

    Combining \eqref{eq:ball_first_summation} with 
    \eqref{eq:ball_second_summation} completes the proof.
\end{proof}

\section{Missing Proofs in \cref{sec:ptas}}
\subsection{Proof of \cref{lemma:ptas_exp_single_cluster}: Solution Preserved on A Single Cluster}
\label{append:ptas_exp_single_cluster}

\lemmaptasexpsinglecluster*

In this section, we prove \cref{lemma:ptas_exp_single_cluster}.
Technically, we need some results in~\cite{MakarychevMR19}
for our following proofs.
We first state Definition 3.1, Theorems 3.2 and 3.3 in~\cite{MakarychevMR19}.

\begin{definition}[Everywhere sparse graphs~\cite{MakarychevMR19}]
    \label{def:ptas_everywhere_sparse}
    A graph $H=(V, E)$ is $\theta$-everywhere sparse if 
    $\deg(u)\leq \theta\norm{V}$ for every $u\in V$.
\end{definition}

\begin{definition}[Distortion graphs]
    \label{def:ptas_distortion_graph}
    Let $X\subset \RR^\tardim$ be a finite multiset of points and 
    $\varphi\colon \RR^\tardim\to \RR^{d^\prime}$ be a map.
    For $A>1$, the $A$-expansion graph of $X$ with respect to
    $\varphi$ is a graph $G=(X, E)$, such that for every $x, y\in X$, 
    $(x, y)\in E$ iff $\Norm{\varphi(x)-\varphi(y)}\geq A\cdot \Norm{x-y}$.

    For $0\leq B\leq 1$, the $B$-contraction graph of $X$ with respect to
    $\varphi$ is a graph $G=(X, E)$, such that for every $x, y\in X$, 
    $(x, y)\in E$ iff $\Norm{\varphi(x)-\varphi(y)}\leq B\cdot \Norm{x-y}$.
\end{definition}

\begin{lemma}[{\cite[Theorem 3.2]{MakarychevMR19}}]
    \label{lemma:ptas_MMR_3.2}
    Consider a finite set $X$ and a random graph $H=(V, E)$,
    where $V$ is a random subset of $X$ and $E$ is a random set 
    of edges between vertices in $V$. Let $\theta\in (0, 1/2)$.
    Assume that $\Pr[(x, y)\in E]\leq \beta$ for every 
    $x, y\in X$, where $\beta\leq \theta^7/600$. Then there exists
    a random subset $V^\prime \subset V$ ($V^\prime$ is defined 
    on the same probabilistic space as $H$) such that 
    \begin{itemize}
        \item $H[V^\prime]$ is $\theta$-everywhere sparse.
        \item $\Pr[u\in V\setminus V^\prime]\leq \theta$ for all 
        $u\in X$.
    \end{itemize}
\end{lemma}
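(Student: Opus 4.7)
The plan is to construct $V'$ via an iterative peeling process and then argue, via a moment-based union bound over ``dense witness'' sets, that no fixed vertex is likely to be peeled off. Concretely, set $V_0 := V$, and at step $t \geq 0$ check whether every vertex of $V_t$ has degree at most $\theta |V_t|$ in $H[V_t]$; if yes, terminate and let $V' := V_t$, and if not, pick any $v_t \in V_t$ with $\deg_{H[V_t]}(v_t) > \theta |V_t|$ and set $V_{t+1} := V_t \setminus \{v_t\}$. Property~1 ($\theta$-everywhere sparsity of $H[V']$) holds by construction.

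The real content is Property~2: $\Pr[u \in V \setminus V'] \leq \theta$ for each $u \in X$. The key observation is that the removed set $R := V \setminus V'$ is internally ``dense'' in $H$: summing the removal-time degrees across steps shows that the number of edges of $H$ with both endpoints in $R$ is at least roughly $\tfrac{\theta}{2} \binom{|R|}{2}$, since each removed vertex contributes at least $\theta$ times the current set size (which is $\geq |R|$). Therefore, if $u$ is removed, there must exist a \emph{witness set} $S \subseteq V$ with $u \in S$ and $|E(H[S])| \geq \tfrac{\theta}{2}\binom{|S|}{2}$. Bounding $\Pr[u \in V \setminus V']$ thus reduces to bounding the probability that such a witness set containing $u$ exists.

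I would bound this probability by a moment argument on the edge count of $H[S]$. For each candidate size $s$, I would estimate $\E[\text{number of } s\text{-subsets } S \ni u \text{ with } \geq \theta s^2/4 \text{ internal edges}]$ via Markov's inequality applied to the $k$-th moment of the internal edge count, where $k \approx \Theta(\theta s)$. Using the hypothesis $\Pr[(x,y) \in E] \leq \beta$ together with the independence (or at worst, a second-moment style bound across edges if needed), each summand decays like $(C\beta/\theta^{c})^{\Theta(\theta s)}$ for modest constants $C, c$. Summing over $s$ and the choice of $u$'s position in $S$ gives a geometric series bounded by a constant multiple of $\beta/\theta^{c'}$ for some $c' \leq 7$, and the assumption $\beta \leq \theta^7/600$ is tuned precisely so this sum is at most $\theta$.

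\emph{Main obstacle.} The delicate step is the moment calculation controlling dense witness sets: one must avoid losing factors of $|V|$ in the union bound over subsets, and must correctly balance the moment order against the density threshold $\theta$ and the enumeration over sizes. The exponent $7$ in the hypothesis $\beta \leq \theta^7/600$ is almost certainly an artifact of this balancing -- any loss in the moment bound, in the density constant, or in the removal-time degree analysis would inflate the required power. Once that tail bound is in hand, the rest of the argument is routine bookkeeping.
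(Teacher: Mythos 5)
Your peeling construction of $V'$ and the high-level plan --- bound $\Pr[u\in V\setminus V']$ via a combinatorial witness --- are sensible starting points, but the argument breaks at two serious places. The central density claim is false: when the peeling step removes $v_t$, its $>\theta|V_t|$ neighbors live in $V_t$, which is mostly the \emph{surviving} set $V'$, not the removed set $R$; so summing removal-time degrees lower-bounds the number of edges with \emph{at least one} endpoint in $R$ (roughly $\theta|R|\cdot|V|$), not the number of edges with both endpoints in $R$. As a concrete counterexample, take a deterministic $V=X$ and $H=K_{r,n-r}$ with $2\le r<\theta n$: the peeling removes exactly the $r$ small-side vertices (they always have degree $n-r>\theta|V_t|$, while large-side vertices have degree $<\theta|V_t|$), so $R$ is an independent set with zero internal edges, yet $\tfrac{\theta}{2}\binom{|R|}{2}>0$. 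Hence ``$u$ removed $\Rightarrow$ there is a dense witness $S\ni u$'' is not established; the natural witness around a removed vertex is a star $\{u\}\cup N(u)$, whose edge density is $O(1/|S|)$, not $\Omega(\theta)$, and it is not at all clear a $\Theta(\theta)$-dense witness containing $u$ must exist.

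Second, the moment calculation you sketch needs an independence assumption the lemma does not grant. Only the marginals $\Pr[(x,y)\in E]\le\beta$ are controlled; in this paper's application (contraction/expansion graphs under a single random linear map $\pi$) the edge indicators are strongly correlated through $\pi$. From marginals alone, for any $k$ one only gets $\E\bigl[|E(H[S])|^k\bigr]\le\beta\binom{|S|}{2}^k$ (bound each joint probability by the smallest marginal and count $k$-tuples), so Markov at the $k$-th moment yields $\Pr\bigl[|E(H[S])|\ge\gamma\binom{|S|}{2}\bigr]\le\beta/\gamma^k$, which gets \emph{worse} as $k$ grows when $\gamma<1$. You therefore never beat the first-moment tail $O(\beta/\theta)$, and that cannot absorb the union bound over the $\binom{n-1}{s-1}$ candidate witness sets of size $s$ --- this is precisely why the claimed decay $(C\beta/\theta^{c})^{\Theta(\theta s)}$ does not follow. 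Both obstacles are exactly what make \cite[Theorem~3.2]{MakarychevMR19} nontrivial and have to be circumvented, not hand-waved.
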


\begin{lemma}[{\cite[Theorem 3.3]{MakarychevMR19}}]
    \label{lemma:ptas_MMR_3.3}
    Let $C\subset \RR^\tardim$ be a finite multiset of points and 
    $\varphi\colon \RR^\tardim\to \RR^{d^\prime}$ be a map.
    Assume that the $(1+t)$-expansion graph of $C$ is 
    $\theta$-sparse, where $\theta\leq 1/100$. Then 
    \begin{equation*}
        \med_1(\varphi(C))\leq (1+t)(1+\sqrt{\theta})\med_1(C).
    \end{equation*}
\end{lemma}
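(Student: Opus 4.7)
The plan is to exhibit a candidate center $c' \in \mathbb{R}^{d'}$ with $\sum_{x\in C}\|\varphi(x)-c'\| \leq (1+t)(1+\sqrt\theta)\med_1(C)$. Since the hypothesis only constrains $\|\varphi(x)-\varphi(y)\|$ for pairs $x,y \in C$ (via the expansion graph) and $\varphi$ is otherwise an arbitrary map, there is no direct control over $\varphi$ off $C$, so the natural candidates are of the form $c' = \varphi(y^*)$ for some $y^* \in C$. Write $M := \med_1(C)$, $n := |C|$, let $c^* \in \mathbb{R}^m$ be an optimal $1$-median of $C$, and let $G = (C,E)$ be the $(1+t)$-expansion graph of $\varphi$, which by hypothesis has maximum degree $\theta n$.

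The first step is to pick $y^*$ in a controlled way near $c^*$ and decompose $g(y^*) := \sum_x \|\varphi(x)-\varphi(y^*)\|$ into a ``good'' part over non-edges of $G$ and a ``bad'' part over the at most $\theta n$ edges incident to $y^*$. For the good part, the non-edge condition gives directly $\|\varphi(x)-\varphi(y^*)\| \leq (1+t)\|x-y^*\|$, so this contribution is at most $(1+t)\sum_x \|x-y^*\|$. For each edge $(x,y^*) \in E$ the idea is to detour through an intermediate $z \in C \setminus (A_x \cup A_{y^*})$, where $A_u := \{w : (u,w)\in E\}$; such $z$ exist in abundance because $|A_x \cup A_{y^*}| \leq 2\theta n \leq n/50$, and for any such $z$,
\[
  \|\varphi(x)-\varphi(y^*)\| \;\leq\; \|\varphi(x)-\varphi(z)\| + \|\varphi(z)-\varphi(y^*)\| \;\leq\; (1+t)\bigl(\|x-z\|+\|z-y^*\|\bigr).
\]
Averaging the right-hand side over all valid $z$, and applying the elementary inequality $\sum_{z \in C}\|u-z\| \leq n\|u-c^*\|+M$, reduces the per-edge cost to an expression depending only on $\|x-c^*\|$, $\|y^*-c^*\|$, and $M/n$. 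Summing over the at most $\theta n$ edges incident to $y^*$, this bad part can be controlled by a quantity proportional to $\theta$ times expressions that are themselves bounded by $M$ (up to the localization of $y^*$ near $c^*$).

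The main obstacle is extracting the sharp $(1+\sqrt\theta)$ rate rather than merely an $O(1)$ factor. A naive medoid choice (picking $y^* \in C$ to minimize $\|y^*-c^*\|$) bounds $\sum_x \|x-y^*\| \leq 2M$ in the worst case, inflating the non-edge contribution to $2(1+t)M$ and reflecting the generic factor-$2$ gap between the continuous $1$-median and the discrete medoid. To close this gap, I expect the proof to introduce a tuning parameter $\alpha$ that controls both the localization radius of $y^*$ around $c^*$ and the threshold separating ``near'' and ``far'' points of $C$, handle the two regimes with different bounds (so that the non-edge excess scales with $\alpha$ and the edge-repair cost scales with $\theta/\alpha$ after careful counting), and then optimize $\alpha$, causing both quantities to match at order $\sqrt\theta \cdot M$ and yielding the target $(1+t)(1+\sqrt\theta)M$ bound. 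A standard probabilistic-method step (averaging $g(y^*)$ over a suitable random choice of $y^* \in C$ and extracting a realizing point) then completes the proof.
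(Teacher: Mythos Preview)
The paper does not prove this lemma; it is quoted verbatim from~\cite{MakarychevMR19} (Theorem~3.3 there) and used as a black box in the proof of \Cref{lemma:ptas_exp_single_cluster}. There is thus nothing in the present paper to compare your proposal against.

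That said, your outline is the correct strategy from the original source: take the target-space center to be $\varphi(y^*)$ for a carefully chosen $y^* \in C$, bound the contribution of non-edges directly via the $(1+t)$-expansion condition, repair each of the at most $\theta n$ edges incident to $y^*$ by a two-hop detour through some $z$ with $(x,z),(y^*,z)\notin E$ (such $z$ exist since $|A_x\cup A_{y^*}|\le 2\theta n$), and average over the choice of $y^*$ (and of $z$) to tame the residual terms. Your diagnosis that the $\sqrt\theta$ rate comes from balancing a localization parameter $\alpha$ (governing how close $y^*$ is to the true median $c^*$, hence the excess in the non-edge term) against an edge-repair cost scaling like $\theta/\alpha$ is the right mechanism; the probabilistic-method averaging you mention is exactly what turns the sum $\sum_{x:(x,y^*)\in E}\|x-c^*\|$ from a potentially $\Theta(M)$ quantity into an $O(\theta M)$ one in expectation.
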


\begin{proof}[Proof of \cref{lemma:ptas_exp_single_cluster}]
    Fix a cluster $C\in\partition$ and assume $\eveG_C\cap\eveK_C$
    happens. For every $i\in\U_C$, $X_i$ is a random subset of $C$. 
    Denote by $X_i^\circ$ all points $x$ in $X_i$ such that the 
    distances between $x$ and all facilities in $\cFapx_C$ are
    well-preserved, i.e.
    \begin{align}
        X_i^\circ:=\Set{x\in X_i\colon \forall f\in \cFapx_C, 
        \Norm{\pi(x)-\pi(f)}\geq \frac{1}{1+\varepsilon}
        \Norm{x-f}}.
    \end{align}

    Let $G=(X,E)$ be the $1/(1+\varepsilon)$-contraction
    graph of $X$ with respect to $\pi$. 
    Then $\Pr[(x,y)\in E]\leq e^{-\Omega(\varepsilon^2 \tardim)}$.
    Since $X_i$ is a random subset of $X$, by 
    \cref{lemma:ptas_MMR_3.2}, there exists a 
    random subset $X_i^\prime\subseteq X_i^\circ$, such that 
    $G[X_i^\prime]$ is $\theta$-everywhere sparse, 
    where $\theta=e^{-\Omega(\varepsilon^2 \tardim)}$. Furthermore, 
    for any $x\in X$, $\Pr[x\in X_i^\circ\setminus X_i^\prime]
    \leq \theta$.

    Define a function $\phi\colon X_i\to X\cup \cFapx_C$ as follows,
    \begin{align*}
        \phi(x)=\begin{cases}
            x, &x\in X_i^\prime;\\
            F_C(x), &x\in X_i\setminus X_i^\prime
        \end{cases}
    \end{align*}
    Denote multiset $\widetilde{X}_i=\phi(X_i)$. No hard to see that 
    conditioning on event $\eveG_C$, the $1/(1+\varepsilon)$-contraction
    graph $G[\widetilde{X}_i]$
    is $\theta$-everywhere sparse. Applying \cref{lemma:ptas_MMR_3.3}
    on map $\varphi=\pi^{-1}$, we have
    \begin{equation}
        \med_1(\widetilde{X}_i)\leq (1+\varepsilon)(1+\sqrt{\theta})
        \cdot \med_1(\pi(\widetilde{X}_i))
        \leq (1+2\varepsilon) \med_1(\pi(\widetilde{X}_i)),
        \label{eq:ptas_MMR_theo_3.3}
    \end{equation}
    given $\tardim=\Omega(\varepsilon^{-2}\ddim \log(\ddim/\varepsilon))$.

    Let $f$ be the optimal 1-median center of $\widetilde{X}_i$.
    Notice that
    \begin{align*}
        &\med_1(X_i)\leq \sum_{x\in X_i} \Norm{x-f},\\
        &\med_1(\widetilde{X}_i)=\sum_{x\in X_i} \Norm{\phi(x)-f}
        =\sum_{x\in X_i^\prime} \Norm{x-f}+\sum_{x\in X_i\setminus X_i^\prime}
        \Norm{F_C(x)-f}.
    \end{align*}
    Thus \begin{align*}
        \med_1(X_i)-\med_1(\widetilde{X}_i)
        \leq \sum_{x\in X_i\setminus X_i^\prime}
        \left(\Norm{x-f}-\Norm{F_C(x)-f}\right)
        \leq \sum_{x\in X_i\setminus X_i^\prime} \Norm{x-F_C(x)}
    \end{align*}
    Analogously, \begin{align*}
        \med_1(\pi(\widetilde{X}_i))-\med_1(\pi(X_i))
        \leq \sum_{x\in X_i\setminus X_i^\prime} 
        \Norm{\pi(x)-\pi(F_C(x))}
    \end{align*}
    Combining with \eqref{eq:ptas_MMR_theo_3.3}, we have 
    \begin{equation}
        \med_1(X_i)\leq (1+2\varepsilon)\cdot \med_1(\pi(X_i))+(1+2\varepsilon)
        \cdot \sum_{x\in X_i\setminus X_i^\prime}
        \Big(\Norm{x-F_C(x)}+\Norm{\pi(x)-\pi(F_C(x))}\Big).
\end{equation}

    Therefore, 
    \begin{align*}
        &\quad \indicator(\eveG_C\cap\eveK_C)
        \cdot \max\Set{0, \med_1(X_i)-(1+2\varepsilon)\cdot \med_1(\pi(X_i))}\\
        &\leq \indicator(\eveG_C\cap\eveK_C)
        \cdot (1+2\varepsilon) \cdot \sum_{x\in X_i\setminus X_i^\prime}
        \Big(\Norm{x-F_C(x)}+\Norm{\pi(x)-\pi(F_C(x))}\Big)
        \\
        &\leq 2\cdot \sum_{x\in C}\indicator(\eveK_C)
        \indicator(x\in X_i\setminus X_i^\prime)\cdot 
        \Big(\Norm{x-F_C(x)}+\Norm{\pi(x)-\pi(F_C(x))}\Big).
    \end{align*}

    Summing over $i\in \U_C$ yields
    \begin{align*}
        &\quad \sum_{i\in \U_C} \indicator(\eveG_C\cap\eveK_C)
        \cdot \max\Set{0, \med_1(X_i)-(1+2\varepsilon)\cdot \med_1(\pi(X_i))}\\
        &\leq 2\cdot \sum_{x\in C}\indicator(\eveK_C)
        \cdot 
        \Big(\Norm{x-F_C(x)}+\Norm{\pi(x)-\pi(F_C(x))}\Big)
        \sum_{i\in \U_C} \indicator(x\in X_i\setminus X_i^\prime).
    \end{align*}
    By definition, $\norm{\U_C}\leq \algmedian(\pi(C)) 
    \leq (1+\varepsilon) \opt(\pi(C))$. Conditioning on event
    $\eveK_C$, this is further upper bounded by 
    $\norm{\U_C} \leq (1+\varepsilon) c_4\tau \leq 2c_4\tau$.
    Thus $\sum_{i\in \U_C}\indicator(x\in X_i\setminus X_i^\prime) 
    \leq \norm{\U_C}\cdot \indicator(x\in \bigcup_{i\in \U_C} 
    (X_i\setminus X_i^\prime))
    \leq 2c_4 \tau \cdot \indicator(x\in \bigcup_{i\in \U_C} 
    (X_i\setminus X_i^\prime))$. Hence 
    \begin{align*}
        &\quad \sum_{i\in \U_C} \indicator(\eveG_C\cap\eveK_C)
        \cdot \max\Set{0, \med_1(X_i)-(1+2\varepsilon)\cdot \med_1(\pi(X_i))}\\
        &\leq 4c_4 \tau \cdot \sum_{x\in C}\indicator(\eveK_C)
        \cdot \Indicator{x\in \bigcup_{i\in \U_C} 
        (X_i\setminus X_i^\prime)} \cdot 
        \Big(\Norm{x-F_C(x)}+\Norm{\pi(x)-\pi(F_C(x))}\Big).
    \end{align*}

    By \cref{prop:random_proj}, for each $x\in C$, 
    $\Norm{\pi(x)-\pi(F_C(x))}$ is comparable with 
    $\Norm{x-F_C(x)}$. More concretely, 
    \begin{align*}
        &\quad \ee_{\pi}\left[
            \indicator(\eveK_C)
            \cdot \Indicator{x\in \bigcup_{i\in \U_C} 
            (X_i\setminus X_i^\prime)}\cdot 
            \Norm{\pi(x)-\pi(F_C(x))}
        \mid \decom 
        \right]\\
        &\leq 2\ee_{\pi}\left[
            \indicator(\eveK_C)
            \cdot \Indicator{x\in \bigcup_{i\in \U_C} 
            (X_i\setminus X_i^\prime)}\cdot 
            \Norm{x-F_C(x)}
        \mid \decom 
        \right]
        +e^{-\Omega(\tardim)} \Norm{x-F_C(x)}.
    \end{align*}
    Then \begin{align}
        &\quad \ee_{\pi}\left[\sum_{i\in \U_C} 
        \indicator(\eveG_C\cap\eveK_C)
        \cdot \max\Set{0, \med_1(X_i)-(1+2\varepsilon)\cdot \med_1(\pi(X_i))}
        \mid \decom\right]\notag\\
        &\leq 12c_4 \tau \sum_{x\in C} \Norm{x-F_C(x)}
        \cdot \Pr_{\pi}\left[\left.\begin{matrix}
            \eveK_C,\\
            x\in \bigcup_{i\in \U_C} 
            (X_i\setminus X_i^\prime)
        \end{matrix}
        \right|\decom\right]
        +4c_4 \tau \cdot e^{-\Omega(\tardim)} \sum_{x\in C}\Norm{x-F_C(x)}.
        \label{eq:ptas_further_3}
    \end{align}
    Each $X_i\setminus X_i^\prime$ is a random subset of $ C$. 
    For every $x\in C$, $\Pr[x\in X_i\setminus X_i^\prime \mid \decom]
    \leq \Pr[x\in X_i\setminus X_i^\circ \mid \decom] 
    + \Pr[x\in X_i^\circ\setminus X_i^\prime \mid \decom]$. Notice that 
    \begin{align*}
        \PR{x\in  X_i\setminus X_i^\circ \mid \decom}
        &\leq \PR{\exists f\in \cFapx_C, 
        \Norm{\pi(x)-\pi(f)}< \frac{1}{1+\varepsilon}
        \Norm{x-f}}\\
        &\leq e^{-\Omega(\varepsilon^2 \tardim)}
        \norm{\cFapx_C}
        \leq e^{-\Omega(\varepsilon^2 \tardim)}\cdot \alpha\tau,
    \end{align*}
    and $\Pr[x\in X_i^\circ\setminus X_i^\prime \mid \decom]\leq e^{-\Omega(\varepsilon^2 \tardim)}$.
    Thus $\Pr[x\in X_i\setminus X_i^\prime \mid \decom]\leq O(\tau) 
    \cdot e^{-\Omega(\varepsilon^2 \tardim)}$
    for every $i\in\U_C$. Conditioning on event $\eveK_C$, 
    $\norm{\U_C}\leq 2c_4 \tau$. Hence
    \begin{align*}
        \Pr_{\pi}\left[\left.\begin{matrix}
            \eveK_C,\\
            x\in \bigcup_{i\in \U_C} 
            (X_i\setminus X_i^\prime)
        \end{matrix}
        \right|\decom\right]
        \leq O(\tau^2)\cdot e^{-\Omega(\varepsilon^2 \tardim)}.
    \end{align*}
    Thus the right hand side of \eqref{eq:ptas_further_3} can 
    be further bounded by 
    \begin{align*}
        &\quad \ee_{\pi}\left[\sum_{i\in \U_C} 
        \indicator(\eveG_C\cap\eveK_C)
        \cdot \max\Set{0, \med_1(X_i)-(1+2\varepsilon)\cdot \med_1(\pi(X_i))}
        \mid \decom\right]\\
        &\leq O(\tau^3) \cdot e^{-\Omega(\varepsilon^2 \tardim)}
        \sum_{x\in C} \Norm{x-F_C(x)}
        = O(\tau^3) \cdot e^{-\Omega(\varepsilon^2 \tardim)}
        \cost(C, \cFapx_C)\\
        &\leq O(\tau^3) \cdot e^{-\Omega(\varepsilon^2 \tardim)}
        \opt(C)
        \leq O(\tau^4) \cdot e^{-\Omega(\varepsilon^2 \tardim)},
    \end{align*}
    where we use the fact that $\cFapx_C$ is 
    an $O(1)$-approximate solution on $C$ and that $\opt(C)\leq \tau$.
    Since $\tardim = \Omega(\varepsilon^{-2}(\log \tau + \log(1/\varepsilon)))$,
    above can be further bounded by $\varepsilon^2$. This completes
    the proof.
\end{proof} \end{appendices}
\end{document}